\newtheorem{theorem}{Theorem}
\newtheorem{proposition}[theorem]{Proposition}
\newtheorem{lemma}[theorem]{Lemma}
\newtheorem{problem}[theorem]{Problem}
\newtheorem{definition}[theorem]{Definition}
\begin{document}

\title{Simulating Time Dependent and Nonlinear Classical Oscillators through Nonlinear Schr\"{o}dingerization}

\author{Abhinav Muraleedharan}
\affiliation{Department of Computer Science, University of Toronto, Toronto ON, Canada}
\orcid{0000-0002-2445-2701}
\email{abhi@cs.toronto.edu}
\author{Nathan Wiebe}
\affiliation{Department of Computer Science, University of Toronto, Toronto ON, Canada}
\affiliation{Pacific Northwest National Laboratory, Richland WA, USA}
\affiliation{Canadian Institute for Advanced Research, Toronto ON, Canada}
\email{nawiebe@cs.toronto.edu}

\maketitle
\begin{abstract}
We present quantum algorithms for simulating the dynamics of a broad class of classical oscillator systems containing $2^n$ coupled oscillators (Eg: $2^n$ masses coupled by springs), including those with time-dependent forces, time-varying stiffness matrices, and weak nonlinear interactions. This generalization of the Harmonic oscillator simulation algorithm is achieved through an approach that we call ``Nonlinear Schr\"{o}dingerization'', which involves reduction of the dynamical system to a nonlinear Schr\"{o}dinger equation  and then reduced to a time-independent Schrodinger Equation through perturbative techniques.  The linearization of the equation is performed using an approach that allows the dynamics of a nonlinear Schr\"odinger equation to be approximated as a linear Schr\"odinger equation in a higher dimensional space.  This allows Hamiltonian Simulation algorithms to be applied to simulate the dynamics of resulting system.
When the properties of the classical dynamical systems can be efficiently queried, and when the initial state can be efficiently prepared, the complexity of our quantum algorithm is polynomial in $n$, and almost linear in evolution time for most dynamical systems. Our work extends the applicability of quantum algorithms to simulate the dynamics of non-conservative and nonlinear classical systems, addressing key limitations in previous approaches.  

\end{abstract}
\clearpage 
\tableofcontents
\newpage
\section{Introduction}



The simulation of complex physical systems has been a longstanding challenge at the intersection of physics and computer science. As our understanding of natural phenomena grows more sophisticated, so too does the complexity of the models we use to describe them. This increasing complexity often leads to computational demands that strain or exceed the capabilities of classical computers, particularly when dealing with large-scale systems, those with intricate time-dependent behaviors, and systems that are quantum mechanical in nature. The primary motivation for introducing quantum computers was to address these large-scale simulation problems, particularly those that are quantum mechanical in nature \cite{feynman1985quantum, lloyd1996universal,berry2007efficient,low2019hamiltonian,childs2012hamiltonian}.\\

Recent advancements in quantum algorithms have expanded the potential applications of quantum computers beyond purely quantum systems. Of particular interest is the simulation of large classical physical systems \cite{li2025potential,jennings2024cost, sanavio2025explicit}, which, while not inherently quantum in nature, can benefit from the computational advantages offered by quantum hardware. A significant breakthrough in this direction came with the work of \cite{babbush2023exponential}, which introduced a novel method to map the dynamics of classical oscillator systems onto the Schr\"{o}dinger evolution of an appropriate quantum system. Their approach provided optimal scaling with respect to system properties, demonstrating the potential for quantum computers to offer significant advantages in simulating certain classes of classical systems. Further works  \cite{stroeks2024solving, somma2025shadow} have applied similar ideas to simulate dynamics of observables of quantum systems, such as fermions and bosons.
\\

However, their method relied on two critical assumptions: conservation of energy and linearity of dynamics. While the conservation of energy assumption is valid for many important systems, it precludes the application of their method to a wide range of scenarios where the total energy varies with time, such as systems subject to external time-dependent forces or those with time-varying stiffness matrices. Although in principle, one may apply quantum linear system solver \cite{harrow2009quantum} based methods \cite{berry2014high,clader2013preconditioned,  childs2017quantum,subacsi2019quantum,Berry2017Quantum,berry2024quantum, krovi2023improved} to simulate such systems, the complexity of such methods is far from optimal. Specifically, linear systems-based methods require a large number of queries to the state preparation oracle and often incur significant overhead due to the condition number dependence and the necessity of implementing complex block encodings of the system matrices. Recent techniques such as Schr\"{o}dingerization \cite{jin2024quantum,jin2024schr} and LCHS \cite{an2023quantum, an2023linear} reformulate linear ODE problems as linear combinations of Hamiltonian simulations, achieving optimal query complexity to the state preparation oracle in the general case, but they require additional stability conditions for coefficient matrix which need not be satisfied in general.
\\

For simulating nonlinear dynamical systems, efficient quantum algorithms \cite{liu2021efficient,costa2023further} have been restricted to the simulation of dissipative nonlinear dynamical systems, and have yet to be extended to systems without strong dissipation. The simulation of nonlinear differential equations on quantum computers leverages Carleman linearization to embed a finite‑dimensional nonlinear system into an infinite‑dimensional linear one that can be truncated and solved via quantum linear system algorithms. Liu \emph{et al.}\cite{liu2021efficient} introduced an efficient quantum algorithm for dissipative quadratic ODEs, achieving complexity 
\(\displaystyle O\bigl(T^2 q \,\mathrm{poly}(\log T,\log n,\log(1/\epsilon)) / \epsilon\bigr)\) under the condition \(R<1\), representing an exponential improvement over earlier methods. 
\\

However, these existing methods remain limited in scope, particularly for systems without strong dissipation with time-dependent behavior. Our work aims to address these limitations and significantly expand the applicability of quantum simulations to a broader range of classical systems. We introduce new quantum algorithms capable of simulating classical dynamical systems without the assumption of constant total energy.  These algorithms work on a principle that we call ``nonlinear Schr\"odingerization'' which involves reducing the timedependent problem to a nonlinear Schr\"odinger equation.  This is in a sense the opposite of the Schr\"{o}dingerization strategy proposed in~\cite{jin2024quantum} as it takes a linear differential equation and approximates it by a nonlinear one. Then we can implement the non-linear differential equation by re-linearizing it in a higher dimensional space through a novel approach that allows us to approximate the nonlinear dynamics with Hermitian dynamics.  This in effect allow us to approximate the time-dependent oscillator by a Harmonic oscillator in a higher-dimensional space.  This advancement enables more comprehensive and accurate modeling of complex dynamical phenomena\cite{landau1976mechanics, jackson1998classical, neven1992rate, wilson1980molecular}, including those with time-dependent parameters and external influences. Specifically, this paper presents methods for:
\begin{enumerate}
    \item Solving inhomogeneous differential equations representing oscillator systems subject to time-dependent external forces. This allows for the simulation of systems under varying external influences, such as forced oscillations or systems driven by time-varying fields.
    \item  Simulating systems with time-dependent stiffness matrices. This capability is crucial for modeling systems where the interactions between components change over time, such as in materials undergoing phase transitions or structures subject to varying environmental conditions.
    \item   Handling nonlinear oscillator systems with quadratic nonlinearities. Many real-world systems exhibit nonlinear behavior, and our ability to simulate these systems accurately opens up new avenues for studying complex phenomena.
    \item  Combining time-dependent stiffness matrices with time-dependent external forces. This represents our most general case, allowing for the simulation of systems with both internal time-varying properties and external time-dependent influences.
\end{enumerate}
 We apply perturbation theory to embed the dynamics of non-conservative systems within larger conservative systems, allowing us to leverage existing quantum simulation techniques. Carlemann linearization is used to transform nonlinear systems into linear systems of higher dimensions, making them amenable to quantum simulation. We also introduce novel symmetrization methods to ensure that the resulting operators can be efficiently implemented on quantum hardware. Our work opens up new possibilities for using quantum computers to simulate a wide array of classical physical systems. By removing the constraint of energy conservation and allowing for time-dependent parameters, we significantly expand the range of systems that can be efficiently simulated using quantum algorithms. This has the potential to enable the study of complex dynamical phenomena that are computationally intractable on classical computers.
\\








The rest of the paper is organized as follows. In \autoref{sec: prelim}, we discuss the notations used in the paper, and definitions of dynamics and access models for different oscillator systems. In \autoref{sec:results}, we give an overview of our main results, and in subsequent sections we discuss each of the simulation problems in detail. In \autoref{sec:linear inhomo-diff-eq}, we discuss the forced oscillator simulation problem, wherein external time dependent forces act on individual masses along with interaction forces from spring elements. In \autoref{sec:nonlinear_Schrodinger_eq}, we study the nonlinear Schr\"{o}dinger equation, and introduce methods to simulate the dynamics of quantum systems obeying nonlinear Schr\"{o}dinger equation. In \autoref{sec:nonlinear_oscillator_system}, we show that the dynamics of nonlinear oscillator system can be mapped to the nonlinear Schrodinger Equation, and hence its dynamics can be simulated by applying this transformation and methods from \autoref{sec:nonlinear_Schrodinger_eq}. In \autoref{sec:time_dependent_stiffness} and \autoref{sec:time_dependent_force_and_stiffness}, we show how the dynamics of oscillator systems with time dependent stiffness and force functions can be embedded in a higher dimensional nonlinear dynamical system and simulated using methods introduced in \autoref{sec:nonlinear_oscillator_system}. A conceptual diagram representing the various reductions developed in the paper is shown in \autoref{fig: conceptual_diagram}.

\section{Preliminaries}
\label{sec: prelim}
The aim of this section is to introduce common notation and review key ideas that will be used throughout the document.  In particular, the main results given in~\autoref{sec:results} are expressed in terms of these notations and readers familiar with the Harmonic Oscillator simulation algorithm or Shadow Hamiltonian simulation may wish to jump ahead to the discussion therein.

In our manuscript, we follow specific conventions for notation. Lowercase symbols represent scalars, such as \( a \), \( b \), \( c \), and \( d \). Lowercase bold symbols indicate vectors, like \( \mathbf{v} \) and \( \mathbf{w} \). Uppercase bold symbols denote matrices, for example, \( \mathbf{A} \) and \( \mathbf{B} \). Additionally, \( \| \mathbf{A} \| \) represents the spectral norm of the operator \( \mathbf{A} \) and $\| \mathbf{A}\|_F$ denotes the Frobenius norm of Operator $\mathbf{A}$.

We will find it useful in the following to define the ``logarithmic norm'' of a matrix which is given by
\begin{definition}[Logarithmic norm]
    The logarithmic norm of a matrix \( A \in \mathbb{C}^{n \times n} \), with respect to an induced norm $\|\cdot\|$, is defined as  the limit:

\[
{\mu(A) = \lim_{h \to 0^+} \frac{\| I_{n \times n} + hA \| - 1}{h}}
\]
For the case where $\|\cdot\|$ is the induced infinity-norm the logarithmic norm takes the form

\[
\mu_\infty(A) = \max_i \left( \textrm{Re}(a_{ii}) + \sum_{j \neq i} |a_{ij}| \right).
\]
\end{definition}
{The principal utility of the logarithmic norm for our purposes is derived from the fact that for any square matrix $A$ and $t\ge 0$
\begin{equation}
    \|e^{A t}\| \le e^{\mu(A) t}.
\end{equation}
This notion of a logarithmic norm can give a tighter bound than the standard bound usage where the triangle inequality and the sub-multiplicative property give us $\|e^{At}\|\le e^{\|A\|t}$, which may be substantially looser than this result since $\mu(A)\le \|A\|$.
}

\begin{definition}[Coupled Classical Oscillator System]
\label{Def: coupled_oscillator_system}
A coupled classical oscillator system consists of $N$ masses $\{m_j\}_{j=1}^N$, interconnected by spring elements. We define the mass matrix $\mathbf{M}$ of the system to be a diagonal matrix with terms $\mathbf{M}(j,j) = m_j$. Let $k_{ji}$ denote the spring constant of the spring interconnecting masses $m_j$ and $m_i$, and let $k_{jj}$ denote the spring constant connecting mass $m_j$ to the wall. Let $\mathbf{G} \in \mathbb{R}^{N \times N}$ be the stiffness matrix of the coupled oscillator system. The elements of matrix $G$ are defined as:
\begin{align*}
    \mathbf{G}(i,j) &= k_{ij} = k_{ji} \\
    \mathbf{G}(j,j) &= k_{jj}
\end{align*}
Let $\mathbf{x} = [x_1,x_2,..x_N]^T$ be the displacement vector of the oscillator system. The governing differential equation of the coupled classical oscillator system is given by:
\begin{equation}
    \mathbf{M}\ddot{\mathbf{x}} = -\mathbf{K} \mathbf{x}
\end{equation}
Here, $\mathbf{K}$ is the incidence matrix of the weighted graph defined by $\mathbf{G}$. Specifically,  the diagonal terms of the matrix $\mathbf{K}$ is given by: $\mathbf{K}(j,j) = \sum_i k_{ji}$ and the off-diagonal terms are $\mathbf{K}(j,i) = - k_{ji}$. Note that $k_{ji} > 0$ and the matrix $\mathbf{K}$ is positive semidefinite. 
   
\end{definition}

\begin{definition}[Forced Coupled Classical Oscillator System]
\label{Def: forced_coupled_oscillator_system}
    In a forced coupled classical oscillator system, in addition to the forces exerted by the spring elements, external time-dependent forces $f_i(t)$ act on mass $m_i$. The stiffness matrix of the system, $\mathbf{G}$, and the matrix $\mathbf{K}$ are constructed as per \autoref{Def: coupled_oscillator_system}.  The governing differential equation of the forced coupled classical oscillator system is given by:
    \begin{equation}
        \mathbf{M} \ddot{\mathbf{x}} = -\mathbf{K} \mathbf{x} + \mathbf{f}(t)
    \end{equation}
    Here, $\mathbf{f}(t) = [f_1(t) , f_2(t),,,,f_N(t)]^T$ and we assume that the time-dependent forces can be written as $f_i(t) = \sum_{j=1}^l f_{ij} \cos(\omega_{ij}t + \phi_{ij})$. 
\end{definition}

\begin{definition}
\label{Def:Oracles-forced-oscillator-system}
    (State Prep and Access Model for forced oscillator system)
    Let $m_{max} \geq m_j$ and let $m_j = m_{max} [.b_{j,1} b_{j,2}...]$, $b_{j,i} \in \{0,1\}$. Here, $[.b_{j,1} b_{j,2}...] $ is a binary fraction, and we use the bit representation to define the basis state $\ket{\bar{m}_j}$. Similarly, we represent the stiffness parameter $k_{ij}$, the frequencies $\omega_{ij}$, the phase values $\phi_{ij}$, and the force amplitudes $f_{ij}$ as basis states $\ket{\bar{k}_{ij}}$, $\ket{\bar{\omega}_{ij}}$, $\ket{\bar{\phi}_{ij}}$ and $\ket{\bar{f}_{ij}}$ respectively. Using this representation, we define oracles for accessing elements of the matrices $\mathbf{M}$, $\mathbf{G}$, and $\mathbf{f}(t)$. Specifically, for the matrix $\mathbf{M}$, we define the oracle $O_M: \ket{j}\ket{0} \rightarrow \ket{j}\ket{\bar{m}_{j}} $ to access bitwise representation of $m_j$. The stiffness matrix elements and frequency values are accessed through the following oracles:

\begin{align*}
O_G&: \begin{cases}
    \ket{j}\ket{k}\ket{0}\ket{0} &\longrightarrow \ket{j}\ket{k}\ket{\bar{G}_{jk}}\ket{0} \\
    \ket{j}\ket{k}\ket{0}\ket{i} &\longrightarrow \ket{j}\ket{k}\ket{0}\ket{i} , \forall i \neq 0
\end{cases}\\
O_{F_w}&: \begin{cases}
    \ket{j}\ket{k}\ket{0}\ket{1} &\longrightarrow \ket{j}\ket{k}\ket{\bar{\omega}_{jk}}\ket{1} \\
    \ket{j}\ket{k}\ket{0}\ket{i} &\longrightarrow \ket{j}\ket{k}\ket{0}\ket{i}, \forall i \neq 1
\end{cases}\\
\end{align*}
The phase values and the amplitudes of the forcing function are accessed through the following oracles:
\begin{align}
O_{F_{\phi}}&: \ket{i}\ket{j}\ket{0} \rightarrow \ket{i}\ket{j}\ket{\bar{\phi}_{ij}} \\ \nonumber
O_{F_{f}}&: \ket{i}\ket{j}\ket{0} \rightarrow \ket{i}\ket{j}\ket{\bar{f}_{ij}} \\ \nonumber
\end{align}
Let's also assume we have access to a unitary $\mathcal{W} = \sum_i \frac{1}{2} \ket{0}\bra{0}\otimes\mathcal{W}_i$ that can efficiently prepare the initial state, proportional to the initial displacement and velocity vector of the oscillator system. The unitaries $\mathcal{W}_i$ are defined as: 
\begin{align}
    \mathbf{\mathcal{W}}_1 &: \ket{0}\bm{\ket{0}}^{\otimes n +r} \rightarrow \ket{0} \otimes \ket{0}^{\otimes r}\otimes\frac{ \dot{\mathbf{x}}(0)}{\| \dot{\mathbf{x}}(0)\|} \\ \nonumber
    \mathbf{\mathcal{W}}_2 &: \ket{1}\bm{\ket{0}}^{\otimes n+r} \rightarrow \ket{1}\otimes\frac{\dot{\mathbf{y}}(0)}{\|\dot{\mathbf{y}}(0)\|} \\ \nonumber
    \mathbf{\mathcal{W}}_3 &: \ket{2}\bm{\ket{0}}^{\otimes n+r}  \rightarrow \ket{2}\otimes\ket{0}^{\otimes r} \otimes\frac{\mathbf{x}(0)}{\| \mathbf{x}(0)\|} \\ \nonumber
    \mathbf{\mathcal{W}}_4 &: \ket{3}\bm{\ket{0}}^{\otimes n+r}  \rightarrow \ket{3}\otimes\frac{\mathbf{y}(0)}{\| \mathbf{y}(0)\|} \\ \nonumber
\end{align}.
Here, $\mathbf{x}(0)$, $\dot{\mathbf{x}}(0)$, are displacement and velocity vectors of the oscillator system, and $\mathbf{y}(0)$, $\dot{\mathbf{y}}(0)$ are displacement and velocity vectors of an auxillary dynamical system. The entries of these displacement and velocity vectors are specified by the parameters of the time dependent force. Specifically, the displacement vector $\mathbf{y}(0) = [{\mathbf{y}}_1(0),\cdots \mathbf{y}_N(0)]^T$, where $\mathbf{y}_i(0) = [y_{i,1}(0), \cdots y_{i,l}(0)]^T$,
 and $ y_{i,j}(0) =  \frac{2 l f_{ij} \cos(\phi_{ij})}{k_{ii}} $. The initial velocity vector, of the auxillary oscillator system is given by $\dot{\mathbf{y}}_i(0) = [\dot{y}_{i,1}(0), \cdots \dot{y}_{i,l}(0)]^T$ , and $ \dot{y}_{i,j}(0) =  -{ \frac{2l f_{ij}}{k_{ii}}} \omega_{ij}\sin(\phi_{ij})$. 

\end{definition}

Our approach to tackling time-dependent spring constants in the evolution stems from our perturbative analysis of nonlinear Schr\"{o}dinger equations. Before discussing our subsequent work on this topic, we will therefore need to discuss the case of the nonlinear Schr\"{o}dinger equation.  The nonlinear Schr\"{o}dinger equation has long been studied in quantum computing~\cite{meyer2013nonlinear,liu2021efficient,lloyd2020quantum}.  In essence, the nonlinear Schr\"{o}dinger equation is one where the quantum state evolves in a way that depends more than linearly on the underlying probability amplitudes.  Such nonlinearities are, in full generality, beyond the reach of quantum algorithms to simulate efficiently as such simulations would lead to a violation of the Holevo-Helstrom bound~\cite{liu2021efficient,brustle2024quantum}.  Nonetheless, in some cases nonlinear dynamics can be simulated and here we will study it for its aforementioned use in solving the time-dependent spring-constant problem as well as in its own right.  We give such a definition below for the case of a quadratic nonlinearity.  Generic higher-order polynomial nonlinearities can be reduced to the case of quadratic nonlinearities by substitution.

\begin{definition}[Nonlinear Schr\"{o}dinger Equation]
\label{Def: nonlinear_Schr\"{o}dinger_eq}
    The Nonlinear Schr\"{o}dinger Equation is given by:
    \begin{equation}
        \dot{\ket{\bm{\psi}} } = -i\mathbf{H}_1 \ket{\bm{\psi}} + \mathbf{H}_2 \ket{\bm{\psi}} \otimes \ket{\bm{\psi}}
    \end{equation}
Here, $\mathbf{H}_1 \in \mathbb{C}^{N \times N}$ is a Hermitian square matrix, while $\mathbf{H}_2 \in \mathbb{C}^{N \times N^2}$ is a rectangular matrix.  Furthermore, we assume without loss of generality that the matrix $\mathbf{H}_1$ is $d_1$ sparse, that is it contains at most $d_1$ nonzero entries in any row or column. We denote the maximum number of nonzero elements in any row of the matrix $\mathbf{H}_2$ as $s_r$ and the maximum number of nonzero elements in any column as $s_c$. Let $d = \max \{d_1,s_r,s_c\}$.
\end{definition}
We then assume that the matrix elements are given by a set of oracles that resemble the oracles used for sparse Hamiltonian simulation.  One notable difference, however, is that here the matrices in question are not Hermitian.  For that reason we need to use a slightly more involved definition given to simulate these equations which we give below.
\begin{definition}
\label{Def:Oracles-nonlinear_Schr\"{o}dinger_eq}
    (State prep and Access Model for the Nonlinear Schr\"{o}dinger Equation )
    Let $$H_{\text{max}} = \max_{jk}\{|\mathbf{H}_1(j,k)|,|\mathbf{H}_2(j,k)|\}.$$ Let $\bar{H}_1(j,k) $ be a binary expansion of the $j^{\text{th}}$ row and $k^{\text{th}}$ column of the Hamiltonian matrix $\mathbf{H}_1$. From the binary representation, we define a basis state $\ket{\bar{H_1}({j,k})}$. Similarly, we can define basis states representing elements of the matrix $\mathbf{H}_2$. We now assume access to the following oracles to prepare the basis states representing terms in the Hamiltonian $\mathbf{H}_1$ and the rectangular matrix $\mathbf{H}_2$. For the rectangular matrix $\mathbf{H}_2$, we create a padded matrix $\mathbf{H}_2' = \ket{0}\otimes \mathbf{H}_2 $.  

\begin{align*}
O_{H_1}: 
    \ket{j}\ket{k}\ket{0} &\longrightarrow \ket{j}\ket{k}\ket{\bar{H_1}({j,k})} 
\end{align*}
\begin{align}
O_{H_2'}: 
    \begin{cases}
        \ket{j}\ket{k}\ket{0} \longrightarrow \ket{j}\ket{k}\ket{\bar{H_2}({j,k})} &\text{if } j \leq N-1 \\ \nonumber
       \ket{j}\ket{k}\ket{0} \longrightarrow  \ket{j}\ket{k}\ket{0} &\text{if } j > N-1 \nonumber
    \end{cases}
\end{align}
We also assume access to the following sparse access oracles which return the index of nonzero elements in any given row or column for the matrices $\mathbf{H}_1$ and $\mathbf{H}_2$.
\begin{align*}
O_{s_{H_1}}: 
    \ket{i}\ket{k} &\longrightarrow \ket{i}\ket{s_{ik}} 
\end{align*}
Here, $s_{ik}$ denotes the index of the $k^{\text{th}}$ nonzero element in the $i^{\text{th}}$ row of the matrix $\mathbf{H}_1$. Since $\mathbf{H}_2$ is a non-Hermitian matrix, we define the oracles $O_{r_{H_2}}$ and $O_{c_{H_2}}$, which are defined as:
\begin{align*}
O_{r_{H_2'}}: 
    \ket{i}\ket{k} &\longrightarrow \ket{i}\ket{r_{ik}} 
\end{align*}
\begin{align*}
O_{c_{H_2'}}: 
   \ket{i}\ket{k} &\longrightarrow \ket{i}\ket{c_{ik}} \\
\\
\end{align*}
Here, $r_{ik}$ denote the index of $k^{\text{th}}$ nonzero element in the $i^{\text{th}}$ row of matrix $\mathbf{H}_2$ and $c_{ik}$ denote the index of $k^{\text{th}}$ nonzero element in the $i^{\text{th}}$ coloumn of matrix $\mathbf{H}_2$.
\\

Finally, we also assume access to the state preparation Unitary $\mathcal{W}(\eta,k,\aleph)$, which prepares the following initial state:
\begin{align}
    \mathcal{W}(\eta,k,\aleph) : \bm{\ket{0}} \rightarrow \frac{1}{\aleph}\sum_i\frac{1}{\eta^{k-i}}\ket{i-1}\otimes \bm{\ket{0}}^{\otimes k-i}\bm{\ket{\psi(0)}}^{\otimes i} 
\end{align}

\end{definition}


\begin{definition}[Nonlinear Coupled Classical Oscillator System]
\label{Def: nonlinear_coupled_oscillator_system}
   In a nonlinear coupled oscillator system with quadratic nonlinearity, the governing differential equation of the system contains an additional quadratic term, in addition the linear term. The stiffness matrix of the system, $\mathbf{G}_1$, and the matrix $\mathbf{K}_1$ are constructed as per \autoref{Def: coupled_oscillator_system}.  The governing differential equation of the nonlinear coupled classical oscillator system with quadratic nonlinearity can be written as:
    \begin{equation}
        \mathbf{M} \ddot{\mathbf{x}} = -\mathbf{K}_1 \mathbf{x} + \mathbf{K}_2 \mathbf{x}\otimes \mathbf{x} 
    \end{equation}
    Here, $\mathbf{K}_1 \in \mathbb{R}^{N \times N} $ and the rectangular matrix $\mathbf{K}_2 \in \mathbb{R}^{N \times N^2}$. Further we define $E$ to be a constant such that $E \ge \frac{1}{2} \left(  \dot{\mathbf{x}}(0)^T\mathbf{M} \dot{\mathbf{x}}(0) + \mathbf{x}(0)^T\mathbf{K}_1\mathbf{x}(0) \right)$.
\end{definition}
We assume that the matrix elements for the oscillator system are accessed via a number of quantum oracles
\begin{definition}
\label{Def:Oracles-nonlinear-oscillator-system}
    (Access Model for nonlinear oscillator system)
    Let the maximum mass obey $m_{\max} \geq m_j$ and let $m_j$ be assumed to be fixed point binary representations of the mass as a fraction of $m_{\max}$ 
    and the binary fraction is specifically represented as the basis state $\ket{\bar{m}_j}$. Similarly, we represent the stiffness parameter $k_{ij}$, and the frequencies $\omega_{ij}$ as basis states $\ket{\bar{k}_{ij}}$, and $\ket{\bar{\omega}_{ij}}$. Using this representation, we define oracles for accessing elements of the matrices $\mathbf{M}$, $\mathbf{G}_1$, and $\mathbf{K}_2$. Specifically, for the matrix $\mathbf{M}$, we define the oracle $O_M: \ket{j}\ket{0} \rightarrow \ket{j}\ket{\bar{m}_{j}} $ to yield the binary representation of $m_j$. The elements of linear and nonlinear stiffness matrices are accessed by the following oracles:

\begin{align*}
O_{G_1}: \ket{j}\ket{k}\ket{0} &\longrightarrow \ket{j}\ket{k}\ket{\bar{G_1}(j,k)} \\
O_{K_2}: \ket{j}\ket{k}\ket{0} &\longrightarrow \ket{j}\ket{k}\ket{\bar{K_2}(j,k)} \\
\end{align*}
We also assume the existence of oracles $O_{c_{G_1}}$ etc that yield the column / row numbers of the non-zero matrix elements of the matrices as defined in~\autoref{Def:Oracles-nonlinear_Schr\"{o}dinger_eq}
\end{definition}

The time-dependent coupled oscillator is similarly defined, but here we make a few specific assumptions about the underlying dynamical system.  In particular, we need to assume that a Fourier expansion exists to represent the time-dependence of the spring constants.  We enforce this so that we can specify the time-dependence using a finite amount of information and further so that we can link the dynamics of two-oscillators two each other through a set of nonlinear three body interactions with a fictitious mass introduced into the system.  The assumption that a finite-term Fourier series expansion describes the precise time-dependence simplifies our discussion by eliminating the need to discuss the error tolerance that occurs from truncating the series and discussing the lack of pointwise convergence for general Fourier series expansions.

\begin{definition}[Time-Dependent Coupled Classical Oscillator System]
\label{Def: time_dependent_coupled_oscillator_system}
A time-dependent coupled classical oscillator system consists of $N$ positive masses $\{m_j\}_{j=1}^N$, interconnected by spring elements with time-varying stiffness. We define the mass matrix $\mathbf{M}$ of the system to be a diagonal matrix with terms $\mathbf{M}(j,j) = m_j$. Let $k_{ji}(t)$ denote the time-dependent spring constant of the spring interconnecting masses $m_j$ and $m_i$, and let $k_{jj}(t)$ denote the time-dependent spring constant connecting mass $m_j$ to the wall. Let $\mathbf{G}(t) \in \mathbb{R}^{N \times N}$ be the time-dependent stiffness matrix of the coupled oscillator system. The elements of matrix $\mathbf{G}(t)$ are defined as:
\begin{align*}
    \mathbf{G}_{ij}(t) &= k_{ij}(t) = k_{ji}(t) \\
    \mathbf{G}_{jj}(t) &= k_{jj}(t)
\end{align*}
Let $\mathbf{x} = [x_1,x_2,..x_N]^T$ be the displacement vector of the oscillator system. The governing differential equation of the time-dependent coupled classical oscillator system is given by:
\begin{equation}
    \mathbf{M}\ddot{\mathbf{x}} = -\mathbf{K}(t) \mathbf{x}
\end{equation}
Here, $\mathbf{K}(t)$ is the time-dependent incidence matrix of the weighted graph defined by $\mathbf{G}(t)$. Specifically, the diagonal terms of the matrix $\mathbf{K}(t)$ are given by: $\mathbf{K}(t)(j,j) = \sum_i k_{ji}(t)$ and the off-diagonal terms are $\mathbf{K}(t)(j,i) = - k_{ji}(t)$. We assume that $k_{ji}(t)$ can be written as $k_{ji}(t) = \sum_l k_{ji,l} \cos(\omega_{ji,l}t + \phi_{ji,l})$ where $k_{ji,l} > 0$, and the matrix $\mathbf{K}(t)$ remains positive semidefinite for all $t$.
\end{definition}

\begin{definition}[Time-Dependent Forced Coupled Classical Oscillator System]
\label{Def: time_dependent_forced_coupled_oscillator_system}
In a time-dependent forced coupled classical oscillator system, in addition to the forces exerted by the time-varying spring elements, external time-dependent forces $f_i(t)$ act on mass $m_i$. The time-dependent stiffness matrix of the system, $\mathbf{G}(t)$, and the matrix $\mathbf{K}(t)$ are constructed as per \autoref{Def: time_dependent_coupled_oscillator_system}. The governing differential equation of the time-dependent forced coupled classical oscillator system is given by:
\begin{equation}
    \mathbf{M} \ddot{\mathbf{x}} = -\mathbf{K}(t) \mathbf{x} + \mathbf{f}(t)
\end{equation}
Here, $\mathbf{f}(t) = [f_1(t), f_2(t),...,f_N(t)]$ and we assume that both the time-dependent forces and spring constants can be written as: $f_i(t) = \sum_j f_{ij} \cos(\omega_{ij}t + \phi_{ij})$ and $k_{ji}(t) = \sum_l k_{ji,l} \cos(\omega_{ji,l}t + \phi_{ji,l})$.
\end{definition}
These elements are accessed via the following oracles, which give us the similar ability to locate the nonzero matrix elements of the relevant matrices.
\begin{definition}
\label{Def:Oracles-time-dependent-oscillator-system}
(Access Model for time-dependent oscillator system)
Let $m_{max} \geq m_j$ and let $m_j = m_{max} [.b_{j,1} b_{j,2}...]$, $b_{j,i} \in \{0,1\}$. Here, $[.b_{j,1} b_{j,2}...] $ is a binary fraction, and we use the bit representation to define the basis state $\ket{\bar{m}_j}$. Similarly, we represent the stiffness parameters $k_{ij,l}$, frequencies $\omega_{ij,l}$, and phase terms $\phi_{ij,l}$ as basis states $\ket{\bar{k}_{ij,l}}$, $\ket{\bar{\omega}_{ij,l}}$, and $\ket{\bar{\phi}_{ij,l}}$. Using this representation, we define oracles for accessing elements of the matrices $\mathbf{M}$, $\mathbf{G}(t)$, and $\mathbf{K}(t)$:

\begin{align*}
O_M&: \ket{j}\ket{0} \longrightarrow \ket{j}\ket{\bar{m}_{j}}
\\
O_{G}&: 
    \ket{j}\ket{k}\ket{l}\ket{0} \longrightarrow \ket{j}\ket{k}\ket{l}\ket{\bar{k}_{jk,l}} 
\\
O_{\omega}&: 
    \ket{j}\ket{k}\ket{l}\ket{0} \longrightarrow \ket{j}\ket{k}\ket{l}\ket{\bar{\omega}_{jk,l}} 
\\
O_{\phi}&:
    \ket{j}\ket{k}\ket{l}\ket{0} \longrightarrow \ket{j}\ket{k}\ket{l}\ket{\bar{\phi}_{jk,l}} 
\end{align*}
We additionally define oracles to access the forcing terms for the time-dependent forced oscillator system:

\begin{align*}
O_{F}: \ket{j}\ket{k}\ket{0} &\longrightarrow \ket{j}\ket{k}\ket{\bar{f}_{jk}} 
\\
O_{F_{\omega}}: 
    \ket{j}\ket{k}\ket{0} &\longrightarrow \ket{j}\ket{k}\ket{\bar{\omega}_{jk}} 
\\
O_{F_{\phi}}: 
    \ket{j}\ket{k}\ket{0} &\longrightarrow \ket{j}\ket{k}\ket{\bar{\phi}_{jk}} \\
\end{align*}
\end{definition}

\section{Problem Statement and Main Results}\label{sec:results}

We give a brief summary of the main results in this section. Our primary focus is on coupled oscillator systems - a fundamental model that appears across diverse areas of physics and engineering, from molecular dynamics to structural mechanics. We consider several variations of the coupled classical oscillator simulation problem, starting with the case where external time-dependent forces act on each mass. Specifically, consider a network of $2^n$ point masses connected by springs, where each mass experiences both the internal forces from its spring connections and external time-dependent forces. This system represents a broad class of physical phenomena, including vibrating mechanical structures, molecular systems, and coupled electronic oscillators.  \\


Now, let us state the formal statement of the forced oscillator simulation problem. In this problem, our goal is to simulate a coupled classical oscillator system consisting of $2^n$ masses connected together with spring elements, with external time dependent forces acting on each mass. The formal statement of the problem is given below.

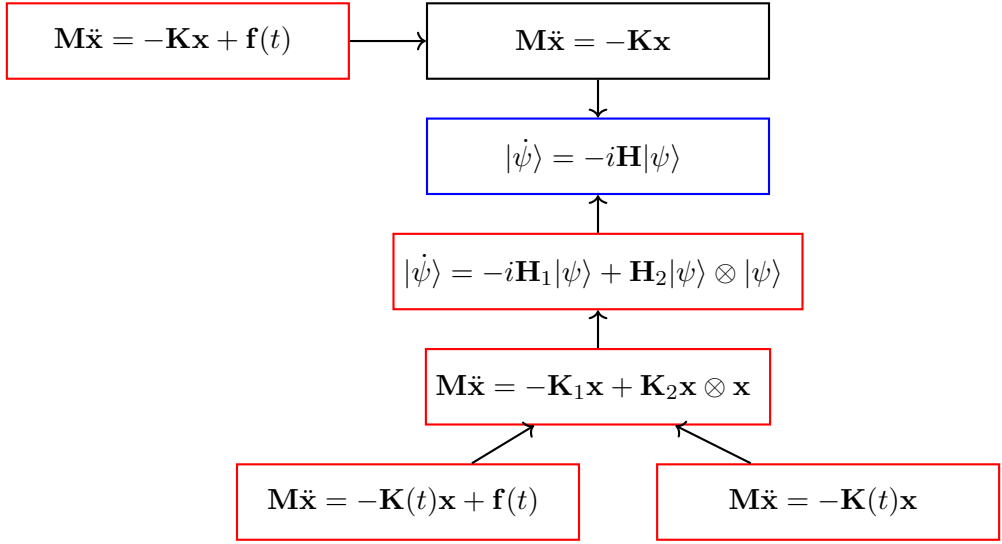
\begin{figure}[t]
\begin{center}
\begin{tikzpicture}[
    node distance=.5cm and 1cm, 
    every node/.style={draw, minimum width=4.5cm, minimum height=1cm, align=center},
    blackbox/.style={draw=black, thick},
    bluebox/.style={draw=blue, thick},
    redbox/.style={draw=red, thick}
]

\node (forcing) [redbox] { $\mathbf{M} \ddot{\mathbf{x}} = -\mathbf{K} \mathbf{x} + \mathbf{f}(t)$ };
\node (reduced1) [blackbox, right=of forcing] { $\mathbf{M} \ddot{\mathbf{x}} = -\mathbf{K} \mathbf{x}$ };

\node (Schr\"{o}dinger) [bluebox, below=of reduced1] { $|\dot{\psi} \rangle = - i \mathbf{H} |\psi \rangle$ };

\node (nonlinear) [redbox, below=of Schr\"{o}dinger] { $|\dot{\psi} \rangle = - i \mathbf{H}_1 |\psi \rangle + \mathbf{H}_2 |\psi \rangle \otimes |\psi \rangle$ };
\node (nonlinearOsc) [redbox, below=of nonlinear] { $\mathbf{M} \ddot{\mathbf{x}} = -\mathbf{K}_1 \mathbf{x} + \mathbf{K}_2 \mathbf{x} \otimes \mathbf{x}$ };

\node (forcingVar) [redbox, below=of nonlinearOsc, xshift=-2.5cm] { $\mathbf{M} \ddot{\mathbf{x}} = -\mathbf{K}(t) \mathbf{x} + \mathbf{f}(t)$ };
\node (varcoeff) [redbox, right=of forcingVar] { $\mathbf{M} \ddot{\mathbf{x}} = -\mathbf{K}(t) \mathbf{x}$ };

\draw[->, thick] (forcing) -- (reduced1);
\draw[->, thick] (reduced1) -- (Schr\"{o}dinger);
\draw[->, thick] (nonlinear) -- (Schr\"{o}dinger);
\draw[->, thick] (nonlinearOsc) -- (nonlinear);
\draw[->, thick] (forcingVar) -- (nonlinearOsc);
\draw[->, thick] (varcoeff) -- (nonlinearOsc);

\end{tikzpicture}
\end{center}
\caption{Conceptual Diagram. The diagram shows two reduction chains leading to the Schrödinger equation.}
\label{fig: conceptual_diagram}
\end{figure}

%


\begin{restatable}[Forced Oscillator Simulation Problem]{problem}{forcedoscillator}
\label{prblm 1}
 Consider the forced coupled oscillator system described in ~\autoref{Def: forced_coupled_oscillator_system}. Assume we have access to elements of the mass matrix $\mathbf{M} \in \mathbb{R}^{N \times N}$, $d_1$ sparse stiffness matrix $\mathbf{G} \in \mathbb{R}^{N \times N}$, and the time-dependent force $\mathbf{f}(t) \in \mathbb{R}^N$ through oracles defined in~\autoref{Def:Oracles-forced-oscillator-system}. The governing differential equation of the forced coupled oscillator system is given by:
\begin{equation}
    \label{eq:forced_oscillator}
    \mathbf{M} \ddot{\mathbf{x}} = - \mathbf{K} \mathbf{x} + \mathbf{f}(t)
\end{equation}


The Forced Oscillator Simulation Problem involves preparing a quantum state $\ket{\psi{(t)}}$, using a minimal number of queries to the oracles and gate operations, that encodes the solution to Eq.~\eqref{eq:forced_oscillator} such that the error in the Euclidean norm between the normalized and projected quantum state and the normalized solution to Eq.~\eqref{eq:forced_oscillator} obeys
\begin{equation}
 \left\|  \frac{\mathbf{P}'\ket{\psi{(t)}}}{\|\mathbf{P}'\ket{\psi{(t)}}\|_2}  - \frac{[\mathbf{x},\dot{\mathbf{x}}]^T}{\|[\mathbf{x},\dot{\mathbf{x}}]^T\|_2} \right\|_2 \leq \epsilon 
\end{equation}
for some $\epsilon>0, t>0$ where $\mathbf{P}'$ is a rectangular matrix that removes any fictitious or ancillary variables in the mapping.


\end{restatable}
A related problem is studied in \cite{danz2024calculating}, but the algorithm is restricted only to modal analysis of the system. 
Our first main result is a quantum algorithm that can efficiently prepare the state defined in \autoref{prblm 1}. The main insight behind our result is that the forced oscillator system can be embedded into a larger coupled oscillator system containing auxillary masses which exert desired time dependent external forces to other masses in the system. Applying perturbation theory, we show that the parameters of this auxillary system can be chosen such that the dynamics of a subspace of the larger oscillator system closely resembles the desired dynamics of the forced oscillator system defined in \autoref{prblm 1}.




\begin{restatable}[Forced Oscillator Simulation]{theorem}{QuantumAlgorithmThm}
\label{thm:forced_oscillator_result}
     There exists a quantum algorithm that can solve \autoref{prblm 1} with $Q = \mathcal{O}(\tau + \log(\frac{1}{\epsilon}))$ queries to the oracles for $\mathbf{G}$, $\mathbf{M}$ and $\mathbf{f}(t)$. The algorithm uses $\mathcal{O}(\log(N(l+1)))$ qubits and require $G = \mathcal{O}\left( Q \times \log^2\left( \frac{N(l+1) \tau}{\epsilon} \frac{m_{\text{max}}}{m_{\text{min}}} \right) \right)$ two qubit gates. Here, $\tau = t \sqrt{2\alpha d} \geq 1$ where $d = d_1 +l$ is the sum of maximum number of nonzero elements in any row of the stiffness matrix $\mathbf{G}$ and the maximum number of fourier terms in the expansion of $\mathbf{f}(t)$. The constant, $\alpha = \max_{jk} \left({\frac{k_{jk}}{m_j}} , \omega_{jk}^2\right)$. 
 
\end{restatable}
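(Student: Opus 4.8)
\emph{Proof proposal.}
The plan is to realize the time-dependent forcing $\mathbf{f}(t)$ as the back-action of a family of fictitious, freely oscillating auxiliary masses weakly coupled to the physical masses, thereby replacing the inhomogeneous equation~\eqref{eq:forced_oscillator} by a larger \emph{homogeneous} coupled-oscillator system of the type in~\autoref{Def: coupled_oscillator_system}; this homogeneous system is then processed by the harmonic-oscillator-to-Schr\"odinger map of~\cite{babbush2023exponential} and simulated by optimal sparse Hamiltonian simulation. Concretely, for each physical mass $i$ and each Fourier component $j\in\{1,\dots,l\}$ of $f_i$ I would introduce an auxiliary mass with a wall spring tuned so that its free natural frequency equals $\omega_{ij}$, couple it to mass $i$ by a spring of constant $\kappa_{ij}$, and subtract a compensating diagonal term from $\mathbf{K}$ so that the augmented stiffness matrix stays positive semidefinite and the effective $\mathbf{x}$-block is unchanged. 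Choosing $\kappa_{ij}$ and the auxiliary initial data exactly as in~\autoref{Def:Oracles-forced-oscillator-system} makes $\kappa_{ij}\,y_{ij}(t)=f_{ij}\cos(\omega_{ij}t+\phi_{ij})$ whenever $y_{ij}$ oscillates purely at $\omega_{ij}$, so that, neglecting the back-reaction of $\mathbf{x}$ on the auxiliary coordinates, the $\mathbf{x}$-block of the augmented homogeneous system reproduces~\eqref{eq:forced_oscillator}. The augmented system has $N(l+1)$ masses, its stiffness matrix is $O(d_1+l)=O(d)$-sparse, and $\|\tilde{\mathbf{M}}^{-1}\tilde{\mathbf{K}}\|=O(\alpha d)$ with $\alpha=\max_{jk}(k_{jk}/m_j,\omega_{jk}^2)$ by a Gershgorin estimate.

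The error incurred by dropping the back-reaction is controlled perturbatively. Writing the augmented second-order ODE in first-order form $\dot{\mathbf{w}}=\mathbf{A}\mathbf{w}$ and splitting $\mathbf{A}=\mathbf{A}_0+\delta\mathbf{A}$ into the decoupled part (a direct sum of harmonic oscillators, so $\mu(\mathbf{A}_0)\approx0$ in energy coordinates) and the weak coupling $\delta\mathbf{A}$, Duhamel's formula together with the logarithmic-norm bound $\|e^{\mathbf{A}_0 s}\|\le e^{\mu(\mathbf{A}_0)s}$ from the preliminaries gives $\|\mathbf{w}(t)-\mathbf{w}_0(t)\|=O(\|\delta\mathbf{A}\|\,t\,e^{O(\|\delta\mathbf{A}\|t)})$, i.e. an error growing essentially linearly in $t$. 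Taking the auxiliary masses only polynomially large in $t$, $1/\epsilon$ and the system parameters — so that $\log(m_{\max}/m_{\min})$ contributes only the stated $\operatorname{poly}\log$ factor — drives $\|\delta\mathbf{A}\|$ small enough that, after applying the projector $\mathbf{P}'$ onto the $[\mathbf{x},\dot{\mathbf{x}}]^T$ block and renormalizing, this contribution is at most $\epsilon/2$. The renormalization is harmless: the augmented system conserves energy, so the $\ell_2$ weight on the physical block is bounded below in terms of the known constants $\|\mathbf{x}(0)\|,\|\dot{\mathbf{x}}(0)\|,\|\mathbf{y}(0)\|,\|\dot{\mathbf{y}}(0)\|$ used to build $\mathcal{W}$.

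For the second half, apply the map of~\cite{babbush2023exponential} to the augmented homogeneous oscillator to obtain $i\partial_t\ket{\psi}=\mathbf{H}\ket{\psi}$ with $\mathbf{H}$ Hermitian and $O(d)$-sparse on $O(\log(N(l+1)))$ qubits, and $\|\mathbf{H}\|=O(\sqrt{\alpha d})$, so that $\|\mathbf{H}\|t=O(\tau)$ with $\tau=t\sqrt{2\alpha d}$. The entries of $\mathbf{H}$ (built from $\tilde{\mathbf{M}}^{-1/2}$, $\sqrt{\tilde{\mathbf{K}}}$ and the edge structure of the augmented graph) are computable with $O(1)$ calls to $O_M,O_G,O_{F_w},O_{F_\phi},O_{F_f}$ plus $O(\log^2(N(l+1)\tau\, m_{\max}/(\epsilon\, m_{\min})))$ arithmetic gates for the comparisons, reciprocals and square roots carried out to the required precision; the auxiliary masses, frequencies $\omega_{ij}$, phases $\phi_{ij}$ and amplitudes $f_{ij}$ (hence $\kappa_{ij}$) are all recoverable from these oracles. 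Preparing $\ket{\psi(0)}$ with $\mathcal{W}$ and simulating $e^{-i\mathbf{H}t}$ to error $\epsilon/2$ by optimal sparse Hamiltonian simulation~\cite{low2019hamiltonian} costs $Q=O(\|\mathbf{H}\|t+\log(1/\epsilon))=O(\tau+\log(1/\epsilon))$ oracle queries and $G=O\!\left(Q\log^2(N(l+1)\tau m_{\max}/(\epsilon m_{\min}))\right)$ two-qubit gates; a final application of $\mathbf{P}'$ and renormalization, combining the $\epsilon/2$ embedding error with the $\epsilon/2$ simulation error, yields the claimed accuracy, query count, qubit count, and gate count.

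The main obstacle is the perturbative step: one must simultaneously (i) make the auxiliary--physical coupling weak enough that the back-reaction-induced distortion of the driving sinusoids stays below $\epsilon$ over the \emph{entire} interval $[0,t]$; (ii) keep every auxiliary mass and spring constant only polynomially large, so the arithmetic inside the block encoding remains $\operatorname{poly}\log$ in cost; and (iii) keep the augmented stiffness matrix positive semidefinite, which is required for the Schr\"odinger map. The logarithmic-norm/Gr\"onwall estimate is precisely what reconciles (i) with (ii), but checking that the specific constants fixed in~\autoref{Def:Oracles-forced-oscillator-system} (the factors $2l$, the division by $k_{ii}$) are exactly those that also secure (iii) — and that the compensating diagonal corrections leave $\mathbf{K}$ and $\tilde{\mathbf{K}}$ positive semidefinite while not inflating $\alpha$ — is the technical heart of the argument.
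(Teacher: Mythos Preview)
Your proposal is correct and follows essentially the same approach as the paper: embed the forced system in a larger homogeneous oscillator by attaching $l$ auxiliary masses per physical mass with couplings $k_{ii}/(2l)$ and wall springs $m_f\omega_{ij}^2$, control the back-reaction by taking $m_f$ polynomially large in $t/\epsilon$, then apply the Babbush et al.\ map and optimal sparse Hamiltonian simulation. The only notable difference is that the paper bounds the back-reaction via an explicit perturbation series in $\gamma=1/m_f$ (summing a geometric series in $\gamma t\|\mathbf{K}'\|\|\mathbf{T}\|\|\sqrt{\mathbf{M}}^{-1}\|$), whereas you invoke Duhamel plus the logarithmic-norm bound; both yield the same $m_f=O(t/\epsilon)$ scaling and are interchangeable here.
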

The complexity of the simulation scales logarithmically with respect to the number of masses and is proportional to the square root of the maximum number of terms in the decomposition of the time-dependent external force $\mathbf{F}(t)$. Furthermore, the scaling with respect to time is linear, with an extra factor proportional to the square root of the sparsity of the matrix $\mathbf{G}$.  Since a special case of the forced oscillator problem was proven to be BQP-complete in \cite{babbush2023exponential}, classical algorithms for solving \autoref{prblm 1} are expected to exhibit exponential scaling with the number of masses in the oscillator system.
\\

The quantum analogue of the forced oscillator problem is a simulation problem wherein the dynamics is non-unitary, and one notable example of such dynamics arises in the simulation of the nonlinear Schr\"{o}dinger equation. The formal definition of the problem is given below.

\begin{restatable}[Simulating nonlinear quantum dynamics]{problem}{nonlinear_Schr\"{o}dinger}
\label{nonlinear_Schr\"{o}dinger}
 Consider the nonlinear Schr\"{o}dinger Equation described in ~\autoref{Def: nonlinear_Schr\"{o}dinger_eq}. Assume we have access to elements of the Hermitian matrix $\mathbf{H}_1 \in \mathbb{C}^{N \times N}$, rectangular $\mathbf{H}_2 \in \mathbb{C}^{N \times N^2}$, through oracles defined in~\autoref{Def:Oracles-nonlinear_Schr\"{o}dinger_eq}. 
\begin{equation}
    \label{eq:nonlinear_Schr\"{o}dinger}
    \dot{\ket{\bm{\psi}} } = -i\mathbf{H}_1 \ket{\bm{\psi}} + \mathbf{H}_2 \ket{\bm{\psi}} \otimes \ket{\bm{\psi}}
\end{equation}

The nonlinear quantum dynamics simulation problem involves preparing a quantum state $\widetilde{\ket{\psi{(t)}}}$, using a minimal number of queries to the oracles and gate operations, that encodes the solution to Eq.~\eqref{eq:nonlinear_Schr\"{o}dinger} such that the error in the Euclidean norm between the prepared quantum state and the actual solution to Eq.~\eqref{eq:nonlinear_Schr\"{o}dinger} obeys
\begin{equation}
 \left\| \frac{\widetilde{\ket{\bm{\psi}{(t)}}} }{\widetilde{ \braket{\bm{\psi}{(t)}|\bm{\psi}{(t)}}}}    - \frac{\ket{\bm{\psi}(t)}}{\braket{ \bm{\psi}{(t)}|\bm{\psi}{(t)}}}\right\|_2 \leq \epsilon 
\end{equation}
for some $\epsilon>0, t>0$.  

\end{restatable}
In this case, although one can embed the nonlinear dynamical system in a higher dimensional linear dynamical system using Carlemann Embedding techniques \cite{carleman1932application}, the constructed Carlemann Operator need not be Hermitian, and hence the dynamics of the Carlemann Truncated system is non-unitary. 
\\

While nonlinear quantum dynamics generally cannot be efficiently simulated using quantum algorithms without leading to violations in the Holevo-Helstrom bound \cite{liu2021efficient}, we demonstrate that certain classes of nonlinear Schr\"{o}dinger equations become amenable to efficient quantum simulation under specific conditions. The key insight is that when the nonlinearity satisfies particular structural properties - specifically when the system maintains norm-preservation and the dynamics obey non-resonant conditions, we can employ a combination of Carlemann linearization and novel symmetrization techniques to transform the nonlinear dynamics into a linear quantum evolution in a higher-dimensional space. Our main theoretical contribution is a quantum algorithm that efficiently prepares quantum states evolving according to the nonlinear Schrödinger equation stated in \autoref{nonlinear_Schr\"{o}dinger}, with complexity scaling linearly in the simulation time. 

\begin{restatable}[Nonlinear Quantum Dynamics Simulation]{theorem}{NSEQuantumAlgorithmThmResonanceCond}
    
    Consider the nonlinear Schr\"{o}dinger Equation stated in \autoref{nonlinear_Schr\"{o}dinger}. Let $\beta = \braket{\psi(0)|\psi(0)}$, and assume $\braket{\psi(t)|\psi(t)} \leq \beta$, for $t \in [0,T]$. Let $\{\lambda_1,,,\lambda_n\}$ be the eigenvalues of the matrix $-i\mathbf{H}_1$. Then, there exists a quantum algorithm that can solve \autoref{nonlinear_Schr\"{o}dinger} with $Q_1$ queries to the oracles for $\mathbf{H}_1$, and $\mathbf{H}_2$. 
     
     \[Q_1 = \mathcal{O}\left( \alpha k^2 t + k\log\left(\left[{\|\mathbf{H}_2 \|k^2} \left(1 + \frac{\beta (1-\beta^k)}{\epsilon (1-\beta)}\right)t \right] \right) \right)\] and a single query to the state preparation oracle $\mathcal{W}$. Here,
     \[
    k \in \widetilde{\mathcal{O}} \left( \frac{\log(\frac{CT}{\epsilon})}{\log(1/R_r)} \right)
     \]
     Here, $R_r = \frac{4 e \beta \|\mathbf{H}_2\|}{\Delta} < 1$, $W(.)$ is the Lambert W function, $C =  \|\mathbf{H}_2\|\beta^2$, and $\Delta$ is defined as follows:
     \[
\Delta := \inf_{k \in [n]} \inf_{\substack{m_j \geq 0 \\ \sum_{j=1}^n m_j \geq 2}}
\left(|
\lambda_k - \sum_{i=1}^n m_j \lambda_j
|\right),
\]
The algorithm uses $Q_2$ qubits.
     \[ Q_2  \in
     \mathcal{O} \left(\log(k) + k\log(N) \right)\]
      Here  $\alpha = d\max_{jk}\{|\mathbf{H}_1(j,k)|,|\mathbf{H}_2(j,k)|\}$, where $d$ is the maximum number of non zero elements in any row or column of the matrices $\mathbf{H}_1$ and $\mathbf{H}_2$.
\end{restatable}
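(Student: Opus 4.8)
The plan is to realize the nonlinear Schrödinger equation as a truncation of an infinite linear system via Carleman linearization, argue that the truncation error decays geometrically under the non-resonant hypothesis, symmetrize the resulting finite generator so that Hamiltonian simulation applies, and finally account for the block-encoding cost of that generator in terms of the oracles of \autoref{Def:Oracles-nonlinear_Schr\"{o}dinger_eq}. Concretely, I would first set $\ket{\bm{\psi}}^{(j)} := \ket{\bm{\psi}}^{\otimes j}$ and derive the evolution equation for the stacked vector $\bm{\Psi} = (\ket{\bm{\psi}}^{(1)}, \ket{\bm{\psi}}^{(2)}, \dots, \ket{\bm{\psi}}^{(k)})$. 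Differentiating $\ket{\bm{\psi}}^{(j)}$ and substituting \eqref{eq:nonlinear_Schr\"{o}dinger} produces a block-bidiagonal generator: the diagonal block on level $j$ is $\sum_{i} I^{\otimes (i-1)}\otimes(-i\mathbf{H}_1)\otimes I^{\otimes(j-i)}$, the coupling to level $j+1$ comes from the $\mathbf{H}_2$ term (raising the tensor degree by one), and the coupling ``down'' to level $j-1$ is discarded by the truncation at level $k$. This is the standard Carleman structure; the novelty is that the diagonal part is anti-Hermitian, so the whole generator is a perturbation of a skew-Hermitian operator by a strictly upper-triangular nilpotent perturbation of norm $O(\|\mathbf{H}_2\|)$.

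The second step is the truncation analysis, and this is where the non-resonance condition $\Delta>0$ and the norm bound $\braket{\psi(t)|\psi(t)}\le\beta$ enter. I would bound the discarded term $\mathbf{H}_2\ket{\bm{\psi}}^{(k)}\otimes\ket{\bm{\psi}}^{(k)}$-type contribution — more precisely the error from dropping the level-$(k{+}1)$ coupling — by iterating a Duhamel/variation-of-constants estimate up the tower. The key mechanism is that on level $j$ the homogeneous evolution $e^{-i(\sum \mathbf{H}_1)t}$ has spectrum $\{-i\sum_j m_j\lambda_j\}$, and the coupling term from a different level is off-resonant by at least $\Delta$, so the secular growth that would normally give a $t^k$ blow-up is suppressed; summing the geometric series in the effective ratio $R_r = 4e\beta\|\mathbf{H}_2\|/\Delta$ yields a truncation error bounded by something like $C T R_r^{\,k}/(1-R_r)$ (up to the $\beta(1-\beta^k)/(1-\beta)$ factor tracking the relative normalization between $\ket{\psi}$ and $\ket{\psi}^{(k)}$). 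Choosing $k = \widetilde{\mathcal O}(\log(CT/\epsilon)/\log(1/R_r))$ makes this at most $\epsilon$. I expect this resonance-based geometric decay estimate to be the main obstacle: getting the constant $4e$ and the exact dependence on $\beta$ right requires carefully combining the interaction-picture bound, the combinatorial count of how many level-$i$ tensor factors feed into level $i{+}1$, and the normalization bookkeeping — and one must check the argument is uniform in $t\in[0,T]$ rather than just pointwise.

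The third step handles the fact that the truncated Carleman generator $\mathbf{A}$ is still non-Hermitian (it has the nilpotent $\mathbf{H}_2$-blocks). Here I would invoke the paper's ``novel symmetrization'': dilate $\mathbf{A}$ into a Hermitian operator on a space enlarged by an $O(\log k)$-qubit register — e.g. the antisymmetric-type block construction $\mathbf{H} = \begin{pmatrix}0 & \mathbf{A}^\dagger\\ \mathbf{A} & 0\end{pmatrix}$ or the interaction-picture trick that absorbs the skew-Hermitian diagonal — so that $e^{-i\mathbf{H}t}$ restricted to the appropriate subspace reproduces $e^{\mathbf{A}t}$ acting on $\ket{\bm{\psi}(0)}^{(1)}$, with the geometric weighting in $\eta$ supplied by the state-preparation unitary $\mathcal{W}(\eta,k,\aleph)$ of \autoref{Def:Oracles-nonlinear_Schr\"{o}dinger_eq}. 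The $\eta$-weighting is what keeps the relevant amplitude from being exponentially small, which is why $\aleph$ and the $\beta^k$ factors show up. One then reads off that the enlarged system has $O(\log k + k\log N)$ qubits.

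The final step is the resource count. Block-encoding each diagonal block costs $O(1)$ sparse-Hamiltonian-simulation-style queries with normalization $\alpha = d\max_{jk}\{|\mathbf{H}_1(j,k)|,|\mathbf{H}_2(j,k)|\}$ — here I would use the oracles $O_{H_1}, O_{s_{H_1}}$ and $O_{H_2'}, O_{r_{H_2'}}, O_{c_{H_2'}}$ to implement a $(2,\cdot,\cdot)$-block-encoding of $\mathbf{A}/\alpha$, noting the $k^2$ factor because level $j$ contains $j\le k$ Kronecker-summed copies of $\mathbf{H}_1$ and we have $k$ levels. Feeding this into an optimal Hamiltonian-simulation subroutine (e.g. QSP/qubitization) to simulate $e^{-i\mathbf{H}t}$ for time $t$ to precision $\epsilon$ gives query complexity $O(\alpha k^2 t + k\log(\|\mathbf{H}_2\|k^2(1+\beta(1-\beta^k)/(\epsilon(1-\beta)))t))$, which is the stated $Q_1$; the single $\mathcal{W}$ query prepares the initial $\eta$-weighted Carleman vector, and the gate count follows from the standard $\widetilde O(\text{poly}\log)$ overhead per block-encoding query. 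Collecting the three error contributions — Carleman truncation ($\epsilon/3$), Hamiltonian simulation ($\epsilon/3$), and the dilation/renormalization projection ($\epsilon/3$) — and the qubit and query tallies completes the proof.
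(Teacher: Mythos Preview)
Your overall architecture is right --- Carleman linearize, control truncation via non-resonance, Hermitianize, then qubitize --- but the third step has a genuine gap. The antisymmetric dilation $\mathbf{H} = \begin{pmatrix}0 & \mathbf{A}^\dagger\\ \mathbf{A} & 0\end{pmatrix}$ you propose does \emph{not} reproduce $e^{\mathbf{A}t}$ on any subspace: for non-normal $\mathbf{A}$ there is no invariant subspace on which the unitary $e^{-i\mathbf{H}t}$ acts as the (generically non-unitary) semigroup $e^{\mathbf{A}t}$. That block trick only works in the special ``square-root'' situation (as in the undamped-oscillator reduction $\ddot{\mathbf u}=-\mathbf{BB}^\dagger\mathbf u$), which is not what you face here: the Carleman generator is anti-Hermitian on the diagonal \emph{plus} a strictly upper block-bidiagonal nilpotent piece, and it is the nilpotent piece you must tame without destroying the diagonal unitary evolution.

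The paper's symmetrization is a different, perturbative mechanism, and the $\eta$-weighting is not merely a normalization device --- it \emph{is} the mechanism. One rescales the Carleman vector so that block $i$ carries $\ket{\psi}^{\otimes i}/\eta^{k-i}$, and simultaneously replaces the generator by the genuinely Hermitian $\hat{\mathbf Q}(\eta)$ obtained by scaling the upper blocks $\mathbf{B}_i$ by $1/\eta$ and inserting the missing lower blocks $\mathbf{B}_i^\dagger/\eta$. After undoing the diagonal rescaling, the upper blocks recover their original size while the inserted lower blocks become an $O(1/\eta^2)$ perturbation of the true non-Hermitian Carleman dynamics; a Dyson/variation-of-constants argument then shows the error is at most $\epsilon$ once $\eta^2 \ge \|\hat{\mathbf H}\|\bigl(1+\beta(1-\beta^k)/((1-\beta)\epsilon)\bigr)t$ with $\|\hat{\mathbf H}\|\le \|\mathbf H_2\|k(k{+}1)/2$. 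This is precisely why the logarithmic term in $Q_1$ takes its stated form: the first Carleman block sits at relative amplitude $\eta^{1-k}$, so the Hamiltonian-simulation precision must be $\epsilon_1\sim \eta^{1-k}$, giving $\log(1/\epsilon_1)\sim k\log\eta$. Your accounting of the $\alpha k^2$ block-encoding normalization is correct.

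A smaller point: the paper does not derive the non-resonance truncation bound itself; it invokes an external Carleman-convergence result that produces $k$ via a Lambert-$W$ expression, then simplifies to $k\in\widetilde{\mathcal O}\bigl(\log(CT/\epsilon)/\log(1/R_r)\bigr)$. Your Duhamel sketch for this step is plausible in spirit but would need to reproduce exactly the ratio $R_r=4e\beta\|\mathbf H_2\|/\Delta$ to match the statement, and that combinatorics is nontrivial.
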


When the nonlinear Schr\"{o}dinger Equation doesn't satisfy the no-resonance condition defined in \cite{wu2024quantum}, the truncated Carlemann System is only known to converge for small $t$. Hence, in this case, we can efficiently simulate the dynamics only when the product of norm of the nonlinear coupling matrix, time, and the norm of initial conditions is less than one. The formal statement of the theorem is given below. 

\begin{restatable}[Nonlinear Quantum Dynamics Simulation]{theorem}{NSEQuantumAlgorithmThm}
    
    Consider the nonlinear Schr\"{o}dinger Equation stated in \autoref{nonlinear_Schr\"{o}dinger}. Let $\beta = \braket{\psi(0)|\psi(0)}$, and assume $\braket{\psi(t)|\psi(t)} \leq \beta$, for $t < \frac{1}{ \beta\|\mathbf{H}_2\|}$. Then, there exists a quantum algorithm that can solve \autoref{nonlinear_Schr\"{o}dinger} with $Q_1$ queries to the oracles for $\mathbf{H}_1$, and $\mathbf{H}_2$. 
     
     \[Q_1 = \mathcal{O}\left( \alpha k^2 t + k\log\left(\left[{\|\mathbf{H}_2 \|k^2} \left(1 + \frac{\beta (1-\beta^k)}{\epsilon (1-\beta)}\right)t \right] \right) \right)\] and a single query to the state preparation oracle $\mathcal{W}$. Here,\[
     k \in \mathcal{O}\left(\frac{\log(1/\epsilon)}{\log(1/\beta \|\mathbf{H}_2\|)}\left(1 + \frac{\log(t)}{\log(1/ \beta \|\mathbf{H}_2\|)} \right) \right)
     \]
     The algorithm uses $Q_2$ qubits.
     \[ Q_2  \in
     \mathcal{O} \left(\log(k) + k\log(N) \right)\]
      Here  $\alpha = d\max_{jk}\{|\mathbf{H}_1(j,k)|,|\mathbf{H}_2(j,k)|\}$, where $d$ is the maximum number of non zero elements in any row or column of the matrices $\mathbf{H}_1$ and $\mathbf{H}_2$.
\end{restatable}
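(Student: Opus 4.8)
The plan is to lift the quadratic nonlinear Schr\"odinger equation of \autoref{Def: nonlinear_Schr\"{o}dinger_eq} to a finite linear (but non-normal) system by Carleman linearization, bound the truncation error using the small-time hypothesis, and then convert the non-Hermitian linear evolution into a short linear combination of genuine Hamiltonian simulations using the symmetrization of \autoref{sec:nonlinear_Schrodinger_eq}. Concretely, set $\ket{z_j}:=\ket{\bm\psi}^{\otimes j}$ and differentiate: each $\ket{z_j}$ couples only to itself, through $\mathbf A^{(j)}_j:=\sum_{p=1}^{j} I^{\otimes(p-1)}\otimes(-i\mathbf H_1)\otimes I^{\otimes(j-p)}$, and to $\ket{z_{j+1}}$, through $\mathbf A^{(j)}_{j+1}:=\sum_{p=1}^{j}I^{\otimes(p-1)}\otimes\mathbf H_2\otimes I^{\otimes(j-p)}$. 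Truncating at level $k$ (dropping the coupling out of $\ket{z_k}$) gives $\dot{\mathbf y}=\mathbf A\mathbf y$ with $\mathbf A=\mathbf A_{\mathrm d}+\mathbf A_{\mathrm u}$, where $\mathbf A_{\mathrm d}=\bigoplus_{j}\mathbf A^{(j)}_j$ is block-diagonal and anti-Hermitian and $\mathbf A_{\mathrm u}$ is strictly block-superdiagonal and block-nilpotent of order $k$, i.e.\ $\mathbf A_{\mathrm u}^{k}=0$. The first block $\mathbf y_1(t)$ is the approximation to $\ket{\bm\psi(t)}$, and, up to the $\eta$-weights, $\mathbf y(0)=(\ket{\bm\psi(0)},\ket{\bm\psi(0)}^{\otimes2},\dots,\ket{\bm\psi(0)}^{\otimes k})$ is exactly what the state-preparation oracle $\mathcal W(\eta,k,\aleph)$ produces.

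The first technical step is the truncation bound. Because $\mathbf A_{\mathrm u}$ is block-nilpotent, the Duhamel/Dyson expansion of $e^{\mathbf A t}$ around $\mathbf A_{\mathrm d}$ — a sum over $m\ge 0$ of $m$-fold nested time-ordered integrals with $m$ insertions of $\mathbf A_{\mathrm u}$ between factors $e^{\mathbf A_{\mathrm d}(\cdot)}$ — terminates at $m=k-1$, whereas the exact solution of the untruncated hierarchy is the non-terminating version of the same expansion, so the truncation error is the dropped tail $m\ge k$. Since $-i\mathbf H_1$ is anti-Hermitian, $\mu(\mathbf A_{\mathrm d})=0$ and hence $\|e^{\mathbf A_{\mathrm d}s}\|\le 1$, so each $\mathbf A_{\mathrm u}$ insertion contributes only $\|\mathbf A^{(j)}_{j+1}\|\le j\|\mathbf H_2\|$ acting on a moment of norm $\le\beta^{j/2}$; invoking $\braket{\bm\psi(t)|\bm\psi(t)}\le\beta$ collapses the remainder to a geometric tail in $\beta\|\mathbf H_2\|t$ times a $\mathrm{poly}(t,k)$ prefactor. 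The series converges precisely because $t<1/(\beta\|\mathbf H_2\|)$, and solving ``tail $\le\epsilon$'' for the smallest admissible $k$ gives the stated $k\in\mathcal O\!\big(\tfrac{\log(1/\epsilon)}{\log(1/\beta\|\mathbf H_2\|)}\big(1+\tfrac{\log t}{\log(1/\beta\|\mathbf H_2\|)}\big)\big)$. This is the step I expect to be the main obstacle: because $\mathbf A$ is highly non-normal one cannot bound $\|e^{\mathbf A t}\|$ by $e^{\|\mathbf A\|t}$ without losing convergence as $k\to\infty$, so the argument must be routed through the logarithmic-norm bound on the anti-Hermitian diagonal part together with the nilpotency of $\mathbf A_{\mathrm u}$ — and it is exactly here that, without the no-resonance gap $\Delta$ of the previous theorem, the regime is forced down to $t<1/(\beta\|\mathbf H_2\|)$.

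Next I would implement the truncated evolution. Applying the block-diagonal similarity transform $\mathbf S=\bigoplus_{i=1}^{k}\eta^{-(k-i)}I$ — whose block weights are precisely those appearing in $\mathcal W(\eta,k,\aleph)$, so that $\mathcal W$ prepares the normalized $\mathbf S\mathbf y(0)$ — rescales every superdiagonal block of $\mathbf A$ by $1/\eta$, after which the symmetrization/dilation of \autoref{sec:nonlinear_Schrodinger_eq} embeds $\mathbf S\mathbf A\mathbf S^{-1}$ into a Hermitian operator on $\mathcal O(1)$ extra ancilla qubits. Combined with the terminating Duhamel series, $e^{\mathbf S\mathbf A\mathbf S^{-1}t}\mathbf S\mathbf y(0)$ becomes a length-$\mathcal O(k)$ linear combination of unitaries, each a Hamiltonian simulation of the slot-summed Hermitian operator built from $\mathbf H_1$ (total time $\Theta(t)$) interleaved with at most $k$ block-encodings of the slot-summed $\mathbf H_2$ operator. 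Sparse-access oracles for these slot-sums are assembled from $O_{H_1},O_{H_2'}$ and the row/column oracles of \autoref{Def:Oracles-nonlinear_Schr\"{o}dinger_eq}; the slot-sum of a $d$-sparse matrix over $\le k$ tensor factors is $\mathcal O(kd)$-sparse with the same maximum entry magnitude, so each Hamiltonian simulation costs $\mathcal O(kd\max_{jk}\{|\mathbf H_1(j,k)|,|\mathbf H_2(j,k)|\}\,t)=\mathcal O(\alpha k t)$ queries. Multiplying by the $\mathcal O(k)$ Duhamel/LCU terms yields the $\mathcal O(\alpha k^2 t)$ term, while the LCU and amplitude bookkeeping contributes the additive $\mathcal O\big(k\log([\|\mathbf H_2\|k^2(1+\beta(1-\beta^k)/(\epsilon(1-\beta)))t])\big)$, where $\beta(1-\beta^k)/(1-\beta)=\sum_{i\le k}\beta^i$ is the total weight of the moment stack relative to the first-block weight $\sim\beta$ that we post-select.

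Finally I would assemble the pieces: prepare $\mathcal W(\eta,k,\aleph)\ket{0}$ with a single query, run the LCU of Hamiltonian simulations, project the block and dilation registers onto the first-moment sector, renormalize to obtain $\widetilde{\ket{\bm\psi(t)}}$ (the overall scalar $\eta^{-(k-1)}$ from $\mathbf S$ on block $1$ drops out under normalization), and choose $k$ and the individual precisions so that the truncation, symmetrization-dilation, Hamiltonian-simulation, and LCU errors each sit below $\epsilon/4$ in the projected, renormalized Euclidean norm of \autoref{nonlinear_Schr\"{o}dinger}. The register count is $\log k$ for the block index, $k\log N$ for the $k$ moment registers, and $\mathcal O(1)$ for the dilation/LCU ancillas, giving $Q_2\in\mathcal O(\log k+k\log N)$, which completes the argument.
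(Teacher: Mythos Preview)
Your Carleman-linearize-then-symmetrize outline matches the paper, and your Dyson-tail sketch for the truncation error is a reasonable alternative to the paper's route (which bounds the error ODE via the logarithmic norm in \autoref{Thm: Carlemann_Truncation_Order} and then quotes \cite{forets2017explicit} for the $k$-formula). But your implementation step misreads what the paper's symmetrization is and, as written, has a real gap.

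The symmetrization of \autoref{sec:nonlinear_Schrodinger_eq} is \emph{not} a dilation onto ancillas: after the $\mathbf S$-rescaling you correctly describe, the paper simply \emph{adds} the lower-triangular blocks $\mathbf B_i^\dagger/\eta$, producing a genuine Hermitian $\hat{\mathbf Q}(\eta)$ on the \emph{same} Hilbert space, and then shows by perturbation theory (\autoref{lemma: main_eta_bound}, \autoref{eta_suff_val_lemma}) that the spurious $1/\eta^2$ back-couplings shift the dynamics by at most $\epsilon$ once $\eta^2\ge \|\hat{\mathbf H}\|\bigl(1+\beta(1-\beta^k)/((1-\beta)\epsilon)\bigr)t$. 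One then runs a \emph{single} QSVT Hamiltonian simulation of $e^{-i\hat{\mathbf Q}(\eta)t}$ (\autoref{thm: nonlinear_Schr"{o}dinger_subnormalized_state}). The $\alpha k^2$ in $Q_1$ is the block-encoding normalization of $\hat{\mathbf Q}(\eta)$ (\autoref{block_encoding_q_eta} gives $6\alpha k^2$), not ``$k$ LCU branches times $\alpha k$ each''; and the additive log term is the QSVT precision cost $\log(1/\epsilon_1)$ with $\epsilon_1\sim\eta^{1-k}$, which is exactly where the argument $\|\mathbf H_2\|k^2\bigl(1+\beta(1-\beta^k)/(\epsilon(1-\beta))\bigr)t$ enters.

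Your proposed alternative---implement the terminating Dyson series directly as a length-$\mathcal O(k)$ LCU of ``Hamiltonian simulations interleaved with $\le k$ block encodings''---does not go through as stated: each Dyson term is an $m$-fold time-ordered integral over a simplex, not a single unitary, so $\mathcal O(k)$ branches under-counts by whatever discretization those integrals require. Interaction-picture/Dyson-series simulation methods that handle this exist, but they need their own cost analysis and do not obviously reproduce the stated $Q_1$ with the stated constants. The clean fix is precisely the paper's move: after rescaling, add the adjoint blocks to get a Hermitian generator, bound the $1/\eta^2$ perturbation to fix $\eta$, and run one Hamiltonian simulation rather than an LCU over Dyson terms.
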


Having introduced efficient quantum algorithms for simulating nonlinear quantum dynamics, we now pose the following question: What other simulation problems can be mapped to the time evolution of a nonlinear Schrödinger equation? In \autoref{sec:nonlinear_oscillator_system}, we address this question by demonstrating that the dynamics of nonlinear coupled oscillator systems can be mapped into nonlinear quantum dynamics. The formal problem statement is defined below:
\begin{restatable}[Nonlinear Oscillator Simulation Problem]{problem}{nonlinearoscillator}
\label{prblm: nonlinear_oscillator}
Consider the nonlinear coupled oscillator system described in ~\autoref{Def: nonlinear_coupled_oscillator_system}. Assume we have access to elements of the mass matrix $\mathbf{M} \in \mathbb{R}^{N \times N}$, linear stiffness matrix $\mathbf{G}_1 \in \mathbb{R}^{N \times N}$, and the quadratic coupling matrix $\mathbf{K}_2 \in \mathbb{R}^{N \times N^2}$ through oracles defined in~\autoref{Def:Oracles-nonlinear-oscillator-system}. The governing differential equation of the nonlinear coupled oscillator system is given by:
\begin{equation}
    \label{eq:nonlinear_oscillator}
    \mathbf{M} \ddot{\mathbf{x}} = -\mathbf{K}_1 \mathbf{x} + \mathbf{K}_2 \mathbf{x}\otimes \mathbf{x}
\end{equation}

The Nonlinear Oscillator Simulation Problem involves preparing a quantum state $\ket{\psi{(t)}}$, using a minimal number of queries to the oracles and gate operations, that encodes the solution to Eq.~\eqref{eq:nonlinear_oscillator} such that the error in the Euclidean norm between the normalized and projected quantum state and the normalized solution to Eq.~\eqref{eq:nonlinear_oscillator} obeys
\begin{equation}
 \left\|  \frac{\mathbf{P}'\ket{\psi{(t)}}}{\|\mathbf{P}'\ket{\psi{(t)}}\|_2}  - \frac{[\mathbf{x},\dot{\mathbf{x}}]^T}{\|[\mathbf{x},\dot{\mathbf{x}}]^T\|_2} \right\|_2 \leq \epsilon 
\end{equation}
for some $\epsilon>0, t \in [0,T]$ where $\mathbf{P}'$ is a rectangular matrix that removes any fictitious or ancillary variables in the mapping. Let's also assume we have access to a state preparation oracle $\mathcal{W}$ that can efficiently prepare the initial state, $\ket{ \bm{ \psi{(0)}} } \propto [\mathbf{x}(t), \dot{\mathbf{x}}(t)]^T$.
\end{restatable}
Next, we show that the dynamics of the nonlinear oscillator system can be efficiently simulated when the nonlinear oscillator satisfies non-resonant conditions and when the norm of the dynamics is upper bounded by a constant. These conditions are necessary for the nonlinear Schr\"{o}dinger equation simulation subroutine which we utilize in simulating the nonlinear oscillator. 
\begin{restatable}[Nonlinear Oscillator Simulation]{theorem}{NOQuantumAlgorithmThm}
\label{thm:nonlinear_oscillator_simulation}
Consider the nonlinear oscillator system described in \autoref{Def: nonlinear_coupled_oscillator_system}. Let $\alpha = \max_{i,j}\left(\sqrt{\frac{k_{ij}}{m_j}}\right)$, and let $\beta = \frac{1}{2 }\left(\dot{\mathbf{x}}(0)^T\mathbf{M}\dot{\mathbf{x}}(0) + \mathbf{x}^T(0)\mathbf{K}_1\mathbf{x}(0)\right)$. Let $\{\lambda_1,,,\lambda_n\}$ be the set containing square root of eigenvalues of the matrix $\mathbf{A} = \sqrt{\mathbf{M}^{-1}} \mathbf{K}_1 \sqrt{\mathbf{M}^{-1}} $, and let $k_{\max}^{(2)} = \max_{ij}\mathbf{K}_2(i,j)$, $k_{\min}^{(1)} = \min_{ij} \mathbf{K}_1(i,j)$ and let $m_{min} = \min_i \mathbf{M}(i,i)$. Further assume $R_r := \frac{4 e \beta d k^{(2)}_{\max}}{k^{(1)}_{\min} (m_{\min})^{3/2}\Delta }<1$, $\|\mathbf{x}(t) \| \leq c$ and $\Delta$ is defined as follows:
     \[
\Delta := \inf_{k \in [n]} \inf_{\substack{m_j \geq 0 \\ \sum_{j=1}^n m_j \geq 2}}
\left(|
\lambda_k - \sum_{i=1}^n m_j \lambda_j
|\right)>0,
\]
    There exists a Quantum Algorithm that solves Problem \ref{prblm: nonlinear_oscillator} with at least $G$ queries to the Oracles defined in \autoref{Def:Oracles-nonlinear-oscillator-system}.  Here, 
    \[
   G=  \mathcal{O}\left( \alpha k^2 t + \frac{(k-1)}{2}\log\left(\left[\frac{d k^{(2)}_{\max} k\left( k +1 \right)}{2k^{(1)}_{\min} (m_{\min})^{3/2}} \left(1 + \beta\frac{1-\beta^k}{(1-\beta)\epsilon} \right)t  \right] \right) \right)
    \]

where, the truncation order $k$, is given by:

\[
k \in 
 \widetilde{\mathcal{O}}\left(\frac{\log\left( \frac{ CT}{\epsilon}   \right)}{\log(1/R_r)} \right)
\]

Here, $C  \leq \frac{d k^{(2)}_{\max} \beta^2 }{2k^{(1)}_{\min} (m_{\min})^{3/2}} $.

\end{restatable}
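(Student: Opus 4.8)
The plan is to reduce Theorem~\ref{thm:nonlinear_oscillator_simulation} to Theorem~\ref{thm:forced_oscillator_result}'s nonlinear-Schr\"odinger counterpart (the resonance-condition version) by exhibiting an explicit change of variables that turns the second-order nonlinear oscillator equation $\mathbf{M}\ddot{\mathbf{x}} = -\mathbf{K}_1\mathbf{x} + \mathbf{K}_2\mathbf{x}\otimes\mathbf{x}$ into a first-order nonlinear Schr\"odinger equation of the form in~\autoref{Def: nonlinear_Schr\"{o}dinger_eq}, and then invoking the simulation result for that equation as a black box. First I would rescale coordinates by $\mathbf{z} = \sqrt{\mathbf{M}}\,\mathbf{x}$ so that the linear part becomes $\ddot{\mathbf{z}} = -\mathbf{A}\mathbf{z} + (\text{quadratic})$ with $\mathbf{A} = \sqrt{\mathbf{M}^{-1}}\mathbf{K}_1\sqrt{\mathbf{M}^{-1}}$, which is positive semidefinite, so $\sqrt{\mathbf{A}}$ is well-defined; its eigenvalues are exactly the $\{\lambda_j\}$ of the theorem statement. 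Then, following the standard ``Hamiltonian'' doubling used in the harmonic-oscillator algorithm of~\cite{babbush2023exponential}, I would introduce the state $\ket{\bm{\psi}} = [\sqrt{\mathbf{A}}\,\mathbf{z}, \dot{\mathbf{z}}]^T$ (up to normalization), so that the linear part of $\dot{\ket{\bm{\psi}}}$ is $-i\mathbf{H}_1\ket{\bm{\psi}}$ with $\mathbf{H}_1 = \begin{pmatrix} 0 & \sqrt{\mathbf{A}} \\ \sqrt{\mathbf{A}} & 0\end{pmatrix}$ Hermitian, and the quadratic term $\mathbf{K}_2\mathbf{x}\otimes\mathbf{x}$, re-expressed in the $\mathbf{z}$ variables, becomes a bilinear form in $\ket{\bm{\psi}}$ that I would package into the rectangular matrix $\mathbf{H}_2$. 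This is exactly the ``nonlinear Schr\"odingerization'' map advertised in the introduction (\autoref{fig: conceptual_diagram}, the arrow from the nonlinear oscillator box to the nonlinear Schr\"odinger box).

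Next I would verify that the hypotheses of the nonlinear-Schr\"odinger simulation theorem translate correctly. The norm $\braket{\bm{\psi}(0)|\bm{\psi}(0)}$ equals $\tfrac12(\dot{\mathbf{x}}(0)^T\mathbf{M}\dot{\mathbf{x}}(0) + \mathbf{x}(0)^T\mathbf{K}_1\mathbf{x}(0)) = \beta$; the monotone-in-time norm bound $\braket{\bm{\psi}(t)|\bm{\psi}(t)}\le\beta$ is the discrete energy statement one would establish from the structure of the equation (or assume as part of the input, as the corresponding NSE theorem does). I would compute $\|\mathbf{H}_2\|$ in terms of $k^{(2)}_{\max}$, the sparsity $d$, $m_{\min}$, and $k^{(1)}_{\min}$ — the factor $(m_{\min})^{-3/2}$ arising from the two factors of $\sqrt{\mathbf{M}^{-1}}$ in rewriting $\mathbf{x}\otimes\mathbf{x}$ and one factor of $\sqrt{\mathbf{M}^{-1}}$ (together with an inverse of $\sqrt{\mathbf{A}}\sim k^{(1)}_{\min}$) from writing $\mathbf{z}$ in terms of the first block of $\ket{\bm{\psi}}$ — so that the resonance quantity $R_r = 4e\beta\|\mathbf{H}_2\|/\Delta$ becomes the stated $\tfrac{4e\beta d k^{(2)}_{\max}}{k^{(1)}_{\min}(m_{\min})^{3/2}\Delta}$, with the same $\Delta$ built from the $\{\lambda_j\}$. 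The quantity $\alpha = d\max_{jk}\{|\mathbf{H}_1(j,k)|,|\mathbf{H}_2(j,k)|\}$ of the NSE theorem I would bound by $\max_{ij}\sqrt{k_{ij}/m_j}$ up to the nonlinear contribution, recovering the stated $\alpha$. The constant $C = \|\mathbf{H}_2\|\beta^2$ then becomes $\le \tfrac{d k^{(2)}_{\max}\beta^2}{2 k^{(1)}_{\min}(m_{\min})^{3/2}}$ as claimed. Substituting these identifications into the query count $Q_1$ and qubit count $Q_2$ of the resonance-condition NSE theorem yields verbatim the formulas for $G$ and the truncation order $k$ in the present statement.

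Finally I would handle the oracle bookkeeping: the oracles of~\autoref{Def:Oracles-nonlinear-oscillator-system} for $\mathbf{M}$, $\mathbf{G}_1$, $\mathbf{K}_2$ (and the row/column sparsity oracles) must be assembled into the $O_{H_1}, O_{H_2'}, O_{s_{H_1}}, O_{r_{H_2'}}, O_{c_{H_2'}}$ oracles required by~\autoref{Def:Oracles-nonlinear_Schr\"{o}dinger_eq}; this costs $\mathcal{O}(1)$ oracle calls per simulated query plus $\mathrm{poly}\log$ arithmetic to compute the entries of $\mathbf{H}_1 = \begin{pmatrix}0 & \sqrt{\mathbf{A}}\\\sqrt{\mathbf{A}} & 0\end{pmatrix}$ and $\mathbf{H}_2$ from the raw matrix entries, and to implement the projector $\mathbf{P}'$ that discards the ancillary block and inverts the $\sqrt{\mathbf{A}}$ scaling to recover $[\mathbf{x},\dot{\mathbf{x}}]^T$ from the simulated state. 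The main obstacle I anticipate is not the asymptotics but the entrywise sparse-access implementation of $\sqrt{\mathbf{A}}$: $\mathbf{A}$ is sparse but $\sqrt{\mathbf{A}}$ generically is not, so strictly speaking one must either argue that the harmonic-oscillator trick of~\cite{babbush2023exponential} (which encodes $\mathbf{B}$ with $\mathbf{B}\mathbf{B}^\dagger = \mathbf{A}$ via the edge-incidence structure rather than $\sqrt{\mathbf{A}}$ itself) carries over to the nonlinear setting, or otherwise exhibit a sparse square-root-like factor; I would follow the former route, using the incidence/Laplacian structure of $\mathbf{K}_1$ so that $\mathbf{H}_1$ remains sparse and directly queryable, and then check that the bilinear term is still expressible in the factored variables. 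Once that structural point is settled, the remainder is the routine substitution of parameters described above.
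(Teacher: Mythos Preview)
Your proposal is correct and matches the paper's approach: the paper performs exactly the change of variables $\mathbf{u}=\sqrt{\mathbf{M}}\,\mathbf{x}$, uses the incidence factor $\mathbf{B}$ with $\mathbf{B}\mathbf{B}^\dagger=\mathbf{A}$ rather than $\sqrt{\mathbf{A}}$ (as you anticipated), encodes $\ket{\bm{\psi}}=[\dot{\mathbf{u}},-i\mathbf{B}^\dagger\mathbf{u}]^T$, and then invokes the resonance-condition NSE theorem with the norm identifications you list. On the one point you flagged as the main obstacle---expressing the bilinear term in the factored variables---the paper's resolution is that the first $N$ entries of $\mathbf{B}^\dagger\mathbf{u}$ are simply $\sqrt{k_{jj}}\,u_j$, so $\mathbf{u}\otimes\mathbf{u}$ is recovered from $(-i\mathbf{B}^\dagger\mathbf{u})^{\otimes 2}$ by an explicit \emph{diagonal} map $\mathbf{D}_2$ with $\|\mathbf{D}_2\|=1/k^{(1)}_{\min}$, and no dense inverse of $\sqrt{\mathbf{A}}$ is ever needed.
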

\autoref{thm:nonlinear_oscillator_simulation} tells us that we can  efficiently simulate the dynamics of nonlinear oscillator systems, provided the system satisfy three conditions. First, the oscillator system should satisfy non-resonance conditions, which implies the eigenvalues of the matrix $\mathbf{K}_1$ should not be integer multiples of each other. Secondly, the norm of the state vector should be upper bounded by a constant. Third, the quantity $R_r$, which is the ratio of the norm of nonlinearities to the minimum difference between eigenvalues of the matrix $\mathbf
{K}_1$ should be strictly less than $1$. Under these conditions, the truncation error due to Carlemann Linearization can be controlled by increasing the order of truncation, for arbitrary time horizon. \\

Having demonstrated the quantum simulation of nonlinear oscillator systems using a reduction to nonlinear Schrödinger dynamics, we now turn our attention to another class of systems: those with time-dependent stiffness parameters. The formal statement of the problem is given below:
\begin{restatable}[Time-Dependent Stiffness Oscillator Simulation Problem]{problem}{timedependentoscillator}
\label{prblm: time_dependent_oscillator}
Consider the time-dependent coupled oscillator system described in ~\autoref{Def: time_dependent_coupled_oscillator_system}. Assume we have access to elements of the mass matrix $\mathbf{M} \in \mathbb{R}^{N \times N}$ and time-dependent stiffness matrix $\mathbf{G}(t) \in \mathbb{R}^{N \times N}$ through oracles defined in~\autoref{Def:Oracles-time-dependent-oscillator-system}. The governing differential equation of the time-dependent coupled oscillator system is given by:
\begin{equation}
    \label{eq:time_dependent_oscillator}
    \mathbf{M} \ddot{\mathbf{x}} = -\mathbf{K}(t) \mathbf{x}
\end{equation}

The Time-Dependent Oscillator Simulation Problem involves preparing a quantum state $\ket{\psi{(t)}}$, using a minimal number of queries to the oracles and gate operations, that encodes the solution to Eq.~\eqref{eq:time_dependent_oscillator} such that the error in the Euclidean norm between the normalized and projected quantum state and the normalized solution to Eq.~\eqref{eq:time_dependent_oscillator} obeys
\begin{equation}
 \left\|  \frac{\mathbf{P}'\ket{\psi{(t)}}}{\|\mathbf{P}'\ket{\psi{(t)}}\|_2}  - \frac{[\mathbf{x},\dot{\mathbf{x}}]^T}{\|[\mathbf{x},\dot{\mathbf{x}}]^T\|_2} \right\|_2 \leq \epsilon 
\end{equation}
for some $\epsilon>0, t>0$ where $\mathbf{P}'=[\mathbf{I},\mathbf{0}]$ is a rectangular matrix that removes any fictitious or ancillary variables in the mapping. Let's also assume we have access to a state preparation oracle $\mathcal{W}$ that can efficiently prepare the initial state, $\ket{ \bm{ \psi{(0)}} } = \bm{\mathcal{W}} \ket{{0}}$.
\end{restatable}
We solve \autoref{prblm: time_dependent_oscillator} by embedding the dynamics of oscillator systems with time dependent stiffness matrices in a higher dimensional noninear oscillator system. This reduction enables us to apply \autoref{thm:nonlinear_oscillator_simulation} to simulate the dynamics of the resultant nonlinear oscillator system, constructed based on parameters defined in \autoref{prblm: time_dependent_oscillator}. The formal statement of our result is given below. 
\begin{restatable}[Time-Dependent Oscillator Simulation]{theorem}{TDOQuantumAlgorithmThm}
\label{thm: time_dependent_oscillator_sim}
    Consider the time dependent oscillator system defined in \autoref{Def: time_dependent_coupled_oscillator_system}. Let the time-varying stiffness element be $k_{ij}(t) =  \sum_{l=1}^m  \alpha_{ij,l} y_{ij,l}(t)$, where the time-dependent function $y_{ij,l}(t)$ is defined as $y_{ij,l}(t) = A_{ij,l} \cos(\omega_{ij,l}t + \phi_{ij,l})$.There exists a quantum algorithm that solves \autoref{prblm time_var_stiffness} with at least $G$ calls to oracles defined in \autoref{Def:Oracles-time-dependent-oscillator-system}. Here, $G$ is given by:
    \[
      G \in  \mathcal{O}\left( \alpha k^2 t + k\log\left(\left[\frac{d k^{(2)}_{\max} k\left( k +1 \right)}{2k^{(1)}_{\min} (m_{\min})^{3/2}} \left(1 + \frac{\beta}{(1-\beta)\epsilon} \right)t  \right] \right) \right)
    \]

    Here,  $\alpha =  \max_{ij} \omega_{ij}$ , $k_{\max}^{(2)} = \max_{ij}\mathbf{K}_2(i,j) = \max_{ij,l}\{\alpha_{ij,l}\}$, $k_{\min}^{(1)} = \min_{ij} \{\omega_{ij}\}$, and the truncation order $k$, is given by:
\[
     k \in \widetilde{\mathcal{O}}\left(\frac{\log(CT/\epsilon)}{\log(1/R_r)}\right)
     \]
     Here, $C  \leq \frac{d k^{(2)}_{\max} \beta^2 }{2k^{(1)}_{\min} (m_{\min})^{3/2}} $, and $R_r := \frac{4 e \beta d k^{(2)}_{\max}}{k^{(1)}_{\min} (m_{\min})^{3/2}\Delta }$, where $\Delta$ is defined as follows:
     \[
\Delta := \inf_{k \in [n]} \inf_{\substack{m_j \geq 0 \\ \sum_{j=1}^n m_j \geq 2}}
\left(|
\omega_k - \sum_{i=1}^n m_j \omega_j
|\right)
\]
\end{restatable}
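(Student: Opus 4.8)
The plan is to realize the time-dependent stiffness system as the \emph{displacement block} of a larger \emph{autonomous} nonlinear coupled oscillator and then invoke \autoref{thm:nonlinear_oscillator_simulation}. First I would introduce, for each Fourier component $(i,j,l)$ occurring in some $k_{ij}(t)=\sum_l\alpha_{ij,l}y_{ij,l}(t)$, an auxiliary ``fictitious'' oscillator with coordinate $y_{ij,l}$, a fixed mass $\mu_{ij,l}$, natural frequency $\omega_{ij,l}$, and initial data $y_{ij,l}(0)=A_{ij,l}\cos\phi_{ij,l}$, $\dot y_{ij,l}(0)=-A_{ij,l}\omega_{ij,l}\sin\phi_{ij,l}$; since these oscillators are uncoupled from everything else they satisfy $\ddot y_{ij,l}=-\omega_{ij,l}^2 y_{ij,l}$ exactly, so $y_{ij,l}(t)=A_{ij,l}\cos(\omega_{ij,l}t+\phi_{ij,l})$ for all $t$ and hence $k_{ij}(t)=\sum_l\alpha_{ij,l}y_{ij,l}(t)$ along the trajectory. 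Writing the enlarged coordinate $\mathbf z=(\mathbf x,\mathbf y)$ with block mass matrix $\mathbf M'=\mathbf M\oplus\operatorname{diag}(\mu_{ij,l})$, the restoring term $\mathbf K(t)\mathbf x=\sum_{ijl}\alpha_{ij,l}\,y_{ij,l}\,\mathbf L_{ij}\mathbf x$, where $\mathbf L_{ij}=(\mathbf e_i-\mathbf e_j)(\mathbf e_i-\mathbf e_j)^{T}$ is the incidence term of edge $(i,j)$ (with the obvious $\mathbf e_j\mathbf e_j^{T}$ modification for a wall spring), is \emph{bilinear} in $(\mathbf x,\mathbf y)$, hence is a genuine quadratic form $\mathbf K_2'\,\mathbf z\otimes\mathbf z$ with $\mathbf K_2'\in\mathbb R^{N'\times N'^2}$ supported on the cross products $y_{ij,l}x_j$ and $y_{ij,l}x_i$. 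The only linear restoring force is the block $\operatorname{diag}(\mu_{ij,l}\omega_{ij,l}^2)$ acting on $\mathbf y$, which defines $\mathbf K_1'$. Thus $\mathbf M'\ddot{\mathbf z}=-\mathbf K_1'\mathbf z+\mathbf K_2'\,\mathbf z\otimes\mathbf z$, exactly of the form of \autoref{Def: nonlinear_coupled_oscillator_system}.

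The second step is to verify faithfulness and translate the hypotheses. The coupling is one-directional — the nonlinearity feeds $\mathbf y$ into the $\mathbf x$-equations but never the reverse — so the $\mathbf y$-trajectory is independent of $\mathbf x$ and the $\mathbf x$-block of $\mathbf z(t)$ solves precisely $\mathbf M\ddot{\mathbf x}=-\mathbf K(t)\mathbf x$ with the prescribed data; consequently $\mathbf P'=[\mathbf I,\mathbf 0]$ extracts the desired solution and the output-state guarantee of \autoref{prblm: time_dependent_oscillator} follows verbatim from that of \autoref{thm:nonlinear_oscillator_simulation} applied to the enlarged system. I would then identify the parameters: the eigenvalues of $\mathbf A'=\sqrt{\mathbf M'^{-1}}\mathbf K_1'\sqrt{\mathbf M'^{-1}}$ are $\{\omega_{ij,l}^2\}$ together with a zero eigenvalue of multiplicity $N$, so the relevant set of square-roots is $\{\omega_{ij,l}\}\cup\{0\}$; the zero modes only contribute the constraint $|\sum_j m_j\omega_j|\ge 2\min_{ij}\omega_{ij}$, so the effective non-resonance gap is exactly the $\Delta=\inf|\omega_k-\sum_j m_j\omega_j|$ stated. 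The nonlinear coupling scale is $k^{(2)}_{\max}=\max_{ij,l}\alpha_{ij,l}$, the linear scale is governed by $k^{(1)}_{\min}=\min_{ij}\omega_{ij}$, $\alpha=\max_{ij}\omega_{ij}$, and $\beta$ is the initial ``energy'' $\tfrac12(\dot{\mathbf z}(0)^{T}\mathbf M'\dot{\mathbf z}(0)+\mathbf y(0)^{T}\mathbf K_{1}'\mathbf y(0))$ of $\mathbf z$. Substituting these into the complexity bound of \autoref{thm:nonlinear_oscillator_simulation} yields the claimed $G$, $k$, $C$ and $R_r$.

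Finally I would dispatch the oracle bookkeeping: an index of $\mathbf z$ is either an $\mathbf x$-index or a triple $(i,j,l)$, $\mathbf K_1'$ is diagonal with entries $\mu_{ij,l}\omega_{ij,l}^2$, and each row of $\mathbf K_2'$ has $O(d_1 m)$ nonzeros listing the pairs $(y_{ij,l},x_j)$ and $(y_{ij,l},x_i)$ for the $\le d_1$ edges incident to the relevant vertex and the $\le m$ Fourier terms per edge; hence the access and sparsity oracles for $\mathbf M'$, $\mathbf G_1'$, $\mathbf K_2'$ are each implementable with $O(1)$ calls to $O_M,O_G,O_\omega,O_\phi$ of \autoref{Def:Oracles-time-dependent-oscillator-system} plus reversible index arithmetic, and the state-preparation unitary for $\mathbf z(0)$ is the given $\mathcal W$ for $(\mathbf x(0),\dot{\mathbf x}(0))$ tensored with a fixed-depth circuit loading $(y_{ij,l}(0),\dot y_{ij,l}(0))$ from the Fourier data. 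The number of auxiliary oscillators is $O(N d_1 m)$, so padding to a power of two costs only $O(\log(N d_1 m))$ extra qubits and leaves the bound intact up to logarithmic factors. The step I expect to be the main obstacle is precisely the hypothesis translation in the presence of the large zero block of $\mathbf K_1'$: one must confirm that the Carleman-linearization machinery underlying \autoref{thm:nonlinear_oscillator_simulation} still converges when $\mathbf A'$ is only positive \emph{semi}definite — i.e.\ that the non-oscillating ($\lambda=0$) displacement modes do not spoil the truncation bound — which is exactly where the standing assumption that $\|\mathbf x(t)\|$ remains bounded (ruling out parametric resonance) is invoked.
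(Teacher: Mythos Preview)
Your proposal is correct and follows essentially the same route as the paper: embed the time-dependent system into an autonomous nonlinear oscillator by adjoining auxiliary harmonic oscillators for each Fourier component (the paper's \autoref{thm:quad_nonlinear_to_time_dependent_K}), then invoke \autoref{thm:nonlinear_oscillator_simulation} with exactly the parameter identifications you give. Your caveat about the zero block of $\mathbf K_1'$ is apt; the paper's construction allows a constant diagonal offset $\alpha_{ii,0}$ in each $k_{ii}(t)$ so that the $\mathbf x$-block of $\mathbf K_1$ is $\mathbf K_{00}=\operatorname{diag}(\alpha_{ii,0})$ rather than zero, and boundedness of $\|\mathbf x(t)\|$ is supplied by \autoref{lemma:norm_bound_time_dep_oscillator}.
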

\autoref{thm: time_dependent_oscillator_sim} tells us that we can simulate the dynamics of time dependent oscillator systems, when the frequency parameters obeys non-resonance conditions, or specifically the frequencies should not be integer multiples of each other. The cost of simulation scales linearly with respect to time and logarithmically with respect to the system dimension. 
\\
Now, we consider a slightly modified problem statement wherein in addition to time dependent stiffness elements, external forces also act on individual masses in the oscillator system. The formal statement of the problem is given below. 

\begin{restatable}[Time-Dependent Forced Oscillator Simulation Problem]{problem}{timedependentforcedoscillator}
\label{prblm: time_dependent_forced_oscillator}
Consider the time-dependent forced coupled oscillator system described in ~\autoref{Def: time_dependent_forced_coupled_oscillator_system}. Assume we have access to elements of the mass matrix $\mathbf{M} \in \mathbb{R}^{N \times N}$, time-dependent stiffness matrix $\mathbf{G}(t) \in \mathbb{R}^{N \times N}$, and time-dependent force $\mathbf{f}(t) \in \mathbb{R}^N$ through oracles defined in~\autoref{Def:Oracles-time-dependent-oscillator-system}. The governing differential equation of the time-dependent forced coupled oscillator system is given by:
\begin{equation}
    \label{eq:time_dependent_forced_oscillator}
    \mathbf{M} \ddot{\mathbf{x}} = -\mathbf{K}(t) \mathbf{x} + \mathbf{f}(t)
\end{equation}

The Time-Dependent Forced Oscillator Simulation Problem involves preparing a quantum state $\ket{\psi{(t)}}$, using a minimal number of queries to the oracles and gate operations, that encodes the solution to Eq.~\eqref{eq:time_dependent_forced_oscillator} such that the error in the Euclidean norm between the normalized and projected quantum state and the normalized solution to Eq.~\eqref{eq:time_dependent_forced_oscillator} obeys
\begin{equation}
 \left\|  \frac{\mathbf{P}'\ket{\psi{(t)}}}{\|\mathbf{P}'\ket{\psi{(t)}}\|_2}  - \frac{[\mathbf{x},\dot{\mathbf{x}}]^T}{\|[\mathbf{x},\dot{\mathbf{x}}]^T\|_2} \right\|_2 \leq \epsilon 
\end{equation}
for some $\epsilon>0, t>0$ where $\mathbf{P}'=[\mathbf{I},\mathbf{0}]$ is a rectangular matrix that removes any fictitious or ancillary variables in the mapping. Let's also assume we have access to a state preparation oracle $\mathcal{W}$ that can efficiently prepare the initial state, $\ket{ \bm{ \psi{(0)}} } = \bm{\mathcal{W}} \ket{{0}}$.
\end{restatable}

Similar to \autoref{thm: time_dependent_oscillator_sim}, we show that the dynamics of the forced oscillator with time-dependent stiffness elements can be embedded in the dynamics of a higher dimensional nonlinear oscillator system. Since we already showed that the dynamics of nonlinear oscillator systems can be simulated using \autoref{thm:nonlinear_oscillator_simulation}, it follows that the forced oscillator problem can also be solved efficiently under appropriate conditions. Our formal result is stated below.

\begin{restatable}[Time-Dependent Forced Oscillator Simulation]{theorem}{TDFOQuantumAlgorithmThm}
\label{thm:quantum_simulation_forced}
Consider the forced oscillator system described in \autoref{Def: time_dependent_forced_coupled_oscillator_system}. Assume access to parameters of the oscillator system through oracles defined in \autoref{Def:Oracles-time-dependent-oscillator-system}. Assume that the time-dependent spring elements are of the form $k_{ij}(t) = \sum_{l=1}^L A_{ij,l}\cos(\omega_{ij,l}t + \phi_{ij,l})$, and the time-dependent external forces acting on mass $m_i$, can be written as: $f_i(t) = \sum_{j=1}^L f_{ij} \cos(\mu_{ij}t +\xi_{ij})$, and assume $L \leq N$. Let $t\in [0,T]$, $k_{\max}^{(2)} = \max_{ij,l}\{\alpha_{ij,l}, \beta_{i,j}\}$, $k_{\min}^{(1)} = \min_{ij} \{\omega_{ij},\mu_{ij}\}$, $\alpha=\max_{ij}\omega_{ij}$. 


Then, there exists a quantum algorithm that simulates the dynamics of the system utilizing at most 
 \[
   G \in \mathcal{O}\left( \alpha k^2 t + \frac{(k-1)}{2}\log\left(\left[\frac{d k^{(2)}_{\max} k\left( k +1 \right)}{2k^{(1)}_{\min} (m_{\min})^{3/2}} \left(1 + \beta\frac{1-\beta^k}{(1-\beta)\epsilon} \right)t  \right] \right) \right)
    \]
where,
\[
     k \in \widetilde{\mathcal{O}}\left(\frac{\log(CT/\epsilon)}{\log(1/R_r)}\right)
     \]
queries to unitary operators encoding \(\mathbf{K}_1\), \(\mathbf{K}_2\), and the force components. Here, \(k\) is the Carlemann truncation order, \(\alpha=\max_{ij}\omega_{ij}\). Here, $R_r := \frac{4 e \beta d k^{(2)}_{\max}}{k^{(1)}_{\min} (m_{\min})^{3/2}\Delta }$, $C \leq \frac{dk^{(2)}_{\max}\beta}{2k^{(1)}_{\min}m_{\min}^{3/2}}$, and $\Delta$ is defined as follows:
\[
\Delta := \inf_{k \in [n]} \inf_{\substack{m_j \geq 0 \\ \sum_{j=1}^n m_j \geq 2}}
\left(|
\nu_k - \sum_{i=1}^n m_j \nu_j
|\right), \nu_i \in \{\omega_{ij},\mu_{ij}\}
\]
\end{restatable}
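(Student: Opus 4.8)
I would solve \autoref{prblm: time_dependent_forced_oscillator} by the same device used for the unforced case in \autoref{thm: time_dependent_oscillator_sim}: construct an \emph{autonomous} quadratic oscillator system, of the form required by \autoref{Def: nonlinear_coupled_oscillator_system}, whose free evolution contains the target solution of $\mathbf{M}\ddot{\mathbf{x}} = -\mathbf{K}(t)\mathbf{x} + \mathbf{f}(t)$ on a coordinate subspace, and then invoke \autoref{thm:nonlinear_oscillator_simulation}. The key observation is that every Fourier mode in the time-dependent data is itself the free motion of a fictitious harmonic oscillator: a unit-mass ``clock'' $y_{ij,l}$ of stiffness $\omega_{ij,l}^2$ with $y_{ij,l}(0) = A_{ij,l}\cos\phi_{ij,l}$ and $\dot y_{ij,l}(0) = -A_{ij,l}\omega_{ij,l}\sin\phi_{ij,l}$ satisfies $y_{ij,l}(t) = A_{ij,l}\cos(\omega_{ij,l}t+\phi_{ij,l})$, and likewise a clock $z_{ij}$ of stiffness $\mu_{ij}^2$ reproduces $f_{ij}\cos(\mu_{ij}t+\xi_{ij})$.

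\textbf{The embedding.} I would form the augmented coordinate $\mathbf{X} = (\mathbf{x}, \{y_{ij,l}\}, \{z_{ij}\}, w)$, where $w$ is a single extra unit mass with zero stiffness, $w(0)=1$, $\dot w(0)=0$, so $w(t)\equiv 1$. Substituting $k_{ij}(t) = \sum_l \alpha_{ij,l}\,y_{ij,l}(t)$ into $-\mathbf{K}(t)\mathbf{x}$ turns the \emph{time-dependent linear} term into a \emph{bilinear} term in $(\mathbf{x},\mathbf{y})$, and writing $f_i(t) = \sum_j z_{ij}(t)\,w(t)$ turns the \emph{inhomogeneous} forcing into a bilinear term in $(\mathbf{z},w)$. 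Collecting terms gives $\mathbf{M}'\ddot{\mathbf{X}} = -\mathbf{K}_1'\mathbf{X} + \mathbf{K}_2'\mathbf{X}\otimes\mathbf{X}$, where $\mathbf{M}'$ is diagonal (the $m_i$ together with unit entries for the clocks), $\mathbf{K}_1'$ is block diagonal carrying only the clock stiffnesses $\omega_{ij,l}^2$ and $\mu_{ij}^2$ (zeros on the $\mathbf{x}$ and $w$ blocks), and $\mathbf{K}_2'$ encodes the bilinear couplings, with entries bounded by $k^{(2)}_{\max} = \max_{ij,l}\{\alpha_{ij,l}, f_{ij}\}$. The projector $\mathbf{P}' = [\mathbf{I},\mathbf{0}]$ reads off $(\mathbf{x},\dot{\mathbf{x}})$. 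The entry and sparse-access oracles for $\mathbf{M}',\mathbf{K}_1',\mathbf{K}_2'$ needed by \autoref{thm:nonlinear_oscillator_simulation} are obtained with $\mathcal{O}(1)$ overhead from the oracles of \autoref{Def:Oracles-time-dependent-oscillator-system}, since each row is a fixed function of the Fourier data, and $\mathcal{W}$ prepares $\ket{\psi(0)}\propto[\mathbf{X}(0),\dot{\mathbf{X}}(0)]^T$ from the amplitude/phase oracles.

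\textbf{Complexity.} Since the clocks are genuine free oscillators, the embedding is \emph{exact}: there is no reduction error, only the Carleman-truncation error already controlled inside \autoref{thm:nonlinear_oscillator_simulation}. I would then read off the augmented-system parameters — the dimension $N' = \mathcal{O}(d_1 N L + N)$ (so only the qubit count grows, by $\mathcal{O}(\log N)$, not the query count), the frequency scale $\alpha = \max_{ij}\omega_{ij}$, the coupling constants $k^{(2)}_{\max}$ and $k^{(1)}_{\min} = \min\{\omega_{ij},\mu_{ij}\}$, the energy $\beta = \tfrac12(\dot{\mathbf{X}}(0)^T\mathbf{M}'\dot{\mathbf{X}}(0)+\mathbf{X}(0)^T\mathbf{K}_1'\mathbf{X}(0))$ fixed by the clock initial data, and the quantities $R_r$, $C$, $\Delta$ with non-resonance set $\nu_i\in\{\omega_{ij},\mu_{ij}\}$ — and substitute them into the bounds of \autoref{thm:nonlinear_oscillator_simulation} to obtain exactly the claimed $G$ and $k$.

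\textbf{Main obstacle.} The crux is verifying that the augmented system satisfies the hypotheses of \autoref{thm:nonlinear_oscillator_simulation}. First, \emph{boundedness}: a parametrically driven and externally forced oscillator can grow resonantly, so $\|\mathbf{x}(t)\|\le c$ must be carried as a hypothesis, and one must check that the augmented energy $\beta$ — now inflated by the clock amplitudes — still keeps $R_r = 4e\beta d k^{(2)}_{\max}/(k^{(1)}_{\min} m_{\min}^{3/2}\Delta)$ strictly below $1$, so the Carleman series converges uniformly on $[0,T]$. Second, \emph{degenerate modes}: the $\mathbf{x}$ block of $\mathbf{K}_1'$ and the unit mass $w$ carry zero stiffness, hence a nilpotent (linearly drifting) sector rather than rotating modes — this is precisely why $\Delta$ is taken only over the genuine frequencies $\{\omega_{ij},\mu_{ij}\}$, and one must argue that this nilpotent sector does not spoil the convergence estimate of \autoref{thm:nonlinear_oscillator_simulation}, the boundedness assumption being what tames the otherwise-linear drift (alternatively one regularizes with a small wall stiffness on the $\mathbf{x}$ masses as in \autoref{thm:forced_oscillator_result} and bounds the perturbation). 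Third, one must confirm $\mathbf{K}_1'\succeq 0$ and $\mathbf{M}'\succ 0$ so the harmonic-oscillator mapping underlying \autoref{thm:nonlinear_oscillator_simulation} applies; the unit-mass/bilinear encoding of $\mathbf{f}(t)$ is chosen exactly so the inhomogeneous term lands in $\mathbf{K}_2'$ and leaves $\mathbf{K}_1'$ positive semidefinite. Once this bookkeeping is settled, everything else is substitution.
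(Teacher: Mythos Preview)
Your proposal is correct and follows essentially the same route as the paper: both embed the time-dependent forced system into an autonomous quadratic oscillator by introducing harmonic ``clock'' variables for the Fourier modes of $\mathbf{K}(t)$ and $\mathbf{f}(t)$, together with a constant auxiliary coordinate (your scalar $w\equiv 1$, the paper's constant unit vector $\mathbf{p}$) to render the inhomogeneous forcing bilinear, and then invoke \autoref{thm:nonlinear_oscillator_simulation}. Your identification of the zero-stiffness sector on the $\mathbf{x}$ and $w$ blocks as the delicate point is apt---the paper handles this only implicitly, so your proposal is if anything more explicit about the hypotheses that must be carried.
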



We utilize a reduction technique in which the inhomogeneous forcing term is absorbed into an extended state space, converting the problem into an equivalent homogeneous nonlinear system. This transformation is achieved by introducing auxiliary variables that encode the time-dependent driving forces as additional dynamical degrees of freedom. These auxiliary variables evolve according to known harmonic dynamics and interact linearly with the original degrees of freedom via a coupling structure determined by the Fourier components of the forcing terms.
\\

This embedding into an enlarged homogeneous system allows us to directly apply the simulation strategy developed in \autoref{thm:nonlinear_oscillator_simulation}. Moreover, under the stated non-resonance condition on the frequencies ${\omega_{ij}, \mu_{ij}}$, the system avoids small denominator issues that would otherwise degrade the convergence rate of the approximation. The parameter $\Delta$ quantifies the minimum frequency separation, ensuring that higher-order resonances are sufficiently suppressed. As a result, the algorithm provides a provably efficient method for simulating certain time-dependent forced oscillator systems on a quantum computer, with complexity that scales linearly in the simulation time $t$, the maximum coupling strength $k^{(2)}_{\max}$, and logarithmically in the target precision $1/\epsilon$.

\section{Simulating Forced Coupled Classical Oscillators}
\label{sec:linear inhomo-diff-eq}
The problem of simulating harmonic oscillators using the method of~\cite{babbush2023exponential} is at first glance challenging.  This is because energy is not conserved for a differential equation of the form $    \mathbf{M} \ddot{\mathbf{x}} = - \mathbf{K} \mathbf{x} + \mathbf{f}(t)$.  As the normalization constant for the quantum state is a function of the, conserved, energy for the method of~\cite{babbush2023exponential}, this approach cannot directly include a time-dependent external force acting on the system.

This section gives a solution to this problem.  We address this problem by constructing a larger mass-spring system that is weakly coupled to the original system and implements a Fourier series approximation to $\mathbf{f}(t)$.  We then see that as the masses of the fictitious balls increases, the back action of the system on the forcing subsystem diminishes.  We further avoid a diverging cost as seen in the original work by re-analyzing the Harmonic oscillator algorithm as a function of $\omega = k/m$ rather than bounding it in terms of the maximum value of such a ratio.




 For clarity of exposition, we begin our analysis by considering a one-dimensional example and then extend our results to $N$ dimensional systems. 
The governing differential equation of the one dimensional forced coupled oscillator system as per \autoref{Def: forced_coupled_oscillator_system} is given by: 
\begin{align}
    \label{eq one-dimensional-inhomo-fourier}
    m_0\Ddot{x} = -k_0x + \sum_{i=1}^N f_i \cos(\omega_i t + \phi_i)
\end{align}
We can now consider the time-variant force $f(t)$ from a different perspective. Specifically, we can imagine the force $f(t)$ due to forces coming from $N$ different spring-mass systems. Or, in other words, we can construct a larger coupled oscillator system with suitable choices for spring constants, masses, and initial conditions such that the dynamics of a single mass in the larger system closely resembles the dynamics of the time dependent forced oscillator system. Mathematically, we construct a higher dimensional coupled oscillator system with dynamics $\mathbf{M} \ddot{\mathbf{x}} = \mathbf{K}\mathbf{x} $ such that $\|\mathbf{P}\mathbf{x}(t) - x(t)\| \leq \epsilon$, for any $\epsilon > 0$. Here, $\mathbf{P} \in \mathbb{R}^{1 \times N}$ is a projection matrix. In the one-dimensional example, $\mathbf{P} = \bra{\mathbf{0}}$. In the next section, we explain how the stiffness matrix $\mathbf{K}$ and the mass matrix $\mathbf{M}$ can be constructed to satisfy this condition. 
\subsection{Construction of $\mathbf{M}$ and $\mathbf{K}$}
Consider a coupled oscillator system with mass $m_0$ connected to $N$ different masses with spring constants $k_{0,i}$ respectively. The mass $m_0$ is connected to the wall with a spring constant $\bar{k_0}$ and the mass $m_i$ is connected to the wall with a spring constant $k_i$. The dynamics of the coupled spring-mass oscillator system are given by the following set of equations:
\begin{equation}
\label{eq: single_mass_dynamics_forced_1D_osc_sys}
 m_0\Ddot{x}_0 = -\left(\bar{k}_0 + \sum_{i=1}^N k_{0,i} \right)x_0  + \sum_{i=1}^N k_{0,i}x_i(t)
\end{equation}
\begin{equation}
m_i\Ddot{x}_i =  k_{0,i}x_0  -k_i x_i~\forall~i>0 
\end{equation}
Our goal is to set the parameters of the larger coupled oscillator system such that $|x(t) -x_0(t)| \leq \epsilon$, $\forall \epsilon >0, t \in [0,t_s]$. To achieve desired dynamics for $m_0$, we first equate the coefficients of $x_0$ in Eq.~\eqref{eq: single_mass_dynamics_forced_1D_osc_sys} with the stiffness coefficient in Eq.~\eqref{eq one-dimensional-inhomo-fourier}. 
\begin{equation}
\label{stiffness_constraint}
    \bar{k}_0 + \sum_{i=1}^N k_{0,i} = k_0
\end{equation}
Now, since $k_0,k_{0,i} >0$, to satisfy Eq.~\eqref{stiffness_constraint}, the stiffness values, $ \bar{k}_0, k_{0,i} < k_0$. One valid solution for Eq.~\eqref{stiffness_constraint} satisfying this constraint is $\bar{k}_0 =  \sum_{i=1}^N k_{0,i} = \frac{k_0}{2}$. We further assume $k_{0,i} = k_{0,j}$, $m_i = m_j = m_f$, $\forall i \neq j$. Hence, $k_{0,i} = \frac{k_0}{2N}$. To simulate the effect of time-dependent external force, we need the sum of forces from other coupled oscillator systems to be approximately equal to the desired time-dependent external force. More formally, we need the following condition to hold:
\begin{equation}
\label{eq:force_equality}
    \sum_{i=1}^N k_{0,i}x_i(t) \approx \sum_{i=1}^N f_i \cos(\omega_i t + \phi_i)
\end{equation}
Now, we determine the other parameters in the coupled oscillator system such that Eq.~\eqref{eq:force_equality} satisfies $\forall t \in [0,t_s]$. The governing differential equation of the coupled oscillator system in the matrix form can be written as:
\begin{equation}
    \label{eq perturbed_oscillator}
    \mathbf{M}\ddot{\mathbf{x}} = \mathbf{K}_0 \mathbf{x} + \gamma \mathbf{K'}\mathbf{x}
\end{equation}
Here, $\gamma = \frac{1}{m_f}$ and the mass matrix $\mathbf{M}$ is given by:
\begin{equation}
   \mathbf{M} = \begin{bmatrix}
                        m_0 & \mathbf{0}_{N \times N} \\
                        \mathbf{0}_{N \times N} &  \mathbf{I}_{N \times N} \\
                      \end{bmatrix}
\end{equation}
and  the stiffness matrices are:

\begin{equation}
   \mathbf{K}_0 = 
\begin{bmatrix}
-k_0 & k_{0,1} & k_{0,2} & k_{0,3} & \cdots & k_{0,N}  \\
0 & -\omega_1^2 & 0 & 0 &\cdots & 0 \\
0 & 0 & -\omega_2^2 & 0 & \cdots & 0 \\
0 & 0 & 0 & -\omega_3^2 & \cdots & 0\\
\vdots & \vdots & \vdots& . & \ddots & 0\\
0 & 0 & 0 & 0& \cdots & -\omega_N^2
\end{bmatrix}
\end{equation}

\begin{equation}
  \mathbf{K'} = 
\begin{bmatrix}
0 & 0 & 0 & 0 & \cdots & 0  \\
k_{0,1} & -k_{0,1} & 0 & 0 &\cdots & 0 \\
k_{0,2} & 0 &  -k_{0,2} & 0 & \cdots & 0 \\
k_{0,3} & 0 & 0 &  -k_{0,3}  & \cdots & 0\\
\vdots & \vdots & \vdots& . & \ddots & 0\\
k_{0,N} & 0 & 0 & 0& \cdots & - k_{0,N} 
\end{bmatrix}
\end{equation}
Here, $ \mathbf{M},\mathbf{K}_0, \mathbf{K'}  \in \mathbb{R}^{N+1 \times N+1}$ and $\|\mathbf{K'} \|_F = \frac{k_0}{\sqrt{2N}}$. Note that as $\gamma \rightarrow 0$, the dynamics of mass $m_0$ is given by:
\begin{equation}
\label{eq time_dependent_oscillator}
 m_0\Ddot{x}_0 = -k_0x_0  + \sum_{i=1}^N k_{0,i}A_i \cos (\omega_i t + \phi_i)
\end{equation}
and for other masses, the equation of motion is given by:
\begin{equation}
    x_i(t) = A_i \cos(\omega_i t + \phi_i) 
\end{equation}
Here
\begin{equation}
    \label{eq A_i}
    A_i = \pm \sqrt{x_i(0)^2 + \frac{\dot{x_i}(0)^2}{\omega_i^2}}
\end{equation}
and 
\begin{equation}
    \label{eq phi_i}
    \phi_i = - \tan^{-1}\left(\frac{\dot{x}_i(0)}{\omega_i x_i(0)}\right)
\end{equation}

Now $\forall i \in {1,..N}$, if we choose initial conditions $x_i(0),\dot{x}_i(0)$ such that $f_i = k_{0,i}A_i$ then, the governing differential equation of mass $m_0$ reduces to Eq.~\eqref{eq time_dependent_oscillator}. For achieving the desired amplitudes $A_i$ and $\phi_i$, we should set the initial conditions as: 
\begin{equation}
    x_i(0) = \frac{f_i}{k_{0,i}\sqrt{1 + \tan^2(\phi_i)}} = \frac{2N f_i }{k_0\sqrt{1 + \tan^2(\phi_i)}} =  \frac{2N f_i \cos(\phi_i) }{k_0}
\end{equation}
and 
\begin{equation}
    \dot{x_i}(0) = -\frac{f_i}{k_{0,i}} \omega_i \sin(\phi_i) = -\frac{2N f_i}{k_{0}} \omega_i \sin(\phi_i)
\end{equation}
However, when $\gamma = 0 $, the stiffness matrix of the oscillator system would not be symmetric and therefore the system would not be a valid coupled oscillator. Hence, the dynamics of the classical oscillator system cannot be reduced to Schr\"{o}dinger Evolution using the method described in \cite{babbush2023exponential}. 
To resolve this problem, we apply perturbation theory to estimate sufficient value for $m_f$ such that the dynamics of mass $m_0$ is $\epsilon$ close to the dynamics of the forced oscillator system described in Eq.~\eqref{eq one-dimensional-inhomo-fourier}. Before we introduce our method, we first present upper bounds on norms of displacement vector $ \| \mathbf{x}(t) \|$ for coupled classical oscillator systems. 
\begin{lemma}
\label{Lemma: Upper_Bound_State_Simple_Oscillator}
Let $\mathbf{M} \in \mathbb{R}^{N \times N}$ be a diagonal matrix with positive entries and $\mathbf{K} \in \mathbb{R}^{N \times N}$ be a positive semidefinite matrix. Consider the dynamics of a classical oscillator system given by:
$\mathbf{M}\ddot{\mathbf{x}} = -\mathbf{K}\mathbf{x}$
Then, for all $t \in [0,T]$, the norm of the displacement vector $\mathbf{x}(t)$ satisfies:
$\|\mathbf{x}(t)\| \leq \|\mathbf{T}\|\|\sqrt{\mathbf{M}}\|\left(\|\sqrt{\mathbf{A}}\| \|\mathbf{x}(0)\| + \|{ \dot{\mathbf{x}}}(0)\|\right)$
where $\mathbf{A} = \sqrt{\mathbf{M}^{-1}} \mathbf{K} \sqrt{\mathbf{M}^{-1}}$ and $\mathbf{T} = -i \ket{0}\bra{0} \otimes \sqrt{\mathbf{M}}^{-1}\sqrt{\mathbf{A}}^{-1} + \ket{1}\bra{1} \otimes \sqrt{\mathbf{M}}^{-1}$.
    \begin{proof}
        First we follow the same strategy as in~\cite{babbush2023exponential} and make the substitution 
        \begin{equation}
            \mathbf{y} = \sqrt{\mathbf{M}}\mathbf{x}
        \end{equation}
        and write the second order ode in the form:
        \begin{equation}
        \label{eq: second-order-ode}
            \ddot{\mathbf{y}} = -\sqrt{\mathbf{M}^{-1}} \mathbf{K} \sqrt{\mathbf{M}^{-1}} \mathbf{y}
        \end{equation}
        Let  $\mathbf{A} = \sqrt{\mathbf{M}^{-1}} \mathbf{K} \sqrt{\mathbf{M}^{-1}} $ and let $\mathbf{z}(t) = [ i\sqrt{\mathbf{A}} \mathbf{y}(t) , \dot{\mathbf{y}}(t)]^T$ and $[\mathbf{x}(t),\mathbf{\dot{x}}(t)]^T  = \mathbf{T}~\mathbf{z}(t)$, where $\mathbf{T} = -i \ket{0}\bra{0} \otimes \sqrt{\mathbf{M}}^{-1}\sqrt{\mathbf{A}}^{-1} + \ket{1}\bra{1} \otimes \sqrt{\mathbf{M}}^{-1}$.
          Now, Eq.~\eqref{eq: second-order-ode} can be written in the first-order form as:
          
        \begin{equation}
        \label{eq: first-order-ode}
           \frac{d}{dt} \mathbf{z}(t) = i\hat{\mathbf{A}}\mathbf{z}(t)
        \end{equation}
          where
           \begin{equation}
            \hat{\mathbf{A}} = \begin{bmatrix}
                \mathbf{0} & \sqrt{\mathbf{A}} \\
               \sqrt{\mathbf{A}} & \mathbf{0}
            \end{bmatrix}
        \end{equation}

       The solution to Eq.~\eqref{eq: first-order-ode} can be written as: 
        \begin{equation}
            \mathbf{z}(t) = e^{i\hat{\mathbf{A}}t} \mathbf{z}(0)
        \end{equation}
        Note that $e^{i\hat{\mathbf{A}}t}$ is a unitary matrix, since $\hat{\mathbf{A}}$ is a Hermitian matrix. 
        Hence, norm of $\mathbf{z}(t)$ is bounded by:
        \begin{equation}
            \|\mathbf{z}(t)\| \leq  \|\mathbf{z}(0) \|
        \end{equation}
        and 
        \begin{align}
            \| \mathbf{x}(t) \| &\leq \|\mathbf{T}\| \|\mathbf{z}(t)\| \\ \nonumber 
            &\leq\|\mathbf{T}\| ~ \sqrt{\|\sqrt{\mathbf{A}}\mathbf{y}(0)\|^2 + \|\dot{\mathbf{y}}(0)\|^2} \\ \nonumber 
              &\leq\|\mathbf{T}\| ~ \sqrt{\|\sqrt{\mathbf{A}} \sqrt{\mathbf{M}}\mathbf{x}(0)\|^2 + \|\sqrt{\mathbf{M}}\dot{\mathbf{x}}(0)\|^2} \\ \nonumber     
              &\leq \|\mathbf{T}\| ~ \| \sqrt{\mathbf{M}}\| \sqrt{\|\sqrt{\mathbf{A}}\|^2 \| \mathbf{x}(0)\|^2 + \|\dot{\mathbf{x}}(0)\|^2} \\ \nonumber
               &\leq \|\mathbf{T}\|~ \| \sqrt{\mathbf{M}}\| \left( \|\sqrt{\mathbf{A}}\| \| \mathbf{x}(0)\| + \|\dot{\mathbf{x}}(0)\|\right)
        \end{align}
        
        \end{proof}
    
\end{lemma}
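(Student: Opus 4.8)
The plan is to use the standard symmetrization of the second-order equation followed by the unitarity of the resulting first-order flow, exactly in the spirit of~\cite{babbush2023exponential}, and then carefully propagate spectral-norm inequalities down to $\|\mathbf{x}(t)\|$.

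First I would set $\mathbf{y} = \sqrt{\mathbf{M}}\,\mathbf{x}$; since $\mathbf{M}$ is diagonal with strictly positive entries, $\sqrt{\mathbf{M}}$ and $\sqrt{\mathbf{M}^{-1}}$ are well-defined, and substituting converts $\mathbf{M}\ddot{\mathbf{x}} = -\mathbf{K}\mathbf{x}$ into $\ddot{\mathbf{y}} = -\mathbf{A}\mathbf{y}$ with $\mathbf{A} = \sqrt{\mathbf{M}^{-1}}\,\mathbf{K}\,\sqrt{\mathbf{M}^{-1}}$ real symmetric and positive semidefinite. To linearize, I would introduce $\mathbf{z}(t) = [\,i\sqrt{\mathbf{A}}\,\mathbf{y}(t),\ \dot{\mathbf{y}}(t)\,]^{T}$ and verify by direct differentiation that $\dot{\mathbf{z}} = i\hat{\mathbf{A}}\,\mathbf{z}$ with $\hat{\mathbf{A}} = \begin{bmatrix}\mathbf{0} & \sqrt{\mathbf{A}}\\ \sqrt{\mathbf{A}} & \mathbf{0}\end{bmatrix}$; the key point is that $\hat{\mathbf{A}}$ is Hermitian (because $\mathbf{A}$ being positive semidefinite makes $\sqrt{\mathbf{A}}$ symmetric), so $\mathbf{z}(t) = e^{i\hat{\mathbf{A}}t}\mathbf{z}(0)$ has constant Euclidean norm, $\|\mathbf{z}(t)\| = \|\mathbf{z}(0)\|$ for all $t\ge 0$.

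Next I would transfer this conserved quantity back to $\mathbf{x}$. Writing $[\mathbf{x}(t),\dot{\mathbf{x}}(t)]^{T} = \mathbf{T}\,\mathbf{z}(t)$ for the stated $\mathbf{T}$, one gets $\|\mathbf{x}(t)\| \le \|[\mathbf{x}(t),\dot{\mathbf{x}}(t)]^{T}\| \le \|\mathbf{T}\|\,\|\mathbf{z}(t)\| = \|\mathbf{T}\|\,\|\mathbf{z}(0)\|$. It then remains to bound $\|\mathbf{z}(0)\| = \sqrt{\|\sqrt{\mathbf{A}}\,\mathbf{y}(0)\|^{2} + \|\dot{\mathbf{y}}(0)\|^{2}}$: substituting $\mathbf{y}(0) = \sqrt{\mathbf{M}}\,\mathbf{x}(0)$ and $\dot{\mathbf{y}}(0) = \sqrt{\mathbf{M}}\,\dot{\mathbf{x}}(0)$, pulling out $\|\sqrt{\mathbf{M}}\|$ and $\|\sqrt{\mathbf{A}}\|$ by submultiplicativity, and finally using $\sqrt{a^{2}+b^{2}} \le a+b$ for $a,b\ge 0$, yields $\|\mathbf{z}(0)\| \le \|\sqrt{\mathbf{M}}\|\big(\|\sqrt{\mathbf{A}}\|\,\|\mathbf{x}(0)\| + \|\dot{\mathbf{x}}(0)\|\big)$, and combining with the previous chain gives the claimed bound.

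The main obstacle is a well-definedness issue rather than an analytic difficulty: the operator $\mathbf{T}$ involves $\sqrt{\mathbf{A}}^{-1}$, so strictly one needs $\mathbf{A}$ (equivalently $\mathbf{K}$) to be positive definite, not merely positive semidefinite as stated. I would handle this either by assuming strict positivity throughout, or by splitting off $\ker\mathbf{K}$: on that subspace the motion is free, $\mathbf{x} = \mathbf{x}(0) + t\,\dot{\mathbf{x}}(0)$, which is bounded directly, while on its orthogonal complement $\mathbf{A}$ is invertible and the argument above applies verbatim. Beyond that, the only care needed is bookkeeping --- applying the change of variables consistently to both $\mathbf{x}$ and $\dot{\mathbf{x}}$, and using submultiplicativity in the correct direction at each step.
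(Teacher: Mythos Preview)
Your proposal is correct and follows essentially the same route as the paper: the same change of variables $\mathbf{y}=\sqrt{\mathbf{M}}\mathbf{x}$, the same first-order Hermitian reformulation with $\mathbf{z}=[i\sqrt{\mathbf{A}}\mathbf{y},\dot{\mathbf{y}}]^T$, unitarity to preserve $\|\mathbf{z}\|$, and the same submultiplicativity/$\sqrt{a^2+b^2}\le a+b$ chain. Your observation that $\mathbf{T}$ requires $\sqrt{\mathbf{A}}^{-1}$ and hence strict positive definiteness (or a kernel split) is a genuine gap in the paper's statement that it does not address.
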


\begin{lemma}
\label{Lemma Upper_Bound_State_Norm_Inhomo}
   Let $\mathbf{M} \in \mathbb{R}^{N \times N}$ be a diagonal matrix with positive entries and $\mathbf{K} \in \mathbb{R}^{N \times N}$ be a positive semidefinite matrix. Consider the dynamics of a classical oscillator system given by:
$\mathbf{M}\ddot{\mathbf{x}} = -\mathbf{K}\mathbf{x} + \mathbf{f}(t)$.
Then, for all $t \in [0,T]$, the norm of the displacement vector $\mathbf{x}(t)$ satisfies:
$ \|\mathbf{x}(t)\| \leq  \| \mathbf{T} \| \left( \| \sqrt{\mathbf{A}} \| \|\sqrt{\mathbf{M}} \| \| \mathbf{x}(0)\| + \|\sqrt{\mathbf{M}}\| \| \dot{\mathbf{x}}(0)\| + t\max_{t \in [0,T]} \| \mathbf{\hat{F}}(t)\| \right)$
where $\mathbf{A} = \sqrt{\mathbf{M}^{-1}} \mathbf{K} \sqrt{\mathbf{M}^{-1}}$ and $\mathbf{T} = -i \ket{0}\bra{0} \otimes \sqrt{\mathbf{M}}^{-1}\sqrt{\mathbf{A}}^{-1} + \ket{1}\bra{1} \otimes \sqrt{\mathbf{M}}^{-1}$. 
    \begin{proof}
        First, we rewrite the dynamical equation in terms of $ \mathbf{y}= \sqrt{\mathbf{M}}\mathbf{x}$.
        \begin{equation}
            \ddot{\mathbf{y}} = \mathbf{A}\mathbf{y} + \sqrt{\mathbf{M}^{-1}}\mathbf{f}(t)
        \end{equation}
        Here $\mathbf{A} = \sqrt{\mathbf{M}^{-1}}\mathbf{K}\sqrt{\mathbf{M}^{-1}}$. Now we write the second order ode in first order form. Let $\mathbf{z} =[i\sqrt{\mathbf{A}}\mathbf{y},\dot{\mathbf{y}}]^T$ and $[\mathbf{x}(t),\mathbf{\dot{x}}(t)]^T  = \mathbf{T}~\mathbf{z}(t)$, where $\mathbf{T} = -i \ket{0}\bra{0} \otimes \sqrt{\mathbf{M}}^{-1}\sqrt{\mathbf{A}}^{-1} + \ket{1}\bra{1} \otimes \sqrt{\mathbf{M}}^{-1}$. After applying the coordinate transformations, the dynamical equations can be written as:
        \begin{equation}
            \label{eq first_order_inhomo}
            \dot{\mathbf{z}} = i\mathbf{\hat{A}} \mathbf{z} + \hat{\mathbf{F}}(t) 
        \end{equation}
        Here, 
        \begin{equation}
            \mathbf{\hat{A}} = \begin{bmatrix}
                        \mathbf{0} & \mathbf{\sqrt{A}} \\
                        \mathbf{\sqrt{A}} &  \mathbf{0} \\
                      \end{bmatrix}
        \end{equation}
        and 
        \begin{equation}
            \mathbf{\hat{F}}(t) = \begin{bmatrix}
                        \mathbf{0}\\
                        \sqrt{\mathbf{M}^{-1}}\mathbf{F}(t) \\
                      \end{bmatrix}
        \end{equation}
        The solution to Eq\eqref{eq first_order_inhomo} is given by:
        \begin{equation}
            \mathbf{z}(t) = e^{i\hat{\mathbf{A}}t} \mathbf{z}(0) + e^{i\hat{\mathbf{A}}t}\int_{0}^{t}e^{-i\hat{\mathbf{A}}t} \mathbf{\hat{F}}(t) dt
        \end{equation}
        and
        \begin{equation}
            [\mathbf{x}(t),  \mathbf{\dot{x}}(t)]^T = \mathbf{T} \left(e^{i\hat{\mathbf{A}}t} \mathbf{z}(0) + e^{i\hat{\mathbf{A}}t}\int_{0}^{t}e^{-i\hat{\mathbf{A}}t} \mathbf{\hat{F}}(t) dt \right)
        \end{equation}
        
        \begin{equation}
            \|\mathbf{x}(t)\| \leq \| \mathbf{T} \| \left( \|e^{i\hat{\mathbf{A}}t} \| \|\mathbf{z}(0)\|  + \|e^{i\hat{\mathbf{A}}t} \| \int_{0}^{t} \|e^{-i\hat{\mathbf{A}}t}\| \|\mathbf{\hat{F}}(t)\| dt  \right)
        \end{equation}
        Since $\hat{\mathbf{A}}$ is a Hermitian matrix, $e^{i\hat{\mathbf{A}}t}$ is unitary and $\|e^{i\hat{\mathbf{A}}t} \| = \|e^{-i\hat{\mathbf{A}}t} \| = 1$.\\
        Hence, the norm $\|\mathbf{x}(t)\|$ is bounded by:
        \begin{align}
            \|\mathbf{x}(t)\| &\leq  \| \mathbf{T} \| \left( \| \sqrt{\mathbf{A}} \| \|\sqrt{\mathbf{M}} \| \| \mathbf{x}(0)\| + \|\sqrt{\mathbf{M}}\| \| \dot{\mathbf{x}}(0)\| + t\max_{t \in [0,T]} \| \mathbf{\hat{F}}(t)\| \right)  \\ \nonumber    
            &\leq  \| \mathbf{T} \| \left( \| \sqrt{\mathbf{A}} \| \|\sqrt{\mathbf{M}} \| \| \mathbf{x}(0)\| + \|\sqrt{\mathbf{M}}\| \| \dot{\mathbf{x}}(0)\| + t \|\sqrt{\mathbf{M}}^{-1}\|\max_{t \in [0,T]} \| \mathbf{F}(t)\| \right)  \\ \nonumber    
        \end{align}
\end{proof}
\end{lemma}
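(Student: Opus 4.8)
The plan is to mimic the symmetrizing change of variables already used in \autoref{Lemma: Upper_Bound_State_Simple_Oscillator} and then deal with the inhomogeneity through a Duhamel (variation-of-parameters) formula. First I would set $\mathbf{y} = \sqrt{\mathbf{M}}\,\mathbf{x}$, which converts the equation of motion into $\ddot{\mathbf{y}} = -\mathbf{A}\mathbf{y} + \sqrt{\mathbf{M}^{-1}}\mathbf{f}(t)$ with $\mathbf{A} = \sqrt{\mathbf{M}^{-1}}\mathbf{K}\sqrt{\mathbf{M}^{-1}}$ real symmetric positive semidefinite, so $\sqrt{\mathbf{A}}$ is well defined. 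Promoting to first order via $\mathbf{z} = [\,i\sqrt{\mathbf{A}}\,\mathbf{y},\ \dot{\mathbf{y}}\,]^T$ yields $\dot{\mathbf{z}} = i\hat{\mathbf{A}}\mathbf{z} + \hat{\mathbf{F}}(t)$, where $\hat{\mathbf{A}}$ is the Hermitian off-diagonal block matrix with blocks $\sqrt{\mathbf{A}}$ and $\hat{\mathbf{F}}(t) = [\,\mathbf{0},\ \sqrt{\mathbf{M}^{-1}}\mathbf{f}(t)\,]^T$. The operator $\mathbf{T}$ in the statement is exactly the inverse of this chain of substitutions, so that $[\mathbf{x}(t),\dot{\mathbf{x}}(t)]^T = \mathbf{T}\,\mathbf{z}(t)$.

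Next I would write the solution with Duhamel's formula, $\mathbf{z}(t) = e^{i\hat{\mathbf{A}}t}\mathbf{z}(0) + \int_0^t e^{i\hat{\mathbf{A}}(t-s)}\hat{\mathbf{F}}(s)\,ds$, and then invoke the crucial fact that since $\hat{\mathbf{A}}$ is Hermitian, $e^{i\hat{\mathbf{A}}\tau}$ is unitary for every real $\tau$. Applying the triangle inequality and pulling the propagators through norms gives $\|\mathbf{z}(t)\| \le \|\mathbf{z}(0)\| + \int_0^t \|\hat{\mathbf{F}}(s)\|\,ds \le \|\mathbf{z}(0)\| + t\max_{s\in[0,T]}\|\hat{\mathbf{F}}(s)\|$.

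It then remains to unpack the two pieces. For the initial-data term I would use $\|\mathbf{z}(0)\| = \sqrt{\|\sqrt{\mathbf{A}}\,\mathbf{y}(0)\|^2 + \|\dot{\mathbf{y}}(0)\|^2}$, substitute $\mathbf{y}(0) = \sqrt{\mathbf{M}}\,\mathbf{x}(0)$ and $\dot{\mathbf{y}}(0) = \sqrt{\mathbf{M}}\,\dot{\mathbf{x}}(0)$, apply submultiplicativity of the spectral norm, and bound $\sqrt{a^2+b^2}\le a+b$ for $a,b\ge 0$ to obtain $\|\mathbf{z}(0)\| \le \|\sqrt{\mathbf{A}}\|\,\|\sqrt{\mathbf{M}}\|\,\|\mathbf{x}(0)\| + \|\sqrt{\mathbf{M}}\|\,\|\dot{\mathbf{x}}(0)\|$. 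For the forcing term I would optionally factor $\|\hat{\mathbf{F}}(s)\| \le \|\sqrt{\mathbf{M}^{-1}}\|\,\|\mathbf{f}(s)\|$. Finally, since $\|\mathbf{x}(t)\| \le \|[\mathbf{x}(t),\dot{\mathbf{x}}(t)]^T\| = \|\mathbf{T}\,\mathbf{z}(t)\| \le \|\mathbf{T}\|\,\|\mathbf{z}(t)\|$, combining the displays gives the claimed bound.

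The argument is essentially routine once the symmetrizing change of variables is in place; the only point needing care is the bookkeeping of the transformation $\mathbf{T}$ and the verification that $\hat{\mathbf{A}}$ is genuinely Hermitian, so that the propagator is an isometry and the inhomogeneous integral contributes only linearly in $t$ rather than exponentially. A secondary, minor subtlety is that $\sqrt{\mathbf{A}}^{-1}$, which appears inside $\mathbf{T}$, strictly requires $\mathbf{A}$ to be positive definite rather than merely positive semidefinite; in the degenerate case one restricts to the orthogonal complement of the kernel (where the oscillatory dynamics lives) or interprets $\mathbf{T}$ on that subspace. I do not expect any genuine obstacle beyond this.
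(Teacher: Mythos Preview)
Your proposal is correct and follows essentially the same argument as the paper: the same symmetrizing substitution $\mathbf{y}=\sqrt{\mathbf{M}}\mathbf{x}$, the same first-order lift $\mathbf{z}=[i\sqrt{\mathbf{A}}\mathbf{y},\dot{\mathbf{y}}]^T$, Duhamel's formula, unitarity of $e^{i\hat{\mathbf{A}}t}$, and the final pullback through $\mathbf{T}$. Your remark on the semidefinite-versus-definite issue for $\sqrt{\mathbf{A}}^{-1}$ is a fair caveat that the paper glosses over.
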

We are now ready to apply these ideas to provide a proof of our main perturbation result which is stated as Theorem~\ref{thm:pertThm}. 
\begin{theorem}[Perturbation Result]
\label{thm:pertThm}

Let $\mathbf{M} \ddot{\mathbf{x}}_0 = \mathbf{K}_0 \mathbf{x}_0$, and  $\mathbf{M} \ddot{\mathbf{x}} = (\mathbf{K}_0 + \gamma \mathbf{K}') \mathbf{x}$ be governing differential equations of two coupled oscillator systems. Let $\mathbf{A} = \sqrt{\mathbf{M}^{-1}} \mathbf{K} \sqrt{\mathbf{M}^{-1}}$ , $\mathbf{T} = -i \ket{0}\bra{0} \otimes \sqrt{\mathbf{M}}^{-1}\sqrt{\mathbf{A}}^{-1} + \ket{1}\bra{1} \otimes \sqrt{\mathbf{M}}^{-1}$ and $ \mathbf{\Xi}(0) =  \left( \| \sqrt{\mathbf{A}} \| \|\sqrt{\mathbf{M}} \| \| \mathbf{x}(0)\| + \|\sqrt{\mathbf{M}}\| \| \dot{\mathbf{x}}(0)\| \right)$. Then, given any  $\epsilon >0$, for any $ \frac{1}{\gamma} \geq   \left(t \|\mathbf{K}'\| \| \mathbf{T} \| \|\sqrt{\mathbf{M}}^{-1}\|\right)  \left(1 + \frac{\mathbf{\Xi}(0)}{\epsilon}\right)$ 
,  $\|\mathbf{x}(t) -  \mathbf{x}_0(t)\| \leq \epsilon$. 
\end{theorem}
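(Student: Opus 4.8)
The plan is to view the perturbed system as the \emph{unperturbed} oscillator subjected to a small, trajectory-dependent ``force'' $\gamma\mathbf{K}'\mathbf{x}(t)$, bound the induced error with \autoref{Lemma Upper_Bound_State_Norm_Inhomo}, and then close the resulting self-referential estimate with a Gr\"onwall-type bootstrap on the running supremum of the error. Throughout I assume (as in the construction of \autoref{sec:linear inhomo-diff-eq}) that both systems are posed with the same initial data $\mathbf{x}(0),\dot{\mathbf{x}}(0)$, and I use that $-\mathbf{K}_0$ is positive semidefinite, since the $\mathbf{x}_0$-system is a genuine coupled oscillator, so the hypotheses of Lemmas~\ref{Lemma: Upper_Bound_State_Simple_Oscillator} and~\ref{Lemma Upper_Bound_State_Norm_Inhomo} are met for it.

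First I would set $\bm{\delta}(t) := \mathbf{x}(t) - \mathbf{x}_0(t)$, so that $\bm{\delta}(0) = \dot{\bm{\delta}}(0) = \mathbf{0}$. Subtracting $\mathbf{M}\ddot{\mathbf{x}}_0 = \mathbf{K}_0\mathbf{x}_0$ from $\mathbf{M}\ddot{\mathbf{x}} = (\mathbf{K}_0 + \gamma\mathbf{K}')\mathbf{x}$ gives $\mathbf{M}\ddot{\bm{\delta}} = \mathbf{K}_0\bm{\delta} + \gamma\mathbf{K}'\mathbf{x}(t)$, which is precisely the inhomogeneous coupled-oscillator equation of \autoref{Lemma Upper_Bound_State_Norm_Inhomo} with forcing $\mathbf{f}(t) = \gamma\mathbf{K}'\mathbf{x}(t)$ and vanishing initial conditions. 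Applying that lemma and discarding the (zero) initial-condition terms yields $\|\bm{\delta}(t)\| \le t\,\|\mathbf{T}\|\,\|\sqrt{\mathbf{M}}^{-1}\|\,\max_{s\in[0,t]}\|\gamma\mathbf{K}'\mathbf{x}(s)\| \le \gamma c\,\max_{s\in[0,t]}\|\mathbf{x}(s)\|$, where $c := t\,\|\mathbf{K}'\|\,\|\mathbf{T}\|\,\|\sqrt{\mathbf{M}}^{-1}\|$. This is not yet closed, because the bound still contains the driven trajectory $\mathbf{x}$.

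To remove the self-reference I would split $\|\mathbf{x}(s)\| \le \|\mathbf{x}_0(s)\| + \|\bm{\delta}(s)\|$ and bound $\|\mathbf{x}_0(s)\| \le \mathbf{\Xi}(0)$ using \autoref{Lemma: Upper_Bound_State_Simple_Oscillator}. Writing $D(t) := \sup_{s\in[0,t]}\|\bm{\delta}(s)\|$ and noting that every quantity on the right of the previous estimate is nondecreasing in its time argument, I can take the supremum through to obtain the self-consistent inequality $D(t) \le \gamma c\,\big(\mathbf{\Xi}(0) + D(t)\big)$. The hypothesis $1/\gamma \ge c\,(1 + \mathbf{\Xi}(0)/\epsilon)$ is exactly $\gamma c \le \epsilon/(\epsilon + \mathbf{\Xi}(0)) < 1$, so $1 - \gamma c > 0$ and I may rearrange to get $D(t) \le \gamma c\,\mathbf{\Xi}(0)/(1-\gamma c) \le \epsilon$; in particular $\|\mathbf{x}(t) - \mathbf{x}_0(t)\| = \|\bm{\delta}(t)\| \le D(t) \le \epsilon$, which is the claim.

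The main obstacle is precisely this self-reference: the effective forcing $\gamma\mathbf{K}'\mathbf{x}$ depends on the perturbed solution we are trying to control, and $\mathbf{x}$ itself need not satisfy the hypotheses of the oscillator norm bounds because the perturbed stiffness matrix is not symmetric. The bootstrap succeeds only because (i) one tracks the running supremum $D(t)$ rather than the pointwise norm, so the inequality can be solved for $D(t)$, and (ii) the hypothesis on $\gamma$ forces $\gamma c < 1$, keeping the feedback contractive. A secondary, purely bookkeeping point is to track the $\|\mathbf{T}\|$ and $\|\sqrt{\mathbf{M}}\|$ prefactors carefully when invoking \autoref{Lemma: Upper_Bound_State_Simple_Oscillator}; this affects only constant factors inside $\mathbf{\Xi}(0)$ and not the structure of the stated condition on $1/\gamma$.
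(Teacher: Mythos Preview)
Your argument is correct and lands on the same threshold, but it reaches it by a different route than the paper. The paper expands $\mathbf{x}(t)$ as a formal perturbation series $\mathbf{x}_0 + \gamma\mathbf{x}_1 + \gamma^2\mathbf{x}_2 + \cdots$, derives the recursion $\mathbf{M}\ddot{\mathbf{x}}_i = \mathbf{K}_0\mathbf{x}_i + \mathbf{K}'\mathbf{x}_{i-1}$, applies \autoref{Lemma Upper_Bound_State_Norm_Inhomo} to each order to get $\|\mathbf{x}_i(t)\| \le (t\|\mathbf{K}'\|\,\|\mathbf{T}\|\,\|\sqrt{\mathbf{M}}^{-1}\|)^i\|\mathbf{T}\|\,\Xi(0)$ by induction, and then sums the geometric series. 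You instead write a single inhomogeneous equation for the full error $\bm{\delta}=\mathbf{x}-\mathbf{x}_0$ and close the estimate via a bootstrap on $D(t)=\sup_{s\le t}\|\bm{\delta}(s)\|$. The two are equivalent---your self-consistent inequality $D\le \gamma c(\Xi(0)+D)$ is precisely the resummed Neumann series---but your version is more direct and sidesteps the formal convergence of the expansion, while the paper's version makes the perturbative hierarchy explicit. The $\|\mathbf{T}\|$ prefactor you flag as bookkeeping is indeed dropped in the paper's final summation as well, so your bound matches the statement exactly as written.
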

  \begin{proof}[Proof of Theorem~\ref{thm:pertThm}]
      Consider Eq \eqref{eq perturbed_dynamics}:
      \begin{equation}
      \label{eq perturbed_dynamics}
          \mathbf{M}\ddot{\mathbf{x}} = (\mathbf{K}_0 + \gamma \mathbf{K}')\mathbf{x}
      \end{equation}
      First, we apply perturbation theory to express $\mathbf{x}(t)$ at any time $t$ as a power series in $\gamma$ and $\mathbf{x}_i(t)$.
      \begin{equation}
      \label{eq perturbed_expansion_displacement}
          \mathbf{x}(t) = \mathbf{x}_0(t) + \gamma \mathbf{x}_1(t) + \gamma^2 \mathbf{x}_2(t) +...
      \end{equation}
      We further assume,
      \begin{equation}
          \mathbf{x}(t=0) = \mathbf{x}_0(t=0)
      \end{equation}
      and 
      \begin{equation}
          \label{eq initial_conditions}
          \mathbf{x}_1(t=0) = \mathbf{x}_2(t=0) = \mathbf{x}_{k \neq 0}(t=0) = \mathbf{0}
      \end{equation}
Substituting Eq \eqref{eq perturbed_expansion_displacement} in Eq \eqref{eq perturbed_dynamics} yields:
      \begin{equation}
    \mathbf{M} (\ddot{\mathbf{x}}_0 + \gamma \ddot{\mathbf{x}}_1 + \gamma^2 \ddot{\mathbf{x}}_2 +...) = (\mathbf{K}_0 + \gamma \mathbf{K}') (\mathbf{x}_0 + \gamma \mathbf{x}_1 + \gamma^2 \mathbf{x}_2 +..)
\end{equation}
This leads to a system of coupled differential equations for $\mathbf{x}_i(t)$.
The governing equation of $\mathbf{x}_0(t)$ is given by:
      \begin{equation}
          \mathbf{M} \ddot{\mathbf{x}}_0 = \mathbf{K}_0 \mathbf{x}_0
      \end{equation}
Note that this is our desired ode. 
For $\mathbf{x}_1(t)$, we get: 
\begin{align}
    \mathbf{M}\ddot{\mathbf{x}}_1 = \mathbf{K}_0 \mathbf{x}_1 + \mathbf{K}' \mathbf{x}_0 
\end{align}
\begin{align}
    \mathbf{M}\ddot{\mathbf{x}}_2 = \mathbf{K}_0 \mathbf{x}_2 + \mathbf{K}' \mathbf{x}_1 
\end{align}
Or in general,
\begin{align}
    \mathbf{M} \ddot{\mathbf{x}}_i = \mathbf{K}_0 \mathbf{x}_i + \mathbf{K}' \mathbf{x}_{i-1}
\end{align}
 The difference between $\mathbf{x}_0(t)$ and $\mathbf{x}(t)$ is bounded by:
      \begin{equation}
          \|\mathbf{x}(t) - \mathbf{x}_0(t)\| \leq \gamma\|\mathbf{x}_1(t)\|   + \gamma^2 \|\mathbf{x}_2(t)\| + \cdots
      \end{equation}
      To bound the simulation error $ \|\mathbf{x}(t) - \mathbf{x}_0(t)\| \leq \epsilon$, we bound the sum:
\begin{equation}
\label{eq error_term}
\gamma\|\mathbf{x}_1(t)\|   + \gamma^2 \|\mathbf{x}_2(t)\| + .. \leq \epsilon
\end{equation}
Now we have to bound $\|\mathbf{x}_i(t)\|$ $\forall i \in \mathbb{N}$. Note that the governing equations for $\mathbf{x}_i(t)$ are of the form: $\mathbf{M}\ddot{\mathbf{x}_i} = \mathbf{K}\mathbf{x}_i + \mathbf{F}(t)$, where $\mathbf{F}(t) = \mathbf{K}'\mathbf{x}_{i-1}(t)$. First, we start with $i=1$, and bound $\|\mathbf{x}_1(t)\|$. To bound $\|\mathbf{x}_1(t)\|$, we apply results from \hyperref[Lemma Upper_Bound_State_Norm_Inhomo]{Lemma 3}. Here, $\mathbf{F}(t) = \mathbf{K}'\mathbf{x}_0(t)$.
\begin{equation}
    \label{eq upper_bound_norm_x_1}
     \| \mathbf{x}_1(t) \| \leq \| \mathbf{T} \| \left( \| \sqrt{\mathbf{A}} \| \|\sqrt{\mathbf{M}} \| \| \mathbf{x}(0)\| + \|\sqrt{\mathbf{M}}\| \| \dot{\mathbf{x}}(0)\| + t \|\sqrt{\mathbf{M}}^{-1}\|\max_{t \in [0,T]} \| \mathbf{K}'\mathbf{x}_0(t)\| \right) 
\end{equation}
Since $\mathbf{x}_1(0) = \dot{\mathbf{x}}_1(0) = \mathbf{0}$ due to Eq \eqref{eq initial_conditions}, Eq \eqref{eq upper_bound_norm_x_1} simplifies to:
\begin{equation}
\label{eq upper_bound_norm_x_1_simplified}
     \| \mathbf{x}_1(t) \| \leq  \| \mathbf{T} \| \left( t \|\sqrt{\mathbf{M}}^{-1}\|\max_{t \in [0,T]} \| \mathbf{K}'\mathbf{x}_0(t)\| \right)
\end{equation}
And in general, we have:
\begin{equation}
\label{eq upper_bound_norm_x_i_simplified}
     \| \mathbf{x}_i(t) \| \leq  \| \mathbf{T} \| \left( t \|\sqrt{\mathbf{M}}^{-1}\|\max_{t \in [0,T]} \| \mathbf{K}'\mathbf{x}_{i-1}(t)\| \right)
\end{equation}
From \hyperref[Lemma: Upper_Bound_State_Simple_Oscillator]{Lemma 1}, we have: 
\begin{equation}
    \label{eq force_upper_bound_1}
    \|\mathbf{K}' \mathbf{x}_0(t)\| \leq \|\mathbf{K}'\|. \|\mathbf{x}_0 (t) \|  \leq   \|\mathbf{K}'\| \| \mathbf{T} \| \left( \| \sqrt{\mathbf{A}} \| \|\sqrt{\mathbf{M}} \| \| \mathbf{x}(0)\| + \|\sqrt{\mathbf{M}}\| \| \dot{\mathbf{x}}(0)\| \right) 
\end{equation}
Substituting Eq \eqref{eq force_upper_bound_1} in Eq \eqref{eq upper_bound_norm_x_1_simplified} and setting $ \mathbf{\Xi}(0) =  \left( \| \sqrt{\mathbf{A}} \| \|\sqrt{\mathbf{M}} \| \| \mathbf{x}(0)\| + \|\sqrt{\mathbf{M}}\| \| \dot{\mathbf{x}}(0)\| \right) $, we get:

\begin{equation}
\label{eq upper_bound_x_1}
     \| \mathbf{x}_1(t) \| \leq   t\|\mathbf{K}'\| \| \mathbf{T} \|^2\cdot \|\sqrt{\mathbf{M}}^{-1}\| \cdot \mathbf{\Xi}(0)
\end{equation}
Similarly, we upper bound  $\|\mathbf{x}_2(t)\|$. Since, $\mathbf{x}_2(0) = \dot{\mathbf{x}}_2(0) = \mathbf{0}$, 
\begin{align}
     \| \mathbf{x}_2(t) \| &\leq \| \mathbf{T} \| \left( t \|\sqrt{\mathbf{M}}^{-1}\|\max_{t \in [0,T]} \| \mathbf{K}'\mathbf{x}_1(t)\| \right) \nonumber\\
      &\leq t^2 \|\mathbf{K}'\|^2 \| \mathbf{T} \|^3\cdot \|\sqrt{\mathbf{M}}^{-1}\|^2 \cdot \mathbf{\Xi}(0)
      \label{eq:X2Bd}
\end{align}
Now, we apply proof by induction to bound $ \| \mathbf{x}_i(t) \| $. Specifically, let's assume that $\| \mathbf{x}_i(t) \| \leq t^i \|\mathbf{K}'\|^i \| \mathbf{T} \|^{i+1} \|\sqrt{\mathbf{M}}^{-1}\|^i \mathbf{\Xi}(0)$. Now, $\| \mathbf{x}_{i+1}(t)\|$ is upper bounded by:
\begin{align}
    \| \mathbf{x}_{i+1}(t) \| &\leq  \| \mathbf{T} \| \left( t \|\sqrt{\mathbf{M}}^{-1}\|\max_{t \in [0,T]} \| \mathbf{K}'\mathbf{x}_{i}(t)\| \right)\\ \nonumber 
    &\leq  \| \mathbf{T} \| \left( t \|\sqrt{\mathbf{M}}^{-1}\| \| \mathbf{K}' \| \max_{t \in [0,T]} \| \mathbf{x}_{i}(t)\| \right) \\ \nonumber 
    &\leq t^{i+1} \|\mathbf{K}'\|^{i+1} \| \mathbf{T} \|^{i+2} \|\sqrt{\mathbf{M}}^{-1}\|^{i+1} \mathbf{\Xi}(0)
\end{align}
This upper bound satisfies the induction hypothesis, and hence our assumption $\| \mathbf{x}_i(t) \| \leq t^i \|\mathbf{K}'\|^i \| \mathbf{T} \|^{i+1} \|\sqrt{\mathbf{M}}^{-1}\|^i \mathbf{\Xi}(0)$ is correct $\forall i \in \mathbb{N}$. 
Now, we upper bound the infinite sum in Eq \eqref{eq error_term}.
\begin{equation}
    \sum_{i=1}^{\infty} \gamma^i \|\mathbf{x}_i(t)\| \leq \sum_{i=1}^{\infty} \gamma^i t^i \|\mathbf{K}'\|^i \| \mathbf{T} \|^{i+1} \|\sqrt{\mathbf{M}}^{-1}\|^i \mathbf{\Xi}(0)
\end{equation}
The infinite sum converges when 
\begin{equation}
    \gamma \ll \frac{1}{ t\|\mathbf{K}'\|\| \mathbf{T} \| \|\sqrt{\mathbf{M}}^{-1}\|}
\end{equation}
And the converged value is given by: 
\begin{equation}
    \mathbf{\Xi}(0) \frac{\gamma t \|\mathbf{K}'\| \| \mathbf{T} \| \|\sqrt{\mathbf{M}}^{-1}\| }{1 - \gamma t \|\mathbf{K}'\| \| \mathbf{T} \| \|\sqrt{\mathbf{M}}^{-1}\|} 
\end{equation}
This sum has to be bounded by $\epsilon$, to achieve desired simulation error. Hence,
\begin{equation}
    \mathbf{\Xi}(0) \frac{\gamma t \|\mathbf{K}'\| \| \mathbf{T} \| \|\sqrt{\mathbf{M}}^{-1}\| }{1 - \gamma t \|\mathbf{K}'\| \| \mathbf{T} \| \|\sqrt{\mathbf{M}}^{-1}\|}  \leq \epsilon 
\end{equation}
After algebraic manipulations, we get:
\begin{equation}
    \frac{1}{\gamma} \geq   \left(t \|\mathbf{K}'\| \| \mathbf{T} \| \|\sqrt{\mathbf{M}}^{-1}\|\right)  \left(1 + \frac{\mathbf{\Xi}(0)}{\epsilon}\right)
\end{equation}
\end{proof}

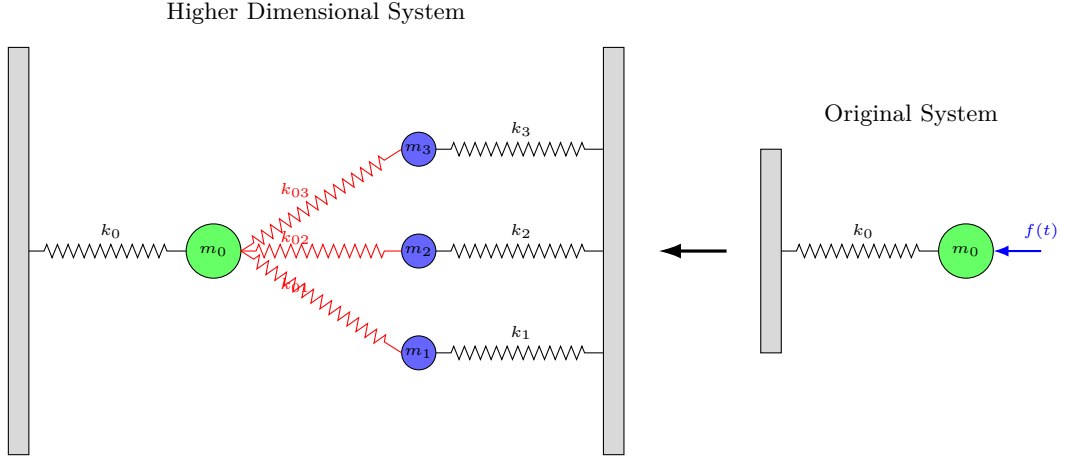
\begin{figure}
\centering
\begin{tikzpicture}[scale=0.9]
    \tikzstyle{spring}=[decorate,decoration={zigzag,pre length=0.2cm,post length=0.2cm,segment length=4}]

    \begin{scope}[xshift=11cm, yshift=1.5cm]
        \draw[fill=gray!30] (0,0) rectangle (0.3,3);
        \node[rotate=90, font=\tiny] at (0.15,1.5) {};
        
        \filldraw[fill=green!60] (3,1.5) circle (0.4);
        \node[font=\tiny] at (3,1.5) {$m_0$};
        
        \draw[spring] (0.3,1.5) -- (2.6,1.5);
        \node[font=\tiny] at (1.5,1.8) {$k_0$};
        
        \draw[-{Latex[length=2mm, width=1.5mm]}, thick, blue] (4.1,1.5) -- (3.4,1.5);
        \node[blue, font=\tiny] at (4.1,1.8) {$f(t)$};
        
        \node[font=\footnotesize] at (2.2,3.5) {Original System};
    \end{scope}

    \begin{scope}
        \draw[fill=gray!30] (0,0) rectangle (0.3,6);
        \node[rotate=90, font=\tiny] at (0.15,3) {};
        
        \draw[fill=gray!30] (8.7,0) rectangle (9,6);
        \node[rotate=90, font=\tiny] at (8.85,3) {};
        
        \filldraw[fill=green!60] (3,3) circle (0.4);
        \node[font=\tiny] at (3,3) {$m_0$};
        
        \draw[spring] (0.3,3) -- (2.6,3);
        \node[font=\tiny] at (1.5,3.3) {$k_0$};
        
        \filldraw[fill=blue!60] (6,1.5) circle (0.25);
        \node[font=\tiny] at (6,1.5) {$m_1$};
        \filldraw[fill=blue!60] (6,3) circle (0.25);
        \node[font=\tiny] at (6,3) {$m_2$};
        \filldraw[fill=blue!60] (6,4.5) circle (0.25);
        \node[font=\tiny] at (6,4.5) {$m_3$};
        
        \draw[red,spring] (3.4,3) -- (5.75,1.5);
        \node[red, font=\tiny] at (4.2,2.5) {$k_{01}$};
        \draw[red,spring] (3.4,3) -- (5.75,3);
        \node[red, font=\tiny] at (4.2,3.2) {$k_{02}$};
        \draw[red,spring] (3.4,3) -- (5.75,4.5);
        \node[red, font=\tiny] at (4.2,3.9) {$k_{03}$};
        
        \draw[spring] (6.25,1.5) -- (8.7,1.5);
        \node[font=\tiny] at (7.5,1.8) {$k_1$};
        \draw[spring] (6.25,3) -- (8.7,3);
        \node[font=\tiny] at (7.5,3.3) {$k_2$};
        \draw[spring] (6.25,4.5) -- (8.7,4.5);
        \node[font=\tiny] at (7.5,4.8) {$k_3$};
        
        \node[font=\footnotesize] at (4.5,6.5) {Higher Dimensional System};
    \end{scope}
    \draw[-{Latex[length=3mm, width=2mm]}, very thick] (10.5,3) -- (9.5,3);
    \node[font=\tiny] at (10,3.3) {};

    \node[font=\footnotesize, text width=14cm, align=center] at (6,-0.5) 
        {};
\end{tikzpicture}
\caption{Equivalence between a higher dimensional oscillator system and a simple oscillator with external force}
\label{fig:equivalent_oscillators}
\end{figure}

\begin{theorem}[One Dimensional Inhomogeneous system]
\label{thm:one_dimensional_oscillator}
  Given a dynamical system of the form $m_0\Ddot{x} = -k_0x + f(t)$, where $f(t) = \sum_{i=1}^N f_i cos(\omega_i t + \phi_i) $, there exists a higher dimensional coupled oscillator system with dynamics: $\mathbf{M}\ddot{\mathbf{x}} = \mathbf{K}\mathbf{x}$ and initial conditions $\mathbf{x}(0) = \mathbf{x}_0$ and $\dot{\mathbf{x}}(0) = \dot{\mathbf{x}}_0$ such that $\forall t \in [0,T]$, for any $\epsilon >0$,  $\|\mathbf{P}\mathbf{x}(t) -  x_0(t)\| \leq \epsilon$. 
\end{theorem}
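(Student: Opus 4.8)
The plan is to realise the time‑dependent force $f(t)=\sum_{i=1}^N f_i\cos(\omega_i t+\phi_i)$ as the back‑reaction of $N$ heavy auxiliary masses, exactly along the lines of the construction of $\mathbf{M}$ and $\mathbf{K}$ given above, and then to invoke the Perturbation Result, \autoref{thm:pertThm}, to bound the discrepancy introduced by coupling to those auxiliary masses.

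\textbf{Step 1: build the enlarged system.} I would take the $(N+1)$‑dimensional oscillator of \autoref{fig:equivalent_oscillators} with coordinate $x_0$ for the physical mass $m_0$ and coordinates $x_1,\dots,x_N$ for auxiliary masses, each of mass $m_f$. I would fix the wall spring on $m_0$ to $\bar k_0=k_0/2$, the couplings to $k_{0,i}=k_0/(2N)$, and the auxiliary wall springs to $k_i=m_f\omega_i^2$, so that $\bar k_0+\sum_i k_{0,i}=k_0$ as required by \eqref{stiffness_constraint} and, writing $\gamma=1/m_f$, the equations of motion become precisely $\mathbf{M}\ddot{\mathbf{x}}=(\mathbf{K}_0+\gamma\mathbf{K}')\mathbf{x}$ with $\mathbf{M}=\mathrm{diag}(m_0,\mathbf{I}_{N\times N})$ and $\mathbf{K}_0,\mathbf{K}'$ the matrices of \eqref{eq perturbed_oscillator}. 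For the initial data I would put the physical coordinate's initial position and velocity equal to those of the target scalar equation $m_0\ddot x=-k_0x+f(t)$, and, invoking \eqref{eq A_i}--\eqref{eq phi_i}, set $x_i(0)=\tfrac{2N f_i\cos\phi_i}{k_0}$ and $\dot x_i(0)=-\tfrac{2N f_i}{k_0}\,\omega_i\sin\phi_i$, so that $A_i=2Nf_i/k_0=f_i/k_{0,i}$ and the $i$‑th auxiliary mode acquires phase $\phi_i$.

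\textbf{Step 2: legitimacy and the reference trajectory.} I would first confirm this is a genuine coupled oscillator in the sense of \autoref{Def: coupled_oscillator_system}: after multiplying the auxiliary rows by $m_f$ the mass matrix $\mathrm{diag}(m_0,m_f,\dots,m_f)$ is diagonal with positive entries, and the associated stiffness matrix is symmetric, has the weighted‑incidence structure, and is positive semidefinite, being a weighted graph Laplacian plus the nonnegative grounding terms $\bar k_0,k_i$ (the degenerate case $\omega_i=0$, a constant forcing component, is absorbed into a shift of $k_0$ and handled directly). I would then read off the reference dynamics $\mathbf{M}\ddot{\mathbf{x}}_0=\mathbf{K}_0\mathbf{x}_0$: since $\mathbf{K}_0$ is block triangular, the auxiliary coordinates decouple and solve $\ddot x_{0,i}=-\omega_i^2 x_{0,i}$, so with the chosen initial data $x_{0,i}(t)=A_i\cos(\omega_i t+\phi_i)$ with $k_{0,i}A_i=f_i$; hence $\sum_i k_{0,i}x_{0,i}(t)=f(t)$ \emph{identically}, and therefore $x_{0,0}(t)$ satisfies $m_0\ddot x_{0,0}=-k_0 x_{0,0}+f(t)$ with the prescribed initial conditions. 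Thus $\mathbf{P}\mathbf{x}_0(t)$ is exactly the solution $x_0(t)$ of the one‑dimensional forced oscillator, and it remains only to compare $\mathbf{x}(t)$ with $\mathbf{x}_0(t)$.

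\textbf{Step 3: apply the perturbation bound.} I would apply \autoref{thm:pertThm} to this $\mathbf{M},\mathbf{K}_0,\mathbf{K}'$. The norms appearing there are all finite and explicit and — crucially — independent of $m_f$ in this representation, where $\mathbf{M}=\mathrm{diag}(m_0,\mathbf{I})$ is fixed and the back‑reaction is carried entirely by $\gamma=1/m_f$: one has $\|\mathbf{K}'\|\le\|\mathbf{K}'\|_F=k_0/\sqrt{2N}$, while $\|\mathbf{T}\|$, $\|\sqrt{\mathbf{M}}^{-1}\|$ and $\mathbf{\Xi}(0)$ are determined by $m_0,k_0,\{\omega_i\}$ and the initial data. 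Since the bound of \autoref{thm:pertThm} is monotone increasing in $t$, it suffices to enforce it at $t=T$: choosing the auxiliary mass so that
\[
 m_f=\frac1\gamma\;\ge\;\left(T\,\|\mathbf{K}'\|\,\|\mathbf{T}\|\,\|\sqrt{\mathbf{M}}^{-1}\|\right)\left(1+\frac{\mathbf{\Xi}(0)}{\epsilon}\right)
\]
yields $\|\mathbf{x}(t)-\mathbf{x}_0(t)\|\le\epsilon$ for all $t\in[0,T]$. Because $\mathbf{P}$ is a coordinate projection, $\|\mathbf{P}\|\le 1$, so $\|\mathbf{P}\mathbf{x}(t)-x_0(t)\|=\|\mathbf{P}(\mathbf{x}(t)-\mathbf{x}_0(t))\|\le\epsilon$, which is the claim; the construction depends on $\epsilon$ only through $m_f$, so every $\epsilon>0$ is attainable.

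\textbf{Main obstacle.} The genuinely delicate point is not any single calculation but the self‑consistency of the scheme: the perturbative splitting uses an unperturbed matrix $\mathbf{K}_0$ that is \emph{not} symmetric and so does not itself describe a physical oscillator, whereas the object we must exhibit is a bona fide symmetric, positive‑semidefinite coupled oscillator; reconciling these is exactly the role of the rescaling in Step~2, which changes neither the trajectory nor the initial data. The remaining ingredients — the norm estimates guaranteeing that the required $m_f$ is finite (indeed grows only polynomially in $T$, $1/\epsilon$ and $N$) and the convergence of the $\gamma$‑series under the smallness condition on $\gamma$ — follow from \autoref{Lemma: Upper_Bound_State_Simple_Oscillator}, \autoref{Lemma Upper_Bound_State_Norm_Inhomo} and \autoref{thm:pertThm}, and require no new ideas.
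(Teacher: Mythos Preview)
Your proposal is correct and follows essentially the same approach as the paper: construct the $(N{+}1)$-dimensional system with $\mathbf{M}=\mathrm{diag}(m_0,\mathbf{I})$, $\mathbf{K}=\mathbf{K}_0+\gamma\mathbf{K}'$, $\gamma=1/m_f$, and invoke \autoref{thm:pertThm} to choose $m_f$ large enough. Your write-up is in fact more careful than the paper's, which simply displays the matrices and cites the perturbation theorem; you additionally verify that the enlarged system is a legitimate coupled oscillator and explicitly confirm that the reference trajectory $\mathbf{x}_0$ reproduces the forced scalar equation, and you flag the subtlety that $\mathbf{K}_0$ itself is not symmetric---points the paper leaves implicit or scattered in the preceding discussion.
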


\begin{proof}

Let $\mathbf{M} \ddot{\mathbf{x}} = \mathbf{K}\mathbf{x}$ be the governing differential equation of higher dimensional coupled oscillator system. \\
Here,
\begin{equation}
   \mathbf{M} = \begin{bmatrix}
                        m_0 & \mathbf{0}_{N \times N} \\
                        \mathbf{0}_{N \times N} &  \mathbf{I}_{N \times N} \\
                      \end{bmatrix}
\end{equation}
and let $\mathbf{K} = \mathbf{K}_0 + \frac{1}{m_f}\mathbf{K}'$

\begin{equation}
   \mathbf{K}_0 = 
\begin{bmatrix}
-k_0 & k_{0,1} & k_{0,2} & k_{0,3} & \cdots & k_{0,N}  \\
0 & -\omega_1^2  & 0 & 0 &\cdots & 0 \\
0 & 0 & -\omega_2^2  & 0 & \cdots & 0 \\
0& 0 & 0 & -\omega_3^2 & \cdots & 0\\
\vdots & \vdots & \vdots& . & \ddots & 0\\
0 & 0 & 0 & 0& \cdots & -\omega_N^2 
\end{bmatrix}
\end{equation}    
and 
\begin{equation}
   \frac{1}{m_f}\mathbf{K}' = 
\frac{1}{m_f}\begin{bmatrix}
0 & 0 & 0 & 0 & \cdots & 0  \\
k_{0,1} & - k_{0,1} & 0 & 0 &\cdots & 0 \\
k_{0,2} & 0 & - k_{0,2} & 0 & \cdots & 0 \\
k_{0,3} & 0 & 0 & - k_{0,3}& \cdots & 0\\
\vdots & \vdots & \vdots& . & \ddots & 0\\
 k_{0,N} & 0 & 0 & 0& \cdots & - k_{0,N}
\end{bmatrix}
\end{equation}  
The governing differential equation of higher dimensional oscillator is of the form: $\mathbf{M} \ddot{\mathbf{x}} = (\mathbf{K}_0 + \gamma \mathbf{K}')\mathbf{x}$, where $\gamma = \frac{1}{m_f}$.
Applying \autoref{thm:pertThm}, for any 
\begin{equation}
   \frac{1}{\gamma} = m_f \geq \left( t \|\mathbf{K}'\| \| \mathbf{T} \| \|\sqrt{\mathbf{M}}^{-1}\|\right)  \left(1 + \frac{\mathbf{\Xi}(0)}{\epsilon}\right)
\end{equation},
$\|\mathbf{x}(t) -\mathbf{x}_0(t) \| \leq \epsilon$, where the dynamics of $\mathbf{x}_0(t)$ is described by the governing equation: $\mathbf{M}\ddot{\mathbf{x}}_0 = \mathbf{K}_0 \mathbf{x}_0$.

\end{proof}

\subsection{N-Dimensional Case}
In this section, we consider the general $N$ dimensional forced coupled oscillator system defined in \autoref{Def: forced_coupled_oscillator_system}. Mathematically, the dynamics of the system can be expressed as:

\begin{equation}
\label{eq-inhomo}
 \mathbf{M}\ddot{\mathbf{x}} =  -\mathbf{K} \mathbf{x} + \mathbf{f}(t)
\end{equation}
Given initial conditions $\mathbf{x}(t=0) = \mathbf{x}_0 \in \mathbb{R}^N, \dot{\mathbf{x}}(t=0) = \dot{\mathbf{x}}_0 \in \mathbb{R}^N$ and $\mathbf{f}(t) \in \mathbb{R}^N$.  The Forced Quantum Oscillator problem is then to find a sequence of gate and query operations of minimal length  that prepares a quantum state within distance $\epsilon$ of  $\bm{\ket{\psi(t)}} \propto [\mathbf{x}(t),\dot{\mathbf{x}}(t)]^T$.
\\

To solve the problem, we show that the dynamics of the non-conservative classical system can be contained in the dynamics of a higher-dimensional conservative dynamical system. Mathematically, we construct a higher dimensional coupled oscillator system with dynamics $\mathbf{M}'\ddot{\mathbf{z}} = \mathbf{K}'\mathbf{z}$, such that a subspace of $\mathbf{z}(t)$ evolves according to Eq.~\eqref{eq-inhomo}. We show the explicit construction of $\mathbf{K}'$ and $\mathbf{M}'$ in the Theorem below. 


\begin{restatable}[Construction of larger Coupled Oscillator System]{theorem}{consthm}
\label{thm:pertThm N Dimensions}
 Consider a forced coupled oscillator system defined in \autoref{Def: forced_coupled_oscillator_system}. Let the governing differential equation of the oscillator system be:  
  \[
  \mathbf{M}\ddot{\mathbf{x}} = -\mathbf{K}\mathbf{x} + \mathbf{f}(t)
  \] 
  Then, there exists a higher dimensional coupled oscillator system with dynamics: 
  \[
  \mathbf{M}'\ddot{\mathbf{z}} = -\mathbf{K}'\mathbf{z}
  \]
 and initial conditions $\mathbf{z(0)} , \dot{\mathbf{z}}(0)$ such that $\forall t \in [0,t_s]$, for any $\epsilon >0$,  $\|\mathbf{P}\mathbf{z}(t) -  \mathbf{x}(t)\| \leq \epsilon$. Here, $\mathbf{z} \in \mathbb{R}^{N(l+1)}$,  $\mathbf{M}' \in \mathbb{R}^{N(l+1) \times N(l+1)}$,  and $\mathbf{K}' \in \mathbb{R}^{N(l+1)\times N(l+1)}$ are the mass, and stiffness matrices and the rectangular matrix $\mathbf{P} = [\mathbf{I}_{N\times N}, \mathbf{0}_{N \times Nl}] \in \mathbb{R}^{N \times N(l+1)}$. The initial displacement vector of the higher dimensional oscillator system is given by:
 \[
 \mathbf{z}(0) = [\mathbf{x}(0), \mathbf{y}_1(0), \mathbf{y}_2(0) \cdots \mathbf{y}_N(0) ]^T,
 \]
 where 
 \[
 \mathbf{y}_i(0) = [y_{i,1}(0), \cdots y_{i,l}(0)]^T,
 \]
 and $ y_{i,j}(0) =  \frac{2 l f_{ij}}{k_{ii}\sqrt{1+\tan^2 (\phi_{ij}) }} $. The initial velocity vector, of the high dimensional oscillator system is given by: $\dot{\mathbf{z}}(0) = [\dot{\mathbf{x}}(0), \dot{\mathbf{y}}_1(0), \dot{\mathbf{y}}_2(0) \cdots \dot{\mathbf{y}}_N(0) ]^T $, where $\dot{\mathbf{y}}_i(0) = [\dot{y}_{i,1}(0), \cdots \dot{y}_{i,l}(0)]^T$ , and $ \dot{y}_{i,j}(0) =  -{ \frac{2l f_{ij}}{k_{ii}}} \omega_{ij}\sin(\phi_{ij})$.



 \end{restatable}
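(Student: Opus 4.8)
The plan is to lift the one-dimensional construction of \autoref{thm:one_dimensional_oscillator} to $N$ degrees of freedom by attaching to each original mass $m_i$ an independent cluster of $l$ fictitious masses whose free (decoupled) frequencies are exactly the Fourier frequencies $\omega_{i1},\dots,\omega_{il}$ of $f_i(t)=\sum_{j=1}^{l} f_{ij}\cos(\omega_{ij}t+\phi_{ij})$. Ordering the extended coordinate as $\mathbf{z}=[\mathbf{x},\mathbf{y}_1,\dots,\mathbf{y}_N]^T$ with $\mathbf{y}_i=[y_{i,1},\dots,y_{i,l}]^T$, I would take $\mathbf{M}'=\mathbf{M}\oplus\mathbf{I}_{Nl}$ (the fictitious mass $m_f$ being absorbed into the weak-coupling parameter $\gamma=1/m_f$, exactly as in the $1$-D case) and write the extended stiffness matrix as $\mathbf{K}'=\mathbf{K}'_0+\gamma\mathbf{K}'_1$. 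The matrix $\mathbf{K}'_0$ has $\mathbf{x}$-block equal to the original $\mathbf{K}$, a block-diagonal auxiliary part $\bigoplus_{i=1}^{N}\mathrm{diag}(\omega_{i1}^2,\dots,\omega_{il}^2)$, and one-directional coupling entries $k_{ii,j}$ feeding $y_{i,j}$ into the equation for $x_i$ only; $\mathbf{K}'_1$ carries the physical back-reaction $k_{ii,j}(x_i-y_{i,j})$ on the auxiliary equations. To keep the $\mathbf{x}$-block of $\mathbf{K}'_0$ exactly equal to $\mathbf{K}$, I would split each wall spring, replacing $k_{ii}$ by $k_{ii}/2$ and setting $k_{ii,j}=k_{ii}/(2l)$ so that $\frac{1}{2}k_{ii}+\sum_{j=1}^{l}k_{ii,j}=k_{ii}$; the inter-mass couplings are untouched, every spring constant remains strictly positive, and (under the mild nondegeneracy assumptions $k_{ii}>0$, $\omega_{ij}\neq 0$) both $\mathbf{K}$ and the auxiliary block are invertible, so the matrix $\mathbf{T}$ appearing in \autoref{thm:pertThm} is well defined.

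In the $\gamma\to 0$ limit each auxiliary cluster decouples, giving $\ddot y_{i,j}=-\omega_{ij}^2 y_{i,j}$ and hence $y_{i,j}(t)=A_{ij}\cos(\omega_{ij}t+\phi'_{ij})$ with amplitude and phase fixed by the initial data. Choosing $A_{ij}=f_{ij}/k_{ii,j}=2l f_{ij}/k_{ii}$ and $\phi'_{ij}=\phi_{ij}$ makes $\sum_{j=1}^{l} k_{ii,j}\,y_{i,j}(t)=f_i(t)$, and back-substituting $y_{i,j}(0)=A_{ij}\cos\phi_{ij}$, $\dot y_{i,j}(0)=-A_{ij}\omega_{ij}\sin\phi_{ij}$ reproduces exactly the initial conditions in the statement (using $\cos\phi_{ij}=1/\sqrt{1+\tan^2\phi_{ij}}$). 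Thus the $\mathbf{x}$-block of the solution $\mathbf{z}_0(t)$ of $\mathbf{M}'\ddot{\mathbf{z}}_0=-\mathbf{K}'_0\mathbf{z}_0$ obeys $\mathbf{M}\ddot{\mathbf{x}}_0=-\mathbf{K}\mathbf{x}_0+\mathbf{f}(t)$ with the prescribed initial displacement and velocity, i.e. $\mathbf{P}\mathbf{z}_0(t)=\mathbf{x}(t)$.

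It then remains to control the effect of restoring the back-reaction term $\gamma\mathbf{K}'_1$, which is exactly what \autoref{thm:pertThm} supplies: applied to the pair $(\mathbf{K}'_0,\mathbf{K}'_1)$ with mass matrix $\mathbf{M}'$ it yields $\|\mathbf{z}(t)-\mathbf{z}_0(t)\|\le\epsilon$ for all $t\in[0,t_s]$ as soon as $m_f=1/\gamma$ exceeds the threshold given there with $t=t_s$ (the bound is monotone in $t$, so the endpoint suffices). Since $\mathbf{P}=[\mathbf{I},\mathbf{0}]$ has unit norm, $\|\mathbf{P}\mathbf{z}(t)-\mathbf{x}(t)\|=\|\mathbf{P}(\mathbf{z}(t)-\mathbf{z}_0(t))\|\le\|\mathbf{z}(t)-\mathbf{z}_0(t)\|\le\epsilon$, which is the claim. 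Note that no non-resonance hypothesis enters here: the perturbation series in $\gamma$ is geometric and its ratio is made small purely by taking $m_f$ large.

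The main obstacle is not the perturbation step (black-boxed by \autoref{thm:pertThm}) but the bookkeeping needed to make the construction both legitimate and compatible with that theorem's hypotheses: one must verify that splitting the wall springs simultaneously preserves positive-semidefiniteness and the exact $\mathbf{x}$-dynamics; that the one-directional structure of $\mathbf{K}'_0$ — which is precisely what lets its $\mathbf{x}$-block literally be the forced oscillator — is reconciled with the genuinely symmetric physical oscillator the theorem asserts to construct (the same slight abuse already present in the $1$-D treatment, handled by working in the rescaled representation $\mathbf{M}'^{-1}\mathbf{K}'$); and — the most delicate point — that the bound on the unperturbed trajectory $\mathbf{z}_0(t)$ feeding into \autoref{thm:pertThm} still holds when the original forced system is near resonance and $\mathbf{x}_0(t)$ grows with $t$. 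In that case one bounds the auxiliary blocks trivially by $|y_{i,j}(t)|\le A_{ij}$ and the $\mathbf{x}$-block via \autoref{Lemma Upper_Bound_State_Norm_Inhomo}, which replaces the constant $\mathbf{\Xi}(0)$ by a quantity growing at most linearly in $t_s$, changing only the power of $t_s$ in the final requirement on $m_f$.
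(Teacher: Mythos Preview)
Your proposal is correct and follows essentially the same route as the paper: the same per-mass auxiliary cluster, the same wall-spring split $k_{ii}\mapsto k_{ii}/2$ with $k_{ii,j}=k_{ii}/(2l)$, the same decomposition $\mathbf{K}'=\mathbf{K}'_0+\gamma\hat{\mathbf{K}}$ with $\gamma=1/m_f$, the same appeal to \autoref{thm:pertThm}, and the same final step $\|\mathbf{P}(\mathbf{z}-\mathbf{z}_0)\|\le\|\mathbf{P}\|\,\|\mathbf{z}-\mathbf{z}_0\|\le\epsilon$. Your closing paragraph in fact flags a subtlety the paper glosses over: \autoref{thm:pertThm} as stated bounds $\|\mathbf{z}_0(t)\|$ via \autoref{Lemma: Upper_Bound_State_Simple_Oscillator}, which assumes a symmetric positive-semidefinite $\mathbf{K}_0$, whereas the block-upper-triangular $\mathbf{K}'_0$ here is not symmetric; your suggestion to bound the $\mathbf{x}$-block separately via \autoref{Lemma Upper_Bound_State_Norm_Inhomo} and the auxiliary blocks trivially is the right patch, and only inflates the $m_f$-threshold by a factor polynomial in $t_s$.
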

\begin{proof}
Let $\mathbf{G} \in \mathbb{R}^{N \times N}$ be the stiffness matrix of the original system governed by the differential equation: $\mathbf{M}\ddot{\mathbf{x}} = -\mathbf{K} \mathbf{x} + \mathbf{f}(t)$. The elements of $\mathbf{G}(i,j)$ are values of spring constants $k_{ij}$ of stiffness elements that connects $i$ th and $j$ th masses. \\

We construct a higher-dimensional system by introducing $N l$ auxiliary masses of mass $m_f$ and coupling them with masses in the original system. The auxillary masses are indexed from $N+1$ to $N(1 + l)$. The auxillary masses indexed from $N+1$ to $N+l$ are connected to mass $m_1$, with springs of stiffness constant $\frac{k_{11}}{2l}$. These auxillary masses are connected to the wall with springs of stiffness constant $m_f \omega_{1i}^2$. We continue this procedure, and connect mass $m_2$ with the next set of $l$ auxillary masses, and so on and so forth. In addition to coupling masses $m_j$ with auxillary masses, we set the stiffness constant of spring connecting the mass to the wall as $\frac{k_{jj}}{2}$. The stiffness matrix for the higher dimensional system can be written as:\\

\begin{align}
 \mathbf{G}' = \begin{bmatrix}
        {\mathbf{G}} & \mathbf{G}_{12} & \mathbf{G}_{13} & \cdots & \mathbf{G}_{1N}\\
        \mathbf{G}_{12}^T & \mathbf{G}_{22} & \mathbf{0} & \cdots & \mathbf{0}\\
        \mathbf{G}_{13}^T & \mathbf{0} & \mathbf{G}_{33} & \ddots & \mathbf{0} \\
        \vdots & \vdots & \ddots & \ddots & \vdots \\
        \mathbf{G}_{1N}^T & \mathbf{0} & \cdots & \cdots & \mathbf{G}_{NN}
 \end{bmatrix} 
\end{align}
The submatrices are:
\begin{align}
{\mathbf{G}} =   \begin{bmatrix}
                    \frac{k_{11}}{2} & k_{12} & k_{13}&k_{14} & \cdots & k_{1N} \\
                    k_{21} & \frac{k_{22}}{2} & k_{23}&k_{24} & \cdots & k_{2N}\\
                    k_{31} & k_{32} & \frac{k_{33}}{2}&k_{34}  & \cdots & k_{2N}\\
                    k_{41} &  k_{42} & k_{43} & \frac{k_{44}}{2}&\cdots & k_{4N}\\
                    \vdots & \vdots & \vdots & \vdots &\ddots & \vdots \\
                    k_{N1}  & k_{N2} & k_{N3}& k_{N4}& \cdots & \frac{k_{NN}}{2}\\                 
                  \end{bmatrix}
\end{align}

and 
\begin{align}
    \mathbf{G}_{12} = \begin{bmatrix}
              \frac{k_{11}}{2l} & \frac{k_{11}}{2l} & \cdots  \frac{k_{11}}{2l}\\
              {0} &  {0}  & \cdots {0}\\
              \vdots &\vdots &\vdots \\ 
              {0} &  {0}  & \cdots {0}
             \end{bmatrix}
\end{align}
and 
\begin{align}
    \mathbf{G}_{13} = \begin{bmatrix}              
            {0} &  {0}  & \cdots {0}\\
              \frac{k_{22}}{2l} & \frac{k_{22}}{2l} & \cdots  \frac{k_{22}}{2l}\\
              \vdots &\vdots &\vdots \\ 
              {0} &  {0}  & \cdots {0}
             \end{bmatrix}
\end{align}
and in general, 

\begin{align}
    \mathbf{G}_{1j} = \frac{k_{j-1,j-1}}{2l}\begin{bmatrix}              
              \ket{j-2} &  \ket{j-2}  & \cdots \ket{j-2}\\            
             \end{bmatrix}
\end{align}
and the diagonal block matrices, $\mathbf{G}_{ii} \in \mathbb{R}^{l \times l}$ are:
\begin{equation}
    \mathbf{G}_{ii} = \begin{bmatrix}
m_f \omega_{i1}^2 & 0 & 0 & \cdots & 0 \\
0 & m_f \omega_{i2}^2 & 0 & \cdots & 0 \\
0 & 0 & m_f \omega_{i3}^2 & \cdots & 0 \\
\vdots & \vdots & \vdots & \ddots & \vdots \\
0 & 0 & 0 & \cdots & m_f \omega_{il}^2
\end{bmatrix}
\end{equation}
The mass matrix of the higher-dimensional oscillator system is given by:
\begin{equation}
    \mathbf{M}' = \begin{bmatrix}
                    \mathbf{M}_{N \times N} & \mathbf{0}\\
                    \mathbf{0} & m_f\mathbf{I}_{Nl \times Nl}
                  \end{bmatrix}
\end{equation}

Let the displacement vector of higher dimensional system be denoted by $\mathbf{z} = [\mathbf{x}, \mathbf{y}_1 , \mathbf{y}_2 , ...\mathbf{y}_N]^T$, where $\mathbf{x} \in \mathbb{R}^N$ and $\mathbf{y}_i \in \mathbb{R}^l, \forall i\in [N]$. For the coupled oscillator system defined by stiffness matrix $\mathbf{G}'$, the governing set of differential equations is given by:
\begin{equation}
    \mathbf{M}' \ddot{\mathbf{z}} = -\mathbf{K}' \mathbf{z}
\end{equation}
Here, the matrix $\mathbf{K}'$ is constructed from the stiffness matrix $\mathbf{G}'$. This construction is explained in \autoref{Def: coupled_oscillator_system}. 

\begin{align}
 \mathbf{K}' = \begin{bmatrix}
        {\mathbf{K}} & \mathbf{K}_{12} & \mathbf{K}_{13} & \cdots & \mathbf{K}_{1N}\\
        \mathbf{K}_{12}^T & \mathbf{K}_{22} & \mathbf{0} & \cdots & \mathbf{0}\\
        \mathbf{K}_{13}^T & \mathbf{0} & \mathbf{K}_{33} & \ddots & \mathbf{0} \\
        \vdots & \vdots & \ddots & \ddots & \vdots \\
        \mathbf{K}_{1N}^T & \mathbf{0} & \cdots & \cdots & \mathbf{K}_{NN}
 \end{bmatrix} 
\end{align}
where $\mathbf{K}$ is of the form 
\begin{align}
{\mathbf{K}} =  - \begin{bmatrix}
                    - ({k_{11}} + \sum_{i=2} k_{1i}) & k_{12} & k_{13} & \cdots & k_{1N} \\
                    k_{21} & -({k_{22}} + \sum_{i=1} k_{2i}) & k_{23} & \cdots & k_{2N}\\
                    k_{31} & k_{32} & -({k_{22}} + \sum_{i=3} k_{3i}) & \cdots & k_{2N}\\
                    k_{41} &  k_{42} & k_{43} & \cdots & k_{4N}\\
                    \vdots & \vdots & \vdots & \ddots & \vdots \\
                    k_{N1}  & k_{N2} & \vdots & \cdots & -({k_{NN}} + \sum_{i=2} k_{Ni})\\                 
                  \end{bmatrix}
\end{align}
where we take the $\mathbf{K}_{ij}$ sub-matrices to be be defined as  
\begin{align}
    \mathbf{K}_{12} = - \mathbf{G}_{12} = -\begin{bmatrix}
              \frac{k_{11}}{2l} & \frac{k_{11}}{2l} & \cdots  \frac{k_{11}}{2l}\\
              \mathbf{0} &  \mathbf{0}  & \cdots \mathbf{0}\\
              \vdots &\vdots &\vdots \\ 
              \mathbf{0} &  \mathbf{0}  & \cdots \mathbf{0}
             \end{bmatrix}
\end{align}
and 
\begin{align}
    \mathbf{K}_{13} = - \mathbf{G}_{13} = -\begin{bmatrix}              
              \mathbf{0} &  \mathbf{0}  & \cdots \mathbf{0}\\
              \frac{k_{22}}{2l} & \frac{k_{22}}{2l} & \cdots  \frac{k_{22}}{2l}\\
              \vdots &\vdots &\vdots \\ 
              \mathbf{0} &  \mathbf{0}  & \cdots \mathbf{0}
             \end{bmatrix}
\end{align}
and in general, 
\begin{align}
    \mathbf{K}_{1j} = - \mathbf{G}_{1j} = -\frac{k_{j-1,j-1}}{2l}\begin{bmatrix}              
              \ket{j-2} &  \ket{j-2}  & \cdots \ket{j-2}\\            
             \end{bmatrix}
\end{align}

$\mathbf{K}_{22}$, $\mathbf{K}_{33}$ and in general, $\mathbf{K}_{ii}$ are diagonal matrices, of the form:
\begin{align}
   \mathbf{K}_{ii} = 
-\begin{bmatrix}
-\frac{k_{i-1,i-1}}{2l} -m_f\omega_{i-1,1}^2  & 0 & 0 & \cdots & 0  \\
0 & -\frac{k_{i-1,i-1}}{2l}-m_f\omega_{i-1,2}^2 & 0 & \cdots & 0 \\
0 & 0 & -\frac{k_{i-1,i-1}}{2l}-m_f\omega_{i-1,3}^2 & \cdots & 0 \\
\vdots & \vdots & \vdots & \ddots & \vdots \\
0 & 0 & 0 & \cdots & -\frac{k_{i-1,i-1}}{2l} -m_f\omega_{i-1,N}^2
\end{bmatrix}
\nonumber
\end{align}

Now we regroup terms in the governing differential equation: $\mathbf{M}' \ddot{\mathbf{z}} = \mathbf{K}'\mathbf{z}$ and write it in the form:
\begin{equation}
    \label{eq:Mzddot}
    \hat{\mathbf{M}}\ddot{\mathbf{z}} = -(\mathbf{K}'_0 + \gamma \hat{\mathbf{K}})\mathbf{z}
\end{equation}
Here, $\gamma = \frac{1}{m_f}$. The matrix $\hat{\mathbf{M}}$ is given by:

\begin{equation}
    \hat{\mathbf{M}} = \begin{bmatrix}
                        \mathbf{M}_{N \times N} & \mathbf{0}\\
                        \mathbf{0} & \mathbf{I}_{Nl \times Nl}
                        \end{bmatrix}
\end{equation}

and, $\mathbf{K}'_0$ is given by:
\begin{align}
 \mathbf{K}'_0 = \begin{bmatrix}
       \mathbf{K} & \mathbf{K}_{12} & \mathbf{K}_{13} & \cdots & \mathbf{K}_{1N}\\
        \mathbf{0} & \mathbf{K}_{22}' & \mathbf{0} & \cdots & \mathbf{0}\\
        \mathbf{0} & \mathbf{0} & \mathbf{K}_{33}' & \ddots & \mathbf{0} \\
        \vdots & \vdots & \ddots & \ddots & \vdots \\
        \mathbf{0} & \mathbf{0} & \cdots & \cdots & \mathbf{K}_{NN}'
 \end{bmatrix} 
\end{align}
The sub matrices $\mathbf{K}_{ii}'$ contains only $\omega_{i,j}^2$ terms. 
\begin{align}
   \mathbf{K}_{ii}' = 
-\begin{bmatrix}
-\omega_{i-1,1}^2  & 0 & 0 & \cdots & 0  \\
0 & -\omega_{i-1,2}^2 & 0 & \cdots & 0 \\
0 & 0 & -\omega_{i-1,3}^2 & \cdots & 0 \\
\vdots & \vdots & \vdots & \ddots & \vdots \\
0 & 0 & 0 & \cdots & -\omega_{i-1,N}^2
\end{bmatrix}
\nonumber
\end{align}
and 
\begin{align}
 \hat{\mathbf{K}} = \begin{bmatrix}
        \mathbf{0} & \mathbf{0} & \mathbf{0} & \cdots & \mathbf{0}\\
        \mathbf{K}_{12}^T & \hat{\mathbf{K}}_{22} & \mathbf{0} & \cdots & \mathbf{0}\\
        \mathbf{K}_{13}^T & \mathbf{0} & \hat{\mathbf{K}}_{33} & \ddots & \mathbf{0} \\
        \vdots & \vdots & \ddots & \ddots & \vdots \\
        \mathbf{K}_{1N}^T & \mathbf{0} & \cdots & \cdots & \hat{\mathbf{K}}_{NN}
 \end{bmatrix} 
\end{align}

The sub-matrices $\hat{\mathbf{K}}_{ii}$ are given by:
\begin{align}
   \hat{\mathbf{K}}_{ii} = 
-\begin{bmatrix}
-\frac{k_{i-1,i-1}}{2l}  & 0 & 0 & \cdots & 0  \\
0 & -\frac{k_{i-1,i-1}}{2l}& 0 & \cdots & 0 \\
0 & 0 & -\frac{k_{i-1,i-1}}{2l} & \cdots & 0 \\
\vdots & \vdots & \vdots & \ddots & \vdots \\
0 & 0 & 0 & \cdots & -\frac{k_{i-1,i-1}}{2l}
\end{bmatrix}
\nonumber
\end{align}

Applying \autoref{thm:pertThm}, for $
    \frac{1}{\gamma} = m_f \geq \left( t_s \|\mathbf{K}'\| \| \mathbf{T} \| \|\sqrt{\mathbf{M}}^{-1}\|\right)  \left(1 + \frac{\mathbf{\Xi}(0)}{\epsilon}\right)
$, the displacement vector of the higher dimensional oscillator system $\mathbf{z}(t)$ is $\epsilon$ close to $\mathbf{z}_0(t)$, $\forall t \in [0,t_s]$. More formally, $\| \mathbf{z}_0(t) - \mathbf{z}(t)\| \leq \epsilon$, and $\mathbf{z}_0(t) = [\mathbf{x}, \mathbf{y}_1, \mathbf{y}_2 \cdots \mathbf{y}_N]^T$ obeys the following differential equation:
\begin{equation}
    \hat{\mathbf{M}} \ddot{\mathbf{z}}_0 = -\mathbf{K}_0' \mathbf{z}_0
\end{equation}
Note that $\mathbf{K}_0'$ is an upper triangular block matrix, and hence the governing differential equation can be seperated into two components. The first component, is of the form:
\begin{align}
\label{gov-eq-Y}
    \mathbf{M}\ddot{\mathbf{x}} = \mathbf{K}\mathbf{x} + \mathbf{K}_{12} \mathbf{y}_1 + \mathbf{K}_{13} \mathbf{y}_2 +\cdots+\mathbf{K}_{1N}\mathbf{y}_N
\end{align}
The second component determines the trajectory of $\mathbf{y}_i(t)$, $\forall i \in [N]$. Specifically, $\mathbf{y}_i(t)$ obeys the following differential equation. 
\begin{align}
    \ddot{\mathbf{y}}_i = \mathbf{K}_{i+1,i+1}'\mathbf{y}_i
\end{align}
Note that, since $\mathbf{K}_{i+1,i+1}'$ is a diagonal matrix, the solution $\mathbf{y}_i(t)$ is given by: 
\begin{equation}
\mathbf{y}_i(t)  = \begin{bmatrix}
                    A_{i1}\cos(\omega_{i1}t + \phi_{i1})\\
                    A_{i2}\cos(\omega_{i2}t + \phi_{i2})\\
                    A_{i3}\cos(\omega_{i3}t + \phi_{i3})\\
                    A_{i4}\cos(\omega_{i4}t + \phi_{i4})\\
                    \vdots\\
                    A_{il}\cos(\omega_{i4}t + \phi_{i1})
\end{bmatrix}
\end{equation}

Now, if we consider the terms in Eq.~\eqref{gov-eq-Y}, the action of the matrix $\mathbf{K}_{1i}$ on $\mathbf{y}_{i-1}$ creates a vector with each term containing sum of time dependent functions $A_{ij}\cos(\omega_{ij}t + \phi_{ij})$. They can be written explicitly as:
\begin{equation}
    \mathbf{K}_{12} \mathbf{y}_1 = \begin{bmatrix}
                    \frac{k_{11}}{2l} \sum_j (A_{1j}\cos(\omega_{1j}t + \phi_{1j}))\\
                    0\\
                    0\\
                    0\\
                    \vdots\\
                    0
\end{bmatrix}
\end{equation}
and 
\begin{equation}
    \mathbf{K}_{13} \mathbf{y}_2 = \begin{bmatrix}
                    0\\
                    \frac{k_{11}}{2l} \sum_j (A_{2j}\cos(\omega_{2j}t + \phi_{2j}))\\
                    0\\
                    0\\
                    \vdots\\
                    0
\end{bmatrix}
\end{equation}
Summing up these terms, we get: 
\begin{equation}
    \mathbf{K}_{12}\mathbf{y}_1 + \mathbf{K}_{13}\mathbf{y}_2 + .. \mathbf{K}_{1N-1}\mathbf{y}_N
     = \begin{bmatrix}
                    \frac{k_{11}}{2l} \sum_j (A_{1j}\cos(\omega_{1j}t + \phi_{1j}))\\
                    \frac{k_{22}}{2l} \sum_j (A_{2j}\cos(\omega_{2j}t + \phi_{2j}))\\
                    \frac{k_{33}}{2l}\sum_j (A_{3j}\cos(\omega_{3j}t + \phi_{3j}))\\
                    \frac{k_{44}}{2l} \sum_j (A_{4j}\cos(\omega_{4j}t + \phi_{4j}))\\
                    \vdots\\
                    \frac{k_{NN}}{2l} \sum_j (A_{Nj}\cos(\omega_{Nj}t + \phi_{Nj}))
\end{bmatrix}
\end{equation}

Here, the amplitudes $A_{ij}$ and $\phi_{ij}$ can be controlled by setting the initial conditions. Specifically, the positions and velocities of the auxiliary masses and the desired amplitudes and phase factors are related by:

\begin{equation}
y_{ij}(0) = \frac{A_{ij}}{\sqrt{1+\tan^2 (\phi_{ij}) }}     
\end{equation}
and the initial velocities are given by:
\begin{equation}
\dot{y}_{ij}(0) = -{A_{ij}} \omega_{ij} \sin(\phi_{ij}) 
\end{equation}
Here, $y_{ij}(0)$ the initial position of $j^{\text{th}}$ auxiliary mass connected to $i^{\text{th}}$ mass $m_i$. If we set the following initial conditions for the auxiliary masses,
\begin{equation}
y_{ij}(0) = \frac{2 l f_{ij}}{k_{ii}\sqrt{1+\tan^2 (\phi_{ij}) }}     
\end{equation},
\begin{equation}
\dot{y}_{ij}(0) = -{ \frac{2l f_{ij}}{k_{ii}}} \omega_{ij} \sin(\phi_{ij}) 
\end{equation}
Eq.~\eqref{gov-eq-Y} reduces to:
\begin{align}
    \mathbf{M}\ddot{\mathbf{x}} = -\mathbf{K}\mathbf{x} + \mathbf{f}(t)
\end{align}
Hence,
\begin{equation}
    \label{eq:projector_higher_dim_sys_action}
    \mathbf{P} \mathbf{z}_0(t) = \mathbf{x}(t)
\end{equation}
, $\forall t \in [0, t_s]$. Applying norm inequalities, and using the fact that $\| \mathbf{P}\| = 1$, we can upper bound  $ \| \mathbf{P} (\mathbf{z} - \mathbf{z}_0)  \| $ by $\epsilon$. Specifically, 
\begin{align}
\label{eq:projector_higher_dim_sys_norm_ineq}
    \| \mathbf{P} (\mathbf{z} - \mathbf{z}_0)  \| &\leq  \| \mathbf{P} \| \|\mathbf{z}(t) - \mathbf{z}_0(t) \|   \\ \nonumber
                                                  &\leq  \|\mathbf{P}\| \epsilon \\ \nonumber
                                                  &\leq \epsilon 
\end{align}
Substituting Eq.~\eqref{eq:projector_higher_dim_sys_action}, in Eq.~\eqref{eq:projector_higher_dim_sys_norm_ineq}, we get $ \| \mathbf{P}\mathbf{z}(t) - \mathbf{x}(t)  \| \leq \epsilon$.
\end{proof}

\subsection{Reduction to Quantum Evolution}
\label{sec: redn_to_quantum_evolution}
Applying \autoref{thm:pertThm N Dimensions}, given an inhomogeneous dynamical system of the form $\mathbf{M}\ddot{\mathbf{x}} = -\mathbf{K}\mathbf{x} + \mathbf{f}(t)$, we can construct a higher dimensional oscillator system with dynamics $\mathbf{M}'\ddot{\mathbf{z}} = -\mathbf{K}'\mathbf{z}$ such that a subspace of that system evolves according to $\mathbf{M}\ddot{\mathbf{x}} = -\mathbf{K}\mathbf{x} + \mathbf{f}(t)$, $\forall t \in [0,t_s]$. The construction of $\mathbf{M}'$ and $\mathbf{K}'$ is described in \autoref{thm:pertThm N Dimensions}. Now, we are in a position to reduce the problem to Schr\"{o}dinger Evolution using techniques inspired from \cite{babbush2023exponential,costa2019quantum}. Specifically, let $\mathbf{A} = \sqrt{\mathbf{M}}^{-1}\mathbf{K}'\sqrt{\mathbf{M}}^{-1} $ and let $\mathbf{B}$ be any $N\times M$ matrix such that $\mathbf{A} = \mathbf{B}\mathbf{B}^{\dagger}$ and let $\mathbf{w} = \sqrt{\mathbf{M}'} \mathbf{z}$. Let $\mathbf{H}$ be the block Hamiltonian:
\begin{equation}
\label{eq: ham}
    \mathbf{H} = -\begin{bmatrix}
        \mathbf{0} & \mathbf{B}\\
        \mathbf{B}^{\dagger} & \mathbf{0}\\
    \end{bmatrix}
\end{equation}
Then, the Schr\"{o}dinger Equation induced by $\mathbf{H}$ is:
\begin{equation}
    \label{eq-Schr\"{o}dinger}
    \dot{\ket{\psi}} = -i\mathbf{H} \ket{\psi}
\end{equation}
we can show by direct substitution that:
\begin{equation}
  \ket{\psi} \propto  \begin{bmatrix}
                  \dot{\mathbf{w}}(t)\\
                  i \mathbf{B}^{\dagger} \mathbf{w}(t)
                 \end{bmatrix}
\end{equation}
is a solution to Eq \eqref{eq-Schr\"{o}dinger}. \\
The matrix $\mathbf{B}$ is defined as:
\begin{equation}
\label{eq: B_definition}
\mathbf{B} \, |j, k\rangle = 
\begin{cases} 
\sqrt{ \frac{\mathbf{G}'(j,j)}{m_j}} \, |j\rangle, & \text{if } j = k \\
\sqrt{ \frac{\mathbf{G}'(j,k)}{m_j}} (|j\rangle - |k\rangle), & \text{if } j < k
\end{cases}
\end{equation}

Note that $f_j(t) = \sum_k f_{jk}cos(\omega_{jk}t)$, and $\omega_{jk}$ is the frequency associated with fourier expansion of time dependent force acting on masses. In our indexing scheme, $j \in [N]$ correspond to the actual masses $m_j$, and any $j>N$ correspond to auxillary masses added to the system to introduce time dependence. Specifically, for each mass $m_j$, we add $l = 2^r -1$ additional auxillary masses of mass $m_f$. These masses are indexed from $N+1$ to $Nl$.  Hence, $\forall j > N, \sqrt{\frac{\mathbf{G}'(j,j)}{m_j}} = \omega_{  \lfloor{(j-N -1)/l}\rfloor,(j-N) \mod l}$. We assume a query based access to the $\omega_{jk},k_{jk}$ and $m_j$ through oracles as per~\autoref{Def:Oracles-forced-oscillator-system}. We also assume access to Oracle $\mathcal{S}$ which allows us to map:\\

\begin{equation}
    \ket{j,l} \rightarrow \ket{j,g(j,l)}
\end{equation}
Here, $g(j,l)$ is the column index of the $l$ th nonzero entry in the $j$ th row of the stiffness matrix $\mathbf{G}'$. We also assume access to an oracle $O_{G'}$ which marks nonzero elements of the matrix $\mathbf{G}'$.
\begin{equation}
O_{\mathbf{G}'} \ket{j}\ket{k}\ket{0}  = 
\begin{cases} 
\ket{j}\ket{k}\ket{1}, & \text{if } \mathbf{G}'(j,k) \neq 0\\
\ket{j}\ket{k}\ket{0}, & \text{if } \mathbf{G}'(j,k) = 0\\
\end{cases}
\end{equation}


\begin{table}[h!]
\centering
\renewcommand{\arraystretch}{1.4} 
\[
\begin{array}{|c|c|c|c|c|c|c|c|}
\hline
{j} & {k} & {} & i & g_1(i,j,k) & g_2(i,j,k) & g_3(i,j,k)\\ \hline
{j < N} & {k < N} & {j = k} & 0 & {\frac{\overline{G(j,k)}}{2}} & \bar{m}_j & 1 \\ \hline
j < N & k < N & j \neq k & 1 & \overline{G(j,k)} & \bar{m}_j & 1 \\ \hline
{j < N} & {k < N} & {j = k} & 2 & \overline{\frac{G(j,j)}{2l}} & \bar{m}_j & 1 \\ \hline
j \geq N & k \geq N & j \neq k & 3 & \overline{\frac{G(j,k)}{2l}} & \bar{m}_f & 1 \\ \hline
j \geq N & k \geq N & j = k & 4 & \bar{\omega}_{jk} & 1 & 0 \\ \hline
j \geq N & k \geq N & j \neq k & 5 & 0 & 1 & 0 \\ \hline
\end{array}
\]

\caption{Output States of the Gadget Oracle for different values of $j$, $k$, and $i$. The table shows how Register 1, and Register 2, and Register 3 are computed for each scenario.}
\label{tab:logical_operations}
\end{table}
From these Oracles, we construct a Gadget Oracle that gives direct access to elements of the matrix $\mathbf{G}'$. 
Now, we explain the construction of a unitary $\mathbf{\mathcal{U}}_B$ that is a block encoding of $\mathbf{B}$ and then explain the simulation method. 
\begin{lemma}
    Let $p = n + r $, $\epsilon' > 0,s = \log(1/\epsilon')$, $m_{min} <  m_j < m_f < m_{max}$, $k_{max} \geq k_{jk}$, $\forall j,k \in [N]$ and $\alpha = \max_{jk} \left({\frac{k_{jk}}{m_j}} , \omega_{jk}^2\right)$. Consider the padded, $d$ sparse $p \times p^2$ matrix  $\mathbf{B}$ from our construction discussed above. Then, there exists a unitary $\mathbf{\mathcal{U}_B}$ acting on $2p + s + 2$ qubits that provides a block encoding $\mathbf{B}$ with an additive error $\epsilon'$ as follows:

  \[    \left\| (\bra{0}^{\otimes p} \otimes \bra{0}^{\otimes s+2} \otimes \mathbf{I}_p)\mathbf{\mathcal{U}_B} (\ket{0}^{\otimes s+2}\mathbf{I}_p \otimes \mathbf{I}_p) - \frac{1}{\sqrt{2 \alpha d}} \mathbf{B} \right\| \leq \epsilon'   \] 
    
\end{lemma}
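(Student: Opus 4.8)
The plan is to build $\mathcal{U}_B$ by the now-standard recipe for block encoding a $d$-sparse matrix from query access to its entries, specialized to the fact that $\mathbf{B}$ is the weighted incidence matrix defined above and is \emph{not} Hermitian, so that row and column access are treated by separate oracles.

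First I would record two elementary bounds. Every nonzero entry of $\mathbf{B}$ has magnitude at most $\sqrt{\alpha}$: the entries are $\pm\sqrt{\mathbf{G}'(j,k)/m_j}$, and by the construction of $\mathbf{G}'$ the ratio $\mathbf{G}'(j,k)/m_j$ equals either $k_{jk}/m_j$ (for actual masses) or some $\omega_{jk}^2$ (for auxiliary masses), each bounded by $\alpha=\max_{jk}\{k_{jk}/m_j,\omega_{jk}^2\}$. Moreover each column of $\mathbf{B}$ is supported on at most two rows while each row is supported on at most $d$ columns. Hence the appropriate subnormalization for a sparse block encoding of $\mathbf{B}$ is $\sqrt{2}\cdot\sqrt{\alpha}\cdot\sqrt{d}=\sqrt{2\alpha d}$, where the $\sqrt 2$ comes from $\|\,\ket{j}-\ket{k}\,\|=\sqrt 2$.

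Then the construction of $\mathcal{U}_B$ would proceed in the familiar stages. Starting from $\ket{0}^{\otimes s+2}$ on the ancilla and the column index $\ket{j,k}$ on the input register, I would (i) create a uniform superposition over the $\le d$ candidate entries of that column and use the sparsity oracle $\mathcal{S}$ together with $O_{\mathbf{G}'}$ to flag the genuine nonzeros and recover the associated row labels; (ii) invoke the Gadget oracle assembled from $O_{G'}$, $O_M$ and the frequency oracles of \autoref{Def:Oracles-forced-oscillator-system}, whose six-way case analysis is summarized in \autoref{tab:logical_operations}, to load binary descriptions of $\mathbf{G}'(j,k)$ and of the relevant mass ($m_j$ or $m_f$); (iii) perform reversible arithmetic to form $w_{jk}=\mathbf{G}'(j,k)/m_j$ and then a controlled rotation that imprints the amplitude $\sqrt{w_{jk}/\alpha}$ on a flag qubit --- this is the only inexact step, and carrying the square root and arcsine to $s=\log(1/\epsilon')$ bits makes the rotation angle, hence the block-encoded operator, accurate to $\epsilon'$ in operator norm; (iv) use a Hadamard-type gate on the second flag qubit, conditioned on the recorded case label, to produce the incidence structure, mapping each configuration to $\tfrac{1}{\sqrt 2}(\ket{j}-\ket{k})$ for off-diagonal columns ($j<k$) and to $\ket{j}$ for diagonal columns, while renormalizing the rotation in (iii) so that the net prefactor is uniformly $1/\sqrt{2\alpha d}$; and (v) uncompute every scratch register by running the oracles and the arithmetic backwards, leaving only the two flag qubits, whose $\ket{0}$ component carries $\mathbf{B}/\sqrt{2\alpha d}$ up to the $\epsilon'$ error from (iii). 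Designating the row register as the output system and the remaining qubits as the block-encoding ancilla gives the stated identity; the qubit count is $2p$ for the input and output registers, $s$ for the arithmetic scratch, and $2$ flag qubits, giving $2p+s+2$ in total.

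The hard part is bookkeeping rather than any isolated deep step. I expect the main obstacle to be producing the incidence structure coherently and consistently with the Gadget oracle's case analysis --- getting the relative minus sign and the $\sqrt 2$-versus-$1$ normalization right for off-diagonal versus diagonal columns, ensuring that the division selects $m_j$ in the cases $j<N$ and $m_f$ in the cases $j\ge N$ exactly as in \autoref{tab:logical_operations}, and checking that all scratch registers are uncomputed exactly so that nothing stays entangled with the output. Once that is in place the error analysis is immediate: the only approximation is the finite-precision evaluation of $\sqrt{w_{jk}/\alpha}$ and its arcsine, which contributes an operator-norm error of $O(2^{-s})=O(\epsilon')$, and one rescales $\epsilon'$ by a constant if a strict inequality is wanted.
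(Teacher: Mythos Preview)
Your proposal is correct and follows essentially the same route as the paper: build the block encoding via the standard sparse-access recipe, using the Gadget oracle (the six-case table) to load $\mathbf{G}'(j,k)$ and the relevant mass, prepare the amplitude $\sqrt{\mathbf{G}'(j,k)/(m_j\alpha)}$, then use a controlled swap together with an $HZ$-type gate on a flag qubit to realize the $\tfrac{1}{\sqrt 2}(\ket{j}-\ket{k})$ incidence structure, and uncompute. The paper presents it as constructing $\mathcal{U}_B^{\dagger}$ (starting from the row label $\ket{j}$ and producing the column labels $\ket{j,k}$) and then taking the adjoint, which is the mirror image of your column-to-row description.

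The one technical difference is in step~(iii): you imprint the amplitude by explicit reversible computation of $\sqrt{w_{jk}/\alpha}$ and its arcsine to $s$ bits, followed by a controlled rotation; the paper instead uses the inequality-testing trick (prepare a uniform superposition $2^{-s/2}\sum_x\ket{x}$ and coherently test $k_{\max}\bar k_{jk}/2^{r_k}\le (x^2/2^{2s})\alpha\, m_{\max}\bar m_j/2^{r_m}$), which avoids evaluating any square root or transcendental function and needs only squaring and integer multiplication. Both achieve the stated $\epsilon'$ operator-norm error with $s=\log(1/\epsilon')$ ancillas, so this is a matter of implementation convenience rather than a difference in the argument.
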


\begin{proof}
    
First, we show how to block encode $\mathbf{B}^{\dagger}$, and take the conjugate transpose to implement a unitary circuit encoding $\mathbf{B}$. From Eq~\eqref{eq: B_definition}, we have:

\begin{equation}
    \mathbf{B}^{\dagger}\ket{j} = \left( \sum_{k \geq j} \sqrt{\frac{\mathbf{G}'(j,k)}{m_j}} \ket{j}\ket{k} \right) - \left( \sum_{k < j} \sqrt{\frac{\mathbf{G}'(j,k)}{m_j}} \ket{k}\ket{j} \right)
\end{equation}
The steps to block encode $ \mathbf{B}^{\dagger}$ are:
\begin{enumerate}
\item Apply a unitary that performs the map $\ket{0}^{\otimes n + r} \rightarrow \frac{1}{\sqrt{d}} \sum_{l=1}^{d} \ket{l}$. 
\item Apply the oracle $\mathcal{S}$ for the positions of non-zero entries of $\mathbf{G}'$ to map $\ket{\ell}$ to $\ket{g(j, \ell)}$ 
\item Apply the Gadget Oracle to compute $\overline{m}_j$ and $\overline{G}_{jk}$, i.e., perform the map\\
$\ket{j, 0} \rightarrow \ket{j, \overline{m}_j}$, $\forall j < N$ \\
$\ket{j, 0} \rightarrow \ket{j, \overline{m}_f}$, $\forall j \geq N$ \\
and \\
$\ket{j, k, 0} \rightarrow \ket{j, k, \overline{k}_{jk}}$, $\forall k < N$\\
$\ket{j, k, 0} \rightarrow \ket{j, k, \overline{\omega}_{jk}}$, $\forall j=k \geq N$ \\
\item Prepare the equal superposition state of $s$ ancilla qubits, i.e., $\frac{1}{2^{s/2}} \sum_{x=1}^{2^s} \ket{x}$.
\item When $\text{Register 3} = 1 $, Using coherent arithmetic, compute the square of $x$ and multiplications needed for the inequality test
\begin{equation}
\frac{k_{\text{max}} \overline{k}_{jk}}{2^{r_k} } \leq \frac{x^2}{2^{2s}}\alpha
\frac{{m}_{\text{max}} \overline{m_j}}{2^{r_m} }, \tag{A7}
\end{equation}
where $r_k$ and $r_m$ are the numbers of bits of $\overline{k}_{jk}$ and $\overline{m_j}$, respectively. This gives the factor in the amplitude approximately $\sqrt{\frac{k_{jk}}{m_j \alpha}}$.
\item When $\text{Register 3} = 0 $, use bit representation of the value in the Register 1, to prepare amplitude $\frac{\omega_{jk}}{\sqrt{\alpha}}$ or $0$.

\item Reverse the computation of the arithmetic for Eq.~(A7) and the oracles for $\overline{m}_j$, $\bar{\omega}_{jk}$ and $\overline{k}_{jk}$. This transforms the working registers back to an all-zero state.

\item Perform the map $\ket{j} \ket{k} \ket{0} \rightarrow \ket{k} \ket{j} \ket{1}$ if $k < j$ or leave the state $\ket{j} \ket{k}\ket{0}$ unchanged otherwise. 

\item Apply $HZ$ on the ancilla qubit of the previous step to implement $\ket{0} \rightarrow \frac{1}{\sqrt{2}}(\ket{0} + \ket{1})$ and $\ket{1} \rightarrow \frac{1}{\sqrt{2}}(\ket{0} -\ket{1}) $
\end{enumerate}

The sequence of steps define $\mathcal{U_B}^{\dagger}$

\end{proof}

Now, we consider block encoding of the hamiltonian $\mathbf{H}$ which is defined in Eq.~\eqref{eq: ham}. The Hamiltonian $\mathbf{H}$ can be block encoded by applying Lemma 9 from \cite{babbush2023exponential}. In our case, the block encoding constant is different, and applying Lemma 9, using $O(1)$ queries to controlled versions of $\mathbf{\mathcal{U}}_B$ and $\mathcal{O}(p)$ two qubit gates, we can construct a unitary $\mathbf{\mathcal{U}_{H}}$ such that: 

\begin{equation}
\label{eq: block_encode_H}
  \left\| (\bra{0}^{\otimes s+3} \otimes \mathbf{I}) \mathbf{\mathcal{U}_{H}} (\ket{0}^{\otimes s+3} \otimes \mathbf{I}) - \frac{1}{\sqrt{2\alpha d}} \mathbf{H} \right\| \leq \epsilon'
\end{equation}

\QuantumAlgorithmThm*

\begin{proof}
Applying \autoref{thm:pertThm N Dimensions}, for the forced oscillator system with dynamics $\mathbf{M}\ddot{\mathbf{x}} = -\mathbf{K}\mathbf{x} + \mathbf{f}(t)$, first we construct a higher dimensional coupled classical oscillator system with dynamics $\mathbf{M}' \dot{\mathbf{z}} = -\mathbf{K}' \mathbf{z}$ such that $\| \mathbf{P}\mathbf{z}(t) - \mathbf{x}(t) \| \leq \epsilon, \forall t \in [0,t_s]$. Here, $\mathbf{M}' \in \mathbb{R}^{N(l+1)\times N(l+1)} $ is the mass matrix of the higher dimensional oscillator system and $\mathbf{K}'\in \mathbb{R}^{N(l+1)\times N(l+1)} $ is the force matrix. The initial displacement vector of the higher dimensional dynamical system as per \autoref{thm:pertThm N Dimensions} is given by:
\begin{equation}
    \mathbf{z}(0) = [\mathbf{x}(0),\mathbf{y}(0)]^T
\end{equation}
where
\begin{equation}
    \mathbf{y}(0) = [\mathbf{y}_1(0), \mathbf{y}_2(0),\mathbf{y}_3(0)\cdots
    \mathbf{y}_N(0) ]^T
\end{equation}
and
\begin{equation}
    \mathbf{y}_i(0) = [y_{i,1}(0), y_{i,2}(0),y_{i,3}(0)\cdots
    y_{i,l}(0) ]^T
\end{equation}
\begin{equation}
    y_{i,j}(0) = \frac{2l f_{ij} \cos(\phi_{ij})}{k_{ii}}
\end{equation}
The initial velocity vector is:
\begin{equation}
    \dot{\mathbf{y}}(0) = [\dot{\mathbf{y}}_1(0), \dot{\mathbf{y}}_2(0),\dot{\mathbf{y}}_3(0)\cdots
    \dot{\mathbf{y}}_N(0) ]^T
\end{equation}
where
\begin{equation}
    \dot{\mathbf{y}}_i(0) = [\dot{y}_{i,1}(0), \dot{y}_{i,2}(0),\dot{y}_{i,3}(0)\cdots
    \dot{y}_{i,l}(0) ]^T
\end{equation}
\begin{equation}
    \dot{y}_{i,j}(0) = \frac{-2l \omega_{ij} f_{ij} \sin(\phi_{ij})}{k_{ii}}
\end{equation}

From \autoref{sec: redn_to_quantum_evolution}, we know that the dynamics of the higher dimensional coupled oscillator system can be reduced to Schr\"{o}dinger Evolution of a quantum system, wherein the amplitudes of the quantum state $\ket{\bm{\psi}(t)} = [\dot{\mathbf{w}}, -i\mathbf{B}^{\dagger} \mathbf{w}]^T$, $\mathbf{w} = \sqrt{\mathbf{M}'}\mathbf{z}$ encodes the momenta and displacement vector of the higher dimensional oscillator system. To simulate time evolution of the quantum system, first, we prepare the following initial state $  \ket{\psi(0)}$:
\begin{equation}
    \ket{\bm{\psi}(0)}  =  \frac{1}{\sqrt{2E}} \begin{pmatrix}
                                                                         \mathbf{\sqrt{M}} \dot{\mathbf{x}}(0)\\
                                                                         \sqrt{m_f}\mathbf{I}_{Nl \times Nl} \dot{\mathbf{y}}(0) \\ 
                                                                          i\bm{\mu}(0) \\
                                                                      \end{pmatrix}
\end{equation}

Here, the vector $\bm{\mu}(0) = -\mathbf{B}^{\dagger}\mathbf{w(0)}$ contains terms proportional to displacements of the masses in the coupled oscillator system.
The constant $E$ is equal to the total initial energy of the oscillator system, which is given by:
\begin{align}
    E &= \frac{1}{2} \left( \dot{\mathbf{x}}^T(0) \mathbf{M}\dot{\mathbf{x}}(0) + m_f \dot{\mathbf{y}}^T(0)\dot{\mathbf{y}}(0) + \bm{\mu}^T(0) \bm{\mu}(0)\right) \\ \nonumber
    &= \frac{1}{2} ( \sum_i m_i \dot{x}_i^2(0)  + m_f \sum_i \dot{\mathbf{y}}_i^T(0)\dot{\mathbf{y}}_i(0)  + \sum_i k_{ii} x_i^2(0) + \\
    &+ \sum_{i < j} k_{ij}(x_i(0) -x_j(0))^2  + \sum_{ij}m_f\omega_{ij}^2 y_{i,j}^2(0) + \sum_{ij}\frac{k_{ii}}{2l}(x_i(0) - y_{i,j}(0) )^2)    
\end{align}
More compactly, the initial total energy can be written as:
\begin{equation}
    E = E_{sys}  + E_{aux} + E_{int},
\end{equation}
where
\begin{align}
    E_{sys} &= \frac{1}{2}( \sum_i m_i \dot{x}_i^2(0) + \sum_i k_{ii} x_i^2(0) +  \sum_{i < j} k_{ij}(x_i(0) -x_j(0))^2   ) \\ \nonumber 
    E_{aux} &= \frac{1}{2} (  m_f \sum_i \dot{\mathbf{y}}_i^T(0)\dot{\mathbf{y}}_i(0) + \sum_{ij}m_f\omega_{ij}^2 y_{i,j}^2(0) ) \\ \nonumber
    E_{int} &= \sum_{ij}\frac{k_{ii}}{2l}(x_i(0) - y_{i,j}(0) )^2
\end{align}
Here, $E_{sys}$ is the total initial energy of the oscillator system which we wish to simulate, $E_{aux}$ is the total initial energy of the auxillary system which exerts desired time dependent forces, and $E_{int}$ is the potential energy due to interactions between these two systems. 
The initial state $\ket{\bm{\psi}(0)}$ can be prepared using the following procedure:
First, using the state preparation oracle $\mathbf{\mathcal{W}}$, we prepare the following quantum state:
\begin{align}
    \ket{0}\mathbf{\ket{0}}  \rightarrow \frac{1}{\beta} \left( \ket{0}\ket{\dot{\mathbf{x}}(0)}  + \ket{1}\ket{\dot{\mathbf{y}}(0)}+\ket{2}\ket{\mathbf{x}(0)} + \ket{3}\ket{\mathbf{y}(0)}  \right)
\end{align}
Then, we apply Lemma 10 from \cite{babbush2023exponential} to prepare the quantum state $\ket{\bm{\psi}(0)}$. After preparing the initial state, we apply Hamiltonian Simulation techniques from \cite{low2019hamiltonian} to implement a unitary circuit approximating the matrix exponential: $e^{-it \mathbf{H}}$ and prepare the quantum state $\ket{\psi(t)} =  e^{-it \mathbf{H}} \ket{\psi(0)}$. The construction of unitary circuit require $\mathcal{O} (t \lambda + \log(1/\epsilon)) $ calls to the unitary $\mathbf{\mathcal{U}_{H}}$. Here, $\lambda$ is the block encoding constant, which from Eq.~\eqref{eq: block_encode_H}, is given by, $\lambda = \sqrt{2\alpha d}$, and $\alpha := \max_{ij}\left(\omega_{ij}^2,{\frac{k_{ij}}{m_j}} \right)$. 
\end{proof}
The prepared quantum state $\ket{\mathbf{\psi}(t)}$ encodes the velocities and displacements of the forced oscillator system. Our choice of encoding enables easy estimation of the system's kinetic energies or potential energies at any time $t$. As an example application, consider the problem of estimating the kinetic energies of a subset of the masses $V \subseteq N$ in the oscillator system. 

\begin{proposition}
    Given same inputs as \autoref{prblm 1}, and an oracle for 
 the subset of oscillators, $V \subseteq N$, define $\hat{k}_V(t)$ as:
 \[|\hat{k}_V(t) - K_V(t)/{E_{sys}}| \leq \epsilon.  \]
 Then, there exists a quantum algorithm that can estimate $\hat{k}_V(t) \in \mathbb{R}$ using $\mathcal{O}\left(\frac{f_{\max}^2l^6 t^2}{\epsilon^2}\right)$ uses of Oracle for $V$, and the quantum circuit that prepares $\ket{\psi(t)}$ and its inverse and controlled version. Here $K_V(t) = \frac{1}{2}\sum_{i\in V} m_i \dot{x}_i^2$  is the kinetic energy of $V$ at time t and $E_{sys}$ is the initial total energy of the oscillator system. 
\end{proposition}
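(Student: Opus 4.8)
The plan is to obtain $K_V(t)/E_{\mathrm{sys}}$ as a rescaled population of a subspace of the already-prepared state $\ket{\psi(t)}$, and to estimate that population by repeated measurement. Recall from \autoref{sec: redn_to_quantum_evolution} and the proof of \autoref{thm:forced_oscillator_result} that, up to the perturbative error of \autoref{thm:pertThm N Dimensions} and the Hamiltonian-simulation error,
\[
\ket{\psi(t)} = \frac{1}{\sqrt{2E}}\begin{pmatrix}\sqrt{\mathbf{M}}\,\dot{\mathbf{x}}(t)\\ \sqrt{m_f}\,\dot{\mathbf{y}}(t)\\ i\,\bm{\mu}(t)\end{pmatrix},
\]
so the amplitude of $\ket{\psi(t)}$ on the computational basis state labelling the velocity of real mass $m_i$ equals $\sqrt{m_i}\,\dot{x}_i(t)/\sqrt{2E}$. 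Hence, if $\mathbf{P}_V$ is the orthogonal projector onto the span of those basis states for $i\in V$ (a subspace of the first, velocity, block of $\ket{\psi(t)}$), then $\lVert \mathbf{P}_V\ket{\psi(t)}\rVert_2^2 = \tfrac{1}{2E}\sum_{i\in V} m_i\,\dot{x}_i(t)^2 = K_V(t)/E$.

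First I would build $\mathbf{P}_V$ from the oracle for $V$: one query writes a bit equal to $1$ exactly when the index lies in $V$ and in the velocity block into a fresh ancilla, and measuring that ancilla projects $\ket{\psi(t)}$ onto $\mathbf{P}_V$ or $\mathbf{I}-\mathbf{P}_V$. Since this observable is diagonal in the computational basis, one use of the circuit preparing $\ket{\psi(t)}$ together with $O(1)$ queries to the $V$-oracle produces a single Bernoulli trial whose success probability is $p_V := K_V(t)/E$. Repeating $O(1/\delta^2)$ times and taking the empirical frequency gives $\tilde p_V$ with $|\tilde p_V - p_V|\le\delta$ with high probability (Hoeffding); amplitude estimation, which is where the inverse and controlled versions of the preparation circuit are used, would give $O(1/\delta)$ instead.

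Next I would convert the normalization. Write $E = E_{\mathrm{sys}} + E_{\mathrm{aux}} + E_{\mathrm{int}}$; all three are fixed by the initial data, and $E_{\mathrm{aux}},E_{\mathrm{int}}$ are determined by the explicit auxiliary initial conditions of \autoref{thm:pertThm N Dimensions}, namely $y_{i,j}(0)=2l f_{ij}\cos\phi_{ij}/k_{ii}$ and $\dot y_{i,j}(0)=-2l\omega_{ij} f_{ij}\sin\phi_{ij}/k_{ii}$, which give $E_{\mathrm{aux}} = 2m_f l^2 \sum_{i,j}\omega_{ij}^2 f_{ij}^2/k_{ii}^2$. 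Treating $c := E/E_{\mathrm{sys}}$ as a known constant (or, if $E_{\mathrm{sys}}$ is not given, estimating $1/c = E_{\mathrm{sys}}/E = \lVert \mathbf{P}_{\mathrm{sys}}\ket{\psi(0)}\rVert_2^2$ from $\ket{\psi(0)}$ by the same device, with $\mathbf{P}_{\mathrm{sys}}$ the projector onto the real-mass velocity and displacement registers), output $\hat k_V(t) := c\,\tilde p_V$. Then $|\hat k_V(t) - K_V(t)/E_{\mathrm{sys}}| = c\,|\tilde p_V - p_V| \le c\,\delta$, so $\delta = \epsilon/c$ suffices and costs $O(c^2/\epsilon^2)$ repetitions. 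Substituting the lower bound on $m_f$ required by \autoref{thm:pertThm N Dimensions} into $E_{\mathrm{aux}}$ and $E_{\mathrm{int}}$, and using $\lVert \mathbf{P}_V\ket{\psi(t)}\rVert\le 1$ to avoid pessimistic bounds on $p_V$, this bookkeeping reduces to the stated $\mathcal{O}(f_{\max}^2 l^6 t^2/\epsilon^2)$ query count.

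The main obstacle is exactly that last estimate: because $m_f$ must grow with $t$ (and with $1/\epsilon$ if one insists on driving the perturbative error down), $E_{\mathrm{aux}}$ and hence $E$ are large, the population $p_V = K_V(t)/E$ is correspondingly small, and getting the clean $f_{\max}^2 l^6 t^2/\epsilon^2$ form requires using the explicit formulas for $m_f$, $E_{\mathrm{aux}}$, $E_{\mathrm{int}}$ — including the cancellations among the $m_f$, $\omega_{ij}^2$ and $1/k_{ii}^2$ factors — rather than crude operator-norm bounds, together with a lower bound on $E_{\mathrm{sys}}$. A secondary point to check is error propagation: the $\epsilon$-level inaccuracy already present in $\ket{\psi(t)}$ is an error in amplitudes, and one must verify via $\big|\lVert \mathbf{P}_V\phi\rVert^2 - \lVert \mathbf{P}_V\psi\rVert^2\big| \le 2\lVert \phi-\psi\rVert$ (for unit vectors) that setting the internal state-preparation accuracy to $\Theta(\epsilon/c)$ keeps this contribution below the target.
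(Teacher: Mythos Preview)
Your proposal is correct and follows essentially the same route as the paper: read off $K_V(t)/E$ as the population of the real-mass velocity subspace of $\ket{\psi(t)}$, then rescale by the known constant $c=E/E_{\mathrm{sys}}$. The paper's proof is terser because it invokes Theorem~5 of \cite{babbush2023exponential} directly for the amplitude-estimation step (cost $O(1/\epsilon_1)$), and it resolves the coupling you flag as the ``main obstacle'' by the explicit choice $\epsilon_2=\sqrt{\epsilon_1}$, where $\epsilon_2$ is the perturbative accuracy from \autoref{thm:pertThm N Dimensions} (which fixes $m_f$ and hence $E/E_{\mathrm{sys}}\in\mathcal{O}(f_{\max}^2 l^3 t/\epsilon_2)$) and $\epsilon_1$ is the amplitude-estimation accuracy; substituting then gives $\epsilon_1\in\mathcal{O}(\epsilon^2/(f_{\max}^2 l^6 t^2))$ and the stated $\mathcal{O}(f_{\max}^2 l^6 t^2/\epsilon^2)$ cost. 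This is exactly the ``bookkeeping'' you defer, so your outline matches the paper's argument.
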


\begin{proof}
Let $E = E_{sys} + E_{int} + E_{aux}$ be the total energy of the oscillator system. Recall that the oscillator system involves $N$ masses of the original forced oscillator system, and $Nl$ additional masses added to introduce time-dependent external forces. Let $K_V(t)/E \approx K_V(t)/E$ be an estimate of the kinetic energy of the oscillators divided by the total energy of the system.  More precisely, assume $\bar{k}_V(t)$ satisfies the following equation:
\begin{equation}
\label{eq:ke_estimate}
   | \bar{k}_V(t) - K_V(t)/E | \leq \epsilon_1
\end{equation}
Applying Theorem 5 from \cite{babbush2023exponential}, we can obtain an estimate $\bar{k}_V(t) \approx K_V(t)/E$ with success probability $1-\delta$ using $\mathcal{O}(\log(1/\delta)/\epsilon_1)$ uses of $V$ and the quantum circuit that prepares $\ket{\psi(t)}$. To obtain $\epsilon$ precise estimate of $\hat{k}_V(t)$, we need to estimate $\bar{k}_V(t)$ with a high precision since the total energy of the oscillator system, $E$ can be large relative to $E_{sys}$. First, note that:
\begin{equation}
    \hat{k}_V(t) = \frac{E}{E_{sys}}\bar{k}_V(t)
\end{equation}
Now, multiply both sides of the \autoref{eq:ke_estimate} with $\frac{E}{E_{sys}}$. This yields:
\begin{equation}
\label{eq:kinetic_energy_bound}
   | \hat{k}_V(t) - K_V(t)/E_{sys} | \leq  \frac{E \epsilon_1}{E_{sys}}
\end{equation}
\begin{align}
    \frac{E}{E_{sys}} &= \frac{E_{sys} + E_{aux} + E_{int}}{E_{sys}} \\ \nonumber
    &= 1 + \frac{E_{aux} + E_{int}}{E_{sys}}
\end{align}
Since $E_{aux} \propto Nm_f \omega_{max}^2 f_{\max}^2 l^3 \in \mathcal{O}(Nf_{\max}^2 l^3t/\epsilon_2) $, $E_{int}$ negligible compared to $E_{aux}$ and $E_{sys} \geq \mathcal{O}(N (m_{min} \dot{x}^2(0)+ k_{min}x^2(0))$, 
\begin{equation}
    \frac{E}{E_{sys}} \in \mathcal{O}(f_{\max}^2 l^3t/\epsilon_2)
\end{equation}
Substituting in \autoref{eq:kinetic_energy_bound}, we get 
\begin{equation}
    | \hat{k}_V(t) - K_V(t)/E_{sys} | \leq  \frac{E \epsilon_1}{E_{sys}} \leq \mathcal{O} (f_{\max}^2 l^3t \epsilon_1/\epsilon_2)
\end{equation}
To obtain the desired final precision $\epsilon$, it is sufficient to choose $\epsilon_2 = \sqrt{\epsilon_1}$, and $\epsilon_1 \in \mathcal{O}(\frac{\epsilon^2}{f_{\max}^2l^6 t^2})$. Hence the final cost of obtaining the kinetic energy estimate is $\mathcal{O}(\frac{f_{\max}^2l^6 t^2}{\epsilon^2})$.
\end{proof}

Note that the computational cost of estimating kinetic energies of the oscillators is quadratic in $t/{\epsilon}$, while for conservative oscillator systems, it is linear in $1/{\epsilon}$. The reason behind this increase in cost is the presence of auxillary oscillators present in the system, added to simulate time dependent external forces. 
\section{Nonlinear Schr\"{o}dinger Equation}
\label{sec:nonlinear_Schrodinger_eq}

In this section, we focus on \autoref{nonlinear_Schr\"{o}dinger}, where the goal is to simulate the dynamics of a quantum system evolving according to the nonlinear Schr\"{o}dinger equation defined in \autoref{Def: nonlinear_Schr\"{o}dinger_eq}. Recall that the nonlinear Schr\"{o}dinger Equation is of the form: 
\begin{align}
\label{eq: nonlinear_Schr\"{o}dinger}
    \dot{\ket{\bm{\psi}} } = -i\mathbf{H}_1 \ket{\bm{\psi}} + \mathbf{H}_2 \ket{\bm{\psi}} \otimes \ket{\bm{\psi}}
\end{align}

Here, $\mathbf{H}_1 \in \mathbb{C}^{N \times N}$ is a Hermitian matrix, and $\mathbf{H}_2 \in \mathbb{C}^{N \times N^2}$ is a rectangular matrix. Because of the presence of the nonlinear term $\mathbf{H}_2 \ket{\bm{\psi}} \otimes \ket{\bm{\psi}}$ in Eq.~\eqref{eq: nonlinear_Schr\"{o}dinger}, we cannot apply Hamiltonian Simulation methods directly to simulate the dynamics of a quantum system evolving according to Eq.~\eqref{eq: nonlinear_Schr\"{o}dinger}. Although we can apply Carlemann Linearization techniques to embed the nonlinear dynamics into a higher-dimensional linear system, the resulting time evolution is not guaranteed to be unitary. While techniques like LCHS \cite{an2023linear} can simulate non-unitary dynamics, the operator derived from Carlemann Linearization may not be positive semidefinite, making it unsuitable for direct application of LCHS methods. To address this challenge, we introduce a novel approach to symmetrize the operator obtained from Carlemann Linearization. After symmetrization, we apply Hamiltonian Simulation methods to simulate the dynamics of symmetrized system and thereby simulate nonlinear quantum dynamics. Before detailing our method, we first establish the conditions that $\mathbf{H}_2$ must meet to ensure that the norm of the wavefunction $\braket{ \bm{\psi} (t) | \bm{\psi} (t)}$ remains non-increasing throughout the time evolution. This condition is necessary to prove the convergence of the Truncated Carlemann system. 

\begin{lemma}
\label{lemma: norm_nonlinear_Schr\"{o}dinger_eq}
    Let the nonlinear Schr\"{o}dinger Equation of the form: 
    \begin{equation}
         \dot{\ket{\bm{\psi}} } = -i\mathbf{H}_1 \ket{\bm{\psi}} + \mathbf{H}_2 \ket{\bm{\psi}} \otimes \ket{\bm{\psi}},
    \end{equation} where $\mathbf{H}_1 \in \mathbb{C}^{N \times N}$ is a Hermitian matrix and $\mathbf{H}_2 \in \mathbb{C}^{N \times N^2}$ is a rectangular matrix and let the norm of initial conditions, be $\braket{ \bm{\psi} (0) | \bm{\psi} (0)} = \beta$. Let $ \Upsilon(t) = \Re(\bra{\bm{\psi}(t) }\mathbf{H}_2 \ket{\bm{\psi}(t)} \otimes \ket{\bm{\psi}(t)})  \leq \bar{\Upsilon} \leq 0$ , $\forall t \in [0,T]$. Then the norm of the wavefunction at any time $t \in [0,T]$ is upper bounded by:
    \[
    \braket{ \bm{\psi} (t) | \bm{\psi} (t)} \leq  \beta
    \]

\end{lemma}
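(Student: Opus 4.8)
The plan is to differentiate the squared norm $n(t) := \braket{\bm{\psi}(t)|\bm{\psi}(t)}$ with respect to time and show it is non-increasing. Writing $n(t) = \braket{\bm{\psi}|\bm{\psi}}$ and using the product rule gives $\dot n(t) = \braket{\dot{\bm{\psi}}|\bm{\psi}} + \braket{\bm{\psi}|\dot{\bm{\psi}}} = 2\,\Re\braket{\bm{\psi}(t)|\dot{\bm{\psi}}(t)}$. Substituting the equation of motion $\ket{\dot{\bm{\psi}}} = -i\mathbf{H}_1\ket{\bm{\psi}} + \mathbf{H}_2\ket{\bm{\psi}}\otimes\ket{\bm{\psi}}$ yields
\[
\Re\braket{\bm{\psi}|\dot{\bm{\psi}}} = \Re\!\left(-i\bra{\bm{\psi}}\mathbf{H}_1\ket{\bm{\psi}}\right) + \Re\!\left(\bra{\bm{\psi}}\mathbf{H}_2\ket{\bm{\psi}}\otimes\ket{\bm{\psi}}\right).
\]

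The first step is to kill the linear term: since $\mathbf{H}_1$ is Hermitian, the expectation value $\bra{\bm{\psi}}\mathbf{H}_1\ket{\bm{\psi}}$ is real, so $-i$ times it is purely imaginary and contributes nothing to the real part. The second step is simply to recognize the remaining term as $\Upsilon(t) = \Re(\bra{\bm{\psi}(t)}\mathbf{H}_2\ket{\bm{\psi}(t)}\otimes\ket{\bm{\psi}(t)})$ by definition. Hence $\dot n(t) = 2\Upsilon(t)$.

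The third step is to invoke the hypothesis $\Upsilon(t) \le \bar\Upsilon \le 0$ for all $t\in[0,T]$, so $\dot n(t) \le 2\bar\Upsilon \le 0$. Integrating from $0$ to $t$ gives $n(t) \le n(0) + 2\bar\Upsilon t \le n(0) = \beta$, which is the claimed bound (and in fact slightly stronger). There is no real obstacle here — the computation is routine once the Hermiticity of $\mathbf{H}_1$ is used to discard the linear contribution; the only point requiring a word of care is the bookkeeping that $\braket{\dot{\bm{\psi}}|\bm{\psi}}$ and $\braket{\bm{\psi}|\dot{\bm{\psi}}}$ are complex conjugates of one another, so their sum is twice the real part, allowing the definition of $\Upsilon$ to be applied directly.
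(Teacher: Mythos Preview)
Your proposal is correct and follows essentially the same approach as the paper: differentiate $\braket{\bm{\psi}(t)|\bm{\psi}(t)}$, use Hermiticity of $\mathbf{H}_1$ to cancel the linear contribution, identify the remainder as $2\Upsilon(t)$, and integrate the resulting differential inequality. The paper's write-up is slightly terser (it expands both $\braket{\dot{\bm{\psi}}|\bm{\psi}}$ and $\braket{\bm{\psi}|\dot{\bm{\psi}}}$ explicitly rather than invoking the ``twice the real part'' observation), but the argument is the same.
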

\begin{proof}
    \begin{align}
        \frac{d}{dt} \braket{ \bm{\psi} (t) | \bm{\psi} (t)} &= \bra{\bm{\psi}}(-i\mathbf{H}_1 \ket{\bm{\psi}} + \mathbf{H}_2 \ket{\bm{\psi}} \otimes \ket{\bm{\psi}} )
        + ( \bra{\bm{\psi}}i\mathbf{H}_1  + \bra{\bm{\psi}} \otimes \bra{\bm{\psi}} \mathbf{H}_2^{\dagger} )\ket{\bm{\psi}}\\
        &= \bra{\bm{\psi}} \mathbf{H}_2 \ket{\bm{\psi}} \otimes \ket{\bm{\psi}} + \bra{\bm{\psi}} \otimes \bra{\bm{\psi}} \mathbf{H}_2^{\dagger}\ket{\bm{\psi}} \\ \nonumber
        &= 2\Re( \bra{\bm{\psi}(t)} \mathbf{H}_2 \ket{\bm{\psi}(t)} \otimes \ket{\bm{\psi}(t)})\\ 
    \end{align} 
    Since $\Re( \bra{\bm{\psi}(t)} \mathbf{H}_2 \ket{\bm{\psi}(t)} \otimes \ket{\bm{\psi}(t)}) \leq \bar{\Upsilon} $ by assumption, 
    \begin{align}
        \frac{d}{dt} \braket{ \bm{\psi} (t) | \bm{\psi} (t)} \leq 2 \bar{\Upsilon}
    \end{align}
    Hence, $\braket{ \bm{\psi} (t) | \bm{\psi} (t)} \leq  \braket{ \bm{\psi} (0) | \bm{\psi} (0)}$
\end{proof}

\begin{table}
    \centering
    \begin{tabular}{|c|c|c|c|}
        \hline
        \textbf{Condition} & \textbf{Truncation Error Bound} & \textbf{Lower Bound for $k$}  & \textbf{Time Period} \\
        \hline
        $\mu(-iH_1) < 0$ & $\mathcal{\epsilon}(t) \leq \frac{\| \mathbf{H}_2 \|^k}{\mu(-i\mathbf{H}_1)^k} (1 - e^{\mu(-iH_1)t})^k$ & $k \geq \frac{\log(\frac{1}{\epsilon})}{\log\left(\frac{ \mu(-iH_1)}{\| \mathbf{H}_2\|(1 - e^{\mu(-iH_1)t})}\right)}$ & $\forall t \in [0,\infty)$\\
        \hline
        $\mu(-iH_1) = 0$ & $\mathcal{\epsilon}(t) \leq (\|\mathbf{H}_2\|t)^k$ & $k \geq \frac{\log(\frac{1}{\epsilon})}{\log(\frac{1}{\| \mathbf{H}_2 \|t})}$ & $\forall t \in [0, \frac{1}{\|\mathbf{H}_2\|} ]$\\
        \hline
        $\mu(-iH_1) > 0$ & $\mathcal{\epsilon}(t) \leq \frac{\| \mathbf{H}_2 \|^k}{\mu(-i\mathbf{H}_1)^k} (1 - e^{\mu(-iH_1)t})^k$  & $k \geq \frac{\log(\frac{1}{\epsilon})}{\log\left(\frac{ \mu(-iH_1)}{\| \mathbf{H}_2\|(1 - e^{\mu(-iH_1)t})}\right)}$ & $0 < t < \frac{1}{\mu(-i\mathbf{H}_1)} \ln \left( \beta \right)
$\\
        \hline
    \end{tabular}
    \caption{Summary of conditions and truncation error bounds for Carlemann Linearization. Here the restriction on time gives the maximum time that the simulation can be performed for before Carlemann $\beta =  ( 1 + \frac{\mu(-i\mathbf{H}_1)}{\|\mathbf{H}_2\|}) $}
    \label{tab:carlemann_truncation}
\end{table}

\subsection{Carlemann Embedding}
We can apply Carlemann linearization techniques \cite{carleman1932application} to embed the finite-dimensional nonlinear Schr\"{o}dinger Evolution into an infinite-dimensional linear dynamical system.  The basic idea behind Carlemann can be understood easily in the scalar case.  It stems from the observation that $\partial_t x(t) = ax+bx^2$ can be re-written as the linear-looking equation $\partial_t x(t) = ax+bw$ for $w=x^2$.  The differential equation for $w$, however, is of the form $\partial_t w(t) = 2bx(t)\partial_t x(t)=2baw(t) +2bx(t)w(t)$, which is nonlinear itself.  The idea is that if the nonlinearity is sufficiently small then the remainder term neglected by ignoring a higher-order nonlinear term vanishes.  This allows certain weakly nonlinear differential equations to be simulated on quantum computers, which only naturally can simulate unitary and in turn linear dynamics~\cite{liu2021efficient}.   This idea can also be easily generalized to the case of vector-valued functions, wherein tensor products replacing ordinary multiplication: $\partial_t \mathbf{x}(t) = \mathbf{A} \mathbf{x}(t) + \mathbf{B} \mathbf{x}(t)\otimes \mathbf{x}(t)$.

The Carlemann linearized system of differential equations for Eq \eqref{eq: nonlinear_Schr\"{o}dinger} can be written as:

\begin{align}
\label{Carlemann_Linearized_EOM}
\frac{d}{dt}
\begin{bmatrix}
    \bm{\ket{\psi}} \\
    \bm{\ket{\psi}}^{\otimes 2} \\
    \bm{\ket{\psi}}^{\otimes 3} \\
    \bm{\ket{\psi}}^{\otimes 4} \\
    \vdots
\end{bmatrix} =     -i \begin{bmatrix}
        \mathbf{H}_1 & i\mathbf{H}_2 & \mathbf{0} & \mathbf{0} & \cdots\\
        \mathbf{0} & \mathbf{H}_1 \oplus^1 \mathbf{H}_1 & i\mathbf{H}_2 \oplus^1 \mathbf{H}_2 & \mathbf{0} & \cdots\\
        \mathbf{0} & \mathbf{0} & \mathbf{H}_1 \oplus^2 \mathbf{H}_1  & i\mathbf{H}_2 \oplus^2 \mathbf{H}_2 & \cdots\\
        \mathbf{0} & \mathbf{0} & \mathbf{0} & \mathbf{H}_1 \oplus^3 \mathbf{H}_1 & i\mathbf{H}_2 \oplus^3 \mathbf{H}_2 & \cdots\\
        \vdots & \vdots & \vdots & \vdots & \vdots & \ddots
    \end{bmatrix}
\begin{bmatrix}
    \bm{\ket{\psi}} \\
    \bm{\ket{\psi}}^{\otimes 2} \\
    \bm{\ket{\psi}}^{\otimes 3} \\
    \bm{\ket{\psi}}^{\otimes 4} \\
    \vdots
\end{bmatrix}
\end{align}

Where, $\oplus^i$ is notational shorthand for a padding operation defined as follows:
\begin{align}
    \mathbf{H}_1 \oplus^1 \mathbf{H}_1 = I \otimes \mathbf{H}_1 + \mathbf{H}_1 \otimes I
\end{align},
\begin{align}
    \mathbf{H}_1 \oplus^2 \mathbf{H}_1 = I \otimes \mathbf{H}_1 \otimes I + \mathbf{H}_1 \otimes I \otimes I + I \otimes I \otimes \mathbf{H}_1,\end{align}
and 
\begin{align}
    \mathbf{H}_1 \oplus^i \mathbf{H}_1 = \sum_{j=1}^{i+1} I^{\otimes(i+1-j)} \otimes \mathbf{H}_1 \otimes I^{\otimes(j-1)}.
\end{align}

Now, we define $\ket{\mathbf{w}_k}$ to encode $\ket{\bm{\psi}}^{\otimes k}$  padded with zero vector such that each $ \ket{\mathbf{w}_i} \in \mathbb{C}^{N^{k}}$.
\begin{align}
    \ket{\mathbf{w}_1} =  \bm{\ket{0}}^{\otimes k-1}\bm{\ket{\psi}} = \begin{bmatrix}
            \bm{\ket{\psi}} \\
            \mathbf{0}
          \end{bmatrix}
\end{align}

\begin{align}
      \ket{\mathbf{w}_2}  = \bm{\ket{0}}^{\otimes k-2}\bm{\ket{\psi}}^{\otimes 2} = \begin{bmatrix}
            \bm{\ket{\psi}}^{\otimes 2} \\
            \mathbf{0}
          \end{bmatrix}
\end{align}
and in general,
\begin{align}
      \ket{\mathbf{w}_i}  = \bm{\ket{0}}^{\otimes k-i}\bm{\ket{\psi}}^{\otimes i} = \begin{bmatrix}
            \bm{\ket{\psi}}^{\otimes i} \\
            \mathbf{0}
          \end{bmatrix}
\end{align}
Now we define $\ket{\mathbf{p}}$:
\begin{align}
    \ket{\mathbf{p}} = \sum_i \ket{i-1} \otimes \ket{\mathbf{w}_i}  
\end{align}
Here,  $\ket{\mathbf{p}} \in \mathbb{C}^{k.N^{k}}$
Now, we can write padded version of Eq.~\eqref{Carlemann_Linearized_EOM}  as:
\begin{align}
\label{DE_After_Carlemann}
    \frac{d}{dt} \ket{ \mathbf{p}(t)}  = -i\mathbf{Q} \ket{\mathbf{p}(t)}
\end{align}
Where 
\begin{align}
\mathbf{Q} = \sum_{i=0}^{k-1} \ket{i} \bra{i} \otimes \mathbf{A}_{i+1} + \ket{i} \bra{i+1} \otimes \mathbf{B}_{i+1} = \begin{bmatrix}
    \mathbf{A}_1 & \mathbf{B}_1 & \mathbf{0} & \mathbf{0} & \cdots \\
    \mathbf{0} & \mathbf{A}_2 & \mathbf{B}_2 & \mathbf{0} & \cdots \\
    \mathbf{0} & \mathbf{0} & \mathbf{A}_3 & \mathbf{B}_3 & \cdots \\
    \mathbf{0} & \mathbf{0} & \mathbf{0} & \mathbf{A}_4 & \mathbf{B}_4 & \cdots \\
    \vdots & \vdots & \vdots & \vdots & \vdots & \ddots
\end{bmatrix}
\end{align}

Here, $\mathbf{A}_k, \mathbf{B}_k$ are matrices constructed from $\mathbf{H}_1, \mathbf{H}_2$, with additional padded zeros. Specifically, the matrix $\mathbf{A}_1 = \left(\ket{0}\bra{0} \right)^{\otimes k-1} \otimes -i \mathbf{H}_1 $, and $\mathbf{A}_2 =\left(\ket{0}\bra{0} \right)^{\otimes k-2} \otimes -i (\mathbf{H}_1 \oplus^1 \mathbf{H}_1 )$, and in general, $\mathbf{A}_i =\left(\ket{0}\bra{0} \right)^{\otimes k-i} \otimes -i (\mathbf{H}_1 \oplus^{i-1} \mathbf{H}_1 )$. The padded matrices $\mathbf{B}_k$ are given by:
$\mathbf{B}_1 = \left(\ket{0}\bra{0} \right)^{\otimes k-2} \otimes (\ket{0} \otimes \mathbf{H}_2)$,  and $\mathbf{B}_2  = \left(\ket{0}\bra{0} \right)^{\otimes k-3} \otimes \left(\ket{0} \otimes [\mathbf{H}_2 \oplus^1 \mathbf{H}_2] \right)$, and in general, $\mathbf{B}_i = \left(\ket{0}\bra{0} \right)^{\otimes k-i} \otimes \left(\ket{0} \otimes [\mathbf{H}_2 \oplus^{i-2} \mathbf{H}_2] \right) $.

\subsection{Error from Carlemann Linearization}
The convergence conditions for Carleman linearization, as outlined in \cite{forets2017explicit}, are summarized in Table~\ref{tab:carlemann_truncation}. In our setting, the logarithmic norm $\mu(-i\mathbf{H}_1) = 0$, which implies that convergence is restricted to short time horizons. Specifically, the time horizon scales as $t \in \mathcal{O}(1/\|\mathbf{H}_2\|)$.
In this work, we derive a qualitatively similar error bound for the nonlinear Schrödinger equation. However, the bound presented here applies to the entire truncated Carleman system, rather than being limited to just the first Carleman block. 

\begin{theorem}
\label{Thm: Carlemann_Truncation_Order}
    Consider the nonlinear Schr\"{o}dinger Equation of the form: 
    \begin{equation}
         \dot{\ket{\bm{\psi}} } = -i\mathbf{H}_1 \ket{\bm{\psi}} + \mathbf{H}_2 \ket{\bm{\psi}} \otimes \ket{\bm{\psi}},
    \end{equation} where $\mathbf{H}_1 \in \mathbb{C}^{N \times N}$ is a Hermitian matrix  and $\mathbf{H}_2 \in \mathbb{C}^{N \times N^2}$ is a rectangular matrix. Let $d = \max (d_r,d_c)$ where $d_r$ is the maximum number of nonzero elements in any row of the matrix $\mathbf{H}_2$, and $d_c$ is the maximum number of nonzero elements in any coloumn of the matrix $\mathbf{H}_2$. Let $k$ be the truncation order of the Carlemann Linearized system of differential equations. Let $\braket{\bm{\psi}(0)|\bm{\psi}(0)} \leq \beta $ and let, $ \Re(\bra{\bm{\psi}(t) }\mathbf{H}_2 \ket{\bm{\psi}(t)} \otimes \ket{\bm{\psi}(t)}) \leq 0$ , $ \forall t \in [0,t_s]$, where $t_s \in \mathcal{O}(1/d\max_{ij}(|\mathbf{H}_2(i,j)|))$. Then, for a desired error $\epsilon$, the Carlemann Truncation order is lower bounded by:

\begin{align}
 k\geq   \frac{
    \log\!\Bigl(\frac{\|\mathbf H_{2}\|}{2\,d\,\bigl(\max_{i,j}|\mathbf H_{2}(i,j)|\bigr)\,\epsilon}\Bigr)
}{
    \log\!\Bigl(\frac1\beta\Bigr)\;-\;2\,d\,\bigl(\max_{i,j}|\mathbf H_{2}(i,j)|\bigr)\,t
}
\end{align}

    \begin{proof}
        Let the truncated system of differential equations after Carlemann Linearization be of the form:
        
\begin{align}
    \label{Carlemann_Linearized_EOM_truncated}
    \frac{d}{dt}
    \begin{bmatrix}
        \widetilde{\bm{\ket{\psi}}} \\
        \widetilde{\bm{\ket{\psi}}}^{\otimes 2} \\
        \widetilde{\bm{\ket{\psi}}}^{\otimes 3}\\
        \vdots \\
        \widetilde{\bm{\ket{\psi}}}^{\otimes k} \\
    \end{bmatrix} 
    =
    -i 
    \begin{bmatrix}
        \mathbf{H}_1 & i\mathbf{H}_2 & \mathbf{0} & \mathbf{0} & \cdots \\
        \mathbf{0} & \mathbf{H}_1 \oplus^1 \mathbf{H}_1 & i\mathbf{H}_2 \oplus^1 \mathbf{H}_2 & \mathbf{0} & \cdots \\
        \mathbf{0} & \mathbf{0} & \mathbf{H}_1 \oplus^2 \mathbf{H}_1 & i\mathbf{H}_2 \oplus^2 \mathbf{H}_2 & \cdots \\
        \mathbf{0} & \mathbf{0} & \mathbf{0} & \ddots & \vdots \\
        \vdots & \vdots & \vdots & \vdots & \mathbf{H}_1 \oplus^{k-1} \mathbf{H}_1 \\
    \end{bmatrix}
    \begin{bmatrix}
        \widetilde{\bm{\ket{\psi}}} \\
        \widetilde{\bm{\ket{\psi}}}^{\otimes 2} \\
        \widetilde{\bm{\ket{\psi}}}^{\otimes 3}\\
        \vdots \\
        \widetilde{\bm{\ket{\psi}}}^{\otimes k} \\
    \end{bmatrix} 
\end{align}

Comparing Eq \eqref{Carlemann_Linearized_EOM_truncated} and Eq \eqref{Carlemann_Linearized_EOM}, the dynamics of error vector can be written as:

\begin{align}
    \label{Carlemann_Linearized_EOM_truncated_error}
    \frac{d}{dt}
    \begin{bmatrix}
        \mathcal{E}_1 \\
        \mathcal{E}_2 \\
        \mathcal{E}_3\\
        \vdots \\
        \mathcal{E}_k \\
    \end{bmatrix} 
    =&
    -i 
    \begin{bmatrix}
        \mathbf{H}_1 & i\mathbf{H}_2 & \mathbf{0} & \mathbf{0} & \cdots \\
        \mathbf{0} & \mathbf{H}_1 \oplus^1 \mathbf{H}_1 & i\mathbf{H}_2 \oplus^1 \mathbf{H}_2 & \mathbf{0} & \cdots \\
        \mathbf{0} & \mathbf{0} & \mathbf{H}_1 \oplus^2 \mathbf{H}_1 & i\mathbf{H}_2 \oplus^2 \mathbf{H}_2 & \cdots \\
        \mathbf{0} & \mathbf{0} & \mathbf{0} & \ddots & \vdots \\
        \vdots & \vdots & \vdots & \vdots & \mathbf{H}_1 \oplus^{k-1} \mathbf{H}_1 \\
    \end{bmatrix}
     \begin{bmatrix}
        \mathcal{E}_1 \\
        \mathcal{E}_2 \\
        \mathcal{E}_3\\
        \vdots \\
        \mathcal{E}_k \\
    \end{bmatrix} 
    \nonumber\\ &+ 
    \begin{bmatrix}
        \mathbf{0} \\
        \mathbf{0}\\
        \mathbf{0}\\
        \vdots \\
        -\mathbf{H}_2 \oplus^{k-1} \mathbf{H}_2  \bm{\ket{\psi}}^{\otimes k + 1}\\
    \end{bmatrix} 
\end{align}
here, $\mathcal{E}_j = \widetilde{\bm{\ket{\psi}}}^{\otimes j} -  \bm{\ket{\psi}}^{\otimes j} $.  Next this equation for the evolution of the error can be compactly re-written, for  $\mathbf{\mathcal{E}} = [\mathcal{E}_1,...\mathcal{E}_k]^T $, as

\begin{equation}
    \dot{\mathbf{\mathcal{E}}} = \mathbf{A}_k\mathbf{\mathcal{E}} + \mathbf{b}(t)
\end{equation}
Here, $\mathbf{A}_k$ is the truncated Carlemann Matrix.  Since $\mathbf{\mathcal{E}}(0) = \mathbf{0}$,
\begin{equation}
    \mathbf{\mathcal{E}}(t) = \int_0^t e^{\mathbf{A}_k (t-s)} b(s) ds
\end{equation}
\begin{equation}
    \|  \mathbf{\mathcal{E}}(t)\| \leq \int_0^t \|e^{\mathbf{A}_k (t-s)} \| \| b(s) \| ds
\end{equation}
Now, we apply the exponential property of the logarithmic norm,
\begin{equation}
     \|e^{\mathbf{A}_k (t-s)} \|  \leq e^{\mu(\mathbf{A}_k)(t-s)}
\end{equation}
For the induced 2 norm, \\
\begin{equation}
    \mu(\mathbf{A}_k) = \lambda_{\max}( A_k + A_k^{\dagger})
\end{equation}
Since $R_k = A_k + A_k^{\dagger}$ is a Hermitian hollow matrix, the sum of eigenvalues is equal to zero, and this is possible only if either the eigenvalues are all zeros or if the set of all eigenvalues contains positive and corresponding negative elements. The eigenvalues are all zero only if $A_k + A_k^{\dagger}$ is a zero matrix, and hence, in general, the matrix $A_k + A_k^{\dagger}$ is indefinite. We can apply the Gershgorin circle theorem to bound the maximum eigenvalue of the matrix $A_k + A_k^{\dagger}$. Since all the diagonal elements are zero, 
the maximum eigenvalue is given by:
\begin{align}
\lambda_{\max}( A_k + A_k^{\dagger}) = \max_i ( \sum_{j \neq i} |R_k(i,j)| ) \leq k( \|\mathbf{H}_2 \|_{\infty} + \|\mathbf{H}_2\|_{1}) 
\end{align}
Let $d = \max (d_r,d_c)$ where $d_r$ is the maximum number of nonzero elements in any row of the matrix $\mathbf{H}_2$, and $d_c$ is the maximum number of nonzero elements in any coloumn of the matrix $\mathbf{H}_2$. It is straightforward to show that $\|\mathbf{H}_2 \|_{\infty} \leq d_r \max_{ij}(|\mathbf{H}_2(i,j)|) $, and $\|\mathbf{H}_2 \|_{1} \leq d_c \max_{ij}(|\mathbf{H}_2(i,j)|) $. Hence, $\lambda_{\max}( A_k + A_k^{\dagger}) \leq  2kd \max_{ij}(|\mathbf{H}_2(i,j)|) $. From \autoref{lemma: norm_nonlinear_Schr\"{o}dinger_eq}, $ \|b(s) \| \leq k\| \mathbf{H}_2\|(\braket{\psi(0)|\psi(0)} )^k$. Hence, $ \|  \mathbf{\mathcal{E}}(t)\|$ is bounded by:
\begin{align*}
    \|  \mathbf{\mathcal{E}}(t)\| &\leq k \beta^k \| \mathbf{H}_2\|\int_0^t e^{2kd \max_{ij}(|\mathbf{H}_2(i,j)|)(t-s) } ds  \\ \nonumber 
&\text{Let } a = 2kd \max_{ij}(|\mathbf{H}_2(i,j)|)   \text{and } \beta = \braket{\psi(0)|\psi(0)} \\
&= k \beta^k \|\mathbf{H}_2\|\int_0^t e^{a(t-s) } ds \\
&= k \beta^k\|\mathbf{H}_2\|\int_0^t e^{at - as } ds \\
&= k \beta^k\|\mathbf{H}_2\|e^{at}\int_0^t e^{-sa} ds \\
&= k \beta^k\|\mathbf{H}_2\|e^{at} \left(\frac{e^{-at} - 1}{-a}\right) \\
&= k\beta^k\|\mathbf{H}_2\|\frac{(e^{at} - 1)}{a} \\
&= \beta^k\|\mathbf{H}_2\|\frac{ e^{2kd \max_{ij}(|\mathbf{H}_2(i,j)|)t} -1 }{2d\max_{ij}(|\mathbf{H}_2(i,j)|)}
\end{align*}

for $t>>1$, 
\begin{equation}
     \|  \mathbf{\mathcal{E}}(t)\| \leq  \|\mathbf{H}_2\|\frac{ (e^{2d \max_{ij}(|\mathbf{H}_2(i,j)|)t} \beta)^k }{2d\max_{ij}(|\mathbf{H}_2(i,j)|)}
\end{equation}
This limit converges only when $(e^{2d \max_{ij}(|\mathbf{H}_2(i,j)|)t} \beta) < 1$. This can happen when $t \in \mathcal{O}(1/d \max_{ij}(|\mathbf{H}_2(i,j)|)), \beta = \mathcal{O}(1)$ or when $\beta < e^{-2d \max_{ij}(|\mathbf{H}_2(i,j)|)t }$. 
When this condition is satisfied, the truncation order $k$ to achieve the desired error $\epsilon$ is given by:
\begin{align}
    k &\geq 
\frac{
    \log\!\Bigl(\frac{\|\mathbf H_{2}\|}{2\,d\,\bigl(\max_{i,j}|\mathbf H_{2}(i,j)|\bigr)\,\epsilon}\Bigr)
}{
    \log\!\Bigl(\frac1\beta\Bigr)\;-\;2\,d\,\bigl(\max_{i,j}|\mathbf H_{2}(i,j)|\bigr)\,t
}
\end{align}

\end{proof}
\end{theorem}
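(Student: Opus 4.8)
The plan is to bound the gap between the full infinite Carleman hierarchy \eqref{Carlemann_Linearized_EOM} and its order-$k$ truncation \eqref{Carlemann_Linearized_EOM_truncated} by deriving a closed linear ODE for the error vector, solving it by Duhamel's formula, and then estimating the resulting propagator through the logarithmic norm of the truncated Carleman matrix and the forcing term through the a priori norm bound from \autoref{lemma: norm_nonlinear_Schr\"{o}dinger_eq}.

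First I would subtract \eqref{Carlemann_Linearized_EOM_truncated} from \eqref{Carlemann_Linearized_EOM}. Because the truncation retains the first $k$ blocks exactly and only discards the coupling of the $k$-th block to $\bm{\ket{\psi}}^{\otimes(k+1)}$, the stacked error $\mathbf{\mathcal{E}}=[\mathcal{E}_1,\dots,\mathcal{E}_k]^T$ with $\mathcal{E}_j=\widetilde{\bm{\ket{\psi}}}^{\otimes j}-\bm{\ket{\psi}}^{\otimes j}$ obeys an inhomogeneous linear equation $\dot{\mathbf{\mathcal{E}}}=\mathbf{A}_k\mathbf{\mathcal{E}}+\mathbf{b}(t)$, where $\mathbf{A}_k$ is the truncated Carleman matrix and $\mathbf{b}(t)$ has only its last block nonzero, equal up to sign to $(\mathbf{H}_2\oplus^{k-1}\mathbf{H}_2)\,\bm{\ket{\psi}(t)}^{\otimes(k+1)}$. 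Since both hierarchies start from the same data, $\mathbf{\mathcal{E}}(0)=\mathbf{0}$, so $\mathbf{\mathcal{E}}(t)=\int_0^t e^{\mathbf{A}_k(t-s)}\mathbf{b}(s)\,ds$ and hence $\|\mathbf{\mathcal{E}}(t)\|\le\int_0^t\|e^{\mathbf{A}_k(t-s)}\|\,\|\mathbf{b}(s)\|\,ds$.

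Next I would bound the two factors in the integrand. For the propagator, use $\|e^{\mathbf{A}_k\tau}\|\le e^{\mu(\mathbf{A}_k)\tau}$ and compute $\mu$ in the induced $2$-norm as the largest eigenvalue of $\mathbf{A}_k+\mathbf{A}_k^\dagger$. The key structural observation is that $\mathbf{A}_k+\mathbf{A}_k^\dagger$ is Hermitian with vanishing diagonal: the diagonal blocks $-i(\mathbf{H}_1\oplus^{i-1}\mathbf{H}_1)$ are skew-Hermitian because $\mathbf{H}_1$ is Hermitian, so they cancel, leaving only the super- and sub-diagonal $\mathbf{H}_2$-type blocks. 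Gershgorin's theorem then bounds the largest eigenvalue by the maximum absolute row sum, and since each block row contributes $\oplus^{k-1}$-type sums of $\mathbf{H}_2$ blocks this is at most $k(\|\mathbf{H}_2\|_\infty+\|\mathbf{H}_2\|_1)\le 2kd\max_{ij}|\mathbf{H}_2(i,j)|$, using $\|\mathbf{H}_2\|_\infty\le d_r\max_{ij}|\mathbf{H}_2(i,j)|$ and $\|\mathbf{H}_2\|_1\le d_c\max_{ij}|\mathbf{H}_2(i,j)|$. For the forcing term, \autoref{lemma: norm_nonlinear_Schr\"{o}dinger_eq} together with the hypothesis $\Re(\bra{\bm{\psi}(t)}\mathbf{H}_2\ket{\bm{\psi}(t)}\otimes\ket{\bm{\psi}(t)})\le 0$ keeps $\braket{\bm{\psi}(t)|\bm{\psi}(t)}\le\beta$, so $\|\mathbf{b}(s)\|\le k\|\mathbf{H}_2\|\beta^k$ (the prefactor $k$ again coming from $\oplus^{k-1}$).

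Finally I would evaluate $\int_0^t e^{a(t-s)}\,ds=(e^{at}-1)/a$ with $a=2kd\max_{ij}|\mathbf{H}_2(i,j)|$, obtaining $\|\mathbf{\mathcal{E}}(t)\|\le\beta^k\|\mathbf{H}_2\|\,(e^{2kd\max_{ij}|\mathbf{H}_2(i,j)|t}-1)/(2d\max_{ij}|\mathbf{H}_2(i,j)|)$, regroup $e^{2kd\max|\mathbf{H}_2|t}\beta^k=(e^{2d\max|\mathbf{H}_2|t}\beta)^k$, note that the bound is useful precisely when this base is $<1$ — which is what forces the time window $t_s\in\mathcal{O}(1/d\max|\mathbf{H}_2|)$, or alternatively $\beta< e^{-2d\max|\mathbf{H}_2|t}$ — and then impose $\|\mathbf{\mathcal{E}}(t)\|\le\epsilon$. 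Taking logarithms, and keeping in mind that $\log(e^{2d\max|\mathbf{H}_2|t}\beta)$ is negative so dividing by it reverses the inequality, gives the claimed lower bound on $k$. The step I expect to be the main obstacle is making $\mu(\mathbf{A}_k)\le 2kd\max_{ij}|\mathbf{H}_2(i,j)|$ fully rigorous: one must check that the $\oplus^{i}$ padding really contributes at most $k$ off-diagonal $\mathbf{H}_2$-blocks per block row, that the Hermitian part of every diagonal block genuinely vanishes, and that the resulting indefinite hollow matrix yields exactly the clean symmetric row-sum bound in terms of $\|\mathbf{H}_2\|_\infty$ and $\|\mathbf{H}_2\|_1$; everything downstream is routine once that estimate is in place.
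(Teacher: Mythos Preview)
Your proposal is correct and follows essentially the same approach as the paper: derive the error ODE via subtraction, apply Duhamel's formula, bound the propagator through the logarithmic norm with Gershgorin on the hollow Hermitian part $\mathbf{A}_k+\mathbf{A}_k^\dagger$, bound the forcing term via \autoref{lemma: norm_nonlinear_Schr\"{o}dinger_eq}, integrate, and invert for $k$. Your identification of the key technical point---that the diagonal blocks $-i(\mathbf{H}_1\oplus^{i-1}\mathbf{H}_1)$ are skew-Hermitian and hence the Hermitian part is hollow---is exactly what the paper uses, and your flagged obstacle (justifying the Gershgorin row-sum bound of $k(\|\mathbf{H}_2\|_\infty+\|\mathbf{H}_2\|_1)$) is indeed the only step the paper asserts without detailed justification.
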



\subsection{Symmetrization}

Note that $\mathbf{Q}$ is not Hermitian, and hence we cannot implement $e^{-i\mathbf{Q}t}$ using Hamiltonian Simulation methods. Therefore, we perform additional transformations to symmetrize the matrix $\mathbf{Q}$. 
Our main idea here is to define a modified system of differential equations $ \frac{d \ket{\mathbf{\hat{p}}}}{dt} = -i\mathbf{\hat{Q}(\eta)}\ket{\mathbf{\hat{p}}} $ such that $\mathbf{\hat{Q}(\eta)}$ is a Hermitian matrix and, as $\eta \geq \eta_0$, $\|\mathbf{D}\ket{\mathbf{\hat{p}}(t)} -  \ket{\mathbf{p}(t)}\| \leq \epsilon$. Here, $\mathbf{D} = \operatorname{diag}(\eta^{k-1} \mathbf{I}, \eta^{k-2} \mathbf{I}, \ldots, \mathbf{I})$ is a diagonal matrix. We define $\ket{\mathbf{\hat{p}}} = \sum_{i=1}^k \ket{i-1}\otimes \ket{\mathbf{\hat{w}}_i}$ where $\ket{\mathbf{\hat{w}}_i} = \frac{\ket{\mathbf{w}_i}}{\eta^{k-i}}$ and the matrix $\mathbf{\hat{Q}(\eta)}$ is defined as:

\begin{align}
    \mathbf{\hat{Q}(\eta)} = 
     \begin{bmatrix}
                                \mathbf{A_1} & \frac{\mathbf{B}_1}{\eta} & \mathbf{0} & \mathbf{0} & \cdots \\
                                \frac{\mathbf{B}_1^{\dagger}}{\eta} & \mathbf{A_2} &  \frac{\mathbf{B}_2}{\eta} & \mathbf{0} & \cdots \\
                                \mathbf{0} & \frac{\mathbf{B}_2^{\dagger}}{\eta} & \mathbf{A_3} & \frac{\mathbf{B}_3}{\eta} & \cdots \\
                                \mathbf{0} & \mathbf{0} & \frac{\mathbf{B}_3^{\dagger}}{\eta} & \mathbf{A_4} & \cdots \\
                                \vdots & \vdots & \vdots & \vdots & \ddots
                                \end{bmatrix}
\end{align}
More compactly,  the truncated $\mathbf{\hat{Q}}(\eta)$ can be written as:
\begin{equation}
   \mathbf{\hat{Q}(\eta)} =  \left( \sum_{i=0}^{k-1} \ket{i} \bra{i} \otimes \mathbf{A}_{i+1} + \ket{i} \bra{i+1} \otimes \frac{\mathbf{B}_{i+1}}{\eta} + \ket{i+1} \bra{i} \otimes \frac{\mathbf{B}_{i+1}^{\dagger}}{\eta} \right) ,
\end{equation}
with $\mathbf{B}_{k} = \mathbf{0}$.

To understand how this transformation helps symmetrize and solve the original ode, consider truncation at $k=2$.\\
Here, the original ode is:
\begin{align}
\label{eq_second_order_carlemann}
\frac{d}{dt}
\begin{bmatrix}
        \mathbf{w}_1\\
      \mathbf{w}_2\\
\end{bmatrix} = -i\begin{bmatrix}
       {\mathbf{A_1}} & \mathbf{B_1} \\
       \mathbf{0} & \mathbf{A_2}
    \end{bmatrix}
    \begin{bmatrix}
       \mathbf{w}_1\\
      \mathbf{w}_2\\  
    \end{bmatrix} 
\end{align}
The symmetrized system of ode is defined as:
\begin{align}
\label{eq:symmetrized_2dof_ode}
\frac{d}{dt}
\begin{bmatrix}
        \frac{\mathbf{w_1}}{\eta}\\
       \mathbf{w_2}\\
\end{bmatrix} = -i\begin{bmatrix}
       \mathbf{A_1} & \frac{\mathbf{B_1}}{\eta} \\
       \frac{\mathbf{B_1}^{\dagger}}{\eta} & \mathbf{A_2} 
    \end{bmatrix}
    \begin{bmatrix}
        \frac{\mathbf{w_1}}{\eta}\\
       \mathbf{w_2}\\
\end{bmatrix}
\end{align}

The system of odes now can be written as:
\begin{align}
    \label{eq:symmetrized_2dof_ode_1}
   \frac{1}{\eta}  \frac{d}{dt} \mathbf{w_1}= -i\frac{1}{\eta}(\mathbf{A_1} \mathbf{w_1} + \mathbf{B_1} \mathbf{w_2})
\end{align}
\begin{align}
    \label{eq:symmetrized_2dof_ode_2}
    \frac{d}{dt} \mathbf{w_2} = \frac{\mathbf{B_1^{\dagger} }\mathbf{w_1}}{\eta^2} + \mathbf{A_2} \mathbf{w_2}
\end{align}
As $\eta \rightarrow \infty$, Eq \eqref{eq:symmetrized_2dof_ode_1} converges to :

\begin{align}
\lim_{\eta \rightarrow \infty}   \frac{1}{\eta}  \frac{d}{dt} \mathbf{w_1}= -i\frac{1}{\eta}(\mathbf{A_1} \mathbf{w_1} + \mathbf{B_1} \mathbf{w_2}) \equiv \frac{d}{dt} \mathbf{w_1}= -i(\mathbf{A_1} \mathbf{w_1} + \mathbf{B_1} \mathbf{w_2})
\end{align}

and Eq \eqref{eq:symmetrized_2dof_ode_2} converges to:

\begin{align}
\lim_{\eta \rightarrow \infty}   \frac{d}{dt} \mathbf{w_2} = \frac{\mathbf{B_1^{\dagger} }\mathbf{w_1}}{\eta^2} + \mathbf{A_2} \mathbf{w_2} \equiv \frac{d}{dt} \mathbf{w_2} = \mathbf{A_2} \mathbf{w_2} 
\end{align}
Now, we formalize this intuition  and apply perturbation theory to derive sufficient value for $\eta$ such that $\| \mathbf{D}\ket{\mathbf{\hat{p}}(t)} -  \ket{\mathbf{p}(t)}\| \leq \epsilon$, $\forall \epsilon > 0$. 
\begin{lemma}
\label{lemma: main_eta_bound}
Let $\ket{\mathbf{p}(t)} = \sum_i \ket{i-1}\otimes \ket{\mathbf{w}_i}$, where $\ket{\mathbf{w}_i(t)}  = \bm{\ket{0}}^{\otimes k-i}\bm{\ket{\psi(t)}}^{\otimes i} $, let $\beta = \braket{\psi(0)|\psi(0)}$ and let $\ket{ \dot{\mathbf{p}}(t)} = -i\mathbf{Q}\ket{\mathbf{p}(t)}$. Here, $\mathbf{Q} =  \left( \sum_{i=0}^{k-1} \ket{i} \bra{i} \otimes \mathbf{A}_{i+1} + \ket{i} \bra{i+1} \otimes {\mathbf{B}_{i+1}} \right) $ is a non-Hermitian matrix containing Hermitian block matrices $\mathbf{A}_i$ on the diagonal. Let $\ket{\hat{\mathbf{p}}(t)} = \sum_i \ket{i-1}\otimes \ket{\hat{\mathbf{w}}_i}$, where $\ket{\hat{\mathbf{w}}_i} = \frac{1}{\eta^{k-i}}\ket{{\mathbf{w}}_i}$ and let $\ket{ \dot{ \hat{\mathbf{p}}}(t)} = -i\mathbf{\hat{Q}}(\eta)\ket{\hat{\mathbf{p}}(t)}$, where 
\[\mathbf{\hat{Q}(\eta)} =  \left( \sum_{i=0}^{k-1} \ket{i} \bra{i} \otimes \mathbf{A}_{i+1} + \ket{i} \bra{i+1} \otimes \frac{\mathbf{B}_{i+1}}{\eta} + \ket{i} \bra{i-1} \otimes \frac{\mathbf{B}_{i}^{\dagger}}{\eta} \right) \]
Then, there for all values of $\eta \geq \sqrt{\|\hat{\mathbf{H}} \| \left(1 + \beta\frac{1-\beta^k}{(1-\beta)\epsilon} \right)t_s}$, $\| \mathbf{D}\ket{\mathbf{\hat{p}}(t)} -  \ket{\mathbf{p}(t)}\| \leq \epsilon$, $\forall \epsilon > 0, t \in [0,t_s]$. Here, $\mathbf{\hat{H}} = \left( \sum_{i=0}^{k-1} \ket{i}  \ket{i} \bra{i+1} \otimes {\mathbf{B}_{i+1}} \right)  $, and $\mathbf{D} = \operatorname{diag}(\eta^{k-1} \mathbf{I}, \eta^{k-2} \mathbf{I}, \ldots, \mathbf{I}) $.

\end{lemma}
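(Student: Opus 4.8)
The plan is to set up the same perturbative scheme used in Theorem~\ref{thm:pertThm}, but now applied to the symmetrized Carleman system rather than to a mass--spring system. First I would observe that $\mathbf{\hat Q}(\eta)$ decomposes as $\mathbf{\hat Q}(\eta) = \mathbf{Q}_0 + \tfrac{1}{\eta}\mathbf{P}$, where $\mathbf{Q}_0 = \sum_{i=0}^{k-1}\ket{i}\bra{i}\otimes \mathbf{A}_{i+1}$ is block-diagonal Hermitian, and $\mathbf{P} = \sum_i \ket{i}\bra{i+1}\otimes \mathbf{B}_{i+1} + \ket{i+1}\bra{i}\otimes\mathbf{B}_{i+1}^\dagger$ collects the off-diagonal coupling; note $\|\mathbf{P}\| = \mathcal{O}(\|\hat{\mathbf{H}}\|)$ since $\mathbf{P} = \hat{\mathbf{H}} + \hat{\mathbf{H}}^\dagger$ with $\hat{\mathbf{H}}$ as defined in the statement. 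The key algebraic fact to check is the conjugation identity $\mathbf{D}\,\mathbf{\hat Q}(\eta)\,\mathbf{D}^{-1} = \mathbf{Q}_0 + \hat{\mathbf{H}}$ (the upper-triangular part is preserved while the lower-triangular part $\tfrac{1}{\eta}\mathbf{B}^\dagger$ gets multiplied by $\eta^{-2}\cdot(\text{ratio of }\mathbf{D}\text{ entries})$, hence scaled down by $\eta^{-2}$). Therefore $\ket{\tilde{\mathbf{p}}(t)} := \mathbf{D}\ket{\hat{\mathbf{p}}(t)}$ satisfies $\tfrac{d}{dt}\ket{\tilde{\mathbf{p}}} = -i(\mathbf{Q}_0 + \hat{\mathbf{H}} + \tfrac{1}{\eta^2}\mathbf{R})\ket{\tilde{\mathbf{p}}}$ for some bounded residual $\mathbf{R}$ with $\|\mathbf{R}\| = \mathcal{O}(\|\hat{\mathbf{H}}\|)$, while $\ket{\mathbf{p}(t)}$ satisfies $\tfrac{d}{dt}\ket{\mathbf{p}} = -i(\mathbf{Q}_0 + \hat{\mathbf{H}})\ket{\mathbf{p}}$ exactly (this is just the truncated Carleman equation, since $\mathbf{Q} = \mathbf{Q}_0 + \hat{\mathbf{H}}$). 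Also $\mathbf{D}\ket{\hat{\mathbf{p}}(0)} = \ket{\mathbf{p}(0)}$ by construction of $\ket{\hat{\mathbf{w}}_i(0)} = \eta^{-(k-i)}\ket{\mathbf{w}_i(0)}$.

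Next I would treat $\tfrac{1}{\eta^2}\mathbf{R}$ as a perturbation exactly as in Theorem~\ref{thm:pertThm}: expand $\ket{\tilde{\mathbf{p}}(t)} = \sum_{j\ge 0} \eta^{-2j}\ket{\mathbf{p}_j(t)}$ with $\ket{\mathbf{p}_0} = \ket{\mathbf{p}}$ and zero initial conditions for $j\ge 1$, obtaining the Duhamel recursion $\ket{\mathbf{p}_{j}(t)} = -i\int_0^t e^{-i(\mathbf{Q}_0+\hat{\mathbf{H}})(t-s)}\mathbf{R}\ket{\mathbf{p}_{j-1}(s)}\,ds$. The group $e^{-i(\mathbf{Q}_0+\hat{\mathbf{H}})t}$ is generally not unitary, so here I would not bound $\|e^{-i(\mathbf{Q}_0+\hat{\mathbf{H}})t}\|$ by $1$; instead I would directly bound $\|\mathbf{p}_j(t)\|$ by noting $\|e^{-i(\mathbf{Q}_0+\hat{\mathbf{H}})(t-s)}\ket{\mathbf{p}_{j-1}(s)}\|$ feeds back into a norm of a truncated-Carleman-type evolution. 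The cleanest route is to bound $\|\mathbf{p}_j(t)\| \le (\|\hat{\mathbf{H}}\| t)^j \sup_{s\le t}\|\mathbf{p}(s)\| / j!$ or, more crudely, geometrically: each application of $\int_0^t e^{\mathbf{A}_k(\cdot)}\mathbf{R}(\cdot)ds$ contributes a factor $\|\mathbf{R}\| t \cdot \|e^{\mu(\mathbf{A}_k)t}\|$, and since $\mu(\mathbf{Q}_0+\hat{\mathbf{H}})$ for the truncated system can be taken $\le 0$ under the norm-nonincreasing hypothesis (this is essentially the estimate already established in the proof of Theorem~\ref{Thm: Carlemann_Truncation_Order} via Gershgorin), we get $\|\mathbf{p}_j(t)\| \le (\|\hat{\mathbf{H}}\| t_s)^j \sup_{s}\|\mathbf{p}(s)\|$. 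Finally $\sup_s\|\mathbf{p}(s)\|^2 = \sup_s\sum_{i=1}^k \|\psi(s)\|^{2i} \le \sum_{i=1}^k \beta^{i} = \beta\tfrac{1-\beta^k}{1-\beta}$ using Lemma~\ref{lemma: norm_nonlinear_Schr\"{o}dinger_eq}.

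Summing the tail $\sum_{j\ge 1}\eta^{-2j}\|\mathbf{p}_j(t)\|$ as a geometric series and requiring it to be $\le \epsilon$ gives, after the same algebra as in Theorem~\ref{thm:pertThm}, the condition $\eta^2 \ge \|\hat{\mathbf{H}}\|\bigl(1 + \beta\tfrac{1-\beta^k}{(1-\beta)\epsilon}\bigr)t_s$, i.e. the stated bound on $\eta$. I expect the main obstacle to be the step where I control $\|e^{-i(\mathbf{Q}_0+\hat{\mathbf{H}})t}\|$ (or the nested Duhamel integrals) without the luxury of unitarity: the matrix $\mathbf{Q}_0+\hat{\mathbf{H}}$ is exactly the non-Hermitian truncated Carleman generator, so I would need to reuse the logarithmic-norm / Gershgorin bound from the proof of Theorem~\ref{Thm: Carlemann_Truncation_Order} to argue $\|e^{-i\mathbf{Q}t}\| \le e^{2kd\max_{ij}|\mathbf{H}_2(i,j)| t}$, and then either absorb this (small, since $t\le t_s = \mathcal{O}(1/d\max|\mathbf{H}_2(i,j)|)$) factor into constants or carry it explicitly. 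A secondary subtlety is verifying the conjugation identity $\mathbf{D}\mathbf{\hat Q}(\eta)\mathbf{D}^{-1} = \mathbf{Q} + \mathcal{O}(\eta^{-2})$ block-by-block, which is routine but must be done carefully to get the $\eta^{-2}$ (not $\eta^{-1}$) scaling that makes the final bound come out with $\eta^2$.
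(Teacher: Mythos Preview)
Your approach is essentially the paper's: conjugate by $\mathbf{D}$ to show that $\ket{\tilde{\mathbf{p}}}:=\mathbf{D}\ket{\hat{\mathbf{p}}}$ obeys $\tfrac{d}{dt}\ket{\tilde{\mathbf{p}}} = (-i\mathbf{Q} + \eta^{-2}\mathbf{R})\ket{\tilde{\mathbf{p}}}$ with the same initial data as $\ket{\mathbf{p}}$ (the paper derives this by expanding $\ket{\dot{\hat{\mathbf{p}}}}$ component-wise and factoring out $\eta^{k-j}$ rather than via the conjugation identity, but the content is identical), then run a Dyson-series perturbation bound, packaged as the separate \autoref{eta_suff_val_lemma}, together with the computation $\|\mathbf{p}(0)\|^2=\sum_{i=1}^k\beta^i=\beta\tfrac{1-\beta^k}{1-\beta}$.

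The one place you diverge is in worrying about the non-unitarity of $e^{-i\mathbf{Q}t}$. The paper's auxiliary lemma simply \emph{hypothesizes} that the unperturbed generator is block-diagonal Hermitian, so that $\|e^{-i\mathbf{H}t}\|=1$ and the Duhamel recursion gives $\|\hat{\mathbf{p}}_j(t)\|\le (\|\hat{\mathbf{H}}\|t)^j\|\mathbf{p}(0)\|$ immediately; it then invokes this lemma with $\mathbf{H}=\mathbf{Q}$ (which is \emph{not} block-diagonal) without comment. Your concern is therefore well placed. That said, your proposed fix---that $\mu(-i\mathbf{Q})\le 0$ under the norm-nonincreasing hypothesis---is not right as stated: that hypothesis bounds the nonlinear trajectory $\|\psi(t)\|$, not the linear semigroup norm, and the Gershgorin estimate in the proof of \autoref{Thm: Carlemann_Truncation_Order} in fact gives a strictly positive $\mu$. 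Your fallback (absorb the resulting $e^{\mathcal{O}(1)}$ growth factor since $t\le t_s=\mathcal{O}(1/(d\max_{ij}|\mathbf{H}_2(i,j)|))$) is the honest way through.
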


\begin{proof}
Let $\mathbf{B}_0 =\mathbf{B}_k =  \mathbf{0}$. Now, let us write the equation of dynamics of $\ket{ \dot{ \hat{\mathbf{p}}}(t)} $. 
    \begin{align}
        \ket{ \dot{ \hat{\mathbf{p}}}(t)} &= \sum_j \ket{j-1}\otimes \ket{ \dot{\hat{\mathbf{w}}}_j } = -i\sum_j \ket{j-1} \otimes \left( \mathbf{A}_j \ket{\hat{\mathbf{w}}_j} + \frac{1}{\eta}\mathbf{B}_j \ket{\hat{\mathbf{w}}_{j+1}} + \frac{1}{\eta}\mathbf{B}_{j-1}^{\dagger} \ket{\hat{\mathbf{w}}_{j-1}} \right)  \nonumber \\
        &= \sum_j \ket{j-1}\otimes \frac{\ket{\dot{{\mathbf{w}}}_j}}{\eta^{k-j}} =  -i\sum_j \ket{j-1} \otimes \left( \frac{\mathbf{A}_j \ket{\mathbf{w}_j}}{\eta^{k-j}}+ \frac{1}{\eta}\frac{\mathbf{B}_j \ket{\mathbf{w}_{j+1}}}{\eta^{k-j-1}} + \frac{1}{\eta}\mathbf{B}_{j-1}^{\dagger} \frac{\ket{ {\mathbf{w}}_{j-1} }}{ \eta^{k-j+1}} \right) \nonumber \\ 
        &= \sum_j \frac{1}{\eta^{k-j}}\ket{j-1}\otimes {\ket{\dot{{\mathbf{w}}}_j}} =  -i\sum_j \ket{j-1} \otimes \left( \frac{\mathbf{A}_j \ket{\mathbf{w}_j}}{\eta^{k-j}}+ \frac{1}{\eta}\frac{\mathbf{B}_j \ket{\mathbf{w}_{j+1}}}{\eta^{k-j-1}} + \frac{1}{\eta}\mathbf{B}_{j-1}^{\dagger} \frac{\ket{ {\mathbf{w}}_{j-1} }}{ \eta^{k-j+1}} \right) \nonumber \\ 
        &= \sum_j \frac{1}{\eta^{k-j}}\ket{j-1}\otimes {\ket{\dot{{\mathbf{w}}}_j}} =  -i\sum_j \frac{1}{\eta^{k-j}} \ket{j-1} \otimes \left( {\mathbf{A}_j \ket{\mathbf{w}_j}}+ {\mathbf{B}_j \ket{\mathbf{w}_{j+1}}} + \frac{1}{\eta^2}\mathbf{B}_{j-1}^{\dagger} {\ket{ {\mathbf{w}}_{j-1} }} \right) \\ \nonumber
    \end{align}
After factoring out $\eta^{k-j}$, we can see that the time evolution of $\ket{{{\mathbf{w}}}_j(t)}$ obeys:
\begin{align}
\label{eq: symmetrized_ode_after_simplification}
     \sum_j \ket{j-1}\otimes {\ket{\dot{{\mathbf{w}}}_j}} &=  -i\sum_j  \ket{j-1} \otimes \left( {\mathbf{A}_j \ket{\mathbf{w}_j}}+ {\mathbf{B}_j \ket{\mathbf{w}_{j+1}}} + \frac{1}{\eta^2}\mathbf{B}_{j-1}^{\dagger} {\ket{ {\mathbf{w}}_{j-1} }} \right) \\ \nonumber 
     \sum_j \ket{j-1}\otimes {\ket{\dot{{\mathbf{w}}}_j}} &= -i\sum_l \left( \ket{l}\bra{l} \otimes {\mathbf{A}_{l+1} }+ \ket{l}\bra{l+1} \otimes{\mathbf{B}_{l+1} } + \frac{1}{\eta^2}\ket{l}\bra{l-1} \otimes  \mathbf{B}_l^{\dagger} \right) \sum_j \ket{j-1}\otimes {\ket{{{\mathbf{w}}}_j}}\\ \nonumber 
\end{align}
Or, more compactly, Eq.~\eqref{eq: symmetrized_ode_after_simplification} can be written as:
\begin{equation}
    \ket{ \dot{\widetilde{\mathbf{p}}(t)}} = (-i\mathbf{H} + \frac{1}{\eta^2}\hat{\mathbf{H}} )\ket{{\widetilde{\mathbf{p}}(t)}} 
\end{equation}
where, $\ket{{\widetilde{\mathbf{p}}(t)}} =  \sum_j \ket{j-1}\otimes {\ket{\dot{{\mathbf{w}}}_j}}$, $\mathbf{H} = \sum_l \left( \ket{l}\bra{l} \otimes {\mathbf{A}_{l+1} }+ \ket{l}\bra{l+1} \otimes{\mathbf{B}_{l+1} } \right)$, and $\hat{\mathbf{H}} = \sum_l \ket{l}\bra{l-1} \otimes  \mathbf{B}_l^{\dagger} $. 
Applying \autoref{eta_suff_val_lemma}, for any $ \eta \geq \sqrt{\|\hat{\mathbf{H}} \| \left(1 + \frac{ \|{\mathbf{p}}(0) \|}{\epsilon} \right)t_s}$, $\| \mathbf{p}(t) - \hat{\mathbf{p}}(t) \| \leq \epsilon$, $\forall t \in [0,t_s]$. Since $\mathbf{D}\ket{\hat{\mathbf{p}}(t)} = \ket{{\widetilde{\mathbf{p}}(t)}}$, this implies, $\| \mathbf{D}\ket{\mathbf{\hat{p}}(t)} -  \ket{\mathbf{p}(t)}\| \leq \epsilon \text{,}\forall \eta \geq \sqrt{\|\hat{\mathbf{H}} \| \left(1 + \frac{ \|{\mathbf{p}}(0) \|}{\epsilon} \right)t_s}$. Recall that $\ket{\mathbf{p}(0)}  = \sum_i \ket{i-1} \otimes \ket{\mathbf{w}_i(0)} $, where $\ket{\mathbf{w}_i(0)}  = \bm{\ket{0}}^{\otimes k-i}\bm{\ket{\psi(0)}}^{\otimes i} $. Hence, $\braket{\mathbf{p}(0)|\mathbf{p}(0)} = \sum_i \braket{\psi(0)|\psi(0)}^i = k$, if $\braket{\psi(0)|\psi(0)} = 1$. If $\braket{\psi(0)|\psi(0)} = \beta$, then $\braket{\mathbf{p}(0)|\mathbf{p}(0)} = \beta\frac{1-\beta^k}{1 - \beta} $. Hence, $\eta \geq \sqrt{\|\hat{\mathbf{H}} \| \left(1 + \beta\frac{1-\beta^k}{(1-\beta)\epsilon} \right)t_s}$
\end{proof}

\begin{lemma}
\label{eta_suff_val_lemma}
    Let $\mathbf{H}$ be a block diagonal matrix consisting of Hermitian sub blocks, $\mathbf{H}_{ii} = A_i$, and let $\mathbf{\hat{H}}$ be a sparse, non-symmetric matrix. Let $ \dot{\hat{\mathbf{p}}}(t) = (-i\mathbf{H} + \frac{1}{\eta^2} \hat{\mathbf{H}})\hat{\mathbf{p}}(t)$ and $\dot{{\mathbf{p}}}(t) = -i\mathbf{H}{\mathbf{p}}(t)$ be a set of two differential equations. For any $ \eta \geq \sqrt{\|\hat{\mathbf{H}} \| \left(1 + \frac{ \|{\mathbf{p}}(0) \|}{\epsilon} \right)t_s}$, $\| \mathbf{p}(t) - \hat{\mathbf{p}}(t) \| \leq \epsilon$, $\forall t \in [0,t_s]$. 
\end{lemma}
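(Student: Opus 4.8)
This is a perturbation estimate in the small parameter $1/\eta^2$, in the same spirit as the proof of Theorem~\ref{thm:pertThm}. Note first that the conclusion forces the two evolutions to share initial data: taking $t=0$ gives $\|\mathbf p(0)-\hat{\mathbf p}(0)\|\le\epsilon$ for every $\epsilon>0$, so $\hat{\mathbf p}(0)=\mathbf p(0)$, which is indeed how the lemma is used in Lemma~\ref{lemma: main_eta_bound}; I will take this as the (implicit) initial condition. The two structural facts I will lean on are that $\mathbf H$ is Hermitian, hence $e^{-i\mathbf H t}$ is unitary and $\|\mathbf p(t)\|=\|\mathbf p(0)\|$ for all $t$, and that $\|\hat{\mathbf H}\|$ is finite.

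The first step is to expand $\hat{\mathbf p}(t)$ in a Dyson/power series in $\eta^{-2}$: write $\hat{\mathbf p}(t)=\sum_{j\ge 0}\eta^{-2j}\mathbf p_j(t)$, where $\mathbf p_0=\mathbf p$ solves $\dot{\mathbf p}_0=-i\mathbf H\mathbf p_0$ with $\mathbf p_0(0)=\mathbf p(0)$, and for $j\ge 1$ the term $\mathbf p_j$ solves $\dot{\mathbf p}_j=-i\mathbf H\mathbf p_j+\hat{\mathbf H}\mathbf p_{j-1}$ with $\mathbf p_j(0)=\mathbf 0$. By Duhamel's principle (variation of parameters, exactly as in Lemma~\ref{Lemma Upper_Bound_State_Norm_Inhomo}), $\mathbf p_j(t)=\int_0^t e^{-i\mathbf H(t-s)}\hat{\mathbf H}\mathbf p_{j-1}(s)\,ds$; combining $\|e^{-i\mathbf H(t-s)}\|=1$ with $\|\mathbf p_0(s)\|=\|\mathbf p(0)\|$ and inducting on $j$ yields $\|\mathbf p_j(t)\|\le \|\hat{\mathbf H}\|^{j}t^{j}\,\|\mathbf p(0)\|$ (a sharper $t^j/j!$ is available, but the geometric bound is what produces the stated form of the threshold). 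This bound makes the series absolutely and uniformly convergent on $[0,t_s]$ whenever $\eta^2>\|\hat{\mathbf H}\|t_s$, which legitimizes term-by-term differentiation, so the series solves $\dot{\hat{\mathbf p}}=(-i\mathbf H+\eta^{-2}\hat{\mathbf H})\hat{\mathbf p}$ with the right initial condition and, by uniqueness for this linear ODE, equals $\hat{\mathbf p}$.

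The second step is to bound the error and invert the resulting inequality. We get $\|\hat{\mathbf p}(t)-\mathbf p(t)\|=\bigl\|\sum_{j\ge 1}\eta^{-2j}\mathbf p_j(t)\bigr\|\le \|\mathbf p(0)\|\sum_{j\ge 1}r^{j}=\|\mathbf p(0)\|\,r/(1-r)$ with $r:=\|\hat{\mathbf H}\|t/\eta^2<1$. Imposing $\|\mathbf p(0)\|\,r/(1-r)\le\epsilon$ and solving gives $r\le \epsilon/(\epsilon+\|\mathbf p(0)\|)$, i.e.\ $\eta^2\ge \|\hat{\mathbf H}\|\,t\,(1+\|\mathbf p(0)\|/\epsilon)$; since $t\le t_s$, the choice $\eta\ge\sqrt{\|\hat{\mathbf H}\|\,t_s\,(1+\|\mathbf p(0)\|/\epsilon)}$ suffices, which is exactly the claim. (Equivalently one can bypass the series: $\delta(t):=\hat{\mathbf p}(t)-\mathbf p(t)$ satisfies $\delta(t)=\eta^{-2}\int_0^t e^{-i\mathbf H(t-s)}\hat{\mathbf H}\hat{\mathbf p}(s)\,ds$, and using $\|\hat{\mathbf p}(s)\|\le\|\mathbf p(0)\|+\|\delta(s)\|$ a Grönwall argument gives $\|\delta(t)\|\le\|\mathbf p(0)\|(e^{r}-1)\le\|\mathbf p(0)\|\,r/(1-r)$, leading to the same threshold.)

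The computation is routine; the only points requiring a little care are (i) verifying the Dyson series converges and hence genuinely represents $\hat{\mathbf p}$ — handled by the geometric (or factorial) tail bound above — and (ii) performing the algebra that turns $\|\mathbf p(0)\|\,r/(1-r)\le\epsilon$ into a clean lower bound on $\eta$, precisely as in the closing step of Theorem~\ref{thm:pertThm}. I do not anticipate any genuine obstacle.
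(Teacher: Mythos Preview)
Your proposal is correct and follows essentially the same approach as the paper: a Dyson/perturbation expansion of $\hat{\mathbf p}$ in powers of $\eta^{-2}$, Duhamel's formula together with unitarity of $e^{-i\mathbf H t}$ to obtain the inductive bound $\|\mathbf p_j(t)\|\le\|\hat{\mathbf H}\|^{j}t^{j}\|\mathbf p(0)\|$, summation of the geometric series, and inversion of $\|\mathbf p(0)\|\,r/(1-r)\le\epsilon$. If anything, you are slightly more careful than the paper in explicitly addressing convergence of the series and the implicit initial condition $\hat{\mathbf p}(0)=\mathbf p(0)$.
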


\begin{proof}
Consider
\begin{equation}
\label{perturbed_ode}
    \dot{\hat{\mathbf{p}}} = (-i\mathbf{H} + \lambda \hat{\mathbf{H}})\hat{\mathbf{p}}
\end{equation}
  The solution to Eq.~\eqref{perturbed_ode} can be written as:
  \begin{equation}
      \hat{\mathbf{p}}(t) =  \hat{\mathbf{p}}_0(t) + \lambda \hat{\mathbf{p}}_1(t) + \lambda^2 \hat{\mathbf{p}}_2(t) + ..
  \end{equation}
  and, 
   \begin{equation}
     \| \hat{\mathbf{p}}(t) - \hat{\mathbf{p}}_0(t) \| \leq \lambda \| \hat{\mathbf{p}}_1(t) \| +  \lambda^2  \| \hat{\mathbf{p}}_2(t)\| + ..
  \end{equation}
  \begin{equation}
   \frac{d}{dt} \hat{\mathbf{p}}_0(t) = -i \mathbf{H} \hat{\mathbf{p}}_0(t)
  \end{equation}

\begin{equation}
   \frac{d}{dt} \hat{\mathbf{p}}_1(t) = -i \mathbf{H} \hat{\mathbf{p}}_1(t) + \hat{\mathbf{H}} \hat{\mathbf{p}}_0(t)
\end{equation}

\begin{equation}
   \frac{d}{dt} \hat{\mathbf{p}}_k(t) = -i \mathbf{H} \hat{\mathbf{p}}_k(t) + \hat{\mathbf{H}} \hat{\mathbf{p}}_{k-1}(t)
\end{equation}

\begin{equation}
    \| \hat{\mathbf{p}}_0(t) \| \leq \| e^{-i\mathbf{H}t}\hat{\mathbf{p}}_0(0) \| \leq \| \hat{\mathbf{p}}_0(0)\|
\end{equation}
  Now, we upper bound $\| \hat{\mathbf{p}}_1(t) \|$
  \begin{equation}
    \hat{\mathbf{p}}_1(t) = e^{-i \mathbf{H}t}\hat{\mathbf{p}}_1(0) + e^{-i \mathbf{H}t} \int_0^T   e^{i \mathbf{H}t} \hat{\mathbf{H}} e^{-i \mathbf{H}t} \hat{\mathbf{p}}_0(0)dt
  \end{equation}
  \begin{equation}
      \|  \hat{\mathbf{p}}_1(t)\| \leq \| \hat{\mathbf{p}}_1(0)  \| + \| \hat{\mathbf{H}} \hat{\mathbf{p}}_0(0)\|t
  \end{equation}
  Which implies,
   \begin{equation}
      \|  \hat{\mathbf{p}}_1(t)\| \leq \| \hat{\mathbf{H}}\| \|\hat{\mathbf{p}}_0(0)\|t
  \end{equation}
 Similarly, it can be shown that:
  \begin{equation}
      \|  \hat{\mathbf{p}}_2(t)\| \leq \| \hat{\mathbf{H}}\|^2 \| \hat{\mathbf{p}}_0(0)\|t^2
  \end{equation}
and 
\begin{equation}
      \|  \hat{\mathbf{p}}_k(t)\| \leq  \| \hat{\mathbf{H}}\|^k \| \hat{\mathbf{p}}_0(0)\|t^k
\end{equation}

Hence, 
\begin{align}
     \| \hat{\mathbf{p}}(t) - \hat{\mathbf{p}}_0(t) \| \leq \| \hat{\mathbf{p}}_0(0) \| (\lambda \| \hat{\mathbf{H}} \|t +  \lambda^2 \| \hat{\mathbf{H}}\|^2 t^2 + .. ) \leq \epsilon
\end{align}
Hence, 
\begin{equation}
    \frac{1}{\lambda  } \geq  \| \hat{\mathbf{H}} \|(1 + \frac{\|\hat{\mathbf{p}}_0(0)\| }{\epsilon})t
\end{equation}
Hence, 
\begin{equation}
    \eta \geq \sqrt{\|\hat{\mathbf{H}} \|\left(1 + \frac{\|\hat{\mathbf{p}}_0(0)\|}{\epsilon}\right)t}
\end{equation}
which implies:
\begin{equation}
    \eta \geq \sqrt{\|\hat{\mathbf{H}} \|\left(1 + \frac{\|{\mathbf{p}}(0)\|}{\epsilon}\right)t}
\end{equation}

\end{proof}

\begin{lemma}
\label{upper_bound_H_hat}
    Upper bound for $\|\hat{\mathbf{H}}  \|$ for a given Carlemann truncation order $k$.
\end{lemma}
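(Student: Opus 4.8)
The plan is to exploit the rigid block structure of $\hat{\mathbf{H}}$. Recall from \autoref{lemma: main_eta_bound} that $\hat{\mathbf{H}}$ has nonzero blocks only on a single off-diagonal of the block decomposition indexed by $\ket{0},\dots,\ket{k-1}$, with those blocks being the Carleman matrices $\mathbf{B}_l^{\dagger}$ (equivalently $\mathbf{B}_l$, since $\|\mathbf{B}_l^{\dagger}\|=\|\mathbf{B}_l\|$), and with $\mathbf{B}_0=\mathbf{B}_k=\mathbf{0}$. First I would record the elementary fact that for a matrix supported on a single block sub- or super-diagonal, $\hat{\mathbf{H}}^{\dagger}\hat{\mathbf{H}}$ is block-diagonal with diagonal blocks $\mathbf{B}_l\mathbf{B}_l^{\dagger}$ (up to relabelling the block index), so that $\|\hat{\mathbf{H}}\|^2=\|\hat{\mathbf{H}}^{\dagger}\hat{\mathbf{H}}\|=\max_l\|\mathbf{B}_l\|^2$. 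This reduces the problem to bounding $\|\mathbf{B}_l\|$ for each $l\in\{1,\dots,k-1\}$.

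Next I would bound a single block. By construction each $\mathbf{B}_i$ is obtained from $\mathbf{H}_2\oplus^{\,i-1}\mathbf{H}_2$ by composing with the isometry $\ket{0}\otimes(\cdot)$ and tensoring with orthogonal projectors of the form $(\ket{0}\!\bra{0})^{\otimes(k-i)}$; all of these auxiliary operations have operator norm at most $1$, so $\|\mathbf{B}_i\|\le\|\mathbf{H}_2\oplus^{\,i-1}\mathbf{H}_2\|$. By the definition of the padding operation, $\mathbf{H}_2\oplus^{\,i-1}\mathbf{H}_2$ is a sum of $i$ terms, each of the form $I^{\otimes a}\otimes\mathbf{H}_2\otimes I^{\otimes b}$ with $a+b=i-1$; since tensoring with identity operators leaves the spectral norm unchanged, each such term has norm $\|\mathbf{H}_2\|$, and the triangle inequality gives $\|\mathbf{B}_i\|\le i\,\|\mathbf{H}_2\|$.

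Combining the two steps, and using that the truncation zeroes out $\mathbf{B}_k$ so that the only nonzero blocks are $\mathbf{B}_1,\dots,\mathbf{B}_{k-1}$, I obtain
\[
\|\hat{\mathbf{H}}\|=\max_{1\le i\le k-1}\|\mathbf{B}_i\|\le (k-1)\,\|\mathbf{H}_2\|.
\]
The main point needing care is the index bookkeeping of the $\oplus^{(\cdot)}$ padding: one must verify that $\mathbf{B}_i$ really carries exactly $i$ tensor-padded copies of $\mathbf{H}_2$ (the Carleman indexing in the displayed matrices is off by one in a couple of places) and that every zero-padding map is a genuine contraction, so that none of them can inflate the norm. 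Everything past that point is just the triangle inequality, so I do not expect a substantive obstacle.
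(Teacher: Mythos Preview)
Your argument is correct, and it actually improves on the paper's bound. The paper does not exploit the single--off-diagonal block structure: it simply writes $\hat{\mathbf{H}}=\sum_{l}\hat{\mathbf{H}}_l$ with $\hat{\mathbf{H}}_l=i\,\ket{l}\bra{l-1}\otimes\mathbf{B}_l^{\dagger}$ and applies the triangle inequality, obtaining
\[
\|\hat{\mathbf{H}}\|\le\sum_{l=1}^{k}\|\mathbf{B}_l^{\dagger}\|\le\sum_{l=1}^{k} l\,\|\mathbf{H}_2\|=\frac{k(k+1)}{2}\,\|\mathbf{H}_2\|.
\]
You instead observe that $\hat{\mathbf{H}}^{\dagger}\hat{\mathbf{H}}=\sum_l\ket{l-1}\bra{l-1}\otimes\mathbf{B}_l\mathbf{B}_l^{\dagger}$ is block-diagonal, so $\|\hat{\mathbf{H}}\|=\max_l\|\mathbf{B}_l\|\le(k-1)\|\mathbf{H}_2\|$. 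Both routes share the same per-block estimate $\|\mathbf{B}_i\|\le i\,\|\mathbf{H}_2\|$ via the triangle inequality on the $i$ tensor-padded summands of $\oplus^{\,i-1}\mathbf{H}_2$; the difference is entirely in how the blocks are combined. Your bound is tighter by a factor of roughly $k/2$, and since $\|\hat{\mathbf{H}}\|$ feeds into the threshold for $\eta$ in \autoref{lemma: main_eta_bound} and hence into the query count $G$ in \autoref{thm: nonlinear_Schr\"{o}dinger_subnormalized_state}, this sharpening would propagate to a modest improvement of the logarithmic term in those complexity statements. Your caution about the $\oplus^{(\cdot)}$ index bookkeeping is warranted (the paper's displayed formulas for $\mathbf{B}_i$ are not internally consistent), but the intended count of $i$ summands in $\mathbf{B}_i$ is the one the paper itself uses in its proof, so you are on the same footing there.
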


\begin{proof}

Recall that:\\
\begin{equation}
    \hat{\mathbf{H}} = \hat{\mathbf{H}}_1 + \hat{\mathbf{H}}_2 +..\hat{\mathbf{H}}_k
\end{equation}

Here,

\begin{align}
\hat{\mathbf{H}}_1 = 
i \ket{\mathbf{1}} \bra{\mathbf{0}} \otimes \mathbf{B}_1^{\dagger} 
\end{align}

, 

\begin{align}
\hat{\mathbf{H}}_2 = 
i \ket{\mathbf{2}} \bra{\mathbf{1}} \otimes \mathbf{B}_2^{\dagger} 
\end{align}


and $\hat{\mathbf{H}}_k = i \ket{\mathbf{k}} \bra{\mathbf{k-1}} \otimes \mathbf{B}_k^{\dagger}$.\\

\begin{align}
\|\hat{\mathbf{H}}\| &\leq \|\hat{\mathbf{H}}_1\| + \|\hat{\mathbf{H}}_2\| + \cdots \\
&\leq  \left( \|\mathbf{B}_1^{\dagger} \| + \|\mathbf{B}_2^{\dagger}\| + \cdots \|\mathbf{B}_k^{\dagger}\| \right)\\
& \leq \left( \|\mathbf{H}_2 \| + 2\|\mathbf{H}_2\| + \cdots k\|\mathbf{H}_2\| \right) \\
& \leq  \frac{\|\mathbf{H}_2 \| k\left( k +1 \right)}{2}  \\
\end{align}

\end{proof}
\subsection{Block Encodings}

In this section, we describe the construction of block encodings for the symmetrized Carlemann operator:
\[
\mathbf{Q}(\eta) =  \sum_{i=0}^{k-1} \left( \ket{i} \bra{i} \otimes \mathbf{A}_{i+1} + \ket{i} \bra{i+1} \otimes \frac{\mathbf{B}_{i+1}}{\eta} + \ket{i+1} \bra{i} \otimes \frac{\mathbf{B}_{i+1}^\dagger}{\eta} \right)
\]
To construct the block encoding for $\mathbf{Q}(\eta)$, we first construct block encodings for the submatrices \(\mathbf{A}_{i+1}\), \(\mathbf{B}_{i+1}\), and \(\mathbf{B}_{i+1}^\dagger\) that compose \(\mathbf{Q}(\eta)\). These submatrices depend on the matrices \(\mathbf{H}_1\) and \(\mathbf{H}_2\). Thus, as a preliminary step, we create block encodings for \(\mathbf{H}_1\) and \(\mathbf{H}_2\). Finally, we combine the submatrices using tensor products with the appropriate projection matrices to obtain the desired block encoding of \(\mathbf{Q}(\eta)\).
 
\begin{lemma}[Block Encoding $\mathbf{H}_1 \in \mathbb{C}^{N \times N}$]
\label{lemma:block_encoding_H1}
Let $\epsilon'>0$, and let $\alpha = d\max_{jk}\{|\mathbf{H}_1(j,k)|,|\mathbf{H}_2(j,k)|\}$, where $d$ is the maximum number of non zero elements in any row or column of the matrices $\mathbf{H}_1$ and $\mathbf{H}_2$. Assume we have Oracle access to elements in the matrix $\mathbf{H}_1$ as per \autoref{Def:Oracles-nonlinear_Schr\"{o}dinger_eq}. Then, there exists a unitary matrix, $\mathbf{\mathcal{U}}(\mathbf{H}_1)$ such that:
\begin{equation}
  \Big\|\mathbf{H}_1 -  \alpha(\bra{0}^{\otimes a} \otimes \mathbf{I}_N )\mathbf{\mathcal{U}}(\mathbf{H}_1)(\ket{0}^{\otimes a} \otimes \mathbf{I}_N)  
   \Big\| \leq \epsilon'
\end{equation} 

\end{lemma}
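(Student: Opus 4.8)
The statement is the standard sparse-access block-encoding lemma, specialized so that the subnormalization is the common constant $\alpha$ that will later be shared by the encodings of $\mathbf{H}_1$ and $\mathbf{H}_2$. The plan is to assemble $\mathbf{\mathcal{U}}(\mathbf{H}_1)$ from the oracles $O_{H_1}$ and $O_{s_{H_1}}$ of \autoref{Def:Oracles-nonlinear_Schr\"{o}dinger_eq} following the construction underlying sparse Hamiltonian simulation. I would partition the $a$-qubit ancilla register into: $\lceil \log_2 d\rceil$ qubits to index the (at most $d$) nonzero entries of a row, an $r$-qubit register holding the binary string $\bar{H_1}(j,k)$, an $s$-qubit register with $s=\mathcal{O}(\log(d/\epsilon'))$ used for a coherent rotation, and one flag qubit. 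Note that since $\mathbf{H}_1$ is Hermitian its sparsity pattern is symmetric, so the single sparse oracle $O_{s_{H_1}}$ serves for both rows and columns.

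\textbf{Circuit and verification.} On input $\ket{0}^{\otimes a}\ket{j}$ (with $\ket{j}$ the system register) I would: (i) create the uniform superposition $\tfrac{1}{\sqrt d}\sum_{\ell=1}^{d}\ket{\ell}$ on the index register via a diffusion operator $D$; (ii) call $O_{s_{H_1}}$ to write the column index $s_{j\ell}$ of the $\ell$-th nonzero entry of row $j$ into a fresh register; (iii) call $O_{H_1}$ to load $\bar{H_1}(j,s_{j\ell})$ into the $r$-register; (iv) prepare the uniform superposition on the $s$ ancilla qubits and, using coherent arithmetic comparisons against the loaded value together with a controlled phase gate reading the sign/phase bits, rotate the flag qubit so that the amplitude of its $\ket{0}$ branch equals $\mathbf{H}_1(j,s_{j\ell})/(d\,M)$ up to error $\mathcal{O}(2^{-s})$, where $M:=\max_{jk}\{|\mathbf{H}_1(j,k)|,|\mathbf{H}_2(j,k)|\}$ is used as the per-entry normalizer; (v) uncompute the arithmetic, and the oracle calls in (iii) and (ii); (vi) swap the system register with the column-index register and apply $D^\dagger$. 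Postselecting all ancillas on $\ket{0}$, one checks directly that the $(j,k)$-entry of $(\bra{0}^{\otimes a}\otimes\mathbf{I}_N)\mathbf{\mathcal{U}}(\mathbf{H}_1)(\ket{0}^{\otimes a}\otimes\mathbf{I}_N)$ equals $\mathbf{H}_1(j,k)/(dM)+\mathcal{O}(2^{-s})$ when $\mathbf{H}_1(j,k)\neq 0$ and $0$ otherwise, so multiplying by $\alpha=dM$ recovers $\mathbf{H}_1$ up to entrywise error $\mathcal{O}(dM\,2^{-s})$.

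\textbf{Error bound and cost.} The approximated operator differs from $\mathbf{H}_1$ only in the at most $dN$ entries allowed by the sparsity pattern, each deviating by $\mathcal{O}(2^{-s})$ after rescaling, so by a Gershgorin / row-sum estimate the spectral norm of the difference is $\mathcal{O}(d\,2^{-s})$; choosing $s=\lceil\log_2(cd/\epsilon')\rceil$ for a suitable constant $c$ makes this at most $\epsilon'$. The ancilla count is $a=\mathcal{O}(\log d + r + \log(d/\epsilon'))=\mathcal{O}(\log(N/\epsilon'))$, with $\mathcal{O}(1)$ oracle queries and $\mathrm{poly}(\log(N/\epsilon'))$ elementary gates.

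\textbf{Main obstacle.} The construction itself is routine; the point requiring care is that I normalize each matrix entry by $M=\max_{jk}\{|\mathbf{H}_1(j,k)|,|\mathbf{H}_2(j,k)|\}$ rather than by the tighter $\max_{jk}|\mathbf{H}_1(j,k)|$. This deliberate over-normalization is what makes $\mathbf{H}_1$ and $\mathbf{H}_2$ share the single subnormalization $\alpha$, which is exactly what is needed when these encodings are later combined — via the index-shift and Hermitian-conjugate gadgets — into a block encoding of the symmetrized Carleman operator $\mathbf{Q}(\eta)$. The only other detail to verify is that the arithmetic in step (iv) can reproduce both the magnitude and the complex phase of $\mathbf{H}_1(j,k)$ with overhead $\mathcal{O}(\log(1/\epsilon'))$, which follows from standard reversible-arithmetic bounds.
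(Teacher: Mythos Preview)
Your proposal is correct and follows essentially the same approach as the paper: the paper's proof is simply a one-line citation of Lemma~48 in Gily\'en et al.\ (the QSVT sparse-access block-encoding lemma), with $a=\log(N)+3$, and your argument is a detailed unpacking of exactly that construction. Your observation that one must normalize by $M=\max_{jk}\{|\mathbf{H}_1(j,k)|,|\mathbf{H}_2(j,k)|\}$ rather than $\max_{jk}|\mathbf{H}_1(j,k)|$ so that $\mathbf{H}_1$ and $\mathbf{H}_2$ share the common subnormalization $\alpha$ is correct and is precisely the reason the statement is phrased with this joint $\alpha$.
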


\begin{proof}
    We can construct the Block encoding by applying Lemma 48 from \cite{gilyen2019quantum}. Here, $a = \log(N) + 3$.
\end{proof}

\begin{lemma}[Block Encoding $\mathbf{H}_2 \in \mathbb{C}^{N \times N^2}$] 
\label{Block_Encoding_H_2}
Let $\epsilon'>0$, and let $\alpha = d\max_{jk}\{|\mathbf{H}_1(j,k)|,|\mathbf{H}_2(j,k)|\}$, where $d$ is the maximum number of non zero elements in any row or column of the matrices $\mathbf{H}_1$ and $\mathbf{H}_2$. Assume we have Oracle access to elements in the matrix $\mathbf{H}_1$ as per \autoref{Def:Oracles-nonlinear_Schr\"{o}dinger_eq}. Let $\mathbf{H}_2' = \ket{0}^{\otimes n} \otimes \mathbf{H}_2$. Then, there exist a  unitary matrix, $\mathbf{\mathcal{U}}(\mathbf{H}_2')$ such that:
\begin{equation}
  \Big\|\mathbf{H}_2' -  \alpha (\bra{0}^{\otimes a} \otimes \mathbf{I}_N )\mathbf{\mathcal{U}}(\mathbf{H}_2')(\ket{0}^{\otimes a} \otimes \mathbf{I}_N)  
   \Big\| \leq \epsilon'
\end{equation}    
\end{lemma}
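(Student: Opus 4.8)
Looking at Lemma~\ref{Block_Encoding_H_2}, the statement asks for a unitary that block-encodes the padded rectangular matrix $\mathbf{H}_2' = \ket{0}^{\otimes n} \otimes \mathbf{H}_2$ with subnormalization $\alpha = d\max_{jk}\{|\mathbf{H}_1(j,k)|,|\mathbf{H}_2(j,k)|\}$ and additive error $\epsilon'$. This is essentially the rectangular analogue of Lemma~\ref{lemma:block_encoding_H1}, and the natural strategy is to reduce it to the standard sparse-access block-encoding construction (Lemma 48 of \cite{gilyen2019quantum}) applied to a \emph{square} matrix that contains $\mathbf{H}_2'$ as a submatrix.

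\medskip

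The plan is as follows. First I would observe that $\mathbf{H}_2 \in \mathbb{C}^{N \times N^2}$, so $\mathbf{H}_2' = \ket{0}^{\otimes n}\otimes \mathbf{H}_2 \in \mathbb{C}^{N^2 \times N^2}$ is already a square matrix on $2n$ qubits (writing $N = 2^n$): padding with the $\ket{0}^{\otimes n}$ factor on the left turns the $N\times N^2$ rectangle into an $N^2 \times N^2$ square whose only nonzero rows are those indexed by $\ket{0}^{\otimes n}\ket{j}$ for $j \in \{0,\dots,N-1\}$. Next I would note that the sparsity of $\mathbf{H}_2'$ is inherited from $\mathbf{H}_2$: each row has at most $s_r \le d$ nonzero entries and each column at most $s_c \le d$, and the zero rows (those with first register $\neq \ket{0}^{\otimes n}$) contribute nothing. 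The sparse-access oracles needed for the Gily\'en--Su--Low--Wiebe construction — an entry oracle returning $\bar H_2(j,k)$ and row/column index oracles $O_{r_{H_2'}}, O_{c_{H_2'}}$ — are exactly the ones assumed in \autoref{Def:Oracles-nonlinear_Schr\"{o}dinger_eq}, with the convention that $O_{H_2'}$ returns $0$ for first-register indices $j > N-1$, which correctly encodes the padding. I would then invoke Lemma 48 of \cite{gilyen2019quantum} (the standard $d$-sparse block encoding): given these oracles it produces a unitary $\mathbf{\mathcal{U}}(\mathbf{H}_2')$ that is a $(\sqrt{s_r s_c}\,\max_{jk}|\mathbf{H}_2'(j,k)|, a, \epsilon')$-block-encoding of $\mathbf{H}_2'$, where $a = \mathcal{O}(\log N)$ is the number of ancillas (including the $\mathcal{O}(\log(1/\epsilon'))$ qubits used to synthesize the rotation angles to precision $\epsilon'$). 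Since $\sqrt{s_r s_c} \le d$ and $\max_{jk}|\mathbf{H}_2'(j,k)| = \max_{jk}|\mathbf{H}_2(j,k)| \le \max_{jk}\{|\mathbf{H}_1(j,k)|,|\mathbf{H}_2(j,k)|\}$, the subnormalization is bounded by $\alpha = d\max_{jk}\{|\mathbf{H}_1(j,k)|,|\mathbf{H}_2(j,k)|\}$; I would absorb the (possibly smaller) actual subnormalization into $\alpha$ by a trivial rescaling so that the stated inequality $\|\mathbf{H}_2' - \alpha(\bra{0}^{\otimes a}\otimes \mathbf{I}_N)\mathbf{\mathcal{U}}(\mathbf{H}_2')(\ket{0}^{\otimes a}\otimes \mathbf{I}_N)\| \le \epsilon'$ holds verbatim (using the common $\alpha$ keeps this block encoding compatible with the $\mathbf{H}_1$ one when they are later combined inside $\mathbf{Q}(\eta)$).

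\medskip

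The main subtlety — and the step I would be most careful about — is bookkeeping the padding consistently: the notation $\mathbf{I}_N$ in the lemma statement must be interpreted on the correct register, and one must check that the rows of $\mathbf{H}_2'$ with leading register $\neq \ket{0}^{\otimes n}$ really do map to zero under the construction rather than to garbage. This is handled by the second branch of the $O_{H_2'}$ oracle in \autoref{Def:Oracles-nonlinear_Schr\"{o}dinger_eq} (returning $\ket{0}$ for $j > N-1$), which makes those rows genuinely empty; I would make this explicit. A secondary point is reconciling the dimensions: the lemma writes $\mathbf{I}_N$ but $\mathbf{H}_2'$ acts on $N^2$-dimensional space, so I would either restate the projectors as $\mathbf{I}_{N^2}$ or note that the column space of $\mathbf{H}_2$ is $N^2$-dimensional — this is a harmless notational alignment rather than a mathematical obstacle. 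Everything else is a direct citation, so the proof is short:

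\begin{proof}
Write $N = 2^n$ so that $\mathbf{H}_2' = \ket{0}^{\otimes n}\otimes \mathbf{H}_2 \in \mathbb{C}^{N^2 \times N^2}$ is a square matrix acting on $2n$ qubits. By construction the only nonzero rows of $\mathbf{H}_2'$ are those whose first $n$-qubit register is $\ket{0}^{\otimes n}$, and on those rows it equals $\mathbf{H}_2$; all other rows vanish, which is exactly the behaviour encoded by the second branch of the oracle $O_{H_2'}$ in \autoref{Def:Oracles-nonlinear_Schr\"{o}dinger_eq}. Hence $\mathbf{H}_2'$ is $d$-sparse, with $d = \max\{s_r, s_c\}$ an upper bound on the number of nonzero entries in any row or column, and the sparse-access oracles $O_{H_2'}$, $O_{r_{H_2'}}$, $O_{c_{H_2'}}$ of \autoref{Def:Oracles-nonlinear_Schr\"{o}dinger_eq} provide entry and index access to it. Applying Lemma 48 of \cite{gilyen2019quantum} to $\mathbf{H}_2'$ with these oracles yields, using $a = \mathcal{O}(\log N + \log(1/\epsilon'))$ ancilla qubits, a unitary $\mathbf{\mathcal{U}}(\mathbf{H}_2')$ that is a block encoding of $\mathbf{H}_2'$ with subnormalization $\sqrt{s_r s_c}\,\max_{jk}|\mathbf{H}_2(j,k)| \le d\,\max_{jk}|\mathbf{H}_2(j,k)| \le \alpha$ and additive error $\epsilon'$. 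Rescaling the subnormalization up to the common value $\alpha = d\max_{jk}\{|\mathbf{H}_1(j,k)|,|\mathbf{H}_2(j,k)|\}$ (which only decreases the effective error) gives
\begin{equation}
  \Big\|\mathbf{H}_2' -  \alpha (\bra{0}^{\otimes a} \otimes \mathbf{I}_{N^2} )\,\mathbf{\mathcal{U}}(\mathbf{H}_2')\,(\ket{0}^{\otimes a} \otimes \mathbf{I}_{N^2})  \Big\| \leq \epsilon',
\end{equation}
which is the claimed bound.
\end{proof}
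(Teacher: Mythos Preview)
Your proposal is correct and takes essentially the same approach as the paper: both proofs simply invoke Lemma~48 of \cite{gilyen2019quantum} applied to the padded square matrix $\mathbf{H}_2' \in \mathbb{C}^{N^2 \times N^2}$, with the paper specifying $a = 2\log(N) + 3$. Your version is more detailed---explicitly checking sparsity inheritance, oracle compatibility from \autoref{Def:Oracles-nonlinear_Schr\"{o}dinger_eq}, and the rescaling to the common $\alpha$---and you also correctly flag the $\mathbf{I}_N$ versus $\mathbf{I}_{N^2}$ notational slip in the statement, but the underlying argument is identical.
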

\begin{proof}
     We can construct the Block encoding by applying Lemma 48 from \cite{gilyen2019quantum}. Here, $a = 2\log(N) + 3$.
\end{proof}

\begin{lemma}[Block Encoding $\mathbf{C}_j^i = \mathbf{I}^{\otimes i+1-j} \otimes \mathbf{H}_1 \otimes \mathbf{I}^{\otimes j-1}$] Let $\mathbf{\mathcal{U}}(\mathbf{H}_1) $ be $(\alpha,a,\epsilon')$ block encoding of $\mathbf{H}_1$ obtained from \autoref{lemma:block_encoding_H1}. Then, there exists an $(\alpha,a,\epsilon_1)$ block encoding of $\mathbf{C}_j^i$. 
    
\end{lemma}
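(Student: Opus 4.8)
The plan is to obtain the block encoding of $\mathbf{C}_j^i = \mathbf{I}^{\otimes i+1-j}\otimes \mathbf{H}_1 \otimes \mathbf{I}^{\otimes j-1}$ essentially for free, by tensoring the given block encoding $\mathbf{\mathcal{U}}(\mathbf{H}_1)$ with identity operators on the remaining $i$ tensor factors, after relabeling registers so that $\mathbf{H}_1$ acts on the correct slot. Writing the space on which $\mathbf{C}_j^i$ acts as $(\mathbb{C}^N)^{\otimes(i+1)}$, I would let $\mathbf{\mathcal{U}}(\mathbf{H}_1)$ act jointly on the $a$ ancilla qubits and the tensor factor in position $i+2-j$ (the slot where $\mathbf{H}_1$ appears), leaving the other $i$ factors untouched, and define $\mathbf{\mathcal{U}}(\mathbf{C}_j^i) := \mathbf{I}^{\otimes i+1-j}\otimes \mathbf{\mathcal{U}}(\mathbf{H}_1)\otimes \mathbf{I}^{\otimes j-1}$, carrying the $a$-qubit ancilla register along with the $\mathbf{\mathcal{U}}(\mathbf{H}_1)$ block. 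Up to a permutation of the system registers — implementable by a SWAP network — this is a genuine unitary on $a + (i+1)\log N$ qubits, and it is a special case of the standard fact that a block encoding tensored with identity is again a block encoding (cf.\ \cite{gilyen2019quantum}).

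First I would verify the projected action, using the elementary identity that for any operators $\mathbf{X},\mathbf{Y}$,
\[
(\bra{0}^{\otimes a}\otimes \mathbf{I})\big(\mathbf{X}\otimes \mathbf{\mathcal{U}}(\mathbf{H}_1)\otimes \mathbf{Y}\big)(\ket{0}^{\otimes a}\otimes \mathbf{I}) = \mathbf{X}\otimes \big[(\bra{0}^{\otimes a}\otimes \mathbf{I}_N)\mathbf{\mathcal{U}}(\mathbf{H}_1)(\ket{0}^{\otimes a}\otimes \mathbf{I}_N)\big]\otimes \mathbf{Y} ,
\]
so that the projected block of $\mathbf{\mathcal{U}}(\mathbf{C}_j^i)$ is exactly $\mathbf{I}^{\otimes i+1-j}\otimes\big[(\bra{0}^{\otimes a}\otimes \mathbf{I}_N)\mathbf{\mathcal{U}}(\mathbf{H}_1)(\ket{0}^{\otimes a}\otimes \mathbf{I}_N)\big]\otimes \mathbf{I}^{\otimes j-1}$. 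Next I would push the error estimate through: since the operator norm is multiplicative under tensor products and $\|\mathbf{I}\|=1$, the block-encoding error of $\mathbf{C}_j^i$ equals $\big\|\mathbf{H}_1 - \alpha(\bra{0}^{\otimes a}\otimes \mathbf{I}_N)\mathbf{\mathcal{U}}(\mathbf{H}_1)(\ket{0}^{\otimes a}\otimes \mathbf{I}_N)\big\|\le \epsilon'$, so $\mathbf{\mathcal{U}}(\mathbf{C}_j^i)$ is an $(\alpha,a,\epsilon_1)$ block encoding with $\epsilon_1=\epsilon'$. The subnormalization $\alpha$ and the ancilla count $a$ stay the same because the $i$ extra registers belong to the encoded system, not to the ancilla, and no extra normalization factor is introduced.

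I do not expect any real obstacle here; the statement is a "tensor with identity" lemma. The only thing requiring care is the index bookkeeping for which tensor factor carries $\mathbf{H}_1$, together with the routine observation that conjugating by the SWAP network that moves that factor into place merely permutes the system registers and therefore commutes with the ancilla projector $\ket{0}^{\otimes a}\bra{0}^{\otimes a}$, so the block-encoding parameters are preserved. If the surrounding argument needs an explicit cost, I would note that the construction makes a single call to $\mathbf{\mathcal{U}}(\mathbf{H}_1)$ plus $\mathcal{O}(i\log N)$ SWAP gates for the relabeling.
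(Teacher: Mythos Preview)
Your proposal is correct and takes essentially the same approach as the paper: tensor $\mathcal{U}(\mathbf{H}_1)$ with identities on the other $i$ factors and conjugate by a SWAP network so that the ancilla register sits in the standard position, then read off the projected block. The paper's proof is in fact terser than yours---it writes out the chain of equalities for the projected action but does not spell out the error propagation, which you handle explicitly via $\|\mathbf{I}\otimes A\otimes\mathbf{I}\|=\|A\|$.
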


\begin{proof}
     Let $\mathbf{\mathcal{U}}(\mathbf{H}_1)$ be an $(\alpha,a,\epsilon_1)$ block encoding of $\mathbf{H}_1$. Let $\mathcal{S}$ be a unitary circuit that performs the following swap operation:
     \[ \mathcal{S}(\ket{0}^{\otimes a} \otimes \mathbf{I}_N \otimes \mathbf{I}^{\otimes j-1} \otimes \mathbf{I}^{\otimes i+1-j}) = (\mathbf{I}^{\otimes i+1-j}) \otimes \ket{0}^{\otimes a} \otimes \mathbf{I}_N \otimes \mathbf{I}^{\otimes j-1} \]
     
     Then, consider: $ \mathcal{S}^{\dagger} [ \mathbf{I}^{\otimes i+1-j} \otimes \mathbf{\mathcal{U}}(\mathbf{H}_1) \otimes \mathbf{I}^{\otimes j-1}] \mathcal{S}$
     \begin{align}
     &(\bra{0}^{\otimes a} \otimes \mathbf{I}_N \otimes \mathbf{I}^{\otimes j-1} \otimes \mathbf{I}^{\otimes i+1-j}) \mathcal{S}^{\dagger} [\mathbf{I}^{\otimes i+1-j} \otimes  \mathbf{\mathcal{U}}(\mathbf{H}_1) 
      \otimes \mathbf{I}^{\otimes j-1}] \mathcal{S} (\ket{0}^{\otimes a} \otimes \mathbf{I}_N \otimes \mathbf{I}^{\otimes j-1} \otimes \mathbf{I}^{\otimes i+1-j}) \nonumber \\ \nonumber
      &\qquad=   (\bra{0}^{\otimes a} \otimes \mathbf{I}_N \otimes \mathbf{I}^{\otimes j-1} \otimes \mathbf{I}^{\otimes i+1-j}) \mathcal{S}^{\dagger} (\mathbf{I}^{\otimes i+1-j} \otimes \mathbf{\mathcal{U}}(\mathbf{H}_1)(\ket{0}^{\otimes a} \otimes \mathbf{I}_N) \otimes \mathbf{I}^{\otimes j-1} ) \\ \nonumber
     &\qquad=  (\mathbf{I}^{\otimes i+1-j} \otimes (\bra{0}^{\otimes a} \otimes \mathbf{I}_N) \mathbf{\mathcal{U}}(\mathbf{H}_1)(\ket{0}^{\otimes a} \otimes \mathbf{I}_N) \otimes \mathbf{I}^{\otimes j-1} ) \\ \nonumber
     &\qquad=     \frac{(\mathbf{I}^{\otimes i+1-j} \otimes \mathbf{H}_1 \otimes \mathbf{I}^{\otimes j-1}) }{\alpha}          \\ \nonumber
     \end{align}

\end{proof}

\begin{lemma}[Block Encoding $\mathbf{D}_j^i = \ket{0} \otimes \mathbf{I}^{\otimes i+1-j} \otimes \mathbf{H}_2 \otimes \mathbf{I}^{\otimes j-1}$]. Let $\mathbf{\mathcal{U}}(\mathbf{H}_2) $ be $(\alpha,a,\epsilon')$ block encoding of $\mathbf{H}_2$ obtained from \autoref{Block_Encoding_H_2}. Then, there exists an $(\alpha,a,\epsilon')$ block encoding of $\mathbf{D}_j^i$.
    
\end{lemma}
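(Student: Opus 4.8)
The plan is to mirror verbatim the construction used in the preceding lemma for $\mathbf{C}_j^i$, the only genuinely new element being the bookkeeping forced by $\mathbf{H}_2$ being rectangular, which is handled by working with the square padded matrix $\mathbf{H}_2'=\ket{0}^{\otimes n}\otimes\mathbf{H}_2$ supplied by \autoref{Block_Encoding_H_2}. First I would invoke \autoref{Block_Encoding_H_2} to obtain the unitary $\mathbf{\mathcal{U}}(\mathbf{H}_2')$ on $a+2\log N$ qubits (with $a=2\log N+3$) realising an $(\alpha,a,\epsilon')$ block encoding of $\mathbf{H}_2'$, i.e. $\|\mathbf{H}_2' - \alpha(\bra{0}^{\otimes a}\otimes\mathbf{I})\mathbf{\mathcal{U}}(\mathbf{H}_2')(\ket{0}^{\otimes a}\otimes\mathbf{I})\|\le\epsilon'$. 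Tensoring this unitary on the left by $\mathbf{I}^{\otimes i+1-j}$ and on the right by $\mathbf{I}^{\otimes j-1}$ produces an exact (no new error, same $\alpha$, same number of ancillas) block encoding of $\mathbf{I}^{\otimes i+1-j}\otimes\mathbf{H}_2'\otimes\mathbf{I}^{\otimes j-1}$, except that the $a$-qubit ancilla register now sits in the middle of the register ordering rather than at the top.

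Next I would introduce the permutation (SWAP-network) unitary $\mathcal{S}$, exactly analogous to the one used for $\mathbf{C}_j^i$, that relocates the ancilla register to the front, $\mathcal{S}(\ket{0}^{\otimes a}\otimes\mathbf{I}\otimes\mathbf{I}^{\otimes j-1}\otimes\mathbf{I}^{\otimes i+1-j})=\mathbf{I}^{\otimes i+1-j}\otimes\ket{0}^{\otimes a}\otimes\mathbf{I}\otimes\mathbf{I}^{\otimes j-1}$. Conjugating the padded unitary by $\mathcal{S}$ and then projecting with $\bra{0}^{\otimes a}\otimes\mathbf{I}$ onto the relocated ancillas yields, up to error $\epsilon'$, the operator $\tfrac{1}{\alpha}(\mathbf{I}^{\otimes i+1-j}\otimes\mathbf{H}_2'\otimes\mathbf{I}^{\otimes j-1})$. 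Since the $\ket{0}^{\otimes n}$ padding built into $\mathbf{H}_2'$ occupies exactly the leading $\ket{0}$-slot appearing in the definition of $\mathbf{D}_j^i$, this operator is precisely $\tfrac{1}{\alpha}\mathbf{D}_j^i$, so $\mathcal{S}^{\dagger}(\mathbf{I}^{\otimes i+1-j}\otimes\mathbf{\mathcal{U}}(\mathbf{H}_2')\otimes\mathbf{I}^{\otimes j-1})\mathcal{S}$ is the desired $(\alpha,a,\epsilon')$ block encoding of $\mathbf{D}_j^i$. The error, normalisation and ancilla count are all inherited unchanged from $\mathbf{\mathcal{U}}(\mathbf{H}_2')$ because $\mathcal{S}$ and the identity factors are exact unitaries.

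The step I expect to require the most care is the register alignment: one must verify that the internal $\ket{0}$-padding of $\mathbf{H}_2'$ lands in the same tensor leg as the explicit $\ket{0}$ in $\mathbf{D}_j^i$, and that the identity factors $\mathbf{I}^{\otimes i+1-j}$ and $\mathbf{I}^{\otimes j-1}$ act on the legs in the order prescribed by the $\oplus^{i-2}$ notation entering the Carleman blocks $\mathbf{B}_{i+1}$. This is purely a choice of the permutation $\mathcal{S}$ — once $\mathcal{S}$ is fixed to implement that alignment, no approximation beyond the one already present in $\mathbf{\mathcal{U}}(\mathbf{H}_2')$ is introduced, and the claim follows. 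Everything else (unitarity of the conjugated circuit, the value of $\alpha$, the count $a$, and the bound $\epsilon'$) is then immediate.
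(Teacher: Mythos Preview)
Your proposal is essentially the same construction as the paper's: take the block encoding $\mathcal{U}(\mathbf{H}_2')$ of the padded square matrix $\mathbf{H}_2'=\ket{0}\otimes\mathbf{H}_2$, tensor with identities on either side, and conjugate by a register-permutation $\mathcal{S}$ to bring the ancillas to the front. The one point where the paper is slightly more explicit is the one you flag as ``requiring the most care'': after conjugating by the same $\mathcal{S}$ used for $\mathbf{C}_j^i$ you obtain $\mathbf{I}^{\otimes i+1-j}\otimes\ket{0}\otimes\mathbf{H}_2\otimes\mathbf{I}^{\otimes j-1}$, in which the $\ket{0}$ sits in the middle rather than in the leading slot of $\mathbf{D}_j^i$, so the paper applies one additional SWAP at the end to move it to the front; your alternative of absorbing that extra permutation into the definition of $\mathcal{S}$ is equally valid and leaves $(\alpha,a,\epsilon')$ unchanged.
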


\begin{proof}
     Let $\mathbf{\mathcal{U}}(\mathbf{H}_2')$ be an $(\alpha,a,\epsilon_1)$ block encoding of $\ket{0} \otimes \mathbf{H}_2$ from \autoref{Block_Encoding_H_2}. Let $\mathcal{S}$ be a unitary circuit that performs the following swap operation:
     \[ \mathcal{S}(\ket{0}^{\otimes a} \otimes \mathbf{I}_N \otimes \mathbf{I}^{\otimes j-1} \otimes \mathbf{I}^{\otimes i+1-j}) = (\mathbf{I}^{\otimes i+1-j}) \otimes \ket{0}^{\otimes a} \otimes \mathbf{I}_N \otimes \mathbf{I}^{\otimes j-1} \]
     
     Then, consider: $ \mathbf{I}^{\otimes i+1-j} \otimes \mathbf{\mathcal{U}}(\mathbf{H}_2') \otimes \mathbf{I}^{\otimes j-1}$
     \begin{align}
     &(\bra{0}^{\otimes a} \otimes \mathbf{I}_N \otimes \mathbf{I}^{\otimes j-1} \otimes \mathbf{I}^{\otimes i+1-j}) \mathcal{S}^{\dagger} [\mathbf{I}^{\otimes i+1-j} \otimes  \mathbf{\mathcal{U}}(\mathbf{H}_2') 
      \otimes \mathbf{I}^{\otimes j-1}] \mathcal{S} (\ket{0}^{\otimes a} \otimes \mathbf{I}_N \otimes \mathbf{I}^{\otimes j-1} \otimes \mathbf{I}^{\otimes i+1-j}) \nonumber \\ \nonumber
      &\qquad=   (\bra{0}^{\otimes a} \otimes \mathbf{I}_N \otimes \mathbf{I}^{\otimes j-1} \otimes \mathbf{I}^{\otimes i+1-j}) \mathcal{S}^{\dagger} (\mathbf{I}^{\otimes i+1-j} \otimes \mathbf{\mathcal{U}}(\mathbf{H}_2')(\ket{0}^{\otimes a} \otimes \mathbf{I}_N) \otimes \mathbf{I}^{\otimes j-1} ) \\ \nonumber
     &\qquad=  (\mathbf{I}^{\otimes i+1-j} \otimes (\bra{0}^{\otimes a} \otimes \mathbf{I}_N) \mathbf{\mathcal{U}}(\mathbf{H}_2')(\ket{0}^{\otimes a} \otimes \mathbf{I}_N) \otimes \mathbf{I}^{\otimes j-1} ) \\ \nonumber
     &\qquad=     \frac{(\mathbf{I}^{\otimes i+1-j} \otimes \mathbf{H}_2' \otimes \mathbf{I}^{\otimes j-1}) }{\alpha}          \\ \nonumber
       &\qquad=     \frac{(\mathbf{I}^{\otimes i+1-j} \otimes \ket{0} \otimes \mathbf{H}_2 \otimes \mathbf{I}^{\otimes j-1}) }{\alpha}     \nonumber     \\ 
     \end{align}
     Finally, we apply a SWAP operation to get:
     \begin{equation}
         \frac{\text{SWAP}(\mathbf{I}^{\otimes i+1-j} \otimes \ket{0} \otimes \mathbf{H}_2 \otimes \mathbf{I}^{\otimes j-1}) }{\alpha} =  \frac{\ket{0}\otimes (\mathbf{I}^{\otimes i+1-j}  \otimes \mathbf{H}_2 \otimes \mathbf{I}^{\otimes j-1}) }{\alpha}
     \end{equation}

\end{proof}

\textcolor{black}{}
\begin{lemma}[Block Encoding $ \ket{0} \otimes (\oplus^i (\mathbf{H}_2))$] There exists a unitary block encoding of the superoperator $ \ket{0} \otimes (\oplus^i (\mathbf{H}_2))$, $\mathcal{U}(\ket{0} \otimes (\oplus^i (\mathbf{H}_2)) )$ such that $(\bra{0}^{\otimes a} \otimes \mathbf{I})\mathcal{U}(\ket{0} \otimes (\oplus^i (\mathbf{H}_2)) )(\ket{0}^{\otimes a} \otimes \mathbf{I}) = \frac{ \ket{0} \otimes (\oplus^i (\mathbf{H}_2))}{\alpha k} $
\end{lemma}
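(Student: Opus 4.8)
\emph{Proof plan.} The starting point is that $\oplus^i\mathbf{H}_2$ is, by its very definition, a sum of $i+1$ ``slotted'' copies of $\mathbf{H}_2$, so that
\begin{equation}
\ket{0}\otimes(\oplus^i\mathbf{H}_2)\;=\;\sum_{j=1}^{i+1}\mathbf{D}_j^i,\qquad \mathbf{D}_j^i=\ket{0}\otimes\mathbf{I}^{\otimes(i+1-j)}\otimes\mathbf{H}_2\otimes\mathbf{I}^{\otimes(j-1)},
\end{equation}
the single $\ket{0}$ register being a common left factor that is \emph{not} duplicated by the sum. Each summand $\mathbf{D}_j^i$ already admits an $(\alpha,a,\epsilon')$ block encoding $\mathcal{U}(\mathbf{D}_j^i)$ by the preceding lemma, and by padding with unused qubits we may assume all of these act on a common $a$-qubit block-encoding ancilla register and a common system register. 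Hence the task reduces to block-encoding an \emph{equal-weight} linear combination of $i+1$ block-encoded operators.

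The plan is then to apply the standard linear-combination-of-block-encodings construction (Lemma~52 of~\cite{gilyen2019quantum}): adjoin a $\lceil\log_2(i+1)\rceil$-qubit control register, prepare the uniform superposition $\ket{0}\mapsto\frac{1}{\sqrt{i+1}}\sum_{j=1}^{i+1}\ket{j-1}$ with a cheap state-preparation unitary $\mathcal{P}$, apply $\mathrm{SELECT}=\sum_{j=1}^{i+1}\ket{j-1}\bra{j-1}\otimes\mathcal{U}(\mathbf{D}_j^i)$, and uncompute with $\mathcal{P}^\dagger$. Since the $\mathbf{D}_j^i$ differ only by the slot into which $\mathbf{H}_2$ is inserted, $\mathrm{SELECT}$ need not be assembled from $i+1$ distinct controlled unitaries: a single controlled use of the block encoding $\mathcal{U}(\mathbf{H}_2')$ of $\ket{0}\otimes\mathbf{H}_2$ from \autoref{Block_Encoding_H_2}, conjugated by the $j$-controlled cyclic permutation of the $i+1$ tensor slots (the same $\mathcal{S}$-type swap network used to build $\mathbf{D}_j^i$), reproduces every $\mathbf{D}_j^i$. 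The output of this circuit is a block encoding of $\frac{1}{i+1}\sum_{j=1}^{i+1}\mathbf{D}_j^i$ with block-encoding constant $\alpha$, i.e.\ a block encoding of $\ket{0}\otimes(\oplus^i\mathbf{H}_2)$ with subnormalization $\alpha(i+1)$ and an accumulated error of order $\alpha(i+1)\epsilon'$ that is absorbed into the choice of $\epsilon'$.

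Finally, since $\oplus^i\mathbf{H}_2$ appears only inside the Carleman system truncated at order $k$, one always has $i+1\le k$; a block encoding with subnormalization $\alpha(i+1)$ is converted into one with the larger subnormalization $\alpha k$ either by the usual single-ancilla subnormalization-boosting rotation, or simply by letting $\mathcal{P}$ prepare the uniform superposition over $k$ basis states and having $\mathrm{SELECT}$ act as a block-zero operator (e.g.\ flipping an otherwise-unused flag qubit) on the $k-(i+1)$ spurious branches. This yields $(\bra{0}^{\otimes a'}\otimes\mathbf{I})\,\mathcal{U}\!\big(\ket{0}\otimes(\oplus^i\mathbf{H}_2)\big)\,(\ket{0}^{\otimes a'}\otimes\mathbf{I})=\frac{1}{\alpha k}\ket{0}\otimes(\oplus^i\mathbf{H}_2)$ up to the controllable error above, as claimed. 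The only genuinely fiddly part is the bookkeeping: keeping the ancilla counts and the accumulated block-encoding error consistent across all $k$ terms, and checking that the $j$-controlled slot permutation really does realize each $\mathbf{D}_j^i$ with the conventions fixed in the earlier lemmas; everything else is routine LCU algebra.
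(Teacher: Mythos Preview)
Your proposal is correct and follows essentially the same LCU route as the paper: decompose $\ket{0}\otimes(\oplus^i\mathbf{H}_2)=\sum_{j=1}^{i+1}\mathbf{D}_j^i$, block-encode each summand via the previous lemma, and combine with PREP/SELECT/PREP$^\dagger$. The one notable difference is how the subnormalization is pushed from $\alpha(i+1)$ up to the uniform $\alpha k$: the paper keeps a $k$-dimensional PREP register and loads the extra $k-(i+1)$ amplitudes with roots of unity summing to zero while SELECT acts as identity on those branches, whereas you achieve the same effect by either a single subnormalization-boosting rotation or by flagging the spurious branches out of the success subspace. Your flagging option is arguably the cleaner of the two, since it makes the vanishing of the spurious contributions manifest without any phase-cancellation bookkeeping; the paper's roots-of-unity trick has the virtue of not enlarging the ancilla count. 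Either way the result and the query overhead are the same.
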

\begin{proof}
    Recall that $\oplus^i (\mathbf{H}_2) = \sum_{j=1}^{i+1} \mathbf{D}_j^i $, where the matrix $\mathbf{D}_j^i = \ket{0} \otimes \mathbf{I}^{\otimes i+1-j} \otimes \mathbf{H}_2 \otimes \mathbf{I}^{\otimes j-1}$. We can block encode the operator $\oplus^i (\mathbf{H}_2)$
     by applying LCU based techniques \cite{childs2012hamiltonian}, similar to the construction in \autoref{Block_encoding_oplus_H_1}. Specifically, we define the following PREP and SELECT oracles:
      \[
    \text{PREP}\ket{0}= \frac{1}{\sqrt{k}}\sum_{j=0}^{i} \ket{j}+ \sum_{i+1}^{k-1} \omega_{j-i} \ket{j}
    \]

Here, $\omega_{j-i}$ is $k-i$ th root of unity, and $\sum_{i+1}^{k-1} \omega_{j-i} =0$
\[
\text{SELECT}\ket{j}\ket{\psi} = 
 \left\{
\begin{aligned}
    &\ket{j}\mathbf{\mathcal{U}}(\ket{0} \otimes \mathbf{D}_{j+1}^i)\ket{\psi}, && \text{if } j \leq i, \\
    &\ket{j}I\ket{\psi}, && \text{if } j > i
\end{aligned}
\right.
\]

    From these unitaries, we can build the following quantum circuit to implement the linear combination of unitaries:
    \[
    \frac{\ket{0}\otimes( \oplus^i (\mathbf{H}_2))}{\alpha k} = (\bra{0}^{\otimes a} \otimes \mathbf{I})\text{PREP}^{\dagger} \cdot \text{SELECT} \cdot \text{PREP} (\ket{0}^{\otimes a} \otimes \mathbf{I})
    \]
     
\end{proof}

\begin{lemma}[Block Encoding $\oplus^i (\mathbf{H}_1)$]
  \label{Block_encoding_oplus_H_1}
   There exists a unitary block encoding of the superoperator $\oplus^i (\mathbf{H}_1)$, $\mathcal{U}(\oplus^i (\mathbf{H}_1))$ such that $(\bra{0}^{\otimes a} \otimes \mathbf{I})\mathcal{U}(\oplus^i (\mathbf{H}_1))(\ket{0}^{\otimes a} \otimes \mathbf{I}) = \frac{\oplus^i (\mathbf{H}_1)}{\alpha k} $
\end{lemma}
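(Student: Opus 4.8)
The plan is to realize $\oplus^i(\mathbf{H}_1)$ as a linear combination of unitaries (LCU), using that it is by definition a sum of exactly $i+1$ terms, $\oplus^i(\mathbf{H}_1)=\sum_{j=1}^{i+1}\mathbf{C}_j^i$ with $\mathbf{C}_j^i=\mathbf{I}^{\otimes i+1-j}\otimes\mathbf{H}_1\otimes\mathbf{I}^{\otimes j-1}$, and that each $\mathbf{C}_j^i$ already carries an $(\alpha,a,\epsilon')$ block encoding $\mathcal{U}(\mathbf{C}_j^i)$ from the preceding lemma (obtained by conjugating a block encoding of $\mathbf{H}_1$ by the appropriate subsystem SWAP). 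Because the surrounding Carleman truncation uses a uniform index register of size $k\ge i+1$, I would run the LCU over $k$ branches so that the output normalization is the round value $\alpha k$ demanded by the statement, rather than $\alpha(i+1)$.

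Concretely, I would set $\mathrm{PREP}\ket{0}=\tfrac{1}{\sqrt{k}}\sum_{j=0}^{k-1}\ket{j}$ and let $\mathrm{SELECT}$, controlled on the index register, apply $\mathcal{U}(\mathbf{C}_{j+1}^i)$ when $0\le j\le i$ and, on the padding branches $i+1\le j\le k-1$, apply the scalar phase $\omega^{\,j-i-1}\mathbf{I}$ with $\omega$ a primitive $(k-1-i)$-th root of unity (equivalently, a bit-flip on one block-encoding ancilla, which states uniformly in $k$); in the first case the amplitudes are arranged so that $\sum_\ell\omega^\ell=0$, and in the second the contribution is annihilated by the final post-selection. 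A direct computation then gives
\[
(\bra{0}^{\otimes a}\otimes\mathbf{I})\,\mathrm{PREP}^\dagger\cdot\mathrm{SELECT}\cdot\mathrm{PREP}\,(\ket{0}^{\otimes a}\otimes\mathbf{I})=\frac1k\sum_{j=1}^{i+1}\frac{\mathbf{C}_j^i}{\alpha}=\frac{\oplus^i(\mathbf{H}_1)}{\alpha k},
\]
where $a$ counts the combined (LCU ancilla)$\otimes$(block-encoding ancilla) register, so taking $\mathcal{U}(\oplus^i(\mathbf{H}_1))=\mathrm{PREP}^\dagger\cdot\mathrm{SELECT}\cdot\mathrm{PREP}$ — unitary by construction — yields the claimed block encoding.

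The routine parts (unitarity of $\mathrm{PREP}$ and $\mathrm{SELECT}$, bookkeeping of the two ancilla registers, and, for the error-robust version of the idealized identity stated here, the triangle-inequality accumulation of the $i+1$ sub-encoding errors into an $\mathcal{O}(k\epsilon')$ term) are immediate. The only point that needs a moment of care is the vanishing of the padding branches — i.e.\ verifying $\sum_\ell\omega^\ell=0$, which requires $k-1-i\ge 2$, with the degenerate cases $k\in\{i+1,i+2\}$ handled either by the bit-flip variant above or by a trivial resizing of the padding register. I would also remark that this is precisely the template invoked by the $\mathbf{H}_2$ analogue proved just above, now instantiated with $\mathbf{C}_j^i$ in place of $\mathbf{D}_j^i$.
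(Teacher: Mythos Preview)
Your proposal is correct and takes essentially the same approach as the paper: LCU over $k$ branches (rather than $i+1$) with roots-of-unity phases on the padding branches so that the normalization comes out to the uniform value $\alpha k$. The only difference is that you place the cancelling phases in $\mathrm{SELECT}$ rather than in $\mathrm{PREP}$; this is in fact cleaner, since in the paper's symmetric $\mathrm{PREP}^\dagger\cdot\mathrm{SELECT}\cdot\mathrm{PREP}$ sandwich the PREP phases would get mod-squared and not cancel as written without distinct left/right preparation unitaries.
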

\begin{proof}
    Recall that $\oplus^i (\mathbf{H}_1) = \sum_{j=1}^{i+1} \mathbf{C}_j^i$, where the matrix $\mathbf{C}_j^i = \mathbf{I}^{\otimes i+1-j} \otimes \mathbf{H}_1 \otimes \mathbf{I}^{\otimes j-1}$. We can apply LCU-based techniques to block encode the sum $\sum_{j=1}^{i+1} \mathbf{C}_j^i$. Note that for different values of $i$, we sum $i+1$ of matrices, and this means an LCU circuit for directly implementing the sum $\oplus^i (\mathbf{H}_1)$ can result in a block encoding constant proportional to $i+1$. This is problematic, as we need a common factor $\forall \oplus^i (\mathbf{H}_1) $. Hence, we define the PREP, and SELECT unitaries, for implementing the linear combination of unitaries in the following manner: 
    \[
    \text{PREP}\ket{0}= \frac{1}{\sqrt{k}}\sum_{j=0}^{i} ( \ket{j}+ \sum_{j=i+1}^{k-1} \omega_{j-i} \ket{j})
    \]

Here, $\omega_{j-i}$ is $k-i$ th root of unity, and $\sum_{i+1}^{k-1} \omega_{j-i} =0$
\[
\text{SELECT}\ket{j}\ket{\psi} = 
 \left\{
\begin{aligned}
    &\ket{j}\mathbf{\mathcal{U}}(\mathbf{A}_{j+1}^i)\ket{\psi}, && \text{if } j \leq i, \\
    &\ket{j}I\ket{\psi}, && \text{if } j > i
\end{aligned}
\right.
\]

    From these unitaries, we can build the following quantum circuit to implement the linear combination of unitaries:
    \[
    \frac{\oplus^i (\mathbf{H}_1)}{\alpha k} = (\bra{0}^{\otimes a} \otimes \mathbf{I})\text{PREP}^{\dagger} \cdot \text{SELECT} \cdot \text{PREP} (\ket{0}^{\otimes a} \otimes \mathbf{I})
    \]
\end{proof}

\begin{lemma}[Block Encoding the matrix $ \mathbf{A}_i =\left(\ket{0}\bra{0} \right)^{\otimes k-i} \otimes -i (\oplus^{i-1} (\mathbf{H}_1 ))$] There exists a unitary block encoding for the matrix $\mathbf{A}_i$, $\mathcal{U}(\mathbf{A}_i)$ such that $  (\bra{0}^{\otimes a} \otimes \mathbf{I}) \mathcal{U}(\mathbf{A}_i) (\ket{0}^{\otimes a} \otimes \mathbf{I}) = \frac{\mathbf{A}_i}{ {2} \alpha k } $
\label{block_encoding_Ai}
\end{lemma}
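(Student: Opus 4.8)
The plan is to build $\mathcal{U}(\mathbf{A}_i)$ from the block encoding of $\oplus^{i-1}(\mathbf{H}_1)$ established in~\autoref{Block_encoding_oplus_H_1}, then (a) dress it with the phase $-i$, and (b) append the projector $\left(\ket{0}\bra{0}\right)^{\otimes k-i}$ on the ``outer'' register. Recall $\mathbf{A}_i = \left(\ket{0}\bra{0}\right)^{\otimes k-i} \otimes (-i)\left(\oplus^{i-1}(\mathbf{H}_1)\right)$, and that~\autoref{Block_encoding_oplus_H_1} gives a unitary $\mathcal{U}(\oplus^{i-1}(\mathbf{H}_1))$ with
\[
(\bra{0}^{\otimes a}\otimes \mathbf{I})\,\mathcal{U}(\oplus^{i-1}(\mathbf{H}_1))\,(\ket{0}^{\otimes a}\otimes \mathbf{I}) = \frac{\oplus^{i-1}(\mathbf{H}_1)}{\alpha k}.
\]

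First I would handle the phase factor $-i$. Multiplying a block encoding by a global phase $e^{i\theta}$ is trivial: apply $e^{i\theta}$ to an ancilla in the $\ket{1}$ sector, or more simply observe that $-i\,\mathcal{U}$ is still unitary and its top-left block is $-i$ times the original top-left block; so $-i\,\mathcal{U}(\oplus^{i-1}(\mathbf{H}_1))$ is a block encoding of $(-i)\oplus^{i-1}(\mathbf{H}_1)/(\alpha k)$ with the same normalization constant $\alpha k$. Second, I would realize the projector $\left(\ket{0}\bra{0}\right)^{\otimes k-i}$: this is itself a (sub-normalized) block encoding obtainable with no auxiliary cost, since $\ket{0}\bra{0}$ is block-encoded by the identity acting on that qubit together with the post-selection $\bra{0}\cdots\ket{0}$ built into the block-encoding convention — concretely, one takes $\mathbf{I}$ on the $k-i$ outer qubits and includes those qubits among the ``flag'' qubits that must read $0$. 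Taking the tensor product of these two block encodings (a standard composition: the flag registers concatenate, the normalization constants multiply) yields a block encoding of $\left(\ket{0}\bra{0}\right)^{\otimes k-i}\otimes(-i)\oplus^{i-1}(\mathbf{H}_1)$ with normalization $\alpha k \cdot 1 = \alpha k$. To match the stated constant $2\alpha k$ I would then pad with one extra ancilla qubit, mapping $\ket{0}\mapsto \tfrac1{\sqrt2}(\ket{0}+\ket{1})$, which halves the effective subblock amplitude, giving exactly
\[
(\bra{0}^{\otimes a}\otimes\mathbf{I})\,\mathcal{U}(\mathbf{A}_i)\,(\ket{0}^{\otimes a}\otimes\mathbf{I}) = \frac{\mathbf{A}_i}{2\alpha k}.
\]
The factor of $2$ is presumably chosen for uniformity with the normalization of the $\mathbf{B}_i$ blocks (\autoref{Def:Oracles-nonlinear_Schr\"{o}dinger_eq} and the symmetrized $\mathbf{Q}(\eta)$), so that all diagonal and off-diagonal blocks of $\mathbf{Q}(\eta)$ share a common block-encoding constant when assembled via LCU.

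The only genuinely delicate point — and the part I would write out carefully — is the bookkeeping of \emph{which} registers are the ``address'' registers of the outer $k$-dimensional index space versus the Carleman tensor-copy space, so that the projector $\left(\ket{0}\bra{0}\right)^{\otimes k-i}$ is applied to the correct $k-i$ qubits and the swap/identity structure inherited from~\autoref{Block_encoding_oplus_H_1} is not disturbed; an off-by-one in the number of padding copies ($i$ vs.\ $i-1$, $k-i$ vs.\ $k-i+1$) would propagate into the final normalization. Everything else is a routine application of the standard facts that block encodings are closed under tensor products (constants multiply), under multiplication by global phases, and under tensoring with the trivial block encoding $\mathbf{I}$ of a projector onto $\ket{0}$ in a flagged register, together with the amplitude-halving trick for adjusting the normalization constant. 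I expect the proof to be three or four lines once the register conventions are fixed.
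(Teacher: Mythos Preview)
Your overall route—tensor the block encoding of $\oplus^{i-1}\mathbf{H}_1$ (constant $\alpha k$) with a block encoding of the projector, absorb the $-i$ as a global phase—is sound and close in spirit to the paper's, which instead writes $\mathbf{A}_i$ as a matrix \emph{product} $\mathbf{E}_i\mathbf{F}_i$ with $\mathbf{E}_i=\mathbf{I}^{\otimes k-i}\otimes(-i)(\oplus^{i-1}\mathbf{H}_1)$ and $\mathbf{F}_i=(\ket{0}\bra{0})^{\otimes k-i}\otimes\mathbf{I}$, and then composes the two block encodings via the product lemma (Lemma~53 of Gily\'en et al.). The substantive difference is how the projector is handled: the paper block-encodes $(\ket{0}\bra{0})^{\otimes k-i}$ by the two-term LCU $\tfrac12\bigl(\mathbf{I}+\mathbf{U}_2\bigr)$ with $\mathbf{U}_2=2(\ket{0}\bra{0})^{\otimes k-i}-\mathbf{I}$ a reflection, so the factor $2$ in $2\alpha k$ arises naturally from the LCU rather than being padded in afterward.

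Your specific projector construction, however, does not work as stated. The $k-i$ ``outer'' registers are \emph{system} registers—they are the padding copies inside the Carleman space $\mathbb{C}^{N^k}$ on which $\mathbf{A}_i$ acts—not ancilla registers, so you cannot ``include those qubits among the flag qubits'': doing so changes the system Hilbert space, and $(\bra{0}_R\otimes\mathbf{I}_{S'})\,\mathbf{I}\,(\ket{0}_R\otimes\mathbf{I}_{S'})=\mathbf{I}_{S'}$ block-encodes the identity on the \emph{reduced} space, not $(\ket{0}\bra{0})_R\otimes\mathbf{I}_{S'}$ on the full one. A constant-$1$ block encoding of the projector \emph{is} achievable, but it requires fresh ancillas (e.g.\ a CNOT-type copy of each of the $k-i$ system registers onto a new flag register), contradicting your ``no auxiliary cost'' claim; once you pay that cost, your Hadamard-padding trick to reach $2\alpha k$ becomes legitimate but superfluous, since the paper's LCU route already delivers the $2$.
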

\begin{proof}
    The matrix $\mathbf{A}_i$ can be written as a product of two matrices $\mathbf{E}_i = \mathbf{I}^{\otimes{k-i}} \otimes  (\oplus^{i-1}( -i\mathbf{H}_1 ))$ and $ \mathbf{F}_i = \left(\ket{0}\bra{0} \right)^{\otimes k-i} \otimes \mathbf{I}$. Hence, we can create a block encoding for the matrix $\mathbf{A}_i$ by constructing a block encoding for the matrix product $\mathbf{E}_i \mathbf{F}_i$. The block encoding for the matrix product can be constructed by applying Lemma 53 from \cite{gilyen2019quantum}. Specifically,  $ \mathcal{U}(\mathbf{E}_i \mathbf{F}_i) = (\mathbf{I} \otimes \mathcal{U}(\mathbf{E}_i) )(\mathbf{I} \otimes \mathcal{U}(\mathbf{F}_i))$. The block encoding for the matrix $\mathbf{E}_i$, $\mathcal{U}(\mathbf{E}_i)  = 
    \mathbf{I}^{\otimes{k-i}} \otimes \mathcal{U} ((\oplus^{i-1}( -i\mathbf{H}_1 )))$, where $\mathcal{U} ((\oplus^{i-1}( -i\mathbf{H}_1 )))$ can be constructed by applying \autoref{Block_encoding_oplus_H_1} and $\mathcal{U}(\mathbf{F}_i) =  \mathcal{U}(\left(\ket{0}\bra{0} \right)^{\otimes k-i} )\otimes \mathbf{I}$. The block encoding for the projector $(\ket{0}\bra{0})^{\otimes k-i} $, $\mathcal{U}(\left(\ket{0}\bra{0} \right)^{\otimes k-i})$ can be constructed as shown below:

    \begin{equation}
        \left(\ket{0}\bra{0} \right)^{\otimes k-i} = \mathbf{U}_1 + \mathbf{U}_2
    \end{equation}
    where, $\mathbf{U}_1$, and $\mathbf{U}_2$ are unitary matrices given by:
    \begin{equation}
        \mathbf{U}_1 = \left(\ket{0}\bra{0} \right)^{\otimes k-i} + \sum_{j \neq 0} \left(\ket{j}\bra{j} \right)^{\otimes k-i}
    \end{equation}
    \begin{equation}
        \mathbf{U}_2 = \left(\ket{0}\bra{0} \right)^{\otimes k-i}  -\sum_{j \neq 0} \left(\ket{j}\bra{j} \right)^{\otimes k-i}
    \end{equation}
    Applying LCU lemma, the block encoding for the projector is given by:
    \begin{equation}
    (\bra{0} \otimes \mathbf{I})\mathcal{U}(\left(\ket{0}\bra{0} \right)^{\otimes k-i})(\ket{0} \otimes \mathbf{I}) = \frac{1}{2}(\ket{0}\bra{0} )^{\otimes k-i}
    \end{equation}
    As per Lemma 53, \cite{gilyen2019quantum}, the block encoding constant for the matrix product $\mathbf{E}_i \mathbf{F}_i$ is the product of block encoding constants $\alpha k$ and $2$. Hence, the block encoding constant for $\mathbf{A}_i$ is $2 \alpha k$. Thus,  $  (\bra{0}^{\otimes a} \otimes \mathbf{I}) \mathcal{U}(\mathbf{A}_i) (\ket{0}^{\otimes a} \otimes \mathbf{I}) = \frac{\mathbf{A}_i}{ {2} \alpha k } $.

\end{proof}
\begin{lemma}[Block Encoding the matrix 
$\mathbf{B}_i = (\ket{0}\bra{0})^{\otimes k-i} \otimes (\ket{0} \otimes (\oplus^{i-2} \mathbf{H}_2)$] 
There exists a unitary block encoding for the matrix $\mathbf{B}_i$, $\mathcal{U}(\mathbf{B}_i)$ such that $  (\bra{0}^{\otimes a} \otimes \mathbf{I}) \mathcal{U}(\mathbf{B}_i) (\ket{0}^{\otimes a} \otimes \mathbf{I}) = \frac{\mathbf{B}_i}{ {2} \alpha k } $
\end{lemma}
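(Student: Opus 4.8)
The plan is to mirror the strategy used in the proof of \autoref{block_encoding_Ai}: factor $\mathbf{B}_i$ into a product of two matrices, each of which we already know how to block encode, and then invoke the product rule for block encodings. Concretely, write $\mathbf{B}_i = \mathbf{E}_i \mathbf{F}_i$ with $\mathbf{F}_i = \left(\ket{0}\bra{0}\right)^{\otimes k-i} \otimes \mathbf{I}$ the projector onto the leading $k-i$ all-zero index registers and $\mathbf{E}_i = \mathbf{I}^{\otimes k-i} \otimes \left(\ket{0}\otimes\left(\oplus^{i-2}\mathbf{H}_2\right)\right)$ carrying the nontrivial action. One checks directly that $\mathbf{E}_i\mathbf{F}_i = \left(\ket{0}\bra{0}\right)^{\otimes k-i} \otimes \left(\ket{0}\otimes\left(\oplus^{i-2}\mathbf{H}_2\right)\right) = \mathbf{B}_i$, since the projector and the identity acting on those registers commute with the remaining factor.

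First I would build a block encoding of $\mathbf{E}_i$ by applying the preceding lemma constructing $\mathcal{U}\!\left(\ket{0}\otimes\left(\oplus^{i}\mathbf{H}_2\right)\right)$ with the index shifted to $i-2$; that block encoding has block-encoding constant $\alpha k$, and tensoring it with the identity on the extra $k-i$ index registers gives $\mathcal{U}(\mathbf{E}_i) = \mathbf{I}^{\otimes k-i}\otimes\mathcal{U}\!\left(\ket{0}\otimes\left(\oplus^{i-2}\mathbf{H}_2\right)\right)$ with $(\bra{0}^{\otimes a}\otimes\mathbf{I})\mathcal{U}(\mathbf{E}_i)(\ket{0}^{\otimes a}\otimes\mathbf{I}) = \mathbf{E}_i/(\alpha k)$. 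Then I would block encode $\mathbf{F}_i$ exactly as in \autoref{block_encoding_Ai}: decompose $\left(\ket{0}\bra{0}\right)^{\otimes k-i} = \tfrac12(\mathbf{U}_1+\mathbf{U}_2)$ with $\mathbf{U}_1 = \left(\ket{0}\bra{0}\right)^{\otimes k-i} + \sum_{j\neq 0}\left(\ket{j}\bra{j}\right)^{\otimes k-i}$ and $\mathbf{U}_2 = \left(\ket{0}\bra{0}\right)^{\otimes k-i} - \sum_{j\neq 0}\left(\ket{j}\bra{j}\right)^{\otimes k-i}$ both unitary, and apply the LCU lemma to obtain a block encoding of $\mathbf{F}_i$ with block-encoding constant $2$. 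Finally I would combine the two factors via Lemma 53 of \cite{gilyen2019quantum}, $\mathcal{U}(\mathbf{B}_i) = (\mathbf{I}\otimes\mathcal{U}(\mathbf{E}_i))(\mathbf{I}\otimes\mathcal{U}(\mathbf{F}_i))$, whose block-encoding constant is the product $2\alpha k$, so that $(\bra{0}^{\otimes a}\otimes\mathbf{I})\mathcal{U}(\mathbf{B}_i)(\ket{0}^{\otimes a}\otimes\mathbf{I}) = \mathbf{B}_i/(2\alpha k)$, as claimed.

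The hard part will not be any single step but the bookkeeping: one must arrange the ancilla registers and normalization constants so that a single common factor $1/(2\alpha k)$ works simultaneously for every $i\in\{1,\dots,k\}$. This uniformity is exactly what the root-of-unity padding of the PREP unitary in the $\oplus^{i}$ block-encoding lemmas was designed to secure, and the extra $\ket{0}$ register appearing inside $\mathbf{B}_i$ (the padding making $\mathbf{H}_2$ square) is harmless since it is produced directly by $\mathcal{U}(\mathbf{H}_2')$. Error propagates additively through the $\mathrm{poly}(k)$ primitive block encodings used, contributing only a $\mathrm{poly}(k)\,\epsilon'$ term that is absorbed by taking $\epsilon'$ polynomially small.
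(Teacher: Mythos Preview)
Your proposal is correct and takes essentially the same approach as the paper: factor $\mathbf{B}_i$ as the product of $\mathbf{I}^{\otimes k-i}\otimes\bigl(\ket{0}\otimes(\oplus^{i-2}\mathbf{H}_2)\bigr)$ and the projector $\bigl(\ket{0}\bra{0}\bigr)^{\otimes k-i}\otimes\mathbf{I}$, block encode each factor (the first via the preceding $\ket{0}\otimes(\oplus^{i}\mathbf{H}_2)$ lemma with constant $\alpha k$, the second via the two-unitary LCU decomposition of the projector with constant $2$), and combine them with Lemma~53 of \cite{gilyen2019quantum} to obtain the overall constant $2\alpha k$. Your version is in fact slightly cleaner, since the paper's proof contains a copy-paste slip referring to $\mathcal{U}\bigl(\oplus^{i-1}(-i\mathbf{H}_1)\bigr)$ where it should cite the $\mathbf{H}_2$ block-encoding lemma.
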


\begin{proof}
    The matrix $\mathbf{B}_i$ can be written as a product of two matrices $\mathbf{M}_i = \mathbf{I}^{\otimes{k-i}} \otimes (\ket{0} \otimes (\oplus^{i-2} \mathbf{H}_2))$ and $ \mathbf{N}_i = \left(\ket{0}\bra{0} \right)^{\otimes k-i} \otimes \mathbf{I}$. Hence, we can create a block encoding for the matrix $\mathbf{B}_i$ by constructing a block encoding for the matrix product $\mathbf{M}_i \mathbf{N}_i$. The block encoding for the matrix product can be constructed by applying Lemma 53 from \cite{gilyen2019quantum}. Specifically,  $ \mathcal{U}(\mathbf{M}_i \mathbf{N}_i) = (\mathbf{I} \otimes \mathcal{U}(\mathbf{M}_i) )(\mathbf{I} \otimes \mathcal{U}(\mathbf{N}_i))$. The block encoding for the matrix $\mathbf{M}_i$, $\mathcal{U}(\mathbf{M}_i)  = 
    \mathbf{I}^{\otimes{k-i}} \otimes \mathcal{U} ((\oplus^{i-1}( -i\mathbf{H}_1 )))$, where $\mathcal{U} ((\oplus^{i-1}( -i\mathbf{H}_1 )))$ can be constructed by applying \autoref{Block_encoding_oplus_H_1} and $\mathcal{U}(\mathbf{N}_i) =  \mathcal{U}(\left(\ket{0}\bra{0} \right)^{\otimes k-i} )\otimes \mathbf{I}$.

    As per Lemma 53, \cite{gilyen2019quantum}, the block encoding constant for the matrix product $\mathbf{M}_i \mathbf{N}_i$ is the product of block encoding constants $\alpha k$ and $2$. Hence, the block encoding constant for $\mathbf{A}_i$ is $2 \alpha k$. Thus,  $  (\bra{0}^{\otimes a} \otimes \mathbf{I}) \mathcal{U}(\mathbf{B}_i) (\ket{0}^{\otimes a} \otimes \mathbf{I}) = \frac{\mathbf{B}_i}{ {2} \alpha k } $.

\end{proof}

\begin{lemma}[Block Encoding $\mathbf{Q}(\eta)$] There exists unitary matrix $\mathcal{U}(\mathbf{Q}(\eta))$ such that  $  (\bra{0}^{\otimes a} \otimes \mathbf{I}) \mathcal{U}( \mathbf{Q}(\eta) ) (\ket{0}^{\otimes a} \otimes \mathbf{I}) = \frac{\mathbf{Q}(\eta)}{ {6} \alpha k^2 } $
\label{block_encoding_q_eta}
\end{lemma}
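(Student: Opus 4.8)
The plan is to build the block encoding of $\mathbf{Q}(\eta)$ by the standard ``decompose, block encode each piece with a common subnormalization, then LCU'' recipe, using the block encodings of $\mathbf{A}_i$ and $\mathbf{B}_i$ already constructed above. First I would split $\mathbf{Q}(\eta)$ into its diagonal, upper and lower parts,
\[
\mathbf{Q}(\eta) \;=\; \sum_{i=0}^{k-1} \ket{i}\bra{i}\otimes\mathbf{A}_{i+1}
\;+\; \sum_{i=0}^{k-2}\ket{i}\bra{i+1}\otimes\frac{\mathbf{B}_{i+1}}{\eta}
\;+\; \sum_{i=0}^{k-2}\ket{i+1}\bra{i}\otimes\frac{\mathbf{B}_{i+1}^{\dagger}}{\eta},
\]
so that $\mathbf{Q}(\eta)$ is a sum of at most $3k-2$ elementary terms, each of the form $\ket{i}\bra{j}\otimes\mathbf{M}$ with $\mathbf{M}\in\{\mathbf{A}_{i+1},\,\mathbf{B}_{i+1}/\eta,\,\mathbf{B}_{i+1}^{\dagger}/\eta\}$.

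Next I would block encode one such elementary term with subnormalization $2\alpha k$. By \autoref{block_encoding_Ai} and the analogous lemma for $\mathbf{B}_i$ there are unitaries that $(2\alpha k,\cdot,\epsilon')$-block-encode $\mathbf{A}_{i+1}$ and $\mathbf{B}_{i+1}$ (taking conjugate transposes for $\mathbf{B}_{i+1}^{\dagger}$); assuming $\eta\ge1$, as holds in our application, the factor $1/\eta$ is absorbed by prepending a single-qubit rotation by $\arccos(1/\eta)$ on a fresh ancilla, which keeps the subnormalization at $2\alpha k$ while introducing the $1/\eta$ as a contraction of the encoded amplitude. On the $\lceil\log k\rceil$-qubit index register, $\ket{i}\bra{j}$ factors as a basis permutation $X_{ij}$ (unitary, hence block encoded with constant $1$) composed with the projector $\ket{j}\bra{j}=\tfrac12\mathbf{I}+\tfrac12(2\ket{j}\bra{j}-\mathbf{I})$, which is an LCU of two unitaries with coefficients summing to $1$ and so is also block encoded with constant $1$. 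Applying the product-of-block-encodings lemma (Lemma~53 of \cite{gilyen2019quantum}) to $(\ket{i}\bra{j}\otimes\mathbf{I})(\mathbf{I}\otimes\mathbf{M})$ then gives a block encoding of each of the $3k-2$ terms with the \emph{common} subnormalization $1\cdot 2\alpha k = 2\alpha k$.

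Finally I would glue the terms together with the LCU construction \cite{childs2012hamiltonian}: a PREP unitary preparing an appropriately weighted superposition over the $\le 3k-2$ term labels, together with a SELECT that applies the corresponding block encoding controlled on the label, produces via $\mathrm{PREP}^{\dagger}\cdot\mathrm{SELECT}\cdot\mathrm{PREP}$ a block encoding of $\mathbf{Q}(\eta)$ whose subnormalization is the sum of the individual constants, i.e.\ at most $(3k-2)\cdot 2\alpha k\le 6\alpha k^2$; padding the PREP state absorbs the remaining slack and yields exactly the claimed constant $6\alpha k^2$, at the cost of $O(\log k)$ additional ancilla qubits on top of those used by the $\mathbf{A}_i,\mathbf{B}_i$ encodings.

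The main obstacle I expect is bookkeeping rather than anything conceptual: one must keep the ancilla registers of the $\mathbf{A}_i/\mathbf{B}_i$ encodings, the index-register gadgets, the $1/\eta$ rotation qubit, and the LCU PREP/SELECT registers mutually disjoint and uncomputed in the correct order so that the product and LCU composition lemmas apply verbatim, and one must route the $1/\eta$ factor onto an amplitude (rather than into the subnormalization) so that the constant does not acquire an $\eta$-dependence. A mildly sharper analysis that treats the block-diagonal part with a single controlled SELECT and the two shift parts via the cyclic shift unitary on the index register would replace $6\alpha k^2$ by $6\alpha k$, but the stated bound already suffices for all the complexity statements that invoke it.
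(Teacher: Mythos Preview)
Your proposal is correct and follows essentially the same approach as the paper: decompose $\mathbf{Q}(\eta)$ into the $\le 3k$ elementary terms $\ket{i}\bra{j}\otimes\mathbf{M}$, block encode each with the common subnormalization $2\alpha k$ via the product-of-block-encodings lemma, and combine by LCU to get the overall constant $6\alpha k^2$. You are actually more explicit than the paper in a couple of places---spelling out how $\ket{i}\bra{j}$ factors as a permutation times a projector with block-encoding constant $1$, and how the $1/\eta$ is absorbed into an amplitude rather than the subnormalization---both of which the paper leaves implicit.
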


\begin{proof}
We can implement $\mathcal{U}(\mathbf{Q}(\eta))$ by first creating unitary block encodings for the submatrices $\ket{i} \bra{i} \otimes \mathbf{A}_{i+1}$, 
$\ket{i} \bra{i+1} \otimes \frac{\mathbf{B}_{i+1}}{\eta}$, and $\ket{i+1} \bra{i} \otimes \frac{\mathbf{B}_{i+1}^{\dagger}}{\eta}$ and then the sum can be implemented using an LCU circuit. 
\begin{equation}
    \ket{i} \bra{i} \otimes \mathbf{A}_{i+1} =  (\ket{i} \bra{i} \otimes \mathbf{I} )(\mathbf{I} \otimes \mathbf{A}_{i+1})
\end{equation}
The block encoding for $\ket{i} \bra{i} \otimes \mathbf{A}_{i+1}$, $\mathcal{U}(\ket{i} \bra{i} \otimes \mathbf{A}_{i+1})$ can be constructed by applying Lemma 53 from \cite{gilyen2019quantum}. Specifically,
\begin{equation}
    \mathcal{U}(\ket{i} \bra{i} \otimes \mathbf{A}_{i+1}) = \left(\mathbf{I} \otimes \mathcal{U}(\ket{i} \bra{i} \otimes \mathbf{I})\right)\left(\mathbf{I} \otimes \mathcal{U}(\mathbf{I} \otimes \mathbf{A}_{i+1})\right)
\end{equation}
and,
\begin{equation}
    \mathcal{U}(\ket{i} \bra{i} \otimes \mathbf{I}) = \mathcal{U}(\ket{i} \bra{i}) \otimes \mathbf{I}
\end{equation}
\begin{equation}
    \mathcal{U}(\mathbf{I} \otimes \mathbf{A}_{i+1}) =   \mathbf{I} \otimes \mathcal{U}(\mathbf{A}_{i+1})
\end{equation}
We can implement block encoding for the projector $\ket{i}\bra{i}$ using techniques from \cite{gilyen2019quantum}, and the block encoding for the matrix $\mathbf{A}_{i+1}$, $\mathcal{U}(\mathbf{A}_{i+1})$ can be constructed using 
\autoref{block_encoding_Ai}. The block encoding constant for the matrix $\ket{i} \bra{i} \otimes \mathbf{A}_{i+1}$ is $2\alpha k$, and hence,
\begin{equation}
     (\ket{i} \bra{i} \otimes \mathbf{A}_{i+1}) = \frac{1}{2\alpha k}(\bra{0} \otimes \mathbf{I}) \mathcal{U}(\ket{i} \bra{i} \otimes \mathbf{A}_{i+1}) (\ket{0} \otimes \mathbf{I})
\end{equation}
Similarly, we can create block encodings for the following submatrices:

\begin{equation}
    \left(\ket{i} \bra{i+1} \otimes \frac{\mathbf{B}_{i+1}}{\eta} \right) = \frac{1}{2\alpha k} \left(\bra{0} \otimes \mathbf{I}\right) \mathcal{U} \left(\ket{i} \bra{i+1} \otimes \frac{\mathbf{B}_{i+1}}{\eta} \right) \left(\ket{0} \otimes \mathbf{I}\right)
\end{equation}

\begin{equation}
    \left(\ket{i+1} \bra{i} \otimes \frac{\mathbf{B}_{i+1}^{\dagger}}{\eta} \right) = \frac{1}{2\alpha k} \left(\bra{0} \otimes \mathbf{I}\right) \mathcal{U} \left(\ket{i+1} \bra{i} \otimes \frac{\mathbf{B}_{i+1}^{\dagger}}{\eta} \right) \left(\ket{0} \otimes \mathbf{I}\right)
\end{equation}
Now we can use an LCU circuit to sum these $3k$ submatrices, resulting in a unitary matrix block encoding $\mathbf{Q}(\eta)$.

\begin{equation}
    \frac{\mathbf{Q}(\eta)}{6\alpha k^2} =  (\bra{0}^{\otimes a} \otimes \mathbf{I}) \mathcal{U}(\mathbf{Q}(\eta)) (\ket{0}^{\otimes a} \otimes \mathbf{I})
\end{equation}

\end{proof}


\subsection{Reduction to Quantum Evolution}

We are now ready to combine these ideas and develop quantum algorithms for simulating the nonlinear Schr\"{o}dinger Equation defined in Eq.~\eqref{eq: nonlinear_Schr\"{o}dinger}. We first apply Carlemann Linearization techniques to linearize the nonlinear Schr\"{o}dinger Equation and then apply symmetrization techniques to symmetrize the resultant higher dimensional linear differential equation. The dynamical system described by symmetrized linear differential equations can be efficiently simulated by Hamiltonian simulation methods.


\begin{theorem}
\label{thm: nonlinear_Schr\"{o}dinger_subnormalized_state}
    Consider a nonlinear Schr\"{o}dinger Equation of the form: 
    \begin{equation}
         \dot{\ket{\bm{\psi}} } = -i\mathbf{H}_1 \ket{\bm{\psi}} + \mathbf{H}_2 \ket{\bm{\psi}} \otimes \ket{\bm{\psi}},
    \end{equation} where $\mathbf{H}_1 \in \mathbb{C}^{N \times N}$ is a Hermitian matrix and $\mathbf{H}_2 \in \mathbb{C}^{N \times N^2}$ is a rectangular matrix. Assume access to oracles defined in \autoref{Def:Oracles-nonlinear_Schr\"{o}dinger_eq}, and let $\beta = \braket{\psi(0)|\psi(0)}$. Let the $k$ th order Carlemann linearized system be $\ket{\dot{\mathbf{p}}(t)} = -i\mathbf{Q}\ket{\mathbf{p}(t)}$. Here,  $\ket{\mathbf{p}} = \sum_i \ket{i-1} \otimes \ket{\mathbf{w}_i}$ and  $\ket{\mathbf{w}_i}  = \bm{\ket{0}}^{\otimes k-i}\bm{\ket{\psi}}^{\otimes i}$. Let $\alpha = d\max_{jk}\{|\mathbf{H}_1(j,k)|,|\mathbf{H}_2(j,k)|\}$, where $d$ is the maximum number of non zero elements in any row or column of the matrices $\mathbf{H}_1$ and $\mathbf{H}_2$ and let $\mathcal{U}(\mathbf{Q}(\eta))$ be the unitary matrix encoding the symmetrized Carlemann Operator from \autoref{block_encoding_q_eta}. There exists a quantum algorithm that can prepare the state $\ket{\bm{\phi}(t)}$ such that $ \|\mathbf{T}\ket{\bm{\phi}(t)} - \ket{\mathbf{p}(t)} \| \leq \epsilon$. Here, $\mathbf{T} = \aleph\operatorname{diag}(\eta^{k-1} \mathbf{I}, \eta^{k-2} \mathbf{I}, \ldots, \mathbf{I}) $, where $\aleph = \frac{1}{\eta^k}\sqrt{\frac{\beta \eta^{2} \left(1 - (\beta \eta^2)^k\right)}{1 - \beta \eta^2}} $. 
    
    \begin{enumerate}
        \item The algorithm require $\mathcal{O}(1)$ calls to the initial state preparation Unitary $\mathcal{W}$, and $G$ calls to the unitary circuit implementing $\mathcal{U}(\mathbf{Q}(\eta))$ where 
    \begin{align}
       G =  \mathcal{O}\left( \alpha k^2 t + \frac{(k-1)}{2}\log\left(\left[\frac{\|\mathbf{H}_2 \| k\left( k +1 \right)}{2} \left(1 + \beta\frac{1-\beta^k}{(1-\beta)\epsilon} \right)t  \right] \right) \right)
\end{align}  
    \item When $\beta\eta^2 <1$, the probability of projecting onto the first Carleman block is given by
\[
p_1 \in \mathcal{O}\left( \frac{\braket{\psi(t)|\psi(t)}}{\beta} (1 - \beta \eta^2) \right)
\]
    \end{enumerate}
    
\end{theorem}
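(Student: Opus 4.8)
The plan is to assemble the pieces already developed in the section: (i) Carleman linearization with the truncation order $k$ chosen per \autoref{Thm: Carlemann_Truncation_Order}, (ii) the symmetrization map of \autoref{lemma: main_eta_bound} together with the bound $\|\hat{\mathbf{H}}\|\le \|\mathbf{H}_2\|k(k+1)/2$ from \autoref{upper_bound_H_hat} to fix a sufficient $\eta$, and (iii) the block encoding $\mathcal{U}(\mathbf{Q}(\eta))$ from \autoref{block_encoding_q_eta} with block-encoding constant $6\alpha k^2$. The target state is $\ket{\bm{\phi}(t)} = e^{-i\mathbf{Q}(\eta)t}\ket{\hat{\mathbf{p}}(0)}$ (up to normalization), and $\mathbf{T}$ is the rescaling that undoes the $\operatorname{diag}(\eta^{k-i})$ weighting and the overall normalization $\aleph$ introduced because $\ket{\hat{\mathbf{p}}(0)}$ must be a unit vector.

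First I would set up the initial state. We need $\ket{\hat{\mathbf{p}}(0)} = \sum_i \eta^{-(k-i)}\ket{i-1}\otimes\ket{\mathbf{w}_i(0)}$ with $\ket{\mathbf{w}_i(0)} = \ket{0}^{\otimes k-i}\ket{\psi(0)}^{\otimes i}$; its squared norm is $\sum_{i=1}^k \eta^{-2(k-i)}\beta^i = \eta^{-2k}\sum_{i=1}^k(\beta\eta^2)^i = \eta^{-2k}\,\frac{\beta\eta^2(1-(\beta\eta^2)^k)}{1-\beta\eta^2}$, which is exactly $\aleph^2$ — this identifies $\aleph$ and shows the state-prep oracle $\mathcal{W}(\eta,k,\aleph)$ of \autoref{Def:Oracles-nonlinear_Schr\"{o}dinger_eq} prepares the normalized $\ket{\hat{\mathbf{p}}(0)}/\aleph$ with $\mathcal{O}(1)$ calls. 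Then I would invoke \autoref{lemma: main_eta_bound}: choosing $\eta \ge \sqrt{\|\hat{\mathbf{H}}\|(1+\beta\frac{1-\beta^k}{(1-\beta)\epsilon})t_s}$ gives $\|\mathbf{D}\ket{\hat{\mathbf{p}}(t)} - \ket{\mathbf{p}(t)}\|\le\epsilon$ (here I fold the Carleman truncation error into a separate $\epsilon$-budget via \autoref{Thm: Carlemann_Truncation_Order}, or simply absorb it, stating the total error is $\mathcal{O}(\epsilon)$). Substituting $\|\hat{\mathbf{H}}\|\le\|\mathbf{H}_2\|k(k+1)/2$ from \autoref{upper_bound_H_hat} gives the explicit $\eta$ and hence, since $\mathbf{T}=\aleph\mathbf{D}$, the claim $\|\mathbf{T}\ket{\bm{\phi}(t)}-\ket{\mathbf{p}(t)}\|\le\epsilon$ with $\ket{\bm{\phi}(t)}=\ket{\hat{\mathbf{p}}(t)}/\aleph = e^{-i\mathbf{Q}(\eta)t}(\ket{\hat{\mathbf{p}}(0)}/\aleph)$.

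Next, for the gate count $G$: since $\mathbf{Q}(\eta)$ is Hermitian and block-encoded with constant $6\alpha k^2 = \mathcal{O}(\alpha k^2)$, applying optimal Hamiltonian simulation (e.g. \cite{low2019hamiltonian}) to approximate $e^{-i\mathbf{Q}(\eta)t}$ to error $\epsilon'$ costs $\mathcal{O}(\alpha k^2 t + \log(1/\epsilon'))$ queries to $\mathcal{U}(\mathbf{Q}(\eta))$. To translate the $\epsilon'$-error in $\ket{\hat{\mathbf{p}}(t)}$ into an $\mathcal{O}(\epsilon)$-error after multiplying by $\mathbf{T}$, I would note $\|\mathbf{T}\| = \aleph\eta^{k-1}$, so we need $\epsilon' \le \epsilon/(\aleph\eta^{k-1})$, contributing $\log(\aleph\eta^{k-1}/\epsilon) = \mathcal{O}((k-1)\log\eta + \log(1/\epsilon))$; plugging in $\eta^2 = \Theta(\|\mathbf{H}_2\|k(k+1)(1+\beta\frac{1-\beta^k}{(1-\beta)\epsilon})t)$ gives the stated $\frac{(k-1)}{2}\log(\cdot)$ term. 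This yields the advertised $G$.

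Finally, for part (ii): the probability of measuring the index register in state $\ket{0}$ (the first Carleman block) is $p_1 = \|(\bra{0}\otimes\mathbf{I})\ket{\hat{\mathbf{p}}(t)}\|^2/\|\ket{\hat{\mathbf{p}}(t)}\|^2$. The numerator is $\|\eta^{-(k-1)}\ket{\psi(t)}\|^2 = \eta^{-2(k-1)}\braket{\psi(t)|\psi(t)}$, and the denominator, by the same geometric-sum computation applied at time $t$ (using $\braket{\psi(t)|\psi(t)}\le\beta$ and $\beta\eta^2<1$, so the sum is dominated by its top term $i=k$), is $\Theta(\eta^{-2(k-1)}\cdot\frac{\braket{\psi(t)|\psi(t)}}{\ldots})$ — more precisely $\sum_i\eta^{-2(k-i)}\braket{\psi(t)|\psi(t)}^i$, whose ratio to the $i=1$ term is $\mathcal{O}(\frac{1}{1-\beta\eta^2}\cdot\frac{\beta}{\braket{\psi(t)|\psi(t)}})$ after bounding $\braket{\psi(t)|\psi(t)}^{i}\le\beta^{i-1}\braket{\psi(t)|\psi(t)}$. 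Rearranging gives $p_1\in\mathcal{O}\!\left(\frac{\braket{\psi(t)|\psi(t)}}{\beta}(1-\beta\eta^2)\right)$.

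The main obstacle I anticipate is the error-propagation bookkeeping in step three: the rescaling matrix $\mathbf{T}$ has operator norm growing like $\eta^{k-1}$, which is exponential in $k$, so a naive Hamiltonian-simulation error $\epsilon'$ gets amplified by this factor. One must be careful that demanding $\epsilon'\le\epsilon/\|\mathbf{T}\|$ only adds a $\log$ of this large quantity (hence a benign $(k-1)\log\eta$ term) rather than a polynomial blow-up — this is fine but requires the separate observation that the exponentially small success amplitude of the block-encoded walk and the exponentially large $\mathbf{T}$ partially cancel, so the relevant normalized output state is still prepared accurately; the alternative is to track errors at the level of the normalized projected state directly. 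A secondary subtlety is cleanly separating the Carleman truncation error from the symmetrization ($\eta$) error so that both are simultaneously $\mathcal{O}(\epsilon)$, which just requires choosing $k$ via \autoref{Thm: Carlemann_Truncation_Order} with the appropriate constant.
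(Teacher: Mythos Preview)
Your proposal is correct and follows essentially the same route as the paper: fix $\eta$ via \autoref{lemma: main_eta_bound} combined with \autoref{upper_bound_H_hat}, compute $\aleph$ by the geometric sum, run QSVT-based Hamiltonian simulation on the $\mathcal{O}(\alpha k^2)$ block encoding with precision $\epsilon_1 \sim \eta^{-(k-1)}$ to get the stated $G$, and project for $p_1$. The only simplification the paper makes over your sketch is in part (ii): since $\hat{\mathbf{Q}}(\eta)$ is Hermitian the evolution is unitary, so $\|\hat{\mathbf{p}}(t)\|^2 = \|\hat{\mathbf{p}}(0)\|^2 = \aleph^2$ exactly, and you can substitute the time-$0$ formula for $\aleph^2$ directly into $p_1 = \braket{\psi(t)|\psi(t)}/(\eta^{2(k-1)}\aleph^2)$ rather than re-estimating the block norms at time $t$.
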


\begin{proof}

  We apply symmetrization techniques to construct the dynamical system of the form $\ket{\dot{\hat{\mathbf{p}}}(t)} = -i\hat{\mathbf{Q}}(\eta) \ket{\mathbf{\hat{p}}(t)}$ such that $\|\mathbf{D} \hat{\mathbf{p}}(t)-  \mathbf{{p}}(t)\| \leq \epsilon$ for specific values of $\eta$. Here,  $\ket{ \hat{\mathbf{p}}} = \sum_i \ket{i-1} \otimes \ket{ \hat{\mathbf{w}_i}}$ and  $\ket{ \hat{\mathbf{w}_i}}  = \frac{1}{\eta^{k-i}}\bm{\ket{0}}^{\otimes k-i}\bm{\ket{\psi}}^{\otimes i}$.
  The sufficient value of $\eta$ is given by \autoref{lemma: main_eta_bound}. 
\begin{align}
    \eta \geq  \sqrt{\|\hat{\mathbf{H}} \| \left(1 + \beta\frac{1-\beta^k}{(1-\beta)\epsilon} \right)t_s}
\end{align}
From \autoref{upper_bound_H_hat}, the norm $\|\hat{\mathbf{H}} \|$ is upper bounded by:
\begin{align}
  \|\hat{\mathbf{H}} \|   \leq \frac{\|\mathbf{H}_2 \| k\left( k +1 \right)}{2} 
\end{align}
Hence, the sufficient value for $\eta$ is given by:
\begin{align}
  \eta  \geq  \sqrt{\frac{\|\mathbf{H}_2 \| k\left( k +1 \right)}{2} \left(1 + \beta\frac{1-\beta^k}{(1-\beta)\epsilon} \right)t}
\end{align}

Our quantum state encoding is defined as: 

\begin{align}
    \ket{\mathbf{\phi}(t)} = \frac{1}{\aleph} \left( \ket{0} \otimes \ket{\hat{\mathbf{w}}_1(t)}  + \ket{1} \otimes  \ket{\hat{\mathbf{w}}_2(t)} + .. \ket{k-1} \otimes \ket{\hat{\mathbf{w}}_{k}(t)} \right)
\end{align}
 Here $\aleph$ is the normalization constant and $\ket{\mathbf{\hat{w}}_i} = \ket{\frac{\mathbf{w}_i}{\eta^{k-i}}}$.  Recall that $\ket{\mathbf{w}_i(t)}$ is a vector containing $\bm{\ket{\psi}}^{\otimes i}$ padded with zeros. The normalization constant $\aleph$ is given by:
\begin{align}
\label{normalization_constant}
    \aleph  &= \sqrt{\braket{\hat{p}(0)|\hat{p}(0)}} \\ \nonumber
            &= \sqrt{ \left(\sum_{j=1}^k \bra{j-1} \otimes \bra{ \hat{\mathbf{w}_j}(0)} \right)  \left( \sum_{i=1}^k \ket{i-1} \otimes \ket{ \hat{\mathbf{w}_i}(0)} \right) } \\ \nonumber
            &= \sqrt{ \sum_{i=1}^k \braket{ \hat{\mathbf{w}_i}(0)| \hat{\mathbf{w}_i}(0) } } \\ \nonumber
            &= \sqrt{\sum_{i=1}^k \frac{ \braket{\psi(0)|\psi(0)}^i }{\eta^{2(k-i)} } } = \sqrt{\sum_{i=1}^k \frac{\beta^i}{\eta^{2(k-i)}}}  \text{   }\left(\text{ Substituting } \beta = \braket{\psi(0)|\psi(0)} \right) \\ \nonumber 
            &= \frac{1}{\eta^k}\sqrt{\frac{\beta \eta^{2} \left(1 - (\beta \eta^2)^k\right)}{1 - \beta \eta^2}}
\end{align}

When $\beta \eta^2 <<1$, 
\begin{equation}
    \aleph \in \mathcal{O} \left(\frac{1}{\eta^k} \sqrt{\frac{ \beta \eta^2}{1-\beta \eta^2}} \right)
\end{equation}
When $\beta \eta^2 >> 1$, 
\begin{equation}
    \aleph \in \mathcal{O} \left(\sqrt{\beta^{k}} \right)
\end{equation}

  Our Algorithm would require high-precision encoding of the initial state and evolved quantum state at time $t$. Since QSVT based Hamiltonian simulation algorithms\cite{gilyen2019quantum} scales as $\mathcal{O}(\tau + \log(\frac{1}{\epsilon_1}))$, this is not a big overhead, because the cost would increase only by log factors. Specifically, our desired error in preparing final quantum state would be proportional to:
\begin{align}
    \epsilon_1 = \mathcal{O}\left({ \left[\frac{\|\mathbf{H}_2 \| k\left( k +1 \right)}{2} \left(1 + \beta\frac{1-\beta^k}{(1-\beta)\epsilon} \right)t \right]^{\frac{1-k}{2}} } \right)
\end{align}
Applying QSVT based Hamiltonian simulation algorithms, we can compute the desired quantum state using $G$ calls to the unitary matrix $\mathcal{U}(\mathbf{Q}(\eta))$ after preparing the initial state $\ket{\mathbf{\phi}(0)}$ using $\mathcal{O}(1)$ calls to the state preparation unitary $\mathcal{W}$. 
\begin{align}
    G = \mathcal{O}\left( \alpha k^2 t + \frac{(k-1)}{2}\log\left(\left[\frac{\|\mathbf{H}_2 \| k\left( k +1 \right)}{2} \left(1 + \beta\frac{1-\beta^k}{(1-\beta)\epsilon} \right)t  \right] \right) \right)
\end{align}

Since $\ket{\mathbf{\phi}(t)} = \frac{\hat{\mathbf{p}}(t)}{\aleph}$, and $\|\mathbf{D}\hat{\mathbf{p}}(t) - \mathbf{p}(t)\| \leq \epsilon$, it implies that $\|\aleph\mathbf{D}\ket{\mathbf{\phi}(t)} - \mathbf{p}(t)\| \leq \epsilon$. More succintly, we can write $\|\mathbf{T}\ket{\mathbf{\phi}(t)} - \mathbf{p}(t)\| \leq \epsilon$, where $\mathbf{T} = \aleph \mathbf{D}$.

The probability of projecting onto the first Carleman block is given by
\[
p_1 = \bra{\boldsymbol{\phi}(t)} \mathbf{P} \ket{\boldsymbol{\phi}(t)},
\]
where the projection matrix is defined as \( \mathbf{P} = \ket{0}\bra{0} \otimes \mathbf{I} \). Substituting the definition of \( \boldsymbol{\phi}(t) \), we obtain:
\begin{align}
    p_1 &= \bra{\boldsymbol{\phi}(t)} \mathbf{P} \ket{\boldsymbol{\phi}(t)} \\
        &= \frac{ \braket{\psi(t) | \psi(t)} }{ \eta^{2(k-1)} \aleph^2 }
\end{align}

Substituting the expression for the normalization constant \( \aleph^2 \) from \autoref{normalization_constant}, where
\[
\aleph^2 = \frac{1}{\eta^{2k}} \cdot \frac{\beta \eta^2 (1 - (\beta \eta^2)^k)}{1 - \beta \eta^2},
\]
we get:
\begin{align}
p_1 &= \frac{ \braket{\psi(t)|\psi(t)} }{ \eta^{2(k-1)} \cdot \left( \frac{1}{\eta^{2k}} \cdot \frac{\beta \eta^2 (1 - (\beta \eta^2)^k)}{1 - \beta \eta^2} \right) } \\
    &= \frac{ \braket{\psi(t)|\psi(t)} }{ \frac{1}{\eta^2} \cdot \frac{\beta \eta^2 (1 - (\beta \eta^2)^k)}{1 - \beta \eta^2} } \\
    &= \braket{\psi(t)|\psi(t)} \cdot \frac{1 - \beta \eta^2}{\beta \left(1 - (\beta \eta^2)^k\right)}
\end{align}

\noindent
When \( \beta \eta^2 < 1 \), the geometric factor \( (\beta \eta^2)^k \) vanishes exponentially with \( k \), and we obtain the asymptotic scaling:
\begin{equation}
p_1 \in \mathcal{O}\left( \frac{\braket{\psi(t)|\psi(t)}}{\beta} (1 - \beta \eta^2) \right)
\end{equation}
Note that one can always rescale the nonlinear Schrödinger equation to ensure that \( \beta \eta^2 < 1 \) holds.
\end{proof}
The query complexity in \autoref{thm: nonlinear_Schr\"{o}dinger_subnormalized_state} is a function of Carlemann Truncation order, which depends on the strength of nonlinearity and properties of the hermitian matrix $\mathbf{H}_1$. We now analyze two distinct regimes in which the approximation error can be systematically reduced by increasing the truncation order.
\NSEQuantumAlgorithmThm*

\begin{proof}
   Recall that the nonlinear Schr\"{o}dinger equation is given by:
\begin{equation}
         \dot{\ket{\bm{\psi}} } = -i\mathbf{H}_1 \ket{\bm{\psi}} + \mathbf{H}_2 \ket{\bm{\psi}} \otimes \ket{\bm{\psi}}
\end{equation}, where $\mathbf{H}_1 \in \mathbb{C}^{N \times N}$ is a Hermitian matrix  and $\mathbf{H}_2 \in \mathbb{C}^{N \times N^2}$ is a rectangular matrix. When $Re( \bra{\bm{\psi}(t) }\mathbf{H}_2 \ket{\bm{\psi}(t)} \otimes \ket{\bm{\psi}(t)}) = 0, \forall t \geq 0 $, the norm of the solution $\braket{\psi(t)|\psi(t)}$ is bounded, due to \autoref{lemma: norm_nonlinear_Schr\"{o}dinger_eq}. In this case, from \cite{forets2017explicit}, the truncation order is given by:
\[
  k \geq \mathcal{O}\left(\frac{\log(1/\epsilon)}{\log(1/\|\mathbf{H}_2\|)}\left(1 + \frac{\log(t)}{\log(1/\|\mathbf{H}_2\|)} \right) \right)
\]
Now, we can apply \autoref{thm: nonlinear_Schr\"{o}dinger_subnormalized_state}, and prepare the quantum state using $G$ queries to the unitary matrix $\mathbf{Q}(\eta)$, which can be constructed using a constant number of queries to the oracles accessing elements of the matrices $\mathbf{H}_1$ and $\mathbf{H_2}$. 
\[
G = \mathcal{O}\left( \alpha k^2 t + \frac{(k-1)}{2}\log\left(\left[\frac{\|\mathbf{H}_2 \| k\left( k +1 \right)}{2} \left(1 + \beta\frac{(1-\beta^k)}{(1-\beta)\epsilon} \right)t  \right] \right) \right)
\]

After preparing the quantum state $\ket{\mathbf{\phi}(t)}$:
\begin{align}
    \ket{\mathbf{\phi}(t)} = \frac{1}{\aleph} \left( \ket{0} \otimes \ket{\hat{\mathbf{w}}_1(t)}  + \ket{1} \otimes  \ket{\hat{\mathbf{w}}_2(t)} + .. \ket{k-1} \otimes \ket{\hat{\mathbf{w}}_{k}(t)} \right)
\end{align}
we measure the first $\log(k)$ qubits and $k-1$ registers, to prepare the final state
\begin{equation}
    \ket{\widetilde{\phi(t)}} = \frac{\ket{j}\otimes \ket{\widetilde{\psi(t)}}}{ \braket{\widetilde{\psi(t)}|\widetilde{\psi(t)} }  } 
\end{equation}

\end{proof}

\NSEQuantumAlgorithmThmResonanceCond*
\begin{proof}
   Recall that the nonlinear Schr\"{o}dinger equation is given by:
\begin{equation}
         \dot{\ket{\bm{\psi}} } = -i\mathbf{H}_1 \ket{\bm{\psi}} + \mathbf{H}_2 \ket{\bm{\psi}} \otimes \ket{\bm{\psi}},
\end{equation} where $\mathbf{H}_1 \in \mathbb{C}^{N \times N}$ is a Hermitian matrix  and $\mathbf{H}_2 \in \mathbb{C}^{N \times N^2}$ is a rectangular matrix. When $Re( \bra{\bm{\psi}(t) }\mathbf{H}_2 \ket{\bm{\psi}(t)} \otimes \ket{\bm{\psi}(t)}) = 0, \forall t \geq 0 $, the norm of the solution $\braket{\psi(t)|\psi(t)}$ is bounded, due to \autoref{lemma: norm_nonlinear_Schr\"{o}dinger_eq}. In this case, from Theorem 1.1 in \cite{wu2024quantum}, the truncation order is given by:

\begin{align}
    k &\in \frac{W\left( \frac{\epsilon}{CT} \cdot R_r \log(R_r) \right)}{\log(R_r)} \\ \nonumber
\end{align}
  Here, $W(.)$ is the Lambert W function, the parameter, $C = \|\mathbf{H}_2\|\beta^2$, and $R_r = \frac{4 e \beta \|\mathbf{H}_2\|}{\Delta}$, where the parameter, $\Delta$ is defined as follows:
     \[
\Delta := \inf_{k \in [n]} \inf_{\substack{m_j \geq 0 \\ \sum_{j=1}^n m_j \geq 2}}
\left(|
\lambda_k - \sum_{i=1}^n m_j \lambda_j
|\right)
\]
Simplifying the expression, we get:
\begin{align}
    k &\in \widetilde{\mathcal{O}} \left( \frac{\log(\frac{CT}{R_r\epsilon})}{\log(1/R_r)} \right)
\end{align}


   
Now, we can apply \autoref{thm: nonlinear_Schr\"{o}dinger_subnormalized_state}, and prepare the quantum state using $G$ queries to the unitary matrix $\mathbf{Q}(\eta)$, which can be constructed using a constant number of queries to the oracles accessing elements of the matrices $\mathbf{H}_1$ and $\mathbf{H_2}$. 
\[
G = \mathcal{O}\left( \alpha k^2 t + \frac{(k-1)}{2}\log\left(\left[\frac{\|\mathbf{H}_2 \| k\left( k +1 \right)}{2} \left(1 + \beta\frac{(1-\beta^k)}{(1-\beta)\epsilon} \right)t  \right] \right) \right)
\]

After preparing the quantum state $\ket{\mathbf{\phi}(t)}$:
\begin{align}
    \ket{\mathbf{\phi}(t)} = \frac{1}{\aleph} \left( \ket{0} \otimes \ket{\hat{\mathbf{w}}_1(t)}  + \ket{1} \otimes  \ket{\hat{\mathbf{w}}_2(t)} + .. \ket{k-1} \otimes \ket{\hat{\mathbf{w}}_{k}(t)} \right)
\end{align}
we measure the first $\log(k)$ qubits and $k-1$ registers, to prepare the final state
\begin{equation}
    \ket{\widetilde{\phi(t)}} = \frac{\ket{j}\otimes \ket{\widetilde{\psi(t)}}}{ \braket{\widetilde{\psi(t)}|\widetilde{\psi(t)} }  } 
\end{equation}
\end{proof}

\begin{algorithm}[H]
\caption{Quantum Algorithm for Nonlinear Schrödinger Equation}
\label{alg:nonlinear_Schr\"{o}dinger}
\begin{algorithmic}[1]
\Require
\State Oracle Access to Hermitian matrix $\mathbf{H}_1 \in \mathbb{C}^{N \times N}$ (\autoref{Def:Oracles-nonlinear_Schr\"{o}dinger_eq})
\State Oracle Access to Rectangular matrix $\mathbf{H}_2 \in \mathbb{C}^{N \times N^2}$ (\autoref{Def:Oracles-nonlinear_Schr\"{o}dinger_eq})
\State Initial state $\ket{\psi(0)}$ with $\braket{\psi(0)|\psi(0)} = \beta$
\State Time parameter $t$
\State Error tolerance $\epsilon$
\State Maximum number of nonzero elements $d$ in any row/column of $\mathbf{H}_1, \mathbf{H}_2$

\Ensure Quantum state $\ket{\widetilde{\psi}(t)}$ approximating the solution

\Function{NonlinearSchr\"{o}dingerSimulation}{$\mathbf{H}_1, \mathbf{H}_2, \ket{\psi(0)}, t, \epsilon$}

\State // Step 1: Determine Carlemann truncation order
\State $k \gets \mathcal{O}\left(\frac{\log(1/\epsilon)}{\log(1/\|\mathbf{H}_2\|)}\left(1 + \frac{\log(t)}{\log(1/\|\mathbf{H}_2\|)} \right) \right)$ or $\widetilde{\mathcal{O}} \left( \frac{\log\left(\frac{\| \mathbf{H}_2\|\beta^2T}{\epsilon}\right)}{\log(\frac{1}{R_r})}\right)$

\State // Step 2: Compute symmetrization parameter
\State $\eta \gets \sqrt{\frac{\|\mathbf{H}_2\| k(k+1)}{2}\left(1 + \beta\frac{1-\beta^k}{(1-\beta)\epsilon}\right)t}$

\State // Step 3: Initialize extended state vector
\State $\ket{\mathbf{p}} \gets \sum_{i=1}^k \ket{i-1} \otimes \ket{\mathbf{w}_i}$
\State where $\ket{\mathbf{w}_i} = \bm{\ket{0}}^{\otimes k-i}\bm{\ket{\psi}}^{\otimes i}$

\State // Step 4: Construct symmetrized Carlemann operator
\State $\mathbf{Q}(\eta) \gets \sum_{i=0}^{k-1} \left(\ket{i}\bra{i} \otimes \mathbf{A}_{i+1} + \ket{i}\bra{i+1} \otimes \frac{\mathbf{B}_{i+1}}{\eta} + \ket{i+1}\bra{i} \otimes \frac{\mathbf{B}_{i+1}^\dagger}{\eta}\right)$

\State // Step 5: Construct block encoding
\State $\mathcal{U}(\mathbf{Q}(\eta)) \gets$ BlockEncodingConstruct($\mathbf{Q}(\eta)$)

\State // Step 6: Number of queries for Hamiltonian simulation
\State $G \gets \mathcal{O}\left(\alpha k^2 t + \frac{k-1}{2}\log\left(\frac{\|\mathbf{H}_2\|k(k+1)}{2}\left(1 + \beta\frac{1-\beta^k}{(1-\beta)\epsilon}\right)t\right)\right)$

\State // Step 7: Prepare evolved quantum state
\State $\ket{\phi(t)} \gets$ HamiltonianSimulation($\mathcal{U}(\mathbf{Q}(\eta)), G$)

\State // Step 8: Measure to get final state
\State $\ket{\widetilde{\phi}(t)} \gets \text{Measure}(\ket{\phi(t)})$

\State \Return $\ket{\widetilde{\phi}(t)}$
\EndFunction
\end{algorithmic}
\end{algorithm}

\subsection{Computation of Expectation Values}

Our encoding scheme stores the solution in the form:

\begin{align}
    \ket{\mathbf{\phi}(t)} = \frac{1}{\aleph} \left( \ket{0} \otimes \ket{\hat{\mathbf{w}}_0(t)}  + \ket{1} \otimes  \ket{\hat{\mathbf{w}}_1(t)} + .. \ket{k-1} \otimes \ket{\hat{\mathbf{w}}_{k-1}(t)} \right)
\end{align}
Here, 
\begin{align}
    \ket{\hat{\mathbf{w}}_{k-1}(t)} = \ket{\bm{\psi}(t)}^{\otimes k}
\end{align}
To measure expectation value of operator $\mathbf{O}$ defined on $ \ket{\bm{\psi}(t)}$, it is sufficient to measure the expectation value of the operator of the form:
\begin{align}
    \hat{\mathbf{O}} = \ket{0} \bra{0} \otimes \mathbf{O}' + \ket{1} \bra{1} \otimes \mathbf{0} ..+ \ket{k-1} \bra{k-1} \otimes \mathbf{0}
\end{align}
Here $\mathbf{O}'$ is an operator defined as:
\begin{equation}
    \mathbf{O}' = \begin{bmatrix}
                    \mathbf{O} & \mathbf{0} & \mathbf{0} & \cdots \\
                    \mathbf{0} & \mathbf{0} & \mathbf{0} & \cdots \\
                    \mathbf{0} & \mathbf{0} & \mathbf{0} & \cdots \\
                    \vdots & \vdots & \vdots & \ddots \\
                    \end{bmatrix}
\end{equation}
\begin{align}
  \bra{{\mathbf{\phi}(t)}}   \hat{\mathbf{O}}  \ket{{\mathbf{\phi}(t)}} =  \frac{1}{\aleph^2}\bra{\hat{\mathbf{w}}_0(t)}  \mathbf{O}'\ket{\hat{\mathbf{w}}_0(t)} = \frac{1}{\aleph^2 \eta^{k-1}} \bra{{\bm{\psi}}(t)}  \mathbf{O}\ket{{\bm{\psi}}(t)}
\end{align}

%

Using a method based on high-confidence amplitude amplification, we can compute this expectation with a probability $1-\delta$ using $\mathcal{O}(\log(1/\delta)\frac{1}{\epsilon}) $ queries to Unitary Matrix obtained from \autoref{thm: nonlinear_Schr\"{o}dinger_subnormalized_state}.

\section{Nonlinear Oscillator Systems}
\label{sec:nonlinear_oscillator_system}
In this section, we consider the problem of simulating a coupled nonlinear oscillator system, focusing on the special case where the nonlinear term is quadratic. Consider an oscillator system of $N$ coupled masses, where each mass is influenced by its neighbors through linear and nonlinear interactions. The dynamics of such a system can be captured by a set of second-order differential equations. In the case of quadratic nonlinearity, the equations of motion can be written as: $\mathbf{M}\ddot{\mathbf{x}} = -\mathbf{K_1} \mathbf{x} + \mathbf{K_2}\mathbf{x}\otimes\mathbf{x}$. The formal definition of problem is given below. 

\nonlinearoscillator*

We first show that the dynamics of a nonlinear oscillator system can be reduced to the dynamics of a quantum system evolving according to the nonlinear Schr\"{o}dinger Equation. Using techniques described in the earlier section, we can efficiently simulate the nonlinear Schr\"{o}dinger Equation, when the nonlinear Schr\"{o}dinger Equation satisfies certain constraints. Specifically, the norm of the solution of the nonlinear Schr\"{o}dinger Equation has to be upper bounded by a constant, and the system should satisfy non resonance conditions. These conditions place a restriction on the class of nonlinear classical oscillator systems that can be efficiently simulated on quantum computers. 

\subsection{Reduction to Nonlinear Schr\"{o}dinger Equation}   
\label{section:reduction_to_nonlinear_Schr\"{o}dinger}
First, we apply a coordinate transformation to transform the second-order nonlinear ODE to a first-order nonlinear ODE.\\
Let
\begin{align}
    \mathbf{u} = \sqrt{\mathbf{M}} \mathbf{x} \in \mathbb{R}^{N}, \text{ where } N = 2^n \text{ for some } n \in \mathbb{N}
\end{align}
Applying this coordinate transformation to Eq.~\eqref{eq:nonlinear_oscillator} yields:
\begin{align}
    \ddot{\mathbf{u}} = -\sqrt{\mathbf{M}}^{-1} \mathbf{K}_1 \sqrt{\mathbf{M}}^{-1} \mathbf{u} + \sqrt{\mathbf{M}}^{-1} \mathbf{K}_2 [\sqrt{\mathbf{M}}^{-1} \otimes \sqrt{\mathbf{M}}^{-1} ] \mathbf{u} \otimes \mathbf{u}
\end{align}
Which is of the form:
\begin{align}
  \ddot{\mathbf{u}} = -\mathbf{A}_1 \mathbf{u} + \mathbf{A}_2 \mathbf{u} \otimes \mathbf{u}  
\end{align}
Here $\mathbf{A}_1 = \sqrt{\mathbf{M}}^{-1} \mathbf{K}_1 \sqrt{\mathbf{M}}^{-1} $ is a negative definite matrix and, the rectangular matrix $\mathbf{A}_2 =  \sqrt{\mathbf{M}}^{-1} \mathbf{K}_2 [\sqrt{\mathbf{M}}^{-1} \otimes \sqrt{\mathbf{M}}^{-1} ]  $  and its norm satisfies $\| \mathbf{A}_2\| \leq \| \mathbf{K}_2\| \| \sqrt{\mathbf{M}^{-1}} \|^3 $. Since $\mathbf{A}_1$ is a positive definite matrix, we can decompose  $\mathbf{A}_1$ as \(\mathbf{A_1} = \mathbf{B}\mathbf{B}^{\dagger}\). Here, the rectangular matrix $\mathbf{B} \in \mathbb{C}^{N \times M}$, and its transpose, $\mathbf{B}^{\dagger}$ acts as generalized square root of the matrix $\mathbf{A}_1$. Now, we encode the positions and momenta of classical oscillator system in a quantum state vector using the mapping adopted in \cite{babbush2023exponential}. Specifically, we define $\ket{\bm{\psi}(t)} \in \mathbb{C}^{N +M} $:

\begin{align}
    \ket{\bm{\psi}(t)} = \begin{bmatrix}
       \dot{\mathbf{u}} \\
        -i \mathbf{B}^{\dagger}\mathbf{u}
    \end{bmatrix}
\end{align}
Here, $-i \mathbf{B}^{\dagger}\mathbf{u} \in \mathbb{C}^M$.
\begin{align}
\label{reduced_Schr\"{o}dinger_eq}
    \dot{\ket{\bm{\psi}} } = \begin{bmatrix}
        \ddot{\mathbf{u}} \\
        -i \mathbf{B}^{\dagger}\dot{\mathbf{u}}
    \end{bmatrix}
     &= -i\begin{bmatrix}
         \mathbf{0} & \mathbf{B}\\
         \mathbf{B}^{\dagger} & \mathbf{0}\\
     \end{bmatrix}
     \begin{bmatrix}
       \dot{\mathbf{u}} \\
        -i \mathbf{B}^{\dagger}\mathbf{u}
    \end{bmatrix} 
    \\ \nonumber
    &+ \begin{bmatrix}
        \mathbf{0}_{N \times N^2} & \mathbf{0}_{N \times NM} & \mathbf{0}_{N \times NM} & - \mathbf{A}_2 \mathbf{D}_2 \\
        \mathbf{0}_{M \times N^2} & \mathbf{0}_{M \times NM} & \mathbf{0}_{M \times NM} & \mathbf{0}_{M \times M^2}\\
    \end{bmatrix}
    \mathbf{P_2}
    \begin{bmatrix}
       \dot{\mathbf{u}} \\
        -i\mathbf{B}^{\dagger}\mathbf{u}
    \end{bmatrix} 
    \otimes
    \begin{bmatrix}
       \dot{\mathbf{u}} \\
        -i\mathbf{B}^{\dagger}\mathbf{u}
    \end{bmatrix}  
\end{align}

Here, $\mathbf{P}_2 \in \mathbb{R}^{(M+N)^2 \times (M+N)^2}$ is a permutation matrix which acts in the following fashion:
\begin{align}
    \mathbf{P}_2   \begin{bmatrix}
       \dot{\mathbf{u}} \\
        -i\mathbf{B}^{\dagger}\mathbf{u}
    \end{bmatrix} 
    \otimes
    \begin{bmatrix}
       \dot{\mathbf{u}} \\
        -i\mathbf{B}^{\dagger}\mathbf{u}
    \end{bmatrix}    = \begin{bmatrix}
        \dot{\mathbf{u}} \otimes \dot{\mathbf{u}} \\
         \dot{\mathbf{u}} \otimes -i\mathbf{B}^{\dagger} \mathbf{u}  \\
        -i\mathbf{B}^{\dagger} \mathbf{u} \otimes \dot{\mathbf{u}} \\
        -i\mathbf{B}^{\dagger} \mathbf{u} \otimes -i\mathbf{B}^{\dagger} \mathbf{u}  \\
    \end{bmatrix}
\end{align}

Now we choose the matrix, $\mathbf{D}_2 \in \mathbb{R}^{N^2 \times M^2}$ such that $\mathbf{D}_2 [i\mathbf{B}^{\dagger} \mathbf{u} \otimes -i\mathbf{B}^{\dagger} \mathbf{u}] = \mathbf{u} \otimes \mathbf{u} $. Recall that $i\mathbf{B}^{\dagger} \mathbf{u}$ can be written as:
\[
i \mathbf{B}^{\dagger}\mathbf{u} = i\begin{bmatrix}
                                    \sqrt{k_{11}} u_1 \\
                                    \sqrt{k_{22}} u_2 \\
                                    \vdots \\
                                    \sqrt{k_{12}}(u_1 -u_2) \\
                                    \vdots \\
                                    \sqrt{k_{N-1,N} }(u_{N-1} -u_{N})
                                    \end{bmatrix}
                                    = \begin{bmatrix}
                                        \sqrt{\mathbf{K}_1^{(1)}} \mathbf{u} \\
                                        \\
                                         \sqrt{\mathbf{K}_1^{(2)}} \Delta\mathbf{u}   
                                    \end{bmatrix}
\]
Here, $\Delta\mathbf{u}^T = [(u_1 -u_2),(u_1-u_3),\cdots, (u_{N-1} -u_{N})] \in \mathbb{C}^{ \frac{N(N-1)}{2}}$. Now we can write the tensor product $i\mathbf{B}^{\dagger} \mathbf{u} \otimes -i\mathbf{B}^{\dagger} \mathbf{u}$ as:

\begin{equation}
    i\mathbf{B}^{\dagger} \mathbf{u} \otimes -i\mathbf{B}^{\dagger} \mathbf{u} = \mathbf{P}' \begin{bmatrix}
                                       \left( \sqrt{\mathbf{K}_1^{(1)}} \otimes \sqrt{\mathbf{K}_1^{(1)}} \right) ( \mathbf{u} \otimes  \mathbf{u} ) \\
                        
                                         \left( \sqrt{\mathbf{K}_1^{(1)}} \otimes \sqrt{\mathbf{K}_1^{(2)}} \right) ( \mathbf{u} \otimes  \Delta\mathbf{u} ) \\
                                         \left( \sqrt{\mathbf{K}_1^{(2)}} \otimes \sqrt{\mathbf{K}_1^{(1)}} \right) ( \Delta \mathbf{u} \otimes  \mathbf{u} )\\
                                         \left( \sqrt{\mathbf{K}_1^{(2)}} \otimes \sqrt{\mathbf{K}_1^{(2)}} \right) ( \Delta \mathbf{u} \otimes \Delta \mathbf{u} )
                                    \end{bmatrix}
\end{equation}
Here, $\mathbf{P}'$ is a permutation matrix, and the matrices $\mathbf{K}_1^{(1)}$, and $\mathbf{K}_1^{(2)}$ are diagonal. Now, let us  choose the matrix $\mathbf{D}_2$ as:

\begin{equation}
    \mathbf{D}_2 = \begin{bmatrix}
        \left( \sqrt{(\mathbf{K}_1^{(1)})^{-1}} \otimes \sqrt{(\mathbf{K}_1^{(1)}})^{-1} \right)  & \mathbf{0} &\mathbf{0} & \mathbf{0}
    \end{bmatrix} (\mathbf{P}')^{\dagger}
\end{equation}
It can be seen that $\mathbf{D}_2 [i\mathbf{B}^{\dagger} \mathbf{u} \otimes -i\mathbf{B}^{\dagger} \mathbf{u}] = \mathbf{u} \otimes \mathbf{u} $, and since $\mathbf{K}_1^{(1)}$ is a diagonal matrix, its inverse can be computed efficiently, and therefore we can block encode the matrix $\mathbf{D}_2$ efficiently. The norm of the matrix $\mathbf{D}_2$, is given by:
\begin{equation}
    \|\mathbf{D}_2\| = \max_{ij} \frac{1}{\sqrt{k_{ii}k_{jj}}} = \frac{1}{k^{(1)}_{min}}
\end{equation}
Now, consider Eq.~\eqref{reduced_Schr\"{o}dinger_eq} which is of the form:
\begin{align}
    \dot{\ket{\bm{\psi}} } = -i\mathbf{H}_1 \ket{\bm{\psi}} + \mathbf{H}_2 \ket{\bm{\psi}} \otimes \ket{\bm{\psi}}
\end{align}
where,
\begin{align}
    \mathbf{H}_1 = \begin{bmatrix}
         \mathbf{0} & \mathbf{B}\\
         \mathbf{B}^{\dagger} & \mathbf{0}\\
     \end{bmatrix}
\end{align}
and 
\begin{align}
    \mathbf{H}_2 = \begin{bmatrix}
        \mathbf{0}_{N \times N^2} & \mathbf{0}_{N \times NM} & \mathbf{0}_{N \times NM} & - \mathbf{A}_2 \mathbf{D}_2 \\
        \mathbf{0}_{M \times N^2} & \mathbf{0}_{M \times NM} & \mathbf{0}_{M \times NM} & \mathbf{0}_{M \times M^2}\\
    \end{bmatrix}  \mathbf{P}_2
\end{align}
 Since it is easier to work with square matrices, we further pad these matrices with zeros. Specifically, first let's pad $\mathbf{B} \in \mathbb{C}^{N \times M}$ with zeros to create the padded rectangular matrix $\hat{\mathbf{B}} \in \mathbb{C}^{N \times N^2}$. Let the square matrix constructed from $\hat{\mathbf{B}}$ be $\bar{\mathbf{B}} = \ket{0}\otimes \hat{\mathbf{B}}$. We define the modified quantum state as follows:
\begin{align}
    \ket{\bm{\psi}(t)} = \begin{bmatrix}
       \ket{0} \otimes \dot{\mathbf{u}} \\
        -i \hat{\mathbf{B}}^{\dagger}\mathbf{u}
    \end{bmatrix}
\end{align}
and the modified nonlinear Schr\"{o}dinger Equation is given by:
\begin{align}
\label{nonlinear_Schr\"{o}dinger_eq_padded}
    \dot{\ket{\bm{\psi}} } = \begin{bmatrix}
        \ket{0} \otimes \ddot{\mathbf{u}} \\
        -i \hat{\mathbf{B}}^{\dagger}\dot{\mathbf{u}}
    \end{bmatrix}
     &= -i\begin{bmatrix}
         \mathbf{0} & \bar{\mathbf{B}}\\
         \bar{\mathbf{B}}^{\dagger} & \mathbf{0}\\
     \end{bmatrix}
     \begin{bmatrix}
      \ket{0} \otimes \dot{\mathbf{u}} \\
        -i \hat{\mathbf{B}}^{\dagger}\mathbf{u}
    \end{bmatrix} 
    \\ \nonumber
    &+ \begin{bmatrix}
        \mathbf{0}_{N^2 \times N^4} & \mathbf{0}_{N^2 \times N^4} & \mathbf{0}_{N^2 \times N^4} & - \bar{\mathbf{A}}_2 \bar{\mathbf{D}}_2 \\
        \mathbf{0}_{N^2 \times N^4} & \mathbf{0}_{N^2 \times N^4} & \mathbf{0}_{N^2 \times N^4} & \mathbf{0}_{N^2 \times N^4}\\
    \end{bmatrix}
    \bar{\mathbf{P}}_2
    \begin{bmatrix}
     \ket{0} \otimes \dot{\mathbf{u}}  \\
        -i\hat{\mathbf{B}}^{\dagger}\mathbf{u}
    \end{bmatrix} 
    \otimes
    \begin{bmatrix}
       \ket{0} \otimes \dot{\mathbf{u}}  \\
        -i\hat{\mathbf{B}}^{\dagger}\mathbf{u}
    \end{bmatrix}  
\end{align}

Here, $ \bar{\mathbf{P}}_2$ is a permutation matrix such that:
\[
\bar{\mathbf{P}}_2
   \left( \begin{bmatrix}
     \ket{0} \otimes \dot{\mathbf{u}}  \\
        -i\hat{\mathbf{B}}^{\dagger}\mathbf{u}
    \end{bmatrix} 
    \otimes
    \begin{bmatrix}
       \ket{0} \otimes \dot{\mathbf{u}}  \\
        -i\hat{\mathbf{B}}^{\dagger}\mathbf{u}
    \end{bmatrix} \right)  = \begin{bmatrix}
        \ket{0} \otimes  \dot{\mathbf{u}} \otimes  \ket{0} \otimes  \dot{\mathbf{u}} \\
        \ket{0} \otimes  \dot{\mathbf{u}}  \otimes -i\hat{\mathbf{B}}^{\dagger} \mathbf{u}  \\
        -i\hat{\mathbf{B}}^{\dagger} \mathbf{u} \otimes  \ket{0} \otimes  \dot{\mathbf{u}} \\
        -i\hat{\mathbf{B}}^{\dagger} \mathbf{u} \otimes -i\hat{\mathbf{B}}^{\dagger} \mathbf{u}  \\
    \end{bmatrix}
\]
The \autoref{nonlinear_Schr\"{o}dinger_eq_padded} is still a form of nonlinear Schr\"{o}dinger Equation, with 
\begin{equation}
    \bar{\mathbf{H}}_1 =  \begin{bmatrix}
         \mathbf{0} & \bar{\mathbf{B}}\\
         \bar{\mathbf{B}}^{\dagger} & \mathbf{0}\\
     \end{bmatrix}
\end{equation}
, and 
\begin{align}
    \bar{\mathbf{H}}_2 &=  \begin{bmatrix}
        \mathbf{0}_{N^2 \times N^4} & \mathbf{0}_{N^2 \times N^4} & \mathbf{0}_{N^2 \times N^4} & - \bar{\mathbf{A}}_2 \bar{\mathbf{D}}_2 \\
        \mathbf{0}_{N^2 \times N^4} & \mathbf{0}_{N^2 \times N^4} & \mathbf{0}_{N^2 \times N^4} & \mathbf{0}_{N^2 \times N^4}\\
    \end{bmatrix} 
    \bar{\mathbf{P}}_2 \\ \nonumber
     &= (\mathbf{C}_1 \otimes - \bar{\mathbf{A}}_2 \bar{\mathbf{D}}_2 )\bar{\mathbf{P}}_2
\end{align}
where $ \mathbf{C}_1$ is defined as: 
\begin{align}
    \mathbf{C}_1 = \begin{bmatrix} 
        0 & 0 & 0 & 1 \\ 
        0 & 0 & 0 & 0 
    \end{bmatrix}.
\end{align}
The norm of the matrix $ \bar{\mathbf{H}}_2$ is bounded by:
\begin{equation}
    \|  \bar{\mathbf{H}}_2 \| \leq \|\mathbf{A}_2\| \|\mathbf{D}_2 \| \leq \| \mathbf{K}_2\| \| \sqrt{\mathbf{M}^{-1}} \|^3 \|\mathbf{D}_2 \| \leq \frac{d k^{(2)}_{\max} }{k^{(1)}_{\min} (m_{\min})^{3/2}}
\end{equation}
where $k^{(2)}_{\max} = \max_{ij}\mathbf{K}_2(i,j)$, and $m_{\min} = \min_i \mathbf{M}(i,i)$, and $d$ is the sparsity of the matrix $\mathbf{K}_2$.  The matrix $\bar{\mathbf{D}}_2 $ is defined such that $\bar{\mathbf{D}}_2 ( -i\hat{\mathbf{B}}^{\dagger} \mathbf{u} \otimes -i\hat{\mathbf{B}}^{\dagger} \mathbf{u}) = \mathbf{u} \otimes \mathbf{u}$, and $\bar{\mathbf{A}}_2 = \ket{0}\otimes \mathbf{A}_2 \in \mathbb{C}^{N^2 \times N^2}$. 
The nonlinear Schr\"{o}dinger Equation can be simulated using the quantum algorithm discussed in \autoref{sec:nonlinear_Schrodinger_eq}.  
\subsection{Block Encodings}
To apply the quantum algorithm, we need to first create block encodings for $\bar{\mathbf{H}}_1$, and $\bar{\mathbf{H}}_2$ using oracles that access elements of the matrices $\mathbf{K}_1$, and $\mathbf{K}_2$. The block encoding for $\bar{\mathbf{H}}_1$ can be constructed by directly applying results from \cite{babbush2023exponential}. 
\begin{lemma}[Block Encoding $\bar{\mathbf{H}}_1$]
\label{Block_Encode_H1_nonlinear_oscillator}
    Assume access to elements of the matrix $\mathbf{G}_1$ through oracles defined in \autoref{Def:Oracles-nonlinear-oscillator-system}. Let $\mathbf{A}_1 = \sqrt{\mathbf{M}^{-1}} \mathbf{K}_1 \sqrt{\mathbf{M}^{-1}}$ and let $\mathbf{A}_1 = \mathbf{B}\mathbf{B}^{\dagger}$. Here, $\mathbf{B} \in \mathbb{C}^{N \times M}$ Let the Hamiltonian $\mathbf{H}_1$ be:
    \[
     \mathbf{H}_1 = \begin{bmatrix}
         \mathbf{0} & \mathbf{B}\\
         \mathbf{B}^{\dagger} & \mathbf{0}\\
     \end{bmatrix}
    \]
    Consider the padded $d$-sparse Hamiltonian $\bar{\mathbf{H}}_1 \in \mathbb{C}^{2N^2 \times 2N^2}$ constructed  from the padded matrices $\bar{\mathbf{B}} \in \mathbb{C}^{N^2 \times N^2}$, and $\bar{\mathbf{B}}^{\dagger} \in \mathbb{C}^{N^2 \times N^2}$. The matrix $\bar{\mathbf{B}}$, and its transpose is obtained from padding the matrix $\mathbf{B}$ with additional zeroes. 
    \[
    \bar{\mathbf{H}}_1 = \begin{bmatrix}
         \mathbf{0} & \bar{\mathbf{B}}\\
         \bar{\mathbf{B}}^{\dagger} & \mathbf{0}\\
     \end{bmatrix}
    \]
    There exists a unitary matrix $\mathcal{U}(\bar{\mathbf{H}}_1)$ block encoding the matrix $\bar{\mathbf{H}}_1$ such that $\|\alpha_1 (\bra{0} \otimes \mathbf{I})\mathcal{U}(\bar{\mathbf{H}}_1) (\ket{0} \otimes \mathbf{I}) - \mathbf{H}_1 \| \leq \epsilon'$. 
\end{lemma}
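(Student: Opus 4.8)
$\textbf{Proof proposal for Lemma~\ref{Block_Encode_H1_nonlinear_oscillator}.}$

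The plan is to reduce the block encoding of $\bar{\mathbf{H}}_1$ to the harmonic oscillator block-encoding machinery of~\cite{babbush2023exponential}, treating the padding as a purely formal operation that does not affect the sparsity or the magnitude of nonzero entries. First I would observe that $\mathbf{H}_1 = \begin{bmatrix} \mathbf{0} & \mathbf{B}\\ \mathbf{B}^{\dagger} & \mathbf{0} \end{bmatrix}$ is exactly the Hamiltonian that appears in the Harmonic oscillator simulation of~\cite{babbush2023exponential}, where $\mathbf{B}$ is the (weighted, signed) incidence matrix of the graph defined by $\mathbf{G}_1$ and the mass matrix $\mathbf{M}$, with action given by the analogue of Eq.~\eqref{eq: B_definition}: $\mathbf{B}\ket{j,k}$ returns $\sqrt{\mathbf{G}_1(j,j)/m_j}\ket{j}$ when $j=k$ and $\sqrt{\mathbf{G}_1(j,k)/m_j}(\ket{j}-\ket{k})$ when $j<k$. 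Using the oracles of~\autoref{Def:Oracles-nonlinear-oscillator-system} ($O_M$, $O_{G_1}$, and the row/column sparsity oracles $O_{c_{G_1}}$ etc.), one constructs $\mathcal{U}_{\mathbf{B}}$ exactly as in the Lemma of~\autoref{sec: redn_to_quantum_evolution} (the steps that prepare $\frac{1}{\sqrt{d}}\sum_\ell\ket{\ell}$, walk to the nonzero column index, query $\bar m_j$ and $\bar G_1(j,k)$, perform the inequality test against a uniform superposition of $s$ ancillas to synthesize amplitude $\sqrt{k_{jk}/(m_j\alpha_1)}$, uncompute, and apply the $HZ$ antisymmetrization for the $j<k$ branch), obtaining an $(\sqrt{2\alpha_1 d},\,\cdot\,,\epsilon')$ block encoding of $\mathbf{B}$ with $\alpha_1 = \max_{jk}(k_{jk}/m_j)$.

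Next I would address the padding. The padded matrix $\bar{\mathbf{B}} = \ket{0}\otimes\hat{\mathbf{B}}$, where $\hat{\mathbf{B}}\in\mathbb{C}^{N\times N^2}$ is $\mathbf{B}$ followed by zero columns, is obtained from $\mathbf{B}$ by tensoring with fixed $\ket{0}$ registers and relabelling basis indices; crucially it has the same sparsity $d$ and the same maximum entry magnitude as $\mathbf{B}$, and the zero rows/columns can be handled by the second branch of the oracle $O_{H_2'}$-style convention (mapping $\ket{j}\ket{k}\ket{0}\mapsto\ket{j}\ket{k}\ket{0}$ whenever $j$ or $k$ indexes a padding slot). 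Hence the block encoding $\mathcal{U}(\bar{\mathbf{B}})$ is obtained from $\mathcal{U}_{\mathbf{B}}$ by conjugating with the fixed relabelling isometry and adding ancilla qubits initialized to $\ket{0}$; the block-encoding constant and error are unchanged. Finally, the $2\times 2$ block structure $\bar{\mathbf{H}}_1 = \ket{0}\bra{1}\otimes\bar{\mathbf{B}} + \ket{1}\bra{0}\otimes\bar{\mathbf{B}}^{\dagger}$ is assembled from $\mathcal{U}(\bar{\mathbf{B}})$ and its inverse using the standard off-diagonal-block construction — this is precisely ``Lemma 9 of~\cite{babbush2023exponential}'' invoked in~\autoref{sec: redn_to_quantum_evolution} — costing $\mathcal{O}(1)$ calls to controlled-$\mathcal{U}(\bar{\mathbf{B}})$ and $\mathcal{O}(\log N)$ two-qubit gates, and yielding a unitary $\mathcal{U}(\bar{\mathbf{H}}_1)$ with $\|\alpha_1(\bra{0}\otimes\mathbf{I})\mathcal{U}(\bar{\mathbf{H}}_1)(\ket{0}\otimes\mathbf{I}) - \mathbf{H}_1\|\le\epsilon'$ after choosing the internal precision in the inequality test appropriately.

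The main obstacle I anticipate is bookkeeping rather than mathematics: one must be careful that the relabelling between the ``natural'' index set for $\mathbf{B}$ (pairs $(j,k)$ with $k\ge j$) and the padded square index set of $\bar{\mathbf{B}}\in\mathbb{C}^{N^2\times N^2}$ is realized by an efficiently implementable permutation, and that the inequality-test step still produces exactly the amplitude $\sqrt{\mathbf{G}_1(j,k)/m_j}$ divided by the correct normalization $\sqrt{2\alpha_1 d}$ given that $m_j < m_f$-type bounds are absent here (all masses are ``real'' masses of the nonlinear oscillator, bounded below by $m_{\min}$). I would also double-check that the claimed error $\epsilon'$ absorbs both the amplitude-synthesis error of the inequality test (controlled by taking $s = \mathcal{O}(\log(1/\epsilon'))$ ancilla bits) and the error from any fixed-point arithmetic, exactly as in the forced-oscillator construction; no new analytic estimate is required beyond what~\cite{babbush2023exponential,gilyen2019quantum} already provide.
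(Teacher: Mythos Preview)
Your proposal is correct and follows essentially the same approach as the paper: both reduce the block encoding of $\bar{\mathbf{H}}_1$ to Lemma~9 of~\cite{babbush2023exponential}, with the padding treated as a formal relabelling that leaves sparsity and entry magnitudes unchanged. The paper's proof is a one-liner that just cites that lemma and states the block-encoding constant $\alpha_1=\sqrt{2\aleph d}$ with $\aleph=k_{\max}/m_{\min}$, whereas you spell out the intermediate steps (the $\mathcal{U}_{\mathbf{B}}$ construction, the inequality test, the padding isometry) explicitly; your $\alpha_1=\max_{jk}(k_{jk}/m_j)$ plays the role of the paper's $\aleph$, so be careful not to conflate it with the paper's use of the symbol $\alpha_1$ for the full normalization $\sqrt{2\aleph d}$.
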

\begin{proof}
    We can construct this block encoding by applying Lemma 9 from \cite{babbush2023exponential}. The block encoding constant, $\alpha_1 = \sqrt{2 \aleph d}$, where $\aleph = \frac{k_{max}}{m_{min}}$. To construct the block encoding we need to use the oracles from \autoref{Def:Oracles-nonlinear_Schr\"{o}dinger_eq}, together with its inverses and controlled versions $\mathcal{O}(1)$ times. 
\end{proof}
Now, let us consider the strategy for block encoding $\bar{\mathbf{H}}_2$. Recall, that to apply the quantum algorithm described in previous section, we require a block encoding for the padded rectangular matrix $\bar{\mathbf{H}}_2$. Let the padded matrix be $(\bar{\mathbf{H}}_2)_{pad}$, 

\begin{align}
    (\bar{\mathbf{H}}_2)_{pad} &= \ket{0}^{\otimes{2n + 1}} \otimes \bar{\mathbf{H}}_2 \\ \nonumber
    &:= \ket{0}^{\otimes{2n}} \otimes \ket{0} \otimes (\mathbf{C}_1 \otimes - \bar{\mathbf{A}}_2 \bar{\mathbf{D}}_2 )\bar{\mathbf{P}}_2\\ \nonumber
    &:= \left[(\ket{0}\otimes \mathbf{C}_1) \otimes (\ket{0}^{\otimes{2n}}  \otimes ( - \bar{\mathbf{A}}_2 \bar{\mathbf{D}}_2 )) \right]\left[\mathbf{I}\otimes \bar{\mathbf{P}}_2 \right]
\end{align}
To block encode $(\bar{\mathbf{H}}_2)_{pad}$, we need to block encode these submatrices first and combine them appropriately.  
\begin{lemma}[Block Encoding $\mathbf{K}_2' = \ket{0}^{\otimes n} \otimes \mathbf{K}_2$]
\label{Block_Encode_K_2_matrix}
    Assume access to elements of the matrix $\mathbf{K}_2$ through oracles in \autoref{Def:Oracles-nonlinear-oscillator-system}.  Let $d_r$ be the row wise sparsity of $\mathbf{K}_2$, and let $d_c$ be the coloumnwise sparsity of the matrix $\mathbf{K}_2$. Let, $d=\max(d_r,d_c)$, and let the maximum absolute value of the entries in the matrix $\mathbf{K}_2$ be $\max_{ij}|\mathbf{K}_2(i,j)|$ and let $\alpha_2 = d \max_{ij}|\mathbf{K}_2(i,j)|$. There exists a unitary matrix $\mathcal{U}(\mathbf{K}_2')$ such that $\| \alpha_2(\bra{0} \otimes \mathbf{I})\mathcal{U}({\mathbf{K}}_2') (\ket{0} \otimes \mathbf{I})   -\mathbf{K}_2'\| \leq \epsilon'$. 
\end{lemma}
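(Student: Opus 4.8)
The plan is to construct the block encoding of $\mathbf{K}_2' = \ket{0}^{\otimes n} \otimes \mathbf{K}_2$ by a direct application of the standard sparse-matrix block encoding construction (e.g.\ Lemma 48 of \cite{gilyen2019quantum}, exactly as was done for $\mathbf{H}_1$ and $\mathbf{H}_2$ in \autoref{lemma:block_encoding_H1} and \autoref{Block_Encoding_H_2}). The key observation is that $\mathbf{K}_2$ is a $d$-sparse matrix whose entries are accessible through the oracles $O_{K_2}$, $O_{c_{K_2}}$, and the analogous row-sparsity oracle in \autoref{Def:Oracles-nonlinear-oscillator-system}, and that left-padding by $\ket{0}^{\otimes n}$ does not change the sparsity or the magnitude of the nonzero entries — it merely relabels the row index of each nonzero element. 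Hence the sparse-access oracles for $\mathbf{K}_2'$ can be synthesized from those of $\mathbf{K}_2$ using $\mathcal{O}(1)$ additional gates.

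The steps I would carry out, in order, are as follows. First I would make explicit that $\mathbf{K}_2'$ has the same row/column sparsity $d = \max(d_r, d_c)$ as $\mathbf{K}_2$ and the same value $\max_{ij}|\mathbf{K}_2'(i,j)| = \max_{ij}|\mathbf{K}_2(i,j)|$, so that the natural block-encoding normalization constant is $\alpha_2 = d\max_{ij}|\mathbf{K}_2(i,j)|$. Second, I would note that the sparse-access oracles for $\mathbf{K}_2'$ — namely the oracle returning the index of the $\ell$-th nonzero entry in a given row/column, and the oracle returning the corresponding matrix entry as a binary string — are obtained from the $\mathbf{K}_2$ oracles composed with the fixed index shift $\ket{j} \mapsto \ket{0}^{\otimes n}\ket{j}$ on the row register, together with the convention that rows outside the image of the padding contain only zeros. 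Third, I would invoke Lemma 48 of \cite{gilyen2019quantum} with these oracles to obtain a unitary $\mathcal{U}(\mathbf{K}_2')$ acting on $a = \mathcal{O}(\log N)$ ancilla qubits such that
\begin{equation}
\left\| \alpha_2 (\bra{0}^{\otimes a} \otimes \mathbf{I}) \mathcal{U}(\mathbf{K}_2') (\ket{0}^{\otimes a} \otimes \mathbf{I}) - \mathbf{K}_2' \right\| \leq \epsilon',
\end{equation}
where the error $\epsilon'$ comes from truncating the binary representations of the entries, and the construction uses $\mathcal{O}(1)$ queries to the oracles of \autoref{Def:Oracles-nonlinear-oscillator-system} together with their inverses and controlled versions.

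I do not expect a substantive obstacle here — this lemma is essentially bookkeeping, parallel to the block encodings of $\mathbf{H}_1$ and $\mathbf{H}_2$ established earlier. The only point requiring a little care is confirming that the padding by $\ket{0}^{\otimes n}$ is benign: one must check that the zero rows introduced by padding do not force a larger normalization constant (they do not, since zero rows have sparsity $0 \le d$ and contribute entries of magnitude $0 \le \max_{ij}|\mathbf{K}_2(i,j)|$), and that the sparse-access oracle can consistently report "no nonzero entry" or a dummy zero entry for those padded rows. This is handled exactly as in the padded constructions $\mathbf{H}_2' = \ket{0}\otimes \mathbf{H}_2$ of \autoref{Def:Oracles-nonlinear_Schr\"{o}dinger_eq}. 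Once $\mathcal{U}(\mathbf{K}_2')$ is in hand, subsequent lemmas will combine it with block encodings of the permutation matrix $\bar{\mathbf{P}}_2$, the diagonal matrix $\bar{\mathbf{D}}_2$, and the fixed projector $\mathbf{C}_1$ via the product and tensor-product rules for block encodings (Lemma 53 of \cite{gilyen2019quantum}) to assemble the block encoding of $(\bar{\mathbf{H}}_2)_{pad}$.
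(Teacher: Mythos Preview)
Your proposal is correct and takes essentially the same approach as the paper, which simply invokes Lemma~48 of \cite{gilyen2019quantum} to obtain an $(\alpha_2, 2\log(N)+3, \epsilon')$ block encoding of $\mathbf{K}_2'$ using $\mathcal{O}(1)$ oracle calls. Your additional remarks about why the zero-padding is benign and how the sparse-access oracles for $\mathbf{K}_2'$ are synthesized from those of $\mathbf{K}_2$ are more explicit than the paper's treatment but entirely in line with it.
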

\begin{proof}
   We can create block encoding of the matrix by directly applying Lemma 48 from \cite{gilyen2019quantum}. Specifically, we can implement a $(\alpha_2,2\log(N) +3,\epsilon')$ block encoding of the matrix $\mathbf{K}_2'$ using $\mathcal{O}(1)$ uses of the access oracles. 
\end{proof}

\begin{lemma}[ Block Encoding $\sqrt{\mathbf{M}^{-1}}$] 
\label{Block_Encode_Inverse_Mass_Matrix}
    Assume access to elements of the diagonal matrix $\mathbf{M}$ through oracles defined in \autoref{Def:Oracles-nonlinear-oscillator-system}. Let $m_{\min} = \min_{j} \mathbf{M}(j,j)$ and $\alpha_3 = \frac{1}{\sqrt{m_{\min}}}$. There exists a unitary matrix $\mathcal{U}(\sqrt{\mathbf{M}^{-1}})$ such that $\| \alpha_3(\bra{0} \otimes \mathbf{I}) \mathcal{U}(\sqrt{\mathbf{M}^{-1}})  (\ket{0} \otimes \mathbf{I}) -  \sqrt{\mathbf{M}^{-1}}  \| \leq \epsilon'$
\end{lemma}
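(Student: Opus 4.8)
The plan is to reduce this to the standard sparse-matrix block-encoding construction, exploiting the fact that $\sqrt{\mathbf{M}^{-1}}$ is diagonal and hence $1$-sparse, with $\|\sqrt{\mathbf{M}^{-1}}\| = 1/\sqrt{m_{\min}} = \alpha_3$. The only real work is to build, from the given oracle $O_M$, a data-access oracle that returns a binary approximation of the \emph{normalized} diagonal entry $\alpha_3^{-1}\, m_j^{-1/2}$, together with the (trivial) sparsity-pattern oracle for a diagonal matrix, and then to invoke Lemma 48 of \cite{gilyen2019quantum} exactly as is done for $\mathbf{K}_2'$ in \autoref{Block_Encode_K_2_matrix}.

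First I would unpack the encoding convention of \autoref{Def:Oracles-nonlinear-oscillator-system}: $O_M$ outputs $\ket{\bar m_j}$, the binary fraction representing $m_j/m_{\max}\in(0,1]$. Composing $O_M$ with a reversible coherent-arithmetic circuit that computes $z_j := \sqrt{m_{\min}/(m_{\max}\,\bar m_j)}$ (via a fixed number of Newton or CORDIC iterations for the reciprocal square root, using the classically known constants $m_{\min},m_{\max}$) yields an oracle $\ket{j}\ket{0}\mapsto\ket{j}\ket{\bar z_j}$, where $\bar z_j$ is a $b$-bit approximation of $z_j = \alpha_3^{-1} m_j^{-1/2}$. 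Since $m_{\min}\le m_j\le m_{\max}$ for all $j$, we have $0< z_j\le 1$, so $\bar z_j$ is a valid amplitude. The sparsity-structure oracle is essentially the identity: row $j$ has its unique nonzero in column $j$, so the "position of the $\ell$-th nonzero in row $j$" map is just $\ket{j}\ket{0}\mapsto\ket{j}\ket{j}$, implementable with a CNOT ladder.

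With these two oracles available, Lemma 48 of \cite{gilyen2019quantum}, applied with sparsity $d=1$ and normalization $d\cdot\max_j|\sqrt{\mathbf{M}^{-1}}(j,j)| = \alpha_3$, produces a unitary $\mathcal{U}(\sqrt{\mathbf{M}^{-1}})$ on $a = \log N + \mathcal{O}(1)$ ancilla qubits whose top-left block equals $\alpha_3^{-1}\sqrt{\mathbf{M}^{-1}}$ up to the arithmetic error. Choosing $b = \mathcal{O}(\log(1/\epsilon'))$ bits of precision in the reciprocal-square-root subroutine makes $\| \alpha_3(\bra{0}^{\otimes a}\otimes \mathbf{I})\,\mathcal{U}(\sqrt{\mathbf{M}^{-1}})\,(\ket{0}^{\otimes a}\otimes \mathbf{I}) - \sqrt{\mathbf{M}^{-1}}\| \le \epsilon'$, and the whole construction uses $\mathcal{O}(1)$ calls to $O_M$, its inverse, and controlled versions.

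The main obstacle — routine rather than conceptual — is the finite-precision error analysis of the reciprocal-square-root arithmetic: one must check that a $b$-bit error in $\bar z_j$ propagates to an $\mathcal{O}(2^{-b})$ error in each diagonal entry and, because the matrix is diagonal, to an $\mathcal{O}(2^{-b})$ error in spectral norm, so that $b=\mathcal{O}(\log(1/\epsilon'))$ suffices; and to confirm that $\alpha_3 = 1/\sqrt{m_{\min}}$ is the tight normalization, i.e.\ $z_j\le 1$ with equality attained at the lightest mass, which is immediate from $m_j\ge m_{\min}$.
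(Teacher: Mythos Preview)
Your proposal is correct and essentially matches the paper's proof: both compute $\tilde s_j = \sqrt{m_{\min}/m_j}$ from $O_M$ via coherent arithmetic and use it as the amplitude on a single ancilla to realize the diagonal block encoding with normalization $\alpha_3=1/\sqrt{m_{\min}}$. The only difference is packaging---you invoke Lemma~48 of \cite{gilyen2019quantum} as a black box for the $1$-sparse case, whereas the paper writes out the amplitude-preparation step directly (via inequality testing) and then uncomputes.
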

\begin{proof}
Given oracle access to the diagonal entries \(m_j = \mathbf{M}(j,j)\) via unitaries \(\mathcal{O}_M\) and \(\mathcal{O}_{\text{index}}\), we construct \(\mathcal{U}(\sqrt{\mathbf{M}^{-1}})\) as follows:

    For each basis state \(|j\rangle\), query \(\mathcal{O}_M\) to load \(m_j\) into a register:
    \[
   \ket{j}\ket{0} \xrightarrow{\mathcal{O}_M} \ket{j}\ket{\bar{m}_j}
    \]
    Let \(\tilde{s}_j = \sqrt{\frac{m_{\min}}{m_j}}\). Using state preparation via inequality testing, prepare the following quantum state:
    \[
    |j\rangle|\tilde{s}_j\rangle|0\rangle \mapsto |j\rangle|\tilde{s}_j\rangle\left(\sqrt{1 - \tilde{s}_j^2} \ket{1} + \tilde{s}_j\ket{0}\right).
    \]
    Uncompute \(\tilde{s}_j\) and \(m_j\) to disentangle ancilla registers:
    \[
    |j\rangle\left(\sqrt{1 - \tilde{s}_j^2}\ket{1} + \tilde{s}_j\ket{0}\right)|0\rangle.
    \]
    Define \(\mathcal{U}(\sqrt{\mathbf{M}^{-1}})\) as the composition of the above steps. Post-select on the ancilla qubit being $\ket{0}$:
    \[
    \alpha(\bra{0} \otimes \mathbf{I})\mathcal{U}(\sqrt{\mathbf{M}^{-1}})(\ket{0} \otimes \mathbf{I}) = \sum_j \frac{1}{\sqrt{m_j}} \ket{j}\bra{j}.
    \]
    Since \(\alpha\tilde{s}_j = \frac{1}{\sqrt{m_{\min}}}\tilde{s}_j \approx \frac{1}{\sqrt{m_j}}\), this approximates \(\sqrt{\mathbf{M}^{-1}}\). Thus, \(\mathcal{U}(\sqrt{\mathbf{M}^{-1}})\) is a valid block-encoding of \(\sqrt{\mathbf{M}^{-1}}\). 
\end{proof}

\begin{lemma}[Block Encoding $\mathcal{U}(\sqrt{\mathbf{M}^{-1}} \otimes \sqrt{\mathbf{M}^{-1}})$]
    
\end{lemma}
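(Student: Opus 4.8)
The plan is to obtain the block encoding of $\sqrt{\mathbf{M}^{-1}} \otimes \sqrt{\mathbf{M}^{-1}}$ by taking two copies of the single-copy block encoding $\mathcal{U}(\sqrt{\mathbf{M}^{-1}})$ constructed in \autoref{Block_Encode_Inverse_Mass_Matrix} and tensoring them, after a fixed permutation of the ancilla registers. Concretely, let $\mathcal{U}(\sqrt{\mathbf{M}^{-1}})$ be an $(\alpha_3, a, \epsilon')$ block encoding acting on $a$ ancilla qubits and $n = \log N$ system qubits, with $\alpha_3 = 1/\sqrt{m_{\min}}$. First I would form $\mathcal{U}(\sqrt{\mathbf{M}^{-1}}) \otimes \mathcal{U}(\sqrt{\mathbf{M}^{-1}})$ on $(\mathcal{H}_a \otimes \mathcal{H}_n) \otimes (\mathcal{H}_a \otimes \mathcal{H}_n)$ and conjugate it by a SWAP network $\mathcal{S}$ that moves both ancilla registers to the front, defining $\mathcal{U}(\sqrt{\mathbf{M}^{-1}} \otimes \sqrt{\mathbf{M}^{-1}}) := \mathcal{S}^\dagger\big(\mathcal{U}(\sqrt{\mathbf{M}^{-1}}) \otimes \mathcal{U}(\sqrt{\mathbf{M}^{-1}})\big)\mathcal{S}$, which then acts on $2a$ ancilla qubits followed by $2n$ system qubits. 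This requires no new oracle queries: the two copies of $\mathcal{U}(\sqrt{\mathbf{M}^{-1}})$ together use $\mathcal{O}(1)$ calls to the oracles of \autoref{Def:Oracles-nonlinear-oscillator-system}, as in \autoref{Block_Encode_Inverse_Mass_Matrix}.

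Next I would verify the block-encoding identity. Since $\mathcal{S}$ only reorders registers, projecting onto the all-zero ancilla state gives
\begin{equation}
(\bra{0}^{\otimes 2a} \otimes \mathbf{I}_{N^2})\,\mathcal{U}(\sqrt{\mathbf{M}^{-1}} \otimes \sqrt{\mathbf{M}^{-1}})\,(\ket{0}^{\otimes 2a} \otimes \mathbf{I}_{N^2}) = \widetilde{\mathbf{A}} \otimes \widetilde{\mathbf{A}},
\end{equation}
where $\widetilde{\mathbf{A}} = (\bra{0}^{\otimes a} \otimes \mathbf{I}_N)\,\mathcal{U}(\sqrt{\mathbf{M}^{-1}})\,(\ket{0}^{\otimes a} \otimes \mathbf{I}_N)$ is the block of the single-copy encoding, and by hypothesis $\|\alpha_3 \widetilde{\mathbf{A}} - \sqrt{\mathbf{M}^{-1}}\| \le \epsilon'$.

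The last step is the error estimate. Writing $\alpha_3 \widetilde{\mathbf{A}} = \sqrt{\mathbf{M}^{-1}} + \mathbf{E}$ with $\|\mathbf{E}\| \le \epsilon'$, expansion yields $\alpha_3^2\, \widetilde{\mathbf{A}} \otimes \widetilde{\mathbf{A}} = \sqrt{\mathbf{M}^{-1}} \otimes \sqrt{\mathbf{M}^{-1}} + \mathbf{E} \otimes \sqrt{\mathbf{M}^{-1}} + \sqrt{\mathbf{M}^{-1}} \otimes \mathbf{E} + \mathbf{E} \otimes \mathbf{E}$; since $\|\sqrt{\mathbf{M}^{-1}}\| = 1/\sqrt{m_{\min}} = \alpha_3$, the error is bounded by $2\alpha_3 \epsilon' + \epsilon'^2 \le 3\alpha_3 \epsilon'$ whenever $\epsilon' \le \alpha_3$. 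Hence $\mathcal{U}(\sqrt{\mathbf{M}^{-1}} \otimes \sqrt{\mathbf{M}^{-1}})$ is an $(\alpha_3^2, 2a, 3\alpha_3\epsilon')$ block encoding of $\sqrt{\mathbf{M}^{-1}} \otimes \sqrt{\mathbf{M}^{-1}}$; rescaling $\epsilon'$ absorbs the constant. Equivalently, this is a direct application of Lemma 53 of \cite{gilyen2019quantum} for block encodings of products together with the observation that the tensor product of block encodings (modulo the ancilla SWAP) is a block encoding of the tensor product.

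The main obstacle is essentially bookkeeping rather than a genuine difficulty: one must track the ancilla registers carefully through the permutation $\mathcal{S}$ and confirm that the single-copy error does not compound when tensored. The expansion above shows it grows only by a factor $\mathcal{O}(\alpha_3) = \mathcal{O}(1/\sqrt{m_{\min}})$, which is harmless for the downstream construction of the block encoding of $\bar{\mathbf{H}}_2$ in \autoref{sec:nonlinear_oscillator_system}.
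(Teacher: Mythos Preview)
Your proposal is correct and takes essentially the same approach as the paper: the paper simply invokes Proposition~4 of \cite{takahira2021quantum} (tensor product of block encodings) to obtain an $(\alpha_3^2,\cdot,\cdot)$ block encoding with constant $\alpha_4=1/m_{\min}$ using $\mathcal{O}(1)$ calls to $\mathcal{U}(\sqrt{\mathbf{M}^{-1}})$, whereas you spell out that proposition's content explicitly (tensor two copies, SWAP the ancillas to the front, bound the cross-term error). One small remark: your closing reference to Lemma~53 of \cite{gilyen2019quantum} is slightly misplaced, since that lemma handles matrix \emph{products} rather than tensor products; the relevant citation is the tensor-product result you have effectively re-derived.
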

\begin{proof}
    We can block-encode the tensor product $(\sqrt{\mathbf{M}^{-1}} \otimes \sqrt{\mathbf{M}^{-1}})$ by applying Proposition 4 from \cite{takahira2021quantum}. Constructing this block encoding require $\mathcal{O}(1)$ uses of the unitary $\mathcal{U}(\sqrt{\mathbf{M}^{-1}})$. The block encoding constant is $\alpha_4$, where $\alpha_4 = \frac{1}{{m_{\min}}}$.
\end{proof}

Now, we can apply \autoref{Block_Encode_K_2_matrix} and \autoref{Block_Encode_Inverse_Mass_Matrix} to block encode the matrix $\bar{\mathbf{A}}_2 = \ket{0}^{\otimes n} \otimes (\mathbf{K}_2 \sqrt{\mathbf{M}^{-1}} \otimes \sqrt{\mathbf{M}^{-1}} )$.

\begin{lemma}[Block Encoding $ \ket{0}^{\otimes n} \otimes (\mathbf{K}_2 \sqrt{\mathbf{M}^{-1}} \otimes \sqrt{\mathbf{M}^{-1}} )$] 
\label{Block Encoding: state_tensored_with_K2_inverse_mass}
\end{lemma}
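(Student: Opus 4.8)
The plan is to recognize $\bar{\mathbf{A}}_2$ as a product of matrices for which block encodings have already been built, and then invoke the standard product rule for block encodings (Lemma 53 of \cite{gilyen2019quantum}). Writing $N=2^n$, I would first observe that
\[
\ket{0}^{\otimes n} \otimes \bigl(\mathbf{K}_2 (\sqrt{\mathbf{M}^{-1}} \otimes \sqrt{\mathbf{M}^{-1}})\bigr)
= \bigl(\ket{0}^{\otimes n}\otimes \mathbf{K}_2\bigr)\bigl(\sqrt{\mathbf{M}^{-1}}\otimes\sqrt{\mathbf{M}^{-1}}\bigr)
= \mathbf{K}_2'\cdot\bigl(\sqrt{\mathbf{M}^{-1}}\otimes\sqrt{\mathbf{M}^{-1}}\bigr),
\]
where both factors on the right are square $N^2\times N^2$ operators. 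The first factor is exactly $\mathbf{K}_2'$ from \autoref{Block_Encode_K_2_matrix}, block encoded by $\mathcal{U}(\mathbf{K}_2')$ with constant $\alpha_2 = d\max_{ij}|\mathbf{K}_2(i,j)|$; the second is block encoded by the unitary assembled from \autoref{Block_Encode_Inverse_Mass_Matrix} together with Proposition 4 of \cite{takahira2021quantum}, with constant $\alpha_4 = 1/m_{\min}$.

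Second, I would assemble the product block encoding: after padding the two unitaries so their block-encoding ancillas live on disjoint registers, composing $\bigl(\mathbf{I}\otimes\mathcal{U}(\mathbf{K}_2')\bigr)\bigl(\mathbf{I}\otimes\mathcal{U}(\sqrt{\mathbf{M}^{-1}}\otimes\sqrt{\mathbf{M}^{-1}})\bigr)$ in the order dictated by Lemma 53 of \cite{gilyen2019quantum} yields a unitary $\mathcal{U}(\bar{\mathbf{A}}_2)$ with
\[
\bigl\| \alpha_2\alpha_4\,(\bra{0}^{\otimes a}\otimes\mathbf{I})\,\mathcal{U}(\bar{\mathbf{A}}_2)\,(\ket{0}^{\otimes a}\otimes\mathbf{I}) - \ket{0}^{\otimes n}\otimes\bigl(\mathbf{K}_2(\sqrt{\mathbf{M}^{-1}}\otimes\sqrt{\mathbf{M}^{-1}})\bigr)\bigr\| \le \epsilon',
\]
the combined block-encoding constant being $\alpha_2\alpha_4 = \frac{d\max_{ij}|\mathbf{K}_2(i,j)|}{m_{\min}}$, using $\mathcal{O}(1)$ queries to the oracles of \autoref{Def:Oracles-nonlinear-oscillator-system}; the error accumulates only additively through the triangle inequality, so running each constituent block encoding at error $\Theta(\epsilon')$ suffices. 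If the additional left factor $\sqrt{\mathbf{M}^{-1}}$ is wanted — i.e.\ one needs $\ket{0}^{\otimes n}\otimes\mathbf{A}_2$ with $\mathbf{A}_2 = \sqrt{\mathbf{M}^{-1}}\mathbf{K}_2(\sqrt{\mathbf{M}^{-1}}\otimes\sqrt{\mathbf{M}^{-1}})$ — I would prepend one more factor, the padded block encoding of $\sqrt{\mathbf{M}^{-1}}$ from \autoref{Block_Encode_Inverse_Mass_Matrix} with constant $\alpha_3 = 1/\sqrt{m_{\min}}$, giving overall constant $\frac{d\max_{ij}|\mathbf{K}_2(i,j)|}{m_{\min}^{3/2}}$, consistent with the earlier operator-norm estimate $\|\mathbf{A}_2\|\le\|\mathbf{K}_2\|\,\|\sqrt{\mathbf{M}^{-1}}\|^3$.

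The only point requiring genuine care — and hence the ``hard part'' of an otherwise routine argument — is the padding bookkeeping: I must verify that tensoring $\ket{0}^{\otimes n}$ on the left commutes with the matrix product in the claimed way, so that $\mathbf{K}_2'\cdot(\sqrt{\mathbf{M}^{-1}}\otimes\sqrt{\mathbf{M}^{-1}})$ really equals $\ket{0}^{\otimes n}\otimes(\mathbf{K}_2(\sqrt{\mathbf{M}^{-1}}\otimes\sqrt{\mathbf{M}^{-1}}))$ as $N^2\times N^2$ operators of matching dimension, and that the projectors $(\bra{0}^{\otimes a}\otimes\mathbf{I})$ of the two constituent unitaries line up correctly once the product is formed. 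With dimensions and ancilla placements fixed, the statement follows immediately from the product rule for block encodings.
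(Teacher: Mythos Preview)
Your proposal is correct and follows essentially the same approach as the paper: factor the target operator as $(\ket{0}^{\otimes n}\otimes\mathbf{K}_2)(\sqrt{\mathbf{M}^{-1}}\otimes\sqrt{\mathbf{M}^{-1}})$, invoke the existing block encodings of each factor (\autoref{Block_Encode_K_2_matrix} and the tensor-product construction from \autoref{Block_Encode_Inverse_Mass_Matrix}), and combine them via Lemma~53 of \cite{gilyen2019quantum} to obtain block-encoding constant $\alpha_2/m_{\min}$. Your additional remark about prepending another $\sqrt{\mathbf{M}^{-1}}$ factor anticipates the very next lemma (\autoref{Block Encoding: A_2_bar}) in the paper.
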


\begin{proof}
    Note that $\ket{0}^{\otimes n} \otimes (\mathbf{K}_2 \sqrt{\mathbf{M}^{-1}} \otimes \sqrt{\mathbf{M}^{-1}} ) = (\ket{0}^{\otimes n} \otimes \mathbf{K}_2) (\sqrt{\mathbf{M}^{-1}} \otimes \sqrt{\mathbf{M}^{-1}} )$. We can apply Lemma 53 from \cite{gilyen2019quantum}, and construct an $(\frac{\alpha_2}{m_{\min}},a+b, \epsilon')$ block encoding for the matrix product using $\mathcal{O}(1)$ queries to the unitaries $\mathcal{U}(\sqrt{\mathbf{M}^{-1}} \otimes \sqrt{\mathbf{M}^{-1}})$, and $\mathcal{U}(\ket{0}^{\otimes n} \otimes \mathbf{K}_2)$. Here, $\alpha = d\max_{ij}|\mathbf{K}_2(i,j)|$, where $d$ is the maximum number of elements in any row or column of the matrix $\mathbf{K}_2$.
\end{proof}

\begin{lemma}[Block Encoding $(\ket{0}\bra{0})^{\otimes n} \otimes (\sqrt{\mathbf{M}^{-1}} )$] 
\label{Block Encoding: project_tensored_with_inverse_mass_matrix}
    
\end{lemma}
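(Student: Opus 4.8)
The plan is to assemble the desired block encoding out of two pieces that are already in hand and one standard combination rule. First I would invoke \autoref{Block_Encode_Inverse_Mass_Matrix} to obtain a unitary $\mathcal{U}(\sqrt{\mathbf{M}^{-1}})$ that is an $(\alpha_3,a,\epsilon')$ block encoding of $\sqrt{\mathbf{M}^{-1}}$ with $\alpha_3 = 1/\sqrt{m_{\min}}$, costing $\mathcal{O}(1)$ queries to the mass oracle $O_M$ of \autoref{Def:Oracles-nonlinear-oscillator-system}.

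Next I would produce a block encoding of the diagonal rank-one projector $(\ket{0}\bra{0})^{\otimes n}$ by reusing the LCU trick from the proof of \autoref{block_encoding_Ai}: write
\begin{equation}
(\ket{0}\bra{0})^{\otimes n} = \tfrac{1}{2}(\mathbf{U}_1 + \mathbf{U}_2), \qquad \mathbf{U}_{1,2} = (\ket{0}\bra{0})^{\otimes n} \pm \sum_{j\neq 0}(\ket{j}\bra{j})^{\otimes n},
\end{equation}
where $\mathbf{U}_1$ and $\mathbf{U}_2$ are both unitary (and diagonal); the LCU lemma then yields a $(2,1,0)$ block encoding of $(\ket{0}\bra{0})^{\otimes n}$, implementable with $\mathcal{O}(1)$ elementary gates and no oracle calls.

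Finally, since $(\ket{0}\bra{0})^{\otimes n} \otimes \sqrt{\mathbf{M}^{-1}}$ is a genuine tensor product of the two encoded operators, I would combine the two circuits with the standard tensor-product-of-block-encodings construction — the same one used above for $\sqrt{\mathbf{M}^{-1}} \otimes \sqrt{\mathbf{M}^{-1}}$ via Proposition 4 of \cite{takahira2021quantum}, or equivalently Lemma 53 of \cite{gilyen2019quantum} after padding — so that, writing $a'=a+1$ for the combined ancilla count,
\begin{equation}
(\bra{0}^{\otimes a'} \otimes \mathbf{I})\,\mathcal{U}\!\left((\ket{0}\bra{0})^{\otimes n} \otimes \sqrt{\mathbf{M}^{-1}}\right)(\ket{0}^{\otimes a'} \otimes \mathbf{I}) = \frac{1}{2\alpha_3}\,(\ket{0}\bra{0})^{\otimes n} \otimes \sqrt{\mathbf{M}^{-1}}.
\end{equation}
Hence the block-encoding constant is $2\alpha_3 = 2/\sqrt{m_{\min}}$, the error $\epsilon'$ is inherited from $\mathcal{U}(\sqrt{\mathbf{M}^{-1}})$, and the cost is still $\mathcal{O}(1)$ queries to $O_M$.

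The only delicate point — and it is purely bookkeeping — is ancilla management: ensuring the ancilla wires of the projector encoding and of $\mathcal{U}(\sqrt{\mathbf{M}^{-1}})$ are disjoint, that the combined success flag is the simultaneous $\ket{0}$ outcome of both ancilla sets, and that the factor $2$ from the projector's LCU \emph{multiplies} rather than adds to the factor $\alpha_3$. No new estimates are needed beyond the tensor-product and LCU facts already invoked repeatedly in this section.
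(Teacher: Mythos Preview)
Your proposal is correct and follows essentially the same approach as the paper: both block-encode the projector $(\ket{0}\bra{0})^{\otimes n}$ and $\sqrt{\mathbf{M}^{-1}}$ separately, then combine them via Proposition~4 of \cite{takahira2021quantum}. The only difference is the block-encoding constant: the paper states $1/\sqrt{m_{\min}}$ (implicitly using a $(1,\cdot,0)$ encoding of the projector, e.g.\ via a single ancilla flagging whether the register equals $\ket{0}^{\otimes n}$), whereas your LCU route yields $2/\sqrt{m_{\min}}$ --- a harmless constant factor that does not affect any downstream asymptotics.
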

\begin{proof}
   We can construct a $(\frac{1}{\sqrt{m_{\min}}}, 2\log(N),\epsilon')$ block encoding of the tensor product by applying Proposition 4 from \cite{takahira2021quantum}. Constructing this block encoding require $\mathcal{O}(1)$ uses of the unitaries $\mathcal{U}(\sqrt{\mathbf{M}^{-1}})$, and $\mathcal{U}((\ket{0}\bra{0})^{\otimes n} )$.
\end{proof}

\begin{lemma}[Block Encoding $\bar{\mathbf{A}}_2 = \ket{0}^{\otimes n} \otimes ( \sqrt{\mathbf{M}^{-1}} \mathbf{K}_2 \sqrt{\mathbf{M}^{-1}} \otimes \sqrt{\mathbf{M}^{-1}} ) $] There exists unitary matrix $\mathbf{\mathcal{U}}(\bar{\mathbf{A}}_2)$ such that $\| \alpha_5(\bra{0} \otimes \mathbf{I})\mathbf{\mathcal{U}}(\bar{\mathbf{A}}_2)(\ket{0}\otimes \mathbf{I}) - \bar{\mathbf{A}}_2 \| \leq \frac{\alpha_2}{m_{min}}\epsilon' + \frac{1}{\sqrt{m_{min}}}\epsilon'$. Here, $\alpha_5 = \frac{\alpha_2}{m_{min}^{3/2}}$
\label{Block Encoding: A_2_bar}
\end{lemma}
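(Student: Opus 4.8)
The plan is to obtain $\mathcal{U}(\bar{\mathbf{A}}_2)$ by writing $\bar{\mathbf{A}}_2$ as a product of two matrices whose block encodings are already available from the two immediately preceding lemmas, and then invoking the standard composition rule for products of block encodings (Lemma 53 of~\cite{gilyen2019quantum}). No new circuit ideas are needed; the content is entirely the bookkeeping of block-encoding constants, ancilla counts, and error propagation.

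Concretely, first I would record the matrix identity
\[
\bar{\mathbf{A}}_2 \;=\; \Big[(\ket{0}\bra{0})^{\otimes n} \otimes \sqrt{\mathbf{M}^{-1}}\Big]\,\Big[\ket{0}^{\otimes n} \otimes \big(\mathbf{K}_2(\sqrt{\mathbf{M}^{-1}} \otimes \sqrt{\mathbf{M}^{-1}})\big)\Big],
\]
which holds because $(\ket{0}\bra{0})^{\otimes n}\ket{0}^{\otimes n} = \ket{0}^{\otimes n}$ on the padding register, while on the remaining register the product composes $\sqrt{\mathbf{M}^{-1}}$ with $\mathbf{K}_2(\sqrt{\mathbf{M}^{-1}} \otimes \sqrt{\mathbf{M}^{-1}})$ to give $\mathbf{A}_2 = \sqrt{\mathbf{M}^{-1}}\mathbf{K}_2(\sqrt{\mathbf{M}^{-1}} \otimes \sqrt{\mathbf{M}^{-1}})$ as in \autoref{section:reduction_to_nonlinear_Schr\"{o}dinger}. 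The right-hand factor is block encoded by the unitary of \autoref{Block Encoding: state_tensored_with_K2_inverse_mass} with constant $\alpha_2/m_{\min}$ and additive error $\epsilon'$, and the left-hand factor is block encoded by the unitary of \autoref{Block Encoding: project_tensored_with_inverse_mass_matrix} with constant $1/\sqrt{m_{\min}}$ and additive error $\epsilon'$.

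Next I would apply Lemma 53 of~\cite{gilyen2019quantum}: concatenating an $(\alpha,a,\delta)$ block encoding of $A$ with a $(\beta,b,\delta')$ block encoding of $B$ (on disjoint ancilla registers, hence $a+b$ ancillas) yields an $(\alpha\beta,\,a+b,\,\alpha\delta' + \beta\delta)$ block encoding of $AB$. Taking $\alpha = 1/\sqrt{m_{\min}}$, $\beta = \alpha_2/m_{\min}$, $\delta = \delta' = \epsilon'$ gives the block-encoding constant $\alpha_5 = \alpha_2/m_{\min}^{3/2}$ and error at most $\frac{1}{\sqrt{m_{\min}}}\epsilon' + \frac{\alpha_2}{m_{\min}}\epsilon'$, exactly the claimed bound; the circuit uses $\mathcal{O}(1)$ calls to the two component block encodings and therefore $\mathcal{O}(1)$ queries to the oracles of \autoref{Def:Oracles-nonlinear-oscillator-system}. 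The only place where care is required — and the step I expect to be the main (though minor) obstacle — is verifying that the padding/tensor structure lines up so that the displayed factorization is an honest matrix equation: one must confirm that the register carrying $\ket{0}^{\otimes n}$ in $\bar{\mathbf{A}}_2$ is precisely the register on which $(\ket{0}\bra{0})^{\otimes n}$ acts and that the ordering of the $\sqrt{\mathbf{M}^{-1}}$ factors matches the reduction, after which the result follows immediately.
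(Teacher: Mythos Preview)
Your proposal is correct and follows essentially the same argument as the paper: factor $\bar{\mathbf{A}}_2$ as the product of $(\ket{0}\bra{0})^{\otimes n}\otimes\sqrt{\mathbf{M}^{-1}}$ and $\ket{0}^{\otimes n}\otimes\big(\mathbf{K}_2(\sqrt{\mathbf{M}^{-1}}\otimes\sqrt{\mathbf{M}^{-1}})\big)$, invoke the block encodings from the two preceding lemmas, and combine them via Lemma~53 of~\cite{gilyen2019quantum}. Your explicit verification of the mixed-product identity on the padding register and your bookkeeping of the error term $\frac{\alpha_2}{m_{\min}}\epsilon'+\frac{1}{\sqrt{m_{\min}}}\epsilon'$ are in fact more detailed than what the paper writes out.
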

\begin{proof}
    Note that, the matrix $\bar{\mathbf{A}}_2$ can be written as a product of the matrices  $(\ket{0}\bra{0})^{\otimes n} \otimes (\sqrt{\mathbf{M}^{-1}} )$, and  $ \ket{0}^{\otimes n} \otimes (\mathbf{K}_2 \sqrt{\mathbf{M}^{-1}} \otimes \sqrt{\mathbf{M}^{-1}} )$. The unitary block encodings for these matrices are discussed in \autoref{Block Encoding: project_tensored_with_inverse_mass_matrix} and \autoref{Block Encoding: state_tensored_with_K2_inverse_mass}. Hence, applying \autoref{Block Encoding: project_tensored_with_inverse_mass_matrix}, \autoref{Block Encoding: state_tensored_with_K2_inverse_mass} and Lemma 53 from \cite{gilyen2019quantum}, we can construct a block encoding for the matrix $\bar{\mathbf{A}}_2$. The resultant block encoding constant is product of block encoding constants of the submatrices, and is given by: $\alpha_5 = \frac{\alpha_2}{m_{min}^{3/2}}$
\end{proof}
Now we have to create a block encoding for the matrix $\bar{\mathbf{A} }_2 \bar{\mathbf{D}}_2 $. Note that, however the matrix $\bar{\mathbf{D}}_2\in \mathbb{C}^{N^2 \times N^4}$ is a rectangular matrix and the matrix $\bar{\mathbf{A} }_2 \in \mathbb{C}^{N^2 \times N^2}$ is square. Therefore, we pad these matrices with zeroes, and create block encoding for the padded matrix containing $\bar{\mathbf{A} }_2 \bar{\mathbf{D}}_2 $ at the top left block. \\

\begin{lemma}[Block Encoding $ \ket{0}^{\otimes 2n} \otimes \bar{\mathbf{A}}_2 
\bar{\mathbf{D}}_2$] 
\label{block_encoding_A2D2_padded}
\end{lemma}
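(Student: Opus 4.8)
The plan is to obtain the desired block encoding as a product of two block encodings: one for $\bar{\mathbf{A}}_2$, already available from \autoref{Block Encoding: A_2_bar} with block-encoding constant $\alpha_5 = \alpha_2/m_{\min}^{3/2}$, and one for the rectangular selection-and-rescaling matrix $\bar{\mathbf{D}}_2$, which still has to be constructed. The first task is therefore to block encode $\bar{\mathbf{D}}_2$. Recalling that
\[
\mathbf{D}_2 = \begin{bmatrix} \sqrt{(\mathbf{K}_1^{(1)})^{-1}} \otimes \sqrt{(\mathbf{K}_1^{(1)})^{-1}} & \mathbf{0} & \mathbf{0} & \mathbf{0} \end{bmatrix}(\mathbf{P}')^{\dagger},
\]
where $\mathbf{K}_1^{(1)}$ is the diagonal matrix carrying the stiffnesses $k_{jj}$ and $\mathbf{P}'$ is a fixed permutation, I would first block encode the diagonal matrix $\sqrt{(\mathbf{K}_1^{(1)})^{-1}}$ by the inequality-test construction already used in \autoref{Block_Encode_Inverse_Mass_Matrix}: query $O_{G_1}$ for $k_{jj}$, coherently prepare an ancilla amplitude proportional to $\sqrt{k^{(1)}_{\min}/k_{jj}}$, and uncompute, yielding a $(1/\sqrt{k^{(1)}_{\min}},\cdot,\epsilon')$ block encoding. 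Tensoring two copies via Proposition~4 of \cite{takahira2021quantum} gives block-encoding constant $1/k^{(1)}_{\min} = \|\mathbf{D}_2\|$; multiplying on the right by the unitary $(\mathbf{P}')^{\dagger}$ and selecting the leading block with a projector ancilla (as in \autoref{block_encoding_Ai}) produces the rectangular $\mathbf{D}_2$, after which $\bar{\mathbf{D}}_2$ is obtained by the same zero-padding bookkeeping used to define it.

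Next I would pad both $\bar{\mathbf{A}}_2$ and $\bar{\mathbf{D}}_2$ to square operators of matching dimension, prepend the $\ket{0}^{\otimes 2n}$ register that labels the padded block, and invoke Lemma~53 of \cite{gilyen2019quantum} to combine the two block encodings into a block encoding of $\ket{0}^{\otimes 2n}\otimes \bar{\mathbf{A}}_2\bar{\mathbf{D}}_2$. The block-encoding constant then multiplies to
\[
\alpha_5 \cdot \frac{1}{k^{(1)}_{\min}} = \frac{\alpha_2}{k^{(1)}_{\min}\, m_{\min}^{3/2}} = \frac{d\,k^{(2)}_{\max}}{k^{(1)}_{\min}\, m_{\min}^{3/2}},
\]
which is exactly the upper bound on $\|\bar{\mathbf{H}}_2\|$ recorded earlier, so no extra normalization overhead is introduced; the additive error is the sum of the two sub-errors weighted by the respective block-encoding constants, which can be driven below any target $\epsilon'$ by tightening the inequality tests. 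The query cost is $\mathcal{O}(1)$ uses of $O_M$, $O_{G_1}$, $O_{K_2}$ and their controlled inverses. Composing this with the trivial block encoding of $\ket{0}\otimes\mathbf{C}_1$ and with the permutation $\bar{\mathbf{P}}_2$ (a genuine unitary) then yields the block encoding of $(\bar{\mathbf{H}}_2)_{pad}$ needed to run the nonlinear Schr\"odinger subroutine of \autoref{sec:nonlinear_Schrodinger_eq}.

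The main obstacle I anticipate is the index bookkeeping for the zero-padding: $\bar{\mathbf{D}}_2\in\mathbb{C}^{N^2\times N^4}$ is rectangular while $\bar{\mathbf{A}}_2\in\mathbb{C}^{N^2\times N^2}$ is square, so one must carefully track which ancilla registers carry the $\ket{0}^{\otimes 2n}$ prefix, which carry the projector ancillas that select the nonzero block, and which carry the block-encoding ancillas, so that the swap and permutation circuits act on the correct tensor factors and Lemma~53 applies with the dimensions lined up. Once this is pinned down, the remainder is a routine composition of block encodings whose constants, by construction, multiply to the $\|\bar{\mathbf{H}}_2\|$ bound.
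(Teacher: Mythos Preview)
Your proposal is correct and follows essentially the same route as the paper: factor the target as a product of a padded $\bar{\mathbf{A}}_2$ piece and a padded $\bar{\mathbf{D}}_2$ piece, block encode each, and combine via Lemma~53 of \cite{gilyen2019quantum}, obtaining the same block-encoding constant $\alpha_5/k^{(1)}_{\min}$. The only minor difference is that the paper block encodes $\ket{0}^{\otimes 2n}\otimes\bar{\mathbf{D}}_2$ in one shot by invoking the generic sparse-matrix encoding (Lemma~48 of \cite{gilyen2019quantum}), whereas you build it explicitly from the diagonal structure of $\sqrt{(\mathbf{K}_1^{(1)})^{-1}}$ via an inequality test and a tensor-square; both give the same normalization $1/k^{(1)}_{\min}$ and your construction has the advantage of being self-contained with respect to the oracles already defined.
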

\begin{proof}
Note that,
    \begin{align}
        \ket{0}^{\otimes 2n} \otimes \bar{\mathbf{A}}_2 \bar{\mathbf{D}}_2 &=  \left[(\ket{0}\bra{0})^{\otimes 2n}  \otimes \bar{\mathbf{A}}_2\right] \left[ \ket{0}^{\otimes 2n} \otimes \bar{\mathbf{D}}_2 \right]
    \end{align}
Hence, we first Block encode the tensor products $ \left[(\ket{0}\bra{0})^{\otimes 2n}  \otimes \bar{\mathbf{A}}_2\right]$, and $\left[ \ket{0}^{\otimes 2n} \otimes \bar{\mathbf{D}}_2 \right]$, and then Block encode the product to construct the desired block encoding. We can construct an $(\alpha_5,4n+3,\epsilon'(\frac{\alpha_2}{m_{min}} + \frac{1}{\sqrt{m_{min}}}))$ encoding of the submatrix $\left[(\ket{0}\bra{0})^{\otimes 2n}  \otimes \bar{\mathbf{A}}_2\right]$ by applying Proposition 4 from \cite{takahira2021quantum}. To block encode the padded matrix $\ket{0}^{\otimes 2n} \otimes \bar{\mathbf{D}}_2$, we can apply Lemma 48 from \cite{gilyen2019quantum}. Now, we can apply Lemma 53 from \cite{gilyen2019quantum}, to block encode the product of two submatrices. The resulting block emcoding constant is given by $\frac{\alpha_5}{k^{(1)}_{min}}$.
\end{proof}

\begin{lemma}[Block Encoding $(\bar{\mathbf{H}}_2)_{pad} = \ket{0}^{\otimes{2n + 1}} \otimes \bar{\mathbf{H}}_2$]
\label{Block_encode_H_2_nonlinear_oscillator}
    There exists a unitary matrix $\mathbf{\mathcal{U}}((\bar{\mathbf{H}}_2)_{pad} )$ such that $\| \alpha(\bra{0}\otimes\mathbf{I})\mathbf{\mathcal{U}}((\bar{\mathbf{H}}_2)_{pad} ) (\ket{0} \otimes \mathbf{I}) -  (\bar{\mathbf{H}}_2)_{pad}\| \leq \epsilon$. Construction of this block encoding require $\mathcal{O}(1)$ queries to the unitary matrices block encoding $ \ket{0}^{\otimes 2n} \otimes \bar{\mathbf{A}}_2 
\bar{\mathbf{D}}_2 $, $(\ket{0}\otimes \mathbf{C}_1) $. 
\end{lemma}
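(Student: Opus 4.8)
The plan is to assemble $\mathbf{\mathcal{U}}((\bar{\mathbf{H}}_2)_{pad})$ from the block encodings already constructed, using only the standard product and tensor-product composition rules. Recall from the derivation preceding the statement that
\begin{equation}
(\bar{\mathbf{H}}_2)_{pad} = \left[(\ket{0}\otimes \mathbf{C}_1) \otimes \left(\ket{0}^{\otimes 2n}\otimes(-\bar{\mathbf{A}}_2\bar{\mathbf{D}}_2)\right)\right]\left[\mathbf{I}\otimes\bar{\mathbf{P}}_2\right],
\end{equation}
so it suffices to (i) block encode each of the two bracketed factors and (ii) multiply them. The second factor $\mathbf{I}\otimes\bar{\mathbf{P}}_2$ is a permutation matrix, hence unitary, and is therefore a $(1,0,0)$ block encoding of itself, implementable with $\mathcal{O}(1)$ swap gates on the tensor registers. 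For the first factor I would block encode $\ket{0}^{\otimes 2n}\otimes(-\bar{\mathbf{A}}_2\bar{\mathbf{D}}_2)$ using \autoref{block_encoding_A2D2_padded} (the overall sign is harmless — it can be absorbed into a $Z$ phase on an ancilla, or one simply notes that $-\mathcal{U}$ is unitary whenever $\mathcal{U}$ is), and block encode $\ket{0}\otimes\mathbf{C}_1$ directly, since $\mathbf{C}_1$ is a fixed partial isometry of unit norm (essentially $\ket{0}\bra{3}$ on the appropriate factors) that embeds trivially into a permutation unitary with block-encoding constant $1$.

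Next I would invoke the tensor-product rule for block encodings (Proposition 4 of \cite{takahira2021quantum}, equivalently Lemma 53 of \cite{gilyen2019quantum} applied blockwise) to combine $\mathbf{\mathcal{U}}(\ket{0}\otimes\mathbf{C}_1)$ and $\mathbf{\mathcal{U}}(\ket{0}^{\otimes 2n}\otimes(-\bar{\mathbf{A}}_2\bar{\mathbf{D}}_2))$ into a block encoding of $(\ket{0}\otimes\mathbf{C}_1)\otimes(\ket{0}^{\otimes 2n}\otimes(-\bar{\mathbf{A}}_2\bar{\mathbf{D}}_2))$, with block-encoding constant equal to the product of the two; since $\|\mathbf{C}_1\|=1$, this equals $\alpha_5/k^{(1)}_{\min} = \alpha_2/(k^{(1)}_{\min} m_{\min}^{3/2})$, consistent with the bound $\|\bar{\mathbf{H}}_2\|\le d k^{(2)}_{\max}/(k^{(1)}_{\min} m_{\min}^{3/2})$ proved earlier. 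I would then apply the product rule (Lemma 53 of \cite{gilyen2019quantum}) to multiply this by the trivial block encoding of $\mathbf{I}\otimes\bar{\mathbf{P}}_2$; since the latter has block-encoding constant $1$, the final normalization constant is $\alpha = \alpha_2/(k^{(1)}_{\min} m_{\min}^{3/2})$.

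Finally I would propagate the errors: each elementary block encoding is produced to a tunable additive precision $\epsilon'$, and under tensor products and products the errors add, each scaled by the block-encoding constants (and operator norms) of the other factors, so the total error is $\mathcal{O}(\epsilon'\cdot\mathrm{poly}(\alpha_2,1/m_{\min},1/k^{(1)}_{\min}))$; choosing $\epsilon'$ correspondingly small makes this at most $\epsilon$. Since every composition step uses its inputs $\mathcal{O}(1)$ times, the total number of queries to the block encodings of $\ket{0}^{\otimes 2n}\otimes\bar{\mathbf{A}}_2\bar{\mathbf{D}}_2$ and $\ket{0}\otimes\mathbf{C}_1$ is $\mathcal{O}(1)$, as claimed. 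The only point requiring genuine care — rather than routine bookkeeping — is applying the tensor-product and product composition lemmas correctly to the \emph{rectangular, zero-padded} matrices, so that the ancilla registers of the several sub-encodings are kept disjoint and the $\ket{0}\bra{0}$ projectors on those registers line up; this is precisely why the matrices were padded to square form, and tracking that register layout is the main (if modest) obstacle.
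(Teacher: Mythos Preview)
Your proposal is correct and follows essentially the same approach as the paper: factor $(\bar{\mathbf{H}}_2)_{pad}=\mathbf{E}\mathbf{F}$ with $\mathbf{E}=(\ket{0}\otimes\mathbf{C}_1)\otimes(\ket{0}^{\otimes 2n}\otimes(-\bar{\mathbf{A}}_2\bar{\mathbf{D}}_2))$ and $\mathbf{F}=\mathbf{I}\otimes\bar{\mathbf{P}}_2$, block-encode $\mathbf{E}$ via the tensor-product rule (Proposition 4 of \cite{takahira2021quantum}) applied to the trivial encoding of $\ket{0}\otimes\mathbf{C}_1$ and the encoding from \autoref{block_encoding_A2D2_padded}, observe that $\mathbf{F}$ is unitary and hence its own block encoding, and combine via the product rule. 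Your additional bookkeeping on the block-encoding constant and error propagation goes slightly beyond what the paper spells out, but the structure is identical.
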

\begin{proof}
Recall that, 
\begin{align}
    (\bar{\mathbf{H}}_2)_{pad} &= \left[(\ket{0}\otimes \mathbf{C}_1) \otimes (\ket{0}^{\otimes{2n}}  \otimes ( - \bar{\mathbf{A}}_2 \bar{\mathbf{D}}_2 )) \right]\left[\mathbf{I}\otimes \bar{\mathbf{P}}_2 \right] \nonumber 
\end{align}
Let $\mathbf{E} = \left[(\ket{0}\otimes \mathbf{C}_1) \otimes (\ket{0}^{\otimes{2n}}  \otimes ( - \bar{\mathbf{A}}_2 \bar{\mathbf{D}}_2 )) \right]$, and $\mathbf{F} = \left[\mathbf{I}\otimes \bar{\mathbf{P}}_2 \right]$. Our strategy here is to construct block encodings for the matrices $\mathbf{E} = \left[(\ket{0}\otimes \mathbf{C}_1) \otimes (\ket{0}^{\otimes{2n}}  \otimes ( - \bar{\mathbf{A}}_2 \bar{\mathbf{D}}_2 )) \right] $, $\mathbf{F} = \left[\mathbf{I}\otimes \bar{\mathbf{P}}_2 \right]$ and then construct a block encoding for the matrix product $\mathbf{E}\mathbf{F}$ using these block encodings. Since the matrix $\mathbf{E}$ is a tensor product of two matrices, the block encoding for $\mathbf{E}$ can be constructed by applying Proposition 4 from \cite{takahira2021quantum}. To apply Proposition 4, we need block encodings for the matrices $(\ket{0}\otimes \mathbf{C}_1)$, and $(\ket{0}^{\otimes{2n}}  \otimes ( - \bar{\mathbf{A}}_2 \bar{\mathbf{D}}_2 ))$. It is trivial to construct a block encoding for the matrix $(\ket{0}\otimes \mathbf{C}_1)$ and the block encoding for the matrix $(\ket{0}^{\otimes{2n}}  \otimes ( - \bar{\mathbf{A}}_2 \bar{\mathbf{D}}_2 ))$ can be constructed by applying \autoref{block_encoding_A2D2_padded}. Since $\bar{\mathbf{P}}_2$ is a unitary matrix, the matrix $\mathbf{F}$ is also unitary, and hence we do not need an additional block encoding step. 

\end{proof}

    
\NOQuantumAlgorithmThm*

\begin{proof}
Let the governing differential equation of the nonlinear oscillator system be:
\begin{equation}
\label{eq: nonlinear_oscillator_gov}
    \mathbf{M}\ddot{\mathbf{x}} = -\mathbf{K_1} \mathbf{x} + \mathbf{K_2}\mathbf{x}\otimes\mathbf{x}
\end{equation}
Applying a coordinate transformation, we rewrite \autoref{eq: nonlinear_oscillator_gov} as:\\
\begin{equation}
    \label{eq: nonlinear_oscillator_padded_gov}
    \ddot{\mathbf{u}} = -\mathbf{A}_1 \mathbf{u} + \mathbf{A}_2 \mathbf{u} \otimes \mathbf{u} 
\end{equation}
Here, $\mathbf{u} = \sqrt{\mathbf{M}}\mathbf{x}$, and $\mathbf{A}_1 = \sqrt{\mathbf{M}}^{-1} \mathbf{K}_1 \sqrt{\mathbf{M}}^{-1} $,and $\mathbf{A}_2 = \sqrt{\mathbf{M}}^{-1} \mathbf{K}_2 [\sqrt{\mathbf{M}}^{-1} \otimes \sqrt{\mathbf{M}}^{-1} ] $. We know from \autoref{section:reduction_to_nonlinear_Schr\"{o}dinger}, that the governing differential equation of the oscillator system in \autoref{eq: nonlinear_oscillator_padded_gov} can be reduced to a nonlinear Schr\"{o}dinger Equation of the form:
\begin{equation}
    \dot{\ket{\bm{\psi}} } = -i\bar{\mathbf{H}}_1 \ket{\bm{\psi}} + \bar{\mathbf{H}}_2 \ket{\bm{\psi}} \otimes \ket{\bm{\psi}}
\end{equation}
The matrices $\bar{\mathbf{H}}_1$, and $\bar{\mathbf{H}}_2$ are constructed from $\mathbf{A}_1$, and $\mathbf{A}_2$. 
Now, applying Theorem \autoref{thm: nonlinear_Schr\"{o}dinger_subnormalized_state}, we can simulate the nonlinear Schr\"{o}dinger Evolution using $G$ queries to unitary block encoding of the symmetrized Carlemann Operator constructed from the matrices $\bar{\mathbf{H}}_1$ and $\bar{\mathbf{H}}_2$. We can construct this block encoding by first applying \autoref{Block_Encode_H1_nonlinear_oscillator}, and \autoref{Block_encode_H_2_nonlinear_oscillator} to construct unitary block encodings of $\bar{\mathbf{H}}_1$, and $\bar{\mathbf{H}}_2$ using oracles defined in \autoref{Def:Oracles-nonlinear-oscillator-system} to access elements of the matrices $\mathbf{M}$, $\mathbf{K}_1$, and $\mathbf{K}_2$. The unitary block encoding of the symmetrized Carlemann Operator $\mathcal{U}(\mathbf{Q}(\eta))$ can be then constructed by applying \autoref{block_encoding_q_eta}. The number of queries to $\mathcal{U}(\mathbf{Q}(\eta))$, as per \autoref{thm: nonlinear_Schr\"{o}dinger_subnormalized_state} is given by:

\begin{align}
    G = \mathcal{O}\left( \alpha k^2 t + \frac{(k-1)}{2}\log\left(\left[\frac{\|\mathbf{H}_2 \| k\left( k +1 \right)}{2} \left(1 + \beta\frac{1-\beta^k}{(1-\beta)\epsilon} \right)t  \right] \right) \right)
\end{align}
Note that, $\|\mathbf{H}_1 \| \leq d \max_{i,j}(\sqrt{\frac{k_{ij}}{m_j}})$, where $d$ is the sparsity of matrix $\mathbf{K}_1$. Furthermore,
\begin{align}
    \|\mathbf{H}_2\| \leq \| \mathbf{A}_2\|\| \mathbf{D}_2\| \leq \| \mathbf{K}_2\| \|\sqrt{M^{-1}} \|^3 \| \mathbf{D}_2\| \leq \frac{d k^{(2)}_{\max} }{k^{(1)}_{\min} (m_{\min})^{3/2}}
\end{align}

where $k^{(2)}_{\max} = \max_{ij}\mathbf{K}_2(i,j)$, and $m_{\min} = \min_i \mathbf{M}(i,i)$, and $d$ is the sparsity of the matrix $\mathbf{K}_2$. Substituting the above norm bounds, we get:

\begin{equation}
    G=  \mathcal{O}\left( \alpha k^2 t + \frac{(k-1)}{2}\log\left(\left[\frac{d k^{(2)}_{\max} k\left( k +1 \right)}{2k^{(1)}_{\min} (m_{\min})^{3/2}} \left(1 + \beta\frac{1-\beta^k}{(1-\beta)\epsilon} \right)t  \right] \right) \right)
\end{equation}
The truncation order $k$, is given by:

\begin{align}
k &\in \frac{W\left( \frac{\epsilon}{CT} \cdot R_r \log(R_r) \right)}{\log(R_r)} \\ \nonumber
 &\in \widetilde{\mathcal{O}} \left( \frac{\log(\frac{CT}{\epsilon})}{\log(1/R_r)}\right)
\end{align}

Here, $W(.)$ is the Lambert W function, the parameter, $C = \|\mathbf{H}_2\|\beta^2 \leq \frac{d k^{(2)}_{\max} \beta^2 }{2k^{(1)}_{\min} (m_{\min})^{3/2}} $ and $R_r = \frac{4 e \beta d k^{(2)}_{\max}}{k^{(1)}_{\min} (m_{\min})^{3/2}\Delta }$, where $\Delta$ is defined as follows:
     \[
\Delta := \inf_{k \in [n]} \inf_{\substack{m_j \geq 0 \\ \sum_{j=1}^n m_j \geq 2}}
\left(|
\lambda_k - \sum_{i=1}^n m_j \lambda_j
|\right)
\]
\end{proof}

\autoref{thm:nonlinear_oscillator_simulation} tells us that the nonlinear oscillator simulation problem can be solved by reducing it to nonlinear Schr\"{o}dinger equation, and then applying Algorithm \autoref{alg:nonlinear_Schr\"{o}dinger} to simulate the resultant nonlinear Schr\"{o}dinger equation. The complexity of simulation is proportional to the square of the carlemann truncation order, which is logarithmic in evolution time and error tolerance when the nonlinear oscillator system satisfies non-resonance conditions.


\section{Time Dependent Stiffness Matrix}

In this section, we examine the problem outlined in \autoref{prblm time_var_stiffness}, which involves simulating oscillator systems whose stiffness matrices vary with time. To solve this problem, we show that a nonlinear oscillator system can be constructed such that a subspace of the system evolves according to the dynamics of the oscillator system with time-dependent stiffness matrix. By appropriately selecting the parameters of this nonlinear system, we can emulate the dynamics of oscillator systems with time-dependent stiffness matrices. The formal statement of the problem is given below: 
\label{sec:time_dependent_stiffness}
\begin{problem}
\label{prblm time_var_stiffness}
(Time-varying Stiffness Matrix) Let $\mathbf{M}$ be an $N \times N$ diagonal matrix with positive entries, and let $\mathbf{K}(t)$ be an $N \times N$ negative definite matrix. Assume we are given oracle access to $\mathbf{M}$, oracle access to Fourier coefficients of terms in $\mathbf{K}(t)$ and a state preparation oracle that prepares a Quantum state proportional to $[\mathbf{x}(t=0), \dot{\mathbf{x}}(t=0) ]^T$. Our goal is to solve the following second-order differential equation: 
\begin{equation}
    \label{prblm-eq-time_variying_stiffness}
    \mathbf{M}\ddot{\mathbf{x}} = -\mathbf{K}(t)\mathbf{x} 
\end{equation}
and prepare a Quantum State $\ket{\psi{(t)}}$ such that $\Pi\ket{\psi{(t)}}  = [\mathbf{x},\dot{\mathbf{x}}]^T$, where 
$\Pi$ is a projection matrix that maps the evolved quantum state to the state of classical dynamical system defined in Eq \eqref{prblm-eq-time_variying_stiffness}.
\end{problem}

The governing differential equation of the oscillator systems with time-dependent stiffness matrices can be written as:
\begin{align}
    \label{eq: time_dependent_stiffness_matrix}
    \mathbf{M}\ddot{\mathbf{x}} = -\mathbf{K}(t)\mathbf{x}
\end{align}
Here, $\mathbf{M}$ is a diagonal matrix with positive masses $m_i$ and $\mathbf{K}(t)$ is a negative definite matrix $\forall t \in [0,T]$. The diagonal entries of $\mathbf{K}(t)$ are given by: $k_{jj} = -\sum_i k_{ji}(t)$ and the off-diagonal entires are $k_{ij}(t) = k_{ji}(t)$. First, we show that Eq \eqref{eq: time_dependent_stiffness_matrix} can be reduced to nonlinear oscillator system with quadratic nonlinearity. Then, we apply lemmas introduced in previous sections to reduce the governing differential equations of the nonlinear oscillator system to quantum evolution. Finally, we apply Hamiltonian simulation methods to solve the resultant time independent Schr\"{o}dinger equation and thereby solving Eq \eqref{eq: time_dependent_stiffness_matrix}. \\

First, let's assume that the time-dependent spring coefficients can be expressed in the following form:\\
\begin{align}
\label{eq: time_varying_stiffness}
    k_{ij}(t) = \alpha_{ij,0} + \sum_{l=1}^L  \alpha_{ij,l} y_{ij,l}(t)
\end{align}
where $\alpha_{ij,0} = 0$ if $i\neq j$, and the time-dependent function $y_{ij,l}(t)$ is defined as:
\begin{equation}
    y_{ij,l}(t) = A_{ij,l} \cos(\omega_{ij,l}t + \phi_{ij,l})
\end{equation}
 To understand how this substitution change the system of equations formed by \eqref{eq: time_dependent_stiffness_matrix}, let us analyze the dynamics of mass $m_i$. 
\begin{align}
m_i \ddot{x}_i &= -\left(\sum_jk_{ij}(t) \right)x_i(t) + \sum_{j\neq i}k_{ij}(t) x_j(t) \\ \nonumber
\end{align}
After substituting \autoref{eq: time_varying_stiffness}, for the time-dependent stiffness functions, we get:
\begin{align}
\label{eq:nonlinear_coupling_dynamics}
    m_i \ddot{x}_i &= -\alpha_{ii,0}x_i -\left(\sum_j \sum_{l=1}^m  \alpha_{ij,l} y_{ij,l}(t) \right) x_i(t) + \sum_{j\neq i} \left(  \sum_{l=1}^m  \alpha_{ij,l} y_{ij,l}(t)\right)  x_j(t) \\ \nonumber
\end{align}
Note that \autoref{eq:nonlinear_coupling_dynamics} involves quadratic nonlinear terms, with nonlinear coupling between $y_{ij,l}$ and the displacement of mass $x_i$, and other masses $x_{j\neq i }$. Hence, we can construct a higher dimensional nonlinear oscillator system  such that the dynamics of a subspace evolves according to \autoref{eq: time_dependent_stiffness_matrix}. The parameters of the nonlinear oscillator system have to be carefully chosen to emulate the dynamics of the oscillator system with time-varying stiffness matrices. Now we formalize this intuitive idea in \autoref{thm:quad_nonlinear_to_time_dependent_K}.

\begin{theorem}[Construction of larger Coupled Nonlinear Oscillator System]
\label{thm:quad_nonlinear_to_time_dependent_K}
  Given a dynamical system of the form $\mathbf{M}\ddot{\mathbf{x}} = -\mathbf{K}(t)\mathbf{x}$, where $\mathbf{K}(t)$ is time dependent stiffness matrix, satisfying $ \mathbf{K}(t) \succ 0$, $  \forall t \in [0,T]$, there exists a higher dimensional nonlinear coupled oscillator system with dynamics: $\mathbf{M}'\ddot{\mathbf{z}} = \mathbf{K}_1 \mathbf{z} + \mathbf{K}_2 \mathbf{z} \otimes \mathbf{z}$ and initial conditions $\mathbf{z}(0) = \mathbf{z}_0$ and $\dot{\mathbf{z}}(0) = \dot{\mathbf{z}}_0$ such that $\forall t \in [0,T]$, s.t $\mathbf{P}\mathbf{z}(t) =  \ket{0}\otimes\mathbf{x}(t)$, where the matrix $\mathbf{P} = \ket{\bm{0}}\bra{\bm{0}} \otimes \mathbf{I}$. 
\end{theorem}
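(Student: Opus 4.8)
The plan is to treat each Fourier component of every time-dependent spring constant as the coordinate of a fictitious free harmonic oscillator, so that the product $k_{ij}(t)\,x_j(t)$ appearing in \autoref{eq:nonlinear_coupling_dynamics} becomes literally a quadratic (bilinear) monomial in the enlarged coordinate vector. Concretely, for each pair $(i,j)$ carrying a nonzero spring and each Fourier index $l\in[L]$, I would introduce an auxiliary coordinate $y_{ij,l}$ of fictitious mass $m_f>0$ governed by the decoupled linear equation $m_f\ddot y_{ij,l}=-m_f\omega_{ij,l}^2\,y_{ij,l}$, and fix its initial data $y_{ij,l}(0)=A_{ij,l}\cos\phi_{ij,l}$, $\dot y_{ij,l}(0)=-A_{ij,l}\omega_{ij,l}\sin\phi_{ij,l}$ so that its exact trajectory is $y_{ij,l}(t)=A_{ij,l}\cos(\omega_{ij,l}t+\phi_{ij,l})$, i.e.\ precisely the $l$-th term of \autoref{eq: time_varying_stiffness}. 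This mirrors the auxiliary-mass idea of \autoref{thm:pertThm N Dimensions}, except that here the interaction is genuinely bilinear, so — in contrast to the forced-oscillator case — no $\gamma\to0$ perturbative limit is needed and the embedding will be \emph{exact}.

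Next I would assemble the enlarged system. Let $\mathbf z=[\mathbf x,\mathbf y]^{T}$ (with the $\ket{0}$-padding used in the downstream Carleman reduction, so that $\mathbf P=\ket{\mathbf 0}\bra{\mathbf 0}\otimes\mathbf I$ extracts the physical block), let $\mathbf M'=\operatorname{diag}(\mathbf M,\,m_f\mathbf I)$, and write the dynamics as $\mathbf M'\ddot{\mathbf z}=\mathbf K_1\mathbf z+\mathbf K_2\,\mathbf z\otimes\mathbf z$. I take $\mathbf K_1$ block-diagonal: on the $\mathbf x$-block it is the \emph{static} part $-\operatorname{diag}(\alpha_{jj,0})$ of the wall springs (the only surviving DC term, since $\alpha_{ij,0}=0$ for $i\ne j$), on the $\mathbf y$-block it is $-m_f\operatorname{diag}(\omega_{ij,l}^2)$, and the $\mathbf x$–$\mathbf y$ cross-blocks vanish — every position–stiffness interaction is quadratic and hence goes into $\mathbf K_2$. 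The rectangular matrix $\mathbf K_2$ is supported only on rows indexed by $\mathbf x$-coordinates, with entries placing the monomials $-\alpha_{ij,l}\,y_{ij,l}\,x_i$ and $+\alpha_{ij,l}\,y_{ij,l}\,x_j$ (for $j\ne i$) into the $m_i\ddot x_i$ equation, while its rows indexed by $\mathbf y$-coordinates are identically zero. A small amount of bookkeeping is required here to split each coefficient symmetrically across the two orderings $z_a\otimes z_b$ and $z_b\otimes z_a$ in $\mathbf z\otimes\mathbf z$, and to verify the sparsity ($\mathbf K_2$ has $\mathcal{O}(L)$ nonzeros per $\mathbf x$-row).

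The verification is then immediate. Since $\mathbf K_1$ has no $\mathbf x$–$\mathbf y$ coupling and $\mathbf K_2$ acts trivially on the $\mathbf y$-block, the $\mathbf y$-subsystem is closed and linear, with solution exactly the prescribed family $y_{ij,l}(t)=A_{ij,l}\cos(\omega_{ij,l}t+\phi_{ij,l})$. Substituting these into the $\mathbf x$-block of $\mathbf M'\ddot{\mathbf z}=\mathbf K_1\mathbf z+\mathbf K_2\mathbf z\otimes\mathbf z$ reproduces \autoref{eq:nonlinear_coupling_dynamics} verbatim, which is \autoref{eq: time_dependent_stiffness_matrix}; hence, keeping $\mathbf x(0),\dot{\mathbf x}(0)$ unchanged and using the auxiliary initial data above, $\mathbf P\mathbf z(t)=\ket{0}\otimes\mathbf x(t)$ for all $t\in[0,T]$ with zero approximation error.

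The step I expect to be the main obstacle is not the algebra but ensuring the constructed triple is a \emph{bona fide} nonlinear coupled oscillator in the sense of \autoref{Def: nonlinear_coupled_oscillator_system} — in particular that $\mathbf K_1$ is positive semidefinite, which is what the downstream subroutine \autoref{thm:nonlinear_oscillator_simulation} needs (real $\lambda_i$ and a well-defined energy constant $E$). The $\mathbf y$-block contributes $m_f\omega_{ij,l}^2>0$, but the $\mathbf x$-block contributes $\operatorname{diag}(\alpha_{jj,0})$, and the hypothesis $\mathbf K(t)\succ0$ for all $t$ does not obviously force $\alpha_{jj,0}\ge0$. I would handle this with a time-average/shift argument: averaging $\mathbf K(t)\succ0$ over a period isolates the DC component as a positive-weight combination of positive matrices, and if that is still not enough one can add a positive constant to the wall stiffness and subtract a compensating constant Fourier amplitude, which leaves the $\mathbf x$-trajectory invariant while making the linear part PSD. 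A secondary, purely cosmetic point is the dimension accounting: the enlarged system has $N'=N+L\cdot(\#\text{springs})$ coordinates, which must be padded to a power of two and kept inside the polynomial budget assumed later.
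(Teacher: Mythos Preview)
Your proposal is correct and takes essentially the same approach as the paper: introduce one auxiliary harmonic oscillator per Fourier component of each time-dependent spring, make $\mathbf{K}_1$ block-diagonal (static wall part on the $\mathbf{x}$-block, $\omega_{ij,l}^2$-diagonal on the $\mathbf{y}$-block), push all bilinear $y\cdot x$ terms into $\mathbf{K}_2$ acting only on the $\mathbf{x}$-rows, then observe that the $\mathbf{y}$-subsystem is closed and exactly reproduces the prescribed cosines so that substitution recovers \autoref{eq:nonlinear_coupling_dynamics}. The paper packages the same construction in bra-ket register notation (indexing the $(i,j)$ pairs via a bijection $f(i,j,N)$ and using a register-swap transformation $\mathcal{T}$ instead of your explicit symmetrization of $\mathbf{z}\otimes\mathbf{z}$), takes unit auxiliary mass rather than a generic $m_f$, and does not address the PSD issue you raise---so your treatment is in fact slightly more careful on that point, though the PSD property is not needed for the existence statement itself, only for the downstream simulation theorem.
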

 
\begin{proof}
 Let the state vector of the nonlinear oscillator system be defined as:
\[
{\mathbf{z}} = \ket{0}\otimes {\mathbf{x}} +  \sum_{i\leq j} \ket{f(i,j,N)}\otimes {\mathbf{y}_{ij}} \in \mathbb{R}^{ \frac{N^2(N+1)}{2} + N }
\]
where $ f(i,j,N): \{(i,j) \mid i \leq j\} \rightarrow \{0,1,2,,\frac{N(N+1)}{2} -1\} $ 
is a one-to-one function that maps each pair $ (i, j) $ with $ 1 \leq i \leq j \leq N $  
to a unique index in the range $ \{0,1, 2, \dots, \frac{N(N+1)}{2} -1 \} $.
\\

First, note that,

\begin{align}
    {\mathbf{z}} \otimes {\mathbf{z}} &=  \left(\ket{0}\otimes{\mathbf{x}} +  \sum_{i\leq j} \ket{f(i,j,N)}\otimes {\mathbf{y}_{ij}} \right) \otimes \left(\ket{0}\otimes{\mathbf{x}} +  \sum_{i\leq j} \ket{f(i,j,N)}\otimes {\mathbf{y}_{ij}} \right) \\ \nonumber
    &=  \ket{0}\otimes{\mathbf{x}} \otimes\ket{0}\otimes{\mathbf{x}} + \sum_{i\leq j} \ket{f(i,j,N)} \otimes{\mathbf{y}_{ij}}\otimes\ket{0}\otimes{\mathbf{x}} \\ \nonumber 
    &\quad + \sum_{i\leq j} \ket{0}\otimes{\mathbf{x}}\otimes\ket{f(i,j,N)} \otimes{\mathbf{y}_{ij}} + \sum_{i \leq j} \sum_{k \leq l}  \ket{f(i,j,N)} \otimes{\mathbf{y}_{ij}} \otimes\ket{f(k,l,N)}\otimes{\mathbf{y}_{kl}} 
\end{align}
\\
We further define a transformation $\mathcal{T}$ which would be useful in the definition of the stiffness matrices. \\
\begin{align}
    \mathcal{T}( {\mathbf{z}} \otimes {\mathbf{z}}) &= \ket{0}\otimes\ket{0} \otimes{\mathbf{x}}\otimes{\mathbf{x}} + \sum_{i\leq j} \ket{f(i,j,N)}\ket{0} \otimes{\mathbf{y}_{ij}}\otimes{\mathbf{x}} \\ \nonumber
     &+  \sum_{i\leq j} \ket{0}\ket{f(i,j,N)}\otimes{\mathbf{x}} \otimes{\mathbf{y}_{ij}} + \sum_{i \leq j} \sum_{k \leq l}  \ket{f(i,j,N)} \ket{f(k,l,N)}\otimes{\mathbf{y}_{ij}} \otimes{\mathbf{y}_{kl}} 
\end{align}
Now, let us consider a nonlinear oscillator system with the following dynamics:
\begin{equation}
    \mathbf{M}'\ddot{\mathbf{z}} = -\mathbf{K}_1 \mathbf{z} + \mathbf{K}_2 \mathbf{z} \otimes \mathbf{z}
\end{equation}
Our goal is to choose the matrices $\mathbf{M}'$,  $\mathbf{K}_1$, and $\mathbf{K}_2$ such that $\mathbf{x}(t)$ evolves according to the dynamics of time time-dependent oscillator system. First, we choose the mass matrix of the higher-dimensional oscillator system, which is a diagonal block matrix defined as follows:

\begin{equation}
    \mathbf{M}' = \ket{0}\bra{0} \otimes \mathbf{M} + \sum_{i\leq j} \ket{f(i,j,N)} \bra{f(i,j,N)} \otimes \mathbf{I}
\end{equation}
The top left block of the mass matrix, $\mathbf{M}'$ is $\mathbf{M}$, which is the mass matrix of the time-dependent oscillator system. The mass matrix of the auxiliary system is chosen to be the identity matrix. Next, we turn to the stiffness matrices, starting with the linear component, $\mathbf{K}_1$. To introduce the required time dependence, we select $\mathbf{K}_1$ to have the following form. 
\begin{align}
 \mathbf{K}_1 =  \ket{\mathbf{0}}\bra{\mathbf{0}} \otimes \mathbf{K}_{00} + \sum_{i\leq j} \ket{f(i,j,N)}\bra{f(i,j,N)} \otimes \mathbf{K}_{ij}^{(1)}
\end{align}
Here, $\mathbf{K}_{00}, \mathbf{K}_{ij}^{(1)} \in \mathbb{R}^{N \times N}$ are diagonal matrices, with positive real-valued entries. Specifically, the matrix $\mathbf{K}_{00}$ is defined as:
\begin{equation}
    \mathbf{K}_{00} =  \begin{bmatrix}
\alpha_{11,0} & 0 & \cdots & 0 \\
0 & \alpha_{22,0} & \cdots & 0 \\
\vdots & \vdots & \ddots & \vdots \\
0 & 0 & \cdots & \alpha_{NN,0}
\end{bmatrix}
\end{equation}
We choose the parameters of $\mathbf{K}_{ij}^{(1)} $ depending upon the frequency parameters of the time-dependent stiffness function $k_{ij}(t)$. 
\begin{equation}
    \mathbf{K}_{ij}^{(1)} = \begin{bmatrix}
\omega_{ij,1}^2 & 0 & \cdots & 0 \\
0 & \omega_{ij,2}^2 & \cdots & 0 \\
\vdots & \vdots & \ddots & \vdots \\
0 & 0 & \cdots & \omega_{ij,N}^2
\end{bmatrix} 
\end{equation}
Now we consider the elements of the coupling matrix $\mathbf{K}_2$. If we choose $\mathbf{K}_2$ of this form:
\begin{align}
 \mathbf{K}_2 &= \sum_{ij}\left(\ket{0}\bra{0} \bra{f(i,j,N)} \otimes \mathbf{K}_{ij}^{(2)} \right)\mathcal{T}
 \end{align}
then, 
\begin{align}
    \mathbf{K}_2 (\mathbf{z} \otimes \mathbf{z}) = \sum_{i \leq j} \ket{0} \otimes \mathbf{K}_{ij}^{(2)} \ket{\mathbf{x}} \ket{\mathbf{y}_{ij}}
\end{align}
Now, we define $\mathbf{K}_{ij}^{(2)}$ as:\\
when $i=j$,\\
\[
\bra{i-1} \mathbf{K}_{ij}^{(2)} = \left[ \underbrace{0, 0, 0, \ldots, 0}_{(i-1)L \text{ zeros}}, -\alpha_{ii,1}, -\alpha_{ii,2}, \ldots, -\alpha_{ii,l}, \underbrace{0, 0, 0, \ldots, 0}_{N^2-iL \text{ zeros}} \right]
\]
for all other $k$,
\[
\bra{k} \mathbf{K}_{ij}^{(2)} = \left[ \underbrace{0, 0, 0, \ldots, 0}_{N^2 \text{ zeros}}\right]
\]

when $i \neq j$,\\
\[
\bra{i-1} \mathbf{K}_{ij}^{(2)} = \left[ \underbrace{0, 0, 0, \ldots, 0}_{(i-1)L \text{ zeros}}, -\alpha_{ij,1}, -\alpha_{ij,2}, \ldots, -\alpha_{ij,l}, \underbrace{0, 0, 0, \ldots, 0}_{(j-i-1)L \text{ zeros}},  \alpha_{ij,1}, \alpha_{ij,2}, \ldots, \alpha_{ij,l},\underbrace{0, 0, 0, \ldots, 0}_{N^2-jL \text{ zeros}}, \right]
\]
\[
\bra{j-1} \mathbf{K}_{ij}^{(2)} = \left[ \underbrace{0, 0, 0, \ldots, 0}_{(i-1)L \text{ zeros}}, \alpha_{ij,1}, \alpha_{ij,2}, \ldots, \alpha_{ij,l}, \underbrace{0, 0, 0, \ldots, 0}_{(j-i-1)L \text{ zeros}},  -\alpha_{ij,1}, -\alpha_{ij,2}, \ldots, -\alpha_{ij,l},\underbrace{0, 0, 0, \ldots, 0}_{N^2-jL \text{ zeros}}, \right]
\]
for all other $k$,
\[
\bra{k} \mathbf{K}_{ij}^{(2)} = \left[ \underbrace{0, 0, 0, \ldots, 0}_{N^2 \text{ zeros}}\right]
\]

Our choice of $\mathbf{K}_{ij}^{(2)}$ results in the following dynamical equation for $x_j$:
\begin{equation}
    m_j \ddot{x}_j =      -\left(\sum_{i}\alpha_{ji}^Ty_{ji} x_j + \sum_l \alpha_{jl}^Ty_{jl}x_{j+l}  + \sum_l \alpha_{jl}^Ty_{jl} x_{j-l} \right)
\end{equation}

Since $k_{ij}(t) = \sum_{l=1}^m  \alpha_{ij,l} y_{ij,l}(t) $, Our choice of $\mathbf{K}_2$ introduces the necessary nonlinear coupling to simulate the effect of time-dependent stiffness coefficients.\\

Now, since 
\begin{align}
 \mathbf{K}_1 =   \ket{\mathbf{0}}\bra{\mathbf{0}} \otimes \mathbf{K}_{00} + \sum_i \ket{f(i,j,N)}\bra{f(i,j,N)} \otimes \mathbf{K}_{ij}^{(1)},
\end{align}
the dynamics of $y_{ij},l(t)$ is given by:
\begin{equation}
  \ddot{y}_{ij,l}(t)  = -\omega_{ij,l} ^2 y_{ij,l}(t)
\end{equation}
Hence,
\begin{equation}
 {y}_{ij,l}(t)  =  A_{ij,l} \cos( \omega_{ij,l}t + \phi_{ij,l})
\end{equation}
We choose appropriate initial conditions for $ {y}_{ij,l}(t) $ to create desired amplitudes $A_{ij,l}$ and phase factors $ \phi_{ij,l}$. Thus, our choice of stiffness matrices $\mathbf{K}_1, \mathbf{K}_2$ and appropriate initial conditions for $\mathbf{Y}_{ij}$ results in a nonlinear oscillator system whose subspace evolves according to $\mathbf{M}\ddot{\mathbf{x}} = -\mathbf{K}(t) \mathbf{x}$. 
\end{proof}


        


    

\begin{lemma}
\label{lemma:norm_bound_time_dep_oscillator}
Let \( \mathbf{M} \ddot{\mathbf{x}}(t) = \mathbf{K}(t) \mathbf{x}(t) \) be a second-order differential equation, where \( \mathbf{M} \) is a diagonal matrix containing positive entries and \( \mathbf{K}(t) \) is a time-dependent negative definite matrix . Define the augmented state vector \( \mathbf{y}(t) = \begin{pmatrix} \mathbf{x}(t) \\ \dot{\mathbf{x}}(t) \end{pmatrix} \), and assume that \( \mu(\mathbf{A}(t)) \) is the logarithmic norm of the augmented matrix 
\[
\mathbf{A}(t) = \begin{pmatrix} \mathbf{0} & \mathbf{I} \\ \mathbf{M}^{-1} \mathbf{K}(t) & \mathbf{0} \end{pmatrix}.
\]
Furthermore, let us assume that \( \mu(\mathbf{A}(t)) \leq \beta, \forall t \in [0, T] \),
Then the norm \( \|\mathbf{x}(t)\| \) is bounded by:
\[
\|\mathbf{x}(t)\| \leq \|\mathbf{y}(0)\| \exp\left( \beta t\right),
\]
where \( \|\mathbf{y}(0)\| = \sqrt{\|\mathbf{x}(0)\|^2 + \|\dot{\mathbf{x}}(0)\|^2} \).
If $\beta \leq 0$, then $\|\mathbf{x}(t)\| \leq  \sqrt{\|\mathbf{x}(0)\|^2 + \|\dot{\mathbf{x}}(0)\|^2}  $
\end{lemma}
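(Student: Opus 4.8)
\textbf{Proof plan for Lemma~\ref{lemma:norm_bound_time_dep_oscillator}.} The plan is to recast the second-order system as a first-order linear ODE in the augmented variable $\mathbf{y}(t) = [\mathbf{x}(t), \dot{\mathbf{x}}(t)]^T$, then invoke the standard logarithmic-norm differential inequality. Writing $\dot{\mathbf{y}}(t) = \mathbf{A}(t)\mathbf{y}(t)$ with $\mathbf{A}(t)$ as defined in the statement, I would first record the elementary fact (already used elsewhere in the paper via $\|e^{At}\|\le e^{\mu(A)t}$, and here in its time-dependent refinement) that for any solution of $\dot{\mathbf{y}} = \mathbf{A}(t)\mathbf{y}$ one has the right-hand Dini derivative bound $D^+\|\mathbf{y}(t)\| \le \mu(\mathbf{A}(t))\,\|\mathbf{y}(t)\|$. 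This follows from $D^+\|\mathbf{y}(t)\| = \lim_{h\to 0^+}\frac{\|\mathbf{y}(t)+h\dot{\mathbf{y}}(t)\| - \|\mathbf{y}(t)\|}{h} = \lim_{h\to 0^+}\frac{\|(\mathbf{I}+h\mathbf{A}(t))\mathbf{y}(t)\| - \|\mathbf{y}(t)\|}{h} \le \mu(\mathbf{A}(t))\|\mathbf{y}(t)\|$, using the definition of the logarithmic norm and $\|(\mathbf{I}+h\mathbf{A}(t))\mathbf{y}\| \le \|\mathbf{I}+h\mathbf{A}(t)\|\,\|\mathbf{y}\|$.

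Second, I would apply the hypothesis $\mu(\mathbf{A}(t)) \le \beta$ for all $t \in [0,T]$ to get $D^+\|\mathbf{y}(t)\| \le \beta\|\mathbf{y}(t)\|$, and integrate this scalar differential inequality via Grönwall's lemma to obtain $\|\mathbf{y}(t)\| \le \|\mathbf{y}(0)\|e^{\beta t}$. Third, since $\|\mathbf{x}(t)\| \le \|\mathbf{y}(t)\|$ trivially (the first block of the augmented vector has norm no larger than the whole), and $\|\mathbf{y}(0)\| = \sqrt{\|\mathbf{x}(0)\|^2 + \|\dot{\mathbf{x}}(0)\|^2}$, the main bound $\|\mathbf{x}(t)\| \le \sqrt{\|\mathbf{x}(0)\|^2+\|\dot{\mathbf{x}}(0)\|^2}\,e^{\beta t}$ follows. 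For the final sentence, when $\beta \le 0$ we have $e^{\beta t} \le 1$ for $t \ge 0$, so $\|\mathbf{x}(t)\| \le \sqrt{\|\mathbf{x}(0)\|^2+\|\dot{\mathbf{x}}(0)\|^2}$.

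The main obstacle, such as it is, is being careful about the non-differentiability of $t \mapsto \|\mathbf{y}(t)\|$ (the Euclidean norm is not smooth at points where $\mathbf{y}(t)$ vanishes or, more relevantly, the argument is fine but one must use one-sided Dini derivatives rather than ordinary derivatives to make the Grönwall step rigorous). I would handle this by stating the Dini-derivative version of Grönwall's inequality explicitly, or alternatively by noting that $\mathbf{A}(t)$ generates a well-defined propagator $\Phi(t,0)$ with $\|\Phi(t,0)\| \le \exp\!\big(\int_0^t \mu(\mathbf{A}(s))\,ds\big) \le e^{\beta t}$ — the time-dependent analogue of $\|e^{At}\| \le e^{\mu(A)t}$ — and then $\|\mathbf{y}(t)\| = \|\Phi(t,0)\mathbf{y}(0)\| \le e^{\beta t}\|\mathbf{y}(0)\|$. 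Either route is short; the only genuine care needed is to invoke the correct (one-sided / integral) form of the logarithmic-norm estimate rather than asserting differentiability of the norm.
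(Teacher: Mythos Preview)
Your proposal is correct and follows essentially the same route as the paper: reduce to the first-order system $\dot{\mathbf{y}}=\mathbf{A}(t)\mathbf{y}$, invoke the logarithmic-norm differential inequality $\tfrac{d}{dt}\|\mathbf{y}\|\le\mu(\mathbf{A}(t))\|\mathbf{y}\|$, integrate via Gr\"onwall, and project onto the $\mathbf{x}$-block. If anything you are more careful than the paper, which writes an ordinary derivative where you (rightly) flag the need for a one-sided Dini derivative or the propagator formulation.
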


\begin{proof}
The second-order system \( \mathbf{M} \ddot{\mathbf{x}}(t) = \mathbf{K}(t) \mathbf{x}(t) \) is reduced to a first-order system:
\[
\dot{\mathbf{y}}(t) = \mathbf{A}(t) \mathbf{y}(t),
\]
where \( \mathbf{y}(t) = \begin{pmatrix} \mathbf{x}(t) \\ \dot{\mathbf{x}}(t) \end{pmatrix} \) and
\[
\mathbf{A}(t) = \begin{pmatrix} \mathbf{0} & \mathbf{I} \\ \mathbf{M}^{-1} \mathbf{K}(t) & \mathbf{0} \end{pmatrix}.
\]
The logarithmic norm \( \mu(\mathbf{A}(t)) \) gives an upper bound on the rate of growth of \( \|\mathbf{y}(t)\| \), leading to the differential inequality:
\[
\frac{d}{dt} \|\mathbf{y}(t)\| \leq \mu(\mathbf{A}(t)) \|\mathbf{y}(t)\|.
\]
The solution to this inequality is:
\[
\|\mathbf{y}(t)\| \leq \|\mathbf{y}(0)\| \exp\left( \int_0^t \mu(\mathbf{A}(s)) \, ds \right).
\]
Since \( \|\mathbf{y}(0)\| = \sqrt{\|\mathbf{x}(0)\|^2 + \|\dot{\mathbf{x}}(0)\|^2} \), we obtain the upper bound for \( \|\mathbf{x}(t)\| \) when the logarithmic norm is upper bounded by $\beta \leq 0$.
\end{proof}


\TDOQuantumAlgorithmThm*

\begin{proof}
Let the governing differential equation of the time-dependent oscillator system be given as:
\[
\mathbf{M}\ddot{\mathbf{x}} = -\mathbf{K}(t) \mathbf{x}
\]
Let $\mathbf{G}(t)$ be the time dependent stiffness matrix of the system, where $\mathbf{G}_{ij}(t) = k_{ij}(t) = A_{ij}\cos(\omega_{ij}t + \phi_{ij})$. First, we apply \autoref{thm:quad_nonlinear_to_time_dependent_K} to embed the dynamics of the oscillator system $\mathbf{M} \ddot{\mathbf{x}} = \mathbf{K}(t) \mathbf{x}$ in a higher dimensional nonlinear oscillator system of the form : $\mathbf{M}' \ddot{\mathbf{z}} = \mathbf{K}_1 \mathbf{z} + \mathbf{K}_2 \mathbf{z}\otimes \mathbf{z}$. Here, the state vector of the higher dimensional nonlinear oscillator is given by:
\begin{equation}
 {\mathbf{z}} = \ket{0}\otimes {\mathbf{x}} +  \sum_{i\leq j} \ket{f(i,j,N)}\otimes {\mathbf{y}_{ij}}  
\end{equation}
The mass matrix $\mathbf{M}'$ is defined as:
\begin{equation}
    \mathbf{M}' = \ket{0}\bra{0} \otimes \mathbf{M} + \sum_{ij}\ket{f(i,j,N)}\bra{f(i,j,N)}  \otimes \mathbf{I}
\end{equation}
 The matrices $\mathbf{K}_1$, and $\mathbf{K}_2$ are chosen to guarantee that $ \| \mathbf{P}\mathbf{z}(t) - \mathbf{x}(t) \| \leq \epsilon, \forall t \in [0,\infty)$. From  \autoref{thm:quad_nonlinear_to_time_dependent_K}, the matrix $\mathbf{K}_1$ is:

\begin{equation}
    \mathbf{K}_1 = \ket{\mathbf{0}}\bra{\mathbf{0}} \otimes \mathbf{K}_{00} + \sum_{ij} \ket{f(i,j,N)}\bra{f(i,j,N)} \otimes \mathbf{K}_{ij}^{(1)}
\end{equation}
Here, $\mathbf{K}_{ij}^{(1)}$ is a diagonal matrix containing frequency terms of the time-dependent stiffness matrix. Specifically, the 
\begin{equation}
    \mathbf{K}_{ij}^{(1)} = \begin{bmatrix}
\omega_{ij,1}^2 & 0 & \cdots & 0 \\
0 & \omega_{ij,2}^2 & \cdots & 0 \\
0 & 0 & \ddots & 0 \\
\vdots & \vdots & \vdots & \omega_{ij,l}^2
\end{bmatrix}
\end{equation}
The matrix $\mathbf{K}_2$ is given as:
\begin{align}
 \mathbf{K}_2 &= \sum_{ij} \left(\ket{0}\bra{0} \bra{f(i,j,N)} \otimes \mathbf{K}_{ij}^{(2)} \right)\mathcal{T}
 \end{align}

Now, we apply \autoref{thm:nonlinear_oscillator_simulation} to simulate the dynamics of the resultant nonlinear oscillator system. Simulating the dynamics of the resultant nonlinear oscillator system requires $G$ queries to unitary matrices encoding $\mathbf{K}_1$, and $\mathbf{K}_2$. Here, $G$ is given by: 

\begin{equation}
G \in  \mathcal{O}\left( \alpha k^2 t + \frac{(k-1)}{2}\log\left(\left[\frac{d k^{(2)}_{\max} k\left( k +1 \right)}{2k^{(1)}_{\min} (m_{\min})^{3/2}} \left(1 + \beta\frac{1-\beta^k}{(1-\beta)\epsilon} \right)t  \right] \right) \right)
\end{equation}

where $k$ is the Carlemann truncation order, $\alpha = \max_{ij} \sqrt{\frac{k_{ij}}{m_j}} = \max_{ij} \omega_{ij}$ , $k_{\max}^{(2)} = \max_{ij}\mathbf{K}_2(i,j) = \max_{ij,l}\{\alpha_{ij,l}\}$, $k_{\min}^{(1)} = \min_{ij} \{\omega_{ij}\}$, $\alpha=\max_{ij}\omega_{ij}$ and $\beta$ is the initial total energy of the nonlinear oscillator system. 
The value of $\beta$ depends upon the norm of the initial conditions, which in our case, is given by:
\begin{equation}
  \beta =   \braket{\mathbf{z}(0)|\mathbf{z}(0)} = \braket{\mathbf{x}(0)|\mathbf{x}(0)} + \sum_{ij}\braket{\mathbf{y}_{ij}(0)|\mathbf{y}_{ij}(0)}
\end{equation}
Without loss of generality, we can rescale the initial conditions with a constant $\sqrt{E}$ such that $\beta = \frac{\braket{\mathbf{z}(0)|\mathbf{z}(0)}}{E} < 1$. Hence, the number of queries $G$ is given by:
\begin{align}
    G &\in  \mathcal{O}\left( \alpha k^2 t + k\log\left(\left[\frac{d k^{(2)}_{\max} k\left( k +1 \right)}{2k^{(1)}_{\min} (m_{\min})^{3/2}} \left(1 + \frac{\beta}{(1-\beta)\epsilon} \right)t  \right] \right) \right)
\end{align} 

Since $\mathbf{K}(t)$ is positive definite, from \autoref{lemma:norm_bound_time_dep_oscillator}, the norm of the solution is non-increasing, which is necessary for convergence of the Carlemann linearized system.  The truncation order $k$, is given by:
\begin{align}
k \in \widetilde{\mathcal{O}} \left( \frac{\log(\frac{CT}{\epsilon})}{\log(1/R_r)}\right)
\end{align}

Here, $C  \leq \frac{d k^{(2)}_{\max} \beta^2 }{2k^{(1)}_{\min} (m_{\min})^{3/2}} $
     Here, $R_r = \frac{4 e \beta \|\mathbf{K}_2\|\sqrt{M^{-1}} \|^3 \| \mathbf{D}_2\| \|}{\Delta}$, where $\Delta$ is defined as follows:
     \[
\Delta := \inf_{k \in [n]} \inf_{\substack{m_j \geq 0 \\ \sum_{j=1}^n m_j \geq 2}}
\left(|
\lambda_k - \sum_{i=1}^n m_j \lambda_j
|\right)
\]
Here, $\{\lambda_j\}$ are the eigenvalues of the matrix $\sqrt{\mathbf{M}^{-1}}\mathbf{K}_1\sqrt{\mathbf{M}^{-1}}$, which in this case is the set of frequencies $\{\omega_{ij}\}$. 
\end{proof}

\section{Time Dependent Stiffness Matrix and Time-Dependent External Forces}
\label{sec:time_dependent_force_and_stiffness}

In this section, we focus on \autoref{prblm: time_dependent_forced_oscillator}, where the stiffness matrix is time-varying, and external time-dependent forces act on individual masses. First, we show that the dynamics of the time-dependent forced oscillator system can be embedded in the dynamics of a higher dimensional nonlinear oscillator system with quadratic nonlinearity. Then, we apply \autoref{thm:nonlinear_oscillator_simulation} to simulate the resultant nonlinear oscillator system. Recall that the governing differential equation of the time-dependent forced oscillator system is of the form:

\begin{align}
    \label{eq: time_dependent_stiffness_matrix_and_force}
    \mathbf{M}\ddot{\mathbf{x}} = -\mathbf{K}(t)\mathbf{x} + \mathbf{f}(t)
\end{align}

Here, $\mathbf{M}$ is a diagonal matrix with positive entries corresponding to the value of masses $m_i$ and $\mathbf{K}(t)$ is a positive definite matrix $\forall t \in [0, T]$. The diagonal entries of $\mathbf{K}(t)$ are given by: $k_{jj} = -\sum_i k_{ji}(t)$ and the off-diagonal entires are $k_{ij}(t) = k_{ji}(t)$ as per \autoref{Def: time_dependent_forced_coupled_oscillator_system}. Further, as per \autoref{Def: time_dependent_forced_coupled_oscillator_system}, the time-dependent forces acting on mass $m_i$ is of the form $f_i(t)  =\sum_j f_{ij}\cos(\mu_{ij}t + \xi_{ij})$.\\

First, we assume that the time-dependent spring coefficients can be expressed in the following form:\\

\begin{align}
    k_{ij}(t) = \sum_{l=1}^{L}  \alpha_{ij,l} y_{ij,l}(t)
\end{align}
Here, $ y_{ij,l}(t)$ are basis functions and $\alpha_{ij,l}$ are scalar, real-valued coefficients. The time dependent forces can be written as:
\[
f_i(t) = \sum_{l=1}^L \beta_{i,l}w_{i,l}(t)
\]
To understand how this substitution changes the system of equations formed by \eqref{eq: time_dependent_stiffness_matrix}, let us analyze the dynamics of any mass $m_1$. 

\begin{align}
m_i \ddot{x}_i &= -\left(\sum_j k_{ij}(t) \right)x_i(t) + \sum_{j\neq i}k_{ij}(t) x_j(t) +  \sum_{l=1}^L \beta_{i,l}w_{i,l}(t)\\ \nonumber
    m_i \ddot{x}_i &= -\left(\sum_j\sum_{l=1}^L \alpha_{ij,l} y_{ij,l}(t) \right)x_i(t) + \sum_{i\neq j}\sum_{l=1}^L  \alpha_{ij,l} y_{ij,l}(t)x_j + \sum_{l=1}^L \beta_{i,l}w_{i,l}(t)
\end{align}
Note that this equation of dynamics involves a quadratic nonlinearity, with nonlinear coupling between $y_{ij,l}$ and the displacement of mass $x_j$. There is also an extra linear term corresponding to $w_{i,l}(t)$ which is due to the time-dependent external force acting on individual masses. This implies we can simulate the dynamics of the time-dependent oscillator system in \autoref{eq: time_dependent_stiffness_matrix_and_force} using a nonlinear dynamical system in higher dimensions, with suitable coefficients for nonlinear terms. We formalize this intuitive idea in \autoref{thm:quad_nonlinear_to_time_dependent_K_and_F}.

\begin{theorem}[Construction of larger Coupled Nonlinear Oscillator System]
\label{thm:quad_nonlinear_to_time_dependent_K_and_F}
  Given a dynamical system of the form $\mathbf{M}\ddot{\mathbf{x}} = \mathbf{K}(t)\mathbf{x} + \mathbf{f}(t)$, where $\mathbf{K}(t)$ is time dependent stiffness matrix, satisfying $ \mathbf{K}(t) \prec 0$, $  \forall t \in [0,T]$, there exists a higher dimensional nonlinear coupled oscillator system with dynamics: $\mathbf{M}'\ddot{\mathbf{z}} = -\mathbf{K}_1 \mathbf{z} + \mathbf{K}_2 \mathbf{z} \otimes \mathbf{z}$, with $\mathbf{z} \in \mathbb{R}^{2N + \frac{N^2(N+1)}{2}}$ and initial conditions $\mathbf{z}(0) = \mathbf{z}_0$ and $\dot{\mathbf{z}}(0) = \dot{\mathbf{z}}_0$ such that $\forall t \in [0,T]$,s.t $\mathbf{P}\mathbf{z}(t)  =  \ket{0}\otimes\mathbf{x}(t)$, where $\mathbf{P}= \ket{\bm{0}}\bra{\bm{0}} \otimes \mathbf{I}$. 
\end{theorem}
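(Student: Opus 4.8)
The plan is to extend the embedding of Theorem~\ref{thm:quad_nonlinear_to_time_dependent_K} by adjoining, alongside the auxiliary coordinates $\mathbf{y}_{ij}$ that already manufacture the Fourier components of the time-dependent stiffness $k_{ij}(t)$, a second family of auxiliary coordinates that manufactures the Fourier components $w_{i,l}(t)$ of the external forces $f_i(t)=\sum_l\beta_{i,l}w_{i,l}(t)$, together with a single constant ``rail'' coordinate $x_\star$ pinned at the value $1$. Concretely I would take a state vector of the form
\[
\mathbf{z}\;=\;\ket{0}\otimes\mathbf{x}\;+\;\ket{\star}\otimes\mathbf{w}\;+\;\sum_{i\le j}\ket{f(i,j,N)}\otimes\mathbf{y}_{ij},
\]
where $\mathbf{y}_{ij}$ carries $(y_{ij,1},\dots,y_{ij,L})$ exactly as in Theorem~\ref{thm:quad_nonlinear_to_time_dependent_K}, and $\mathbf{w}$ bundles the force coordinates $w_{i,l}$ with the rail $x_\star$, the block sizes being padded (using $L\le N$) so that $\mathbf{z}\in\mathbb{R}^{2N+N^2(N+1)/2}$ and $\mathbf{P}=\ket{\bm{0}}\bra{\bm{0}}\otimes\mathbf{I}$ extracts the $\mathbf{x}$-block.

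\textbf{Key steps.} First I would write down the three matrices. The mass matrix $\mathbf{M}'$ is block-diagonal, equal to $\mathbf{M}$ on the $\mathbf{x}$-block and the identity on every auxiliary block. The linear stiffness $\mathbf{K}_1$ is block-diagonal: the blocks $\mathbf{K}_{ij}^{(1)}=\operatorname{diag}(\omega_{ij,1}^2,\dots,\omega_{ij,L}^2)$ on the $\mathbf{y}_{ij}$-slots (so each $y_{ij,l}$ is an autonomous harmonic oscillator of frequency $\omega_{ij,l}$), $\operatorname{diag}(\mu_{i,1}^2,\dots,\mu_{i,L}^2)$ on the force-carrier slots (so each $w_{i,l}$ is a harmonic oscillator of frequency $\mu_{i,l}$), and a zero block on the rail coordinate (so $x_\star$ stays constant). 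Second, I would specify the quadratic coupling $\mathbf{K}_2$ exactly as in Theorem~\ref{thm:quad_nonlinear_to_time_dependent_K} for the products $y_{ij,l}\cdot x_j$ that inject $-k_{ij}(t)x_j$ into $\ddot x_i$, and in addition for the products $x_\star\cdot w_{i,l}$ that inject $f_i(t)$ into $\ddot x_i$; since $x_\star\equiv1$, the latter quadratic terms act precisely as the linear forcing $\sum_l\beta_{i,l}w_{i,l}(t)=f_i(t)$. Third, I would verify the embedding by the same block-triangular argument as before: because $\mathbf{K}_1$ is block-diagonal and $\mathbf{K}_2$ feeds the auxiliary blocks into the $\mathbf{x}$-block but never the reverse, the auxiliary equations are autonomous, giving $y_{ij,l}(t)=A_{ij,l}\cos(\omega_{ij,l}t+\phi_{ij,l})$, $w_{i,l}(t)=C_{i,l}\cos(\mu_{i,l}t+\xi_{i,l})$, and $x_\star(t)\equiv1$, with amplitudes/phases fixed by $\mathbf{z}(0),\dot{\mathbf{z}}(0)$ so that $\alpha_{ij,l}A_{ij,l}$ reconstructs $k_{ij}(t)$ and $\beta_{i,l}C_{i,l}$ reconstructs $f_i(t)$. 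Substituting back into the $\mathbf{x}$-block then reproduces $\mathbf{M}\ddot{\mathbf{x}}=-\mathbf{K}(t)\mathbf{x}+\mathbf{f}(t)$ identically, so $\mathbf{P}\mathbf{z}(t)=\ket{0}\otimes\mathbf{x}(t)$ for all $t\in[0,T]$ with no perturbative error — in contrast with the symmetric-coupling construction of Section~\ref{sec:linear inhomo-diff-eq}, exactness is available here precisely because the nonlinear coupling tensor $\mathbf{K}_2$ is not required to be symmetric.

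\textbf{Main obstacle.} The crux is fitting the external force, which enters the target equation \emph{linearly}, into the bilinear form $\mathbf{K}_2\,\mathbf{z}\otimes\mathbf{z}$; the clean resolution is the constant rail $x_\star$, so that every $w_{i,l}$ always appears multiplied by $1$ (equivalently, viewing $x_\star$ as an extra fictitious mass whose ``spring constant'' to mass $m_i$ is the time-dependent quantity $-f_i(t)$, which recasts the force as one more time-dependent stiffness coupling). One must then check that setting a zero $\mathbf{K}_1$-block for the rail — making $\mathbf{K}_1$ only positive \emph{semi}definite — is still compatible with the downstream use of Theorem~\ref{thm:nonlinear_oscillator_simulation}, i.e.\ with the generalized square-root decomposition $\mathbf{A}_1=\mathbf{B}\mathbf{B}^\dagger$ and with a strictly positive non-resonance gap $\Delta$. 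The remaining work — the exact padding of the force-carrier blocks so the total dimension is $2N+N^2(N+1)/2$, and the explicit row patterns of $\mathbf{K}_2$ — is the same routine bookkeeping already carried out in Theorem~\ref{thm:quad_nonlinear_to_time_dependent_K}.
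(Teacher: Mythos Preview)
Your proposal is correct and follows essentially the same approach as the paper: the paper likewise adjoins force-carrier oscillators $\mathbf{w}_k$ together with a constant auxiliary block $\mathbf{p}$ (your ``rail'' $x_\star$), and injects the force into the $\mathbf{x}$-equations via the bilinear products $\mathbf{p}\cdot\mathbf{w}_k$ inside $\mathbf{K}_2$, exactly your $x_\star\cdot w_{i,l}$ trick. The only differences are packaging---the paper keeps $\mathbf{p}$ and the $\mathbf{w}_k$ in separate index blocks rather than bundling them---and your explicit flag that the zero rail block makes $\mathbf{K}_1$ only positive semidefinite, a point the paper does not address in this proof.
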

 
\begin{proof}
Consider a nonlinear oscillator system of the following form:
\begin{equation}
\label{eq:nonlinear_oscillator_sys}
    \mathbf{M}'\ddot{\mathbf{z}} = -\mathbf{K}_1\mathbf{z} + \mathbf{K}_2 \mathbf{z} \otimes \mathbf{z}
\end{equation}
Let 

\begin{align}
  {\mathbf{z}}  =  \ket{0} \otimes {\mathbf{x}} + \sum_{i\leq j} \ket{c(i,j,N)}\otimes {\mathbf{y}_{ij}} + \sum_{k} \ket{g(k,N)}\otimes {\mathbf{w}_{k}} + \ket{m}\ket{\mathbf{p}}
\end{align}
Here, $ c(i,j,N): \{(i,j) \mid i \leq j\} \rightarrow \left[ \frac{N (N-1)}{2} \right] $ 
is a one-to-one function that maps each pair $ (i, j) $ with $ 1 \leq i \leq j \leq N $  
to a unique index in the range $ \{ 1, 2, \dots, \frac{N (N-1)}{2} \} $, and $g(k,N)$ maps each index $k$ to a unique index in the range $ \{\frac{N (N-1)}{2} +1,  \frac{N (N-1)}{2} + N+1 \}$, and let $m = \frac{N (N-1)}{2} + N+2 \}$. Further assume, $\dim(\mathbf{x}) = \dim(\mathbf{y}_{ij}) = \dim(\mathbf{w}_k) = \dim(\mathbf{p}) = N$, and $\mathbf{p}$ be a constant unit vector.\\

Recall, that the external time-dependent force acting on a mass $m_k$, can be written as: $ f_k(t) =  \sum_i \beta_i w_{k,i}(t)$. We emulate the effect of this time-dependent external force by evolving additional oscillators such that the displacement vector of these additional oscillators evolve according to the basis functions in the expansion of time-dependent force $f_k(t)$. Specifically, we let:
\begin{align}
    {\mathbf{w}_{k}} = \begin{bmatrix}
        w_{k,1}&
        w_{k,2}&
        \cdots &
        w_{k,q}
    \end{bmatrix}^T
\end{align}
Similar to \autoref{thm:quad_nonlinear_to_time_dependent_K}, the effect of time dependent stiffness elements $k_{ij}(t)$ are created by using nonlinear coupling between the displacement vector $\mathbf{y}_{ij}$ of the auxillary masses and the displacement vector $\mathbf{x}$. To see this, consider the tensor product $  \ket{\mathbf{z}} \otimes \ket{\mathbf{z}} $.


\begin{align}
   {\mathbf{z}} \otimes {\mathbf{z}} = \left( \ket{0} \otimes {\mathbf{x}} + \sum_{i\leq j} \ket{c(i,j,N)}\otimes {\mathbf{y}_{ij}} + \sum_{k} \ket{g(k,N)}\otimes {\mathbf{w}_{k}} + \ket{m}\otimes{\mathbf{p}} \right) \otimes \\ \nonumber
    \left( \ket{0} \otimes {\mathbf{x}} + \sum_{i\leq j} \ket{c(i,j,N)}\otimes {\mathbf{y}_{ij}} + \sum_{k} \ket{g(k,N)}\otimes {\mathbf{w}_{k}} + \ket{m}\otimes{\mathbf{p}} \right) 
\end{align}
Upon simplifying, we get: 

\begin{align}
{\mathbf{z}} \otimes {\mathbf{z}} &=  \ket{0}\otimes {\mathbf{x}}\otimes \ket{0}\otimes {\mathbf{x}} + \sum_{i \leq j} \ket{c(i,j,N)} \otimes {\mathbf{y}_{ij}}\otimes \ket{0} \otimes{\mathbf{x}} + \sum_{k} \ket{g(k,N)} \otimes{\mathbf{w}_{k}} \otimes \ket{0} \otimes{\mathbf{x}}  \\ \nonumber 
&\quad + \ket{m}\otimes{\mathbf{p}}\otimes \ket{0} \otimes{\mathbf{x}} + \ket{0} \otimes{\mathbf{x}} \otimes\sum_{i \leq j} \ket{c(i,j,N)} \otimes{\mathbf{y}_{ij}} \\ \nonumber 
&\quad + \sum_{i \leq j} \sum_{l \leq m} \ket{c(i,j,N)} \otimes{\mathbf{y}_{ij}} \otimes \ket{c(l,m,N)} \otimes {\mathbf{y}_{l,m}} \\ \nonumber 
&\quad + \sum_{i \leq j} \sum_{k} \ket{g(k,N)} \otimes {\mathbf{w}_{k}}  \otimes\ket{c(i,j,N)} \otimes {\mathbf{y}_{ij}} \\ \nonumber 
&\quad + \sum_{i \leq j}\ket{m} \ket{\mathbf{p}} \ket{c(i,j,N)} \otimes{\mathbf{y}_{ij}}\\ \nonumber 
&\quad + \sum_k \ket{0}\otimes{\mathbf{x}}\otimes \ket{g(k,N)} {\mathbf{w}_{k}} + \sum_{i \leq j}\sum_k  \ket{c(i,j,N)} \ket{\mathbf{y}_{ij}}  \ket{g(k,N)} \otimes{\mathbf{w}_{k}} \\ \nonumber
&\quad + \sum_{j}\sum_k  \ket{g(j,N)} \otimes{\mathbf{w}_{j}} \otimes \ket{g(k,N)}\otimes{\mathbf{w}_{k}} + \sum_k \ket{m}\ket{\mathbf{p}}\ket{g(k,N)}\otimes{\mathbf{w}_{k}}  \\ \nonumber
&\quad +\left( \ket{0} \otimes {\mathbf{x}} + \sum_{i\leq j} \ket{c(i,j,N)}\otimes {\mathbf{y}_{ij}} + \sum_{k} \ket{g(k,N)}\otimes {\mathbf{w}_{k}} + \ket{m} \right)\otimes \ket{m}\ket{\mathbf{p}}
\end{align}
Let $\mathcal{T}$ be a transformation acting on $\ket{\mathbf{z}} \otimes \ket{\mathbf{z}}$ which swaps second and third qubit registers. \\
Specifically,
\begin{equation}
    \mathcal{T}(\ket{u_1}\ket{u_2}\ket{u_3}\ket{u_4}) = \ket{u_1}\ket{u_3}\ket{u_2}\ket{u_4}
\end{equation}
Now, we need to choose matrices $\mathbf{K}_1$, and $\mathbf{K}_2$, such that the dynamics of nonlinear oscillator introduce necessary time-dependent stiffness elements and time-dependent external forces. Since the basis functions are independent of each other, we can choose a diagonal block matrix as $\mathbf{K}_1$. 
\begin{equation}
    \mathbf{K}_1 = \ket{0}\bra{0} \otimes \mathbf{K}_{1,00} + \sum_{ij}\ket{c(i,j,N)}\bra{c(i,j,N)} \otimes \mathbf{K}_{1,ij} + \sum_{k}\ket{g(k,N)}\bra{g(k,N)}\otimes\mathbf{K}_{1,k}
\end{equation}
Here, $\mathbf{K}_{1,ij} , \mathbf{K}_{1,k}\in \mathbb{R}^{N \times N}$ are diagonal matrices, with positive real valued entries. We choose the parameters of $\mathbf{K}_{1,ij} $ depending upon the frequency parameters of the time-dependent stiffness function $k_{ij}(t)$. Now, let us consider the elements of the coupling matrix $\mathbf{K}_2$.\\
Now, if we choose $\mathbf{K}_2$ of the form:
\begin{align}
 \mathbf{K}_2 &= \left(\sum_{ij}  (\ket{0}\bra{0} \bra{c(i,j,N)} \otimes \mathbf{K}_{ij} ) + \sum_k \ket{0}\bra{m}\bra{g(k,N)}\bra{\mathbf{p}} \otimes \mathbf{F}_k\right) \mathcal{T}
\end{align}
then, 
\begin{align}
    \mathbf{K}_2 (\mathbf{z} \otimes \mathbf{z}) = \ket{0} \otimes \left( \sum_{i \leq j} \mathbf{K}_{ij} \ket{\mathbf{x}} \ket{\mathbf{y}_{ij}} + \sum_k \mathbf{F}_k \ket{\mathbf{w}_k} \right)
\end{align}
Now, we define $\mathbf{K}_{ij}$ as:\\
when $i=j$,\\
\begin{equation}
\bra{i-1} \mathbf{K}_{ij} = \left[ \underbrace{0, 0, 0, \ldots, 0}_{(i-1)l \text{ zeros}}, -\alpha_{ii,1}, -\alpha_{ii,2}, \ldots, -\alpha_{ii,l}, \underbrace{0, 0, 0, \ldots, 0}_{n^2-il \text{ zeros}} \right]
\end{equation}
for all other $k$,
\begin{equation}
\bra{k} \mathbf{K}_{ij} = \left[ \underbrace{0, 0, 0, \ldots, 0}_{n^2 \text{ zeros}}\right]
\end{equation}

when $i \neq j$,\\
\begin{equation}
\bra{i-1} \mathbf{K}_{ij} = \left[ \underbrace{0, 0, 0, \ldots, 0}_{(i-1)l \text{ zeros}}, -\alpha_{ij,1}, -\alpha_{ij,2}, \ldots, -\alpha_{ij,l}, \underbrace{0, 0, 0, \ldots, 0}_{(j-i-1)l \text{ zeros}},  \alpha_{ij,1}, \alpha_{ij,2}, \ldots, \alpha_{ij,l},\underbrace{0, 0, 0, \ldots, 0}_{n^2-jl \text{ zeros}}, \right]
\end{equation}
\begin{equation}
\bra{j-1} \mathbf{K}_{ij} = \left[ \underbrace{0, 0, 0, \ldots, 0}_{(i-1)l \text{ zeros}}, \alpha_{ij,1}, \alpha_{ij,2}, \ldots, \alpha_{ij,l}, \underbrace{0, 0, 0, \ldots, 0}_{(j-i-1)l \text{ zeros}},  -\alpha_{ij,1}, -\alpha_{ij,2}, \ldots, -\alpha_{ij,l},\underbrace{0, 0, 0, \ldots, 0}_{n^2-jl \text{ zeros}}, \right]
\end{equation}
for all other $k$,
\begin{equation}
\bra{k} \mathbf{K}_{ij} = \left[ \underbrace{0, 0, 0, \ldots, 0}_{n^2 \text{ zeros}}\right]
\end{equation}
This construction is similar to that of \autoref{thm:quad_nonlinear_to_time_dependent_K} and creates the necessary time-dependent stiffness matrix. To create the necessary time-dependent external forces, define $\mathbf{F}_k$ as follows:
\begin{equation}
    \bra{k-1}\mathbf{F}_k = [\beta_{k,1},\beta_{k,2},\cdots\beta_{k,q}]
\end{equation}
and for any $j \neq k$,
\begin{equation}
    \bra{j}\mathbf{F}_k = [0,0\cdots0,0]
\end{equation}
The mass matrix, $\mathbf{M}'$, can be defined as follows:
\begin{equation}
  \mathbf{M}' =   \ket{0}\bra{0} \otimes \mathbf{M} + \sum_{i\leq j} \ket{c(i,j,N)} \bra{c(i,j,N)} \otimes \mathbf{I} + \sum_k \ket{g(k,N)}\bra{g(k,N)}\otimes \mathbf{I}
\end{equation}

Now, after plugging in $\mathbf{M}'$, $\mathbf{K}_1$, and $\mathbf{K}_2$ in \autoref{eq:nonlinear_oscillator_sys}, we get the following set of differential equations:
\begin{align}
\label{eq:reduced_time_dep_eq}
    \ket{0} \otimes \mathbf{M} \ddot{\mathbf{x}} &= -\ket{0}\otimes\mathbf{K}_{1,00}\mathbf{x} + \ket{0}\otimes\sum_{ij} \mathbf{K}_{ij} (\mathbf{x}\otimes \mathbf{y}_{ij}) + \ket{0}\otimes \sum_k \mathbf{F}_k \mathbf{w}_k \\ \nonumber
    \ket{c(i,j,N)} \otimes \ddot{\mathbf{y}}_{ij} &= -\ket{c(i,j,N)} \otimes \mathbf{K}_{1,ij}\mathbf{y}_{ij} \\ \nonumber
    \ket{g(k,N)} \otimes \ddot{\mathbf{w}}_k &= - \ket{g(k,N)}\otimes\mathbf{K}_{1,k}\mathbf{w}_k \\ \nonumber
    \ket{m} \otimes \ddot{\mathbf{p}} &= \mathbf{0}
\end{align}

Note that $\mathbf{y}_{ij}$, and $\mathbf{w}_k$ is coupled with $\mathbf{x}$, and their evolution is influenced by the interaction terms in the differential equations. Specifically, the coupling terms $\mathbf{K}_{ij} (\mathbf{x} \otimes \mathbf{y}_{ij})$ and $\mathbf{F}_k \mathbf{w}_k$ introduce dependencies between the primary system dynamics $\mathbf{x}$ and the auxiliary variables $\mathbf{y}_{ij}$ and $\mathbf{w}_k$. To see this, consider the dynamics of auxillary system.\\

Since $\mathbf{K}_{1,ij}$ is diagonal, the dynamics of $\mathbf{y}_{ij}(t) = [y_{ij,1}(t),\cdots y_{ij,L}(t)]^T$ is given by:
\begin{equation}
    y_{ij,l}(t) = A_{ij,l}\cos(\omega_{ij,l}t +\phi_{ij,l})
\end{equation}
Similarly, the dynamics of $\mathbf{w}_k(t) = [w_{k,1}(t),\cdots w_{k,q}(t)]^T$ is given by:
\begin{equation}
    w_{k,i}(t) = A_{k,i}\cos(\mu_{k,i}t + \xi_{k,i})
\end{equation}
Now, let us consider the dynamics of any mass $m_i$ in the nonlinear coupled oscillator system. Let $i \in \left[N\right]$. Then, $m_i\ddot{x}_i = \bra{i-1}\mathbf{M}\ddot{\mathbf{x}}$.  
\begin{align}
 \bra{i-1} \mathbf{M} \ddot{\mathbf{x}} &= -\bra{i-1}\mathbf{K}_{1,00}\mathbf{x} + \bra{i-1}\sum_{ij} \mathbf{K}_{ij} (\mathbf{x}\otimes \mathbf{y}_{ij}) + \bra{i-1}\sum_k \mathbf{F}_k \mathbf{w}_k 
\end{align}
Applying the definitions for the matrices $\mathbf{K}_{1,00}$, $\mathbf{K}_{ij}$, and $\mathbf{F}_k$, we get the following equation of motion for any mass $m_i$. 
\begin{equation}
     m_i \ddot{x}_i = -\left(\sum_j\sum_{l=1}^L \alpha_{ij,l} y_{ij,l}(t) \right)x_i(t) + \sum_{i\neq j}\sum_{l=1}^L  \alpha_{ij,l} y_{ij,l}(t)x_j + \sum_{l=1}^L \beta_{i,l}w_{i,l}(t)
\end{equation}
Hence, our choice of coupling matrices in the larger nonlinear oscillator system emulates the effect of time-dependent stiffness matrices and time-dependent external forces.
\end{proof}
Applying \autoref{thm:quad_nonlinear_to_time_dependent_K_and_F}, we can embed the dynamics of time dependent forced oscillator system in a high dimensional nonlinear dynamical system. By simulating the dynamics of this higher-dimensional system, we can effectively capture the behavior of the original forced oscillator system. 

\TDFOQuantumAlgorithmThm*
\begin{proof}
Recall that the dynamics of the forced oscillator system can be written as: 
\[
\mathbf{M}\ddot{\mathbf{x}}=\mathbf{K}(t)\mathbf{x}+\mathbf{F}(t),
\]
where our input model gives access to Fourier components of the time varying stiffness and force functions. By applying \autoref{thm:quad_nonlinear_to_time_dependent_K}, we can construct a ${2N + \frac{N^2(N+1)}{2}}$ dimensional nonlinear oscillator system such that a subspace of the nonlinear oscillator system evolves according to the dynamics of the time dependent forced oscillator system. The state vector of the nonlinear oscillator system can be written as: 
\[
{\mathbf{z}}=\ket{0}\otimes{\mathbf{x}}+\sum_{i\leq j}\ket{c(i,j,N)}\otimes{\mathbf{y}_{ij}}+\sum_{k}\ket{g(k,N)}\otimes{\mathbf{w}_{k}}+\ket{m}\otimes{\mathbf{p}},
\]
where the auxiliary states \({\mathbf{y}_{ij}}\) and \({\mathbf{w}_{k}}\) encode the evolution associated with the time-dependent stiffness and force terms, respectively.

The corresponding augmented mass matrix is defined as
\[
\mathbf{M}'=\ket{0}\bra{0}\otimes\mathbf{M}+\sum_{i\leq j}\ket{c(i,j,N)}\bra{c(i,j,N)}\otimes\mathbf{I}+\ket{m}\bra{m}\otimes\mathbf{I}.
\]
 
The linear operator \(\mathbf{K}_1\) is defined as: 
\[
 \mathbf{K}_1 = \ket{0}\bra{0} \otimes \mathbf{K}_{1,00} + \sum_{ij}\ket{c(i,j,N)}\bra{c(i,j,N)} \otimes \mathbf{K}_{1,ij} + \sum_{k}\ket{g(k,N)}\bra{g(k,N)}\otimes\mathbf{K}_{1,k}
\]
where each \(\mathbf{K}_{ij}^{(1)}\) is a diagonal matrix encoding the squared frequencies:
\[
\mathbf{K}_{1,ij}=\begin{bmatrix}
\omega_{ij,1}^2&0&\cdots&0\\[1mm]
0&\omega_{ij,2}^2&\cdots&0\\[1mm]
\vdots&\vdots&\ddots&\vdots\\[1mm]
0&0&\cdots&\omega_{ij,l}^2
\end{bmatrix}.
\]

The quadratic coupling operator \(\mathbf{K}_2\) is defined as:
\begin{align}
 \mathbf{K}_2 &= \left(\sum_{ij}  (\ket{0}\bra{0} \bra{c(i,j,N)} \otimes \mathbf{K}_{ij} ) + \sum_k \ket{0}\bra{m}\bra{g(k,N)}\bra{\mathbf{p}} \otimes \mathbf{F}_k\right) \mathcal{T}
\end{align}
with \(\mathcal{T}\) a transformation that rearranges the registers of \(\ket{\mathbf{z}}\otimes\ket{\mathbf{z}}\) and, the force term is incorporated through additional coupling matrices \(\mathbf{F}_k\) satisfying
\[
\bra{k-1}\mathbf{F}_k=\Bigl[0,\dots,0,\beta_{k,1},\beta_{k,2},\dots,\beta_{k,q},0,\dots,0\Bigr],
\]
and \(\bra{j}\mathbf{F}_k=\mathbf{0}\) for \(j\neq k\).


Now, we can apply the quantum simulation algorithm described in \autoref{thm:nonlinear_oscillator_simulation} to simulate the dynamics of the higher dimensional nonlinear dynamical system, and thereby obtaining a simulation of the forced dynamics such that the projected state \(\mathbf{P}\mathbf{z}(t)\) approximates \(\mathbf{x}(t)\) within error \(\epsilon\) for all \(t\ge 0\). The number of queries \(G\) to unitaries encoding the matrices $\mathbf{K}_1$, and $\mathbf{K}_2$, satisfies
    \[
   G \in  \mathcal{O}\left( \alpha k^2 t + \frac{(k-1)}{2}\log\left(\left[\frac{d k^{(2)}_{\max} k\left( k +1 \right)}{2k^{(1)}_{\min} (m_{\min})^{3/2}} \left(1 + \beta\frac{1-\beta^k}{(1-\beta)\epsilon} \right)t  \right] \right) \right)
    \]
where,
\[
     k \in \mathcal{O}\left(\frac{\log(CT/\epsilon)}{\log(1/R_r)}\right)
     \]
     
Here, $k_{\max}^{(2)} = \max_{ij}\mathbf{K}_2(i,j) = \max_{ij,l}\{\alpha_{ij,l}, \beta_{i,j}\}$, $k_{\min}^{(1)} = \min_{ij} \{\omega_{ij},\mu_{ij}\}$, $\alpha=\max_{ij}\omega_{ij}$ and $\beta$ is the initial total energy of the nonlinear oscillator system. The value of $\beta$ is given by:
\begin{align*}
    \beta &= \frac{1}{2} \left( \dot{\mathbf{z}}^T(0) \mathbf{M}'\dot{\mathbf{z}}(0) + \mathbf{z}^T(0)\mathbf{K}_1\mathbf{z}(0)\right) \\ \nonumber
    &= \frac{1}{2} ( \dot{\mathbf{x}}^T(0) \mathbf{M}\dot{\mathbf{x}}(0) + \sum_{ij} \|\dot{\mathbf{y}}_{ij}(0)\| + \sum_{k} \|\dot{\mathbf{w}}_{k}(0)\| + \\ \nonumber
     &\quad \mathbf{x}^T(0)\mathbf{K}_{1,00}\mathbf{x}(0) + \sum_{ij}\mathbf{y}_{ij}^T(0)\mathbf{K}_{1,ij}\mathbf{y}_{ij}(0) + \sum_k\mathbf{w}_{k}^T(0) \mathbf{K}_{1,k}\mathbf{w}_{k}(0) )
\end{align*}


This completes the proof.
\end{proof}

\appendix

\section{Conclusion}

We have present quantum algorithms for simulating a broad class of classical oscillator systems with time-dependent force and parameters in this paper.  These algorithms are based on a strategy that we call ``Nonlinear Schr\"{o}dingerization'' which involves mapping a dynamical system to a nonlinear Schr\"{o}dinger equation that is then linearized and simulated on a quantum computer. This approach significantly expands the applicability of quantum algorithms for simulating the dynamics of non-conservative and nonlinear oscillator systems that use the encoding of~\cite{babbush2023exponential}. Our key contributions include the development of quantum algorithms for simulating coupled oscillator systems with time-dependent external forces.  Despite using a nonlinear Schr\"{o}dinger equation, we introduce a variant of Carleman linearization that can be used to approximate the nonlinear dynamics by unitary dynamics in a higher dimensional space.  This allows us to, under certain strong assumptions about the spectra of the oscillator,  overcome a key limitation of such approaches that required constant total energy.

We introduced perturbation techniques to embed non-conservative dynamics within larger conservative systems, enabling efficient quantum simulation. Our novel symmetrization methods for the Carlemann linearization allow simulation of nonlinear Schrödinger equations with provable error bounds. Furthermore, we extended these methods to handle systems with time-dependent stiffness matrices, including those with both time-varying parameters and external forces. Our algorithms achieve complexity that is logarithmic in the number of oscillators and almost linear in evolution time when the system properties can be efficiently queried and the initial state can be efficiently prepared.  As this algorithm generalizes existing quantum algorithms for harmonic oscillators, the exponential speedups that we can attain in these cases cannot be matched by classical methods in generality.  Consequently, our work pro a significant improvement over classical methods for simulating large coupled oscillator systems. 


Our work leaves several avenues open for further investigation.  First, since the positions and momenta of oscillator systems are encoded in quantum states, computing the exact value of these quantities would require tomography, and that would be computationally expensive.  These issues could be addressed in part by coupling an amplitude encoding of the variables, such as that used here or in~\cite{somma2025shadow}, to a traditional bit encoding for the observables.  The perturbative approaches introduced here are likely to be essential for dealing with the nonlinearity that emerge when trying to couple such variables as amplitudes.  Second, to apply the algorithms developed to simulate nonlinear dynamical systems and the nonlinear Schr\"{o}dinger Equation, the equations should satisfy certain constraints, which limits the generality of the method.  Generalization of these ideas to tackle shadow Hamiltonian simulation would provide a useful extension of these ideas.  Lastly, nonlinear Schr\"{o}dingerization necessarily imposes strict restrictions on the nonlinear Schr\"{o}dinger equations considered in this work.  This is necessary because without making such assumptions, we could violate state discrimination lower bounds as argued in~\cite{Xue2023Further}.  Understanding the full range of cases where linearization can be used to efficiently reduce nonlinear dynamics to linear dynamics will allow us to understand the scope of improvements that nonlinear Schr\"odingerization can provide as well as giving us insight into the computational power of nonlinearity and its strengths and weaknesses.

\section{Acknowledgements}
This
material is primarily based upon work supported by the U.S. Department of Energy, Office of Science, National Quantum Information Science Research Centers, Co-design Center for Quantum Advantage (C2QA) under contract number DE- SC0012704 (PNNL FWP 76274). This research is also supported by PNNL’s
Quantum Algorithms and Architecture for Domain Science (QuAADS) Laboratory Directed Research and Development (LDRD) Initiative. The Pacific Northwest National Laboratory is operated by Battelle for the U.S. Department of Energy under Contract DE-AC05-76RL01830. NW also acknowledges support from
Google Inc.  AM acknowledges support from the Quantum Science Consortium funded by NSERC.  We thank Sophia Simon, Dominic Berry, Rolando Somma, Robin Kothari, Robbie King and Ryan Babbush for useful discussions and feedback.
\printbibliography

@article{babbush2023exponential,
  title={Exponential quantum speedup in simulating coupled classical oscillators},
  author={Babbush, Ryan and Berry, Dominic W and Kothari, Robin and Somma, Rolando D and Wiebe, Nathan},
  journal={arXiv preprint arXiv:2303.13012},
  year={2023}
}

@article{liu2021efficient,
  title={Efficient quantum algorithm for dissipative nonlinear differential equations},
  author={Liu, Jin-Peng and Kolden, Herman {\O}ie and Krovi, Hari K and Loureiro, Nuno F and Trivisa, Konstantina and Childs, Andrew M},
  journal={Proceedings of the National Academy of Sciences},
  volume={118},
  number={35},
  pages={e2026805118},
  year={2021},
  publisher={National Academy of Sciences}
}

@article{berry2007efficient,
  author = {Berry, Dominic W. and Ahokas, Graeme and Cleve, Richard and Sanders, Barry C.},
  title = {Efficient quantum algorithms for simulating sparse Hamiltonians},
  journal = {Communications in Mathematical Physics},
  volume = {270},
  number = {2},
  pages = {359--371},
  year = {2007},
  doi = {10.1007/s00220-006-0150-x},
  eprint = {quant-ph/0508139},
  bibcode = {2007CMaPh.270..359B},
  S2CID = {37923044}
}

@article{childs2012hamiltonian,
  author = {Childs, Andrew M. and Wiebe, Nathan},
  title = {Hamiltonian simulation using linear combinations of unitary operations},
  journal = {Quantum Information \& Computation},
  volume = {12},
  number = {11-12},
  pages = {901--924},
  year = {2012}
}

@article{low2019hamiltonian,
  author = {Low, Guang Hao and Chuang, Isaac L.},
  title = {Hamiltonian Simulation by Qubitization},
  journal = {Quantum},
  volume = {3},
  pages = {163},
  year = {2019},
  doi = {10.22331/q-2019-07-12-163},
  eprint = {1610.06546},
  archivePrefix = {arXiv},
  primaryClass = {quant-ph}
}

@article{krovi2023improved,
  title={Improved quantum algorithms for linear and nonlinear differential equations},
  author={Krovi, Hari},
  journal={Quantum},
  volume={7},
  pages={913},
  year={2023},
  publisher={Verein zur F{\"o}rderung des Open Access Publizierens in den Quantenwissenschaften}
}

@article{Xue2023Further,
  author    = {Jin Xue and Andrew M. Childs and others},
  title     = {Further improving quantum algorithms for nonlinear differential equations},
  journal   = {arXiv preprint arXiv:2312.09518},
  year      = {2023},
}

@article{Berry2017Quantum,
  author    = {Dominic W. Berry and Andrew M. Childs and Robin Kothari},
  title     = {Quantum algorithm for linear differential equations with exponentially improved dependence on precision},
  journal   = {Commun.\ Math.\ Phys.},
  volume    = {356},
  pages     = {1--17},
  year      = {2017},
  doi       = {10.1007/s00220-017-3002-y},
}

@article{jin2024quantum,
  title={Quantum Simulation of Partial Differential Equations via Schr{\"o}dingerization},
  author={Jin, Shi and Liu, Nana and Yu, Yue},
  journal={Physical Review Letters},
  volume={133},
  number={23},
  pages={230602},
  year={2024},
  publisher={APS}
}

@article{berry2014high,
  title={High-order quantum algorithm for solving linear differential equations},
  author={Berry, Dominic W},
  journal={Journal of Physics A: Mathematical and Theoretical},
  volume={47},
  number={10},
  pages={105301},
  year={2014},
  publisher={IOP Publishing}
}

@article{clader2013preconditioned,
  title={Preconditioned quantum linear system algorithm},
  author={Clader, B David and Jacobs, Bryan C and Sprouse, Chad R},
  journal={Physical review letters},
  volume={110},
  number={25},
  pages={250504},
  year={2013},
  publisher={APS}
}

@article{an2023linear,
  title={Linear combination of hamiltonian simulation for nonunitary dynamics with optimal state preparation cost},
  author={An, Dong and Liu, Jin-Peng and Lin, Lin},
  journal={Physical Review Letters},
  volume={131},
  number={15},
  pages={150603},
  year={2023},
  publisher={APS}
}

@article{costa2019quantum,
  title={Quantum algorithm for simulating the wave equation},
  author={Costa, Pedro CS and Jordan, Stephen and Ostrander, Aaron},
  journal={Physical Review A},
  volume={99},
  number={1},
  pages={012323},
  year={2019},
  publisher={APS}
}

@article{costa2023further,
  title={Further improving quantum algorithms for nonlinear differential equations via higher-order methods and rescaling},
  author={Costa, Pedro and Schleich, Philipp and Morales, Mauro ES and Berry, Dominic W},
  journal={arXiv preprint arXiv:2312.09518},
  year={2023}
}

@article{harrow2009quantum,
  title={Quantum algorithm for linear systems of equations},
  author={Harrow, Aram W and Hassidim, Avinatan and Lloyd, Seth},
  journal={Physical review letters},
  volume={103},
  number={15},
  pages={150502},
  year={2009},
  publisher={APS}
}

@article{childs2017quantum,
  title={Quantum algorithm for systems of linear equations with exponentially improved dependence on precision},
  author={Childs, Andrew M and Kothari, Robin and Somma, Rolando D},
  journal={SIAM Journal on Computing},
  volume={46},
  number={6},
  pages={1920--1950},
  year={2017},
  publisher={SIAM}
}

@article{subacsi2019quantum,
  title={Quantum algorithms for systems of linear equations inspired by adiabatic quantum computing},
  author={Suba{\c{s}}{\i}, Yi{\u{g}}it and Somma, Rolando D and Orsucci, Davide},
  journal={Physical review letters},
  volume={122},
  number={6},
  pages={060504},
  year={2019},
  publisher={APS}
}

@article{somma2025shadow,
  title={Shadow hamiltonian simulation},
  author={Somma, Rolando D and King, Robbie and Kothari, Robin and O’Brien, Thomas E and Babbush, Ryan},
  journal={Nature Communications},
  volume={16},
  number={1},
  pages={2690},
  year={2025},
  publisher={Nature Publishing Group UK London}
}

@article{takahira2021quantum,
  title={Quantum algorithms based on the block-encoding framework for matrix functions by contour integrals},
  author={Takahira, Souichi and Ohashi, Asuka and Sogabe, Tomohiro and Usuda, Tsuyoshi Sasaki},
  journal={arXiv preprint arXiv:2106.08076},
  year={2021}
}

@article{stroeks2024solving,
  title={Solving Free Fermion Problems on a Quantum Computer},
  author={Stroeks, Maarten and Lenterman, Daan and Terhal, Barbara and Herasymenko, Yaroslav},
  journal={arXiv preprint arXiv:2409.04550},
  year={2024}
}

@article{wu2024quantum,
  title={Quantum Algorithms for Nonlinear Dynamics: Revisiting Carleman Linearization with No Dissipative Conditions},
  author={Wu, Hsuan-Cheng and Wang, Jingyao and Li, Xiantao},
  journal={arXiv preprint arXiv:2405.12714},
  year={2024}
}

@inproceedings{gilyen2019quantum,
  title={Quantum singular value transformation and beyond: exponential improvements for quantum matrix arithmetics},
  author={Gily{\'e}n, Andr{\'a}s and Su, Yuan and Low, Guang Hao and Wiebe, Nathan},
  booktitle={Proceedings of the 51st Annual ACM SIGACT Symposium on Theory of Computing},
  pages={193--204},
  year={2019}
}

@article{meyer2013nonlinear,
  title={Nonlinear quantum search using the Gross--Pitaevskii equation},
  author={Meyer, David A and Wong, Thomas G},
  journal={New Journal of Physics},
  volume={15},
  number={6},
  pages={063014},
  year={2013},
  publisher={IOP Publishing}
}

@article{an2023quantum,
  title={Quantum algorithm for linear non-unitary dynamics with near-optimal dependence on all parameters},
  author={An, Dong and Childs, Andrew M and Lin, Lin},
  journal={arXiv preprint arXiv:2312.03916},
  year={2023}
}

@article{jin2024schr,
  title={On Schr$\backslash$" odingerization based quantum algorithms for linear dynamical systems with inhomogeneous terms},
  author={Jin, Shi and Liu, Nana and Ma, Chuwen},
  journal={arXiv preprint arXiv:2402.14696},
  year={2024}
}

@article{berry2024quantum,
  title={Quantum algorithm for time-dependent differential equations using Dyson series},
  author={Berry, Dominic W and Costa, Pedro CS},
  journal={Quantum},
  volume={8},
  pages={1369},
  year={2024},
  publisher={Verein zur F{\"o}rderung des Open Access Publizierens in den Quantenwissenschaften}
}

@article{danz2024calculating,
  title={Calculating response functions of coupled oscillators using quantum phase estimation},
  author={Danz, Sven and Berta, Mario and Schr{\"o}der, Stefan and Kienast, Pascal and Wilhelm, Frank K and Ciani, Alessandro},
  journal={arXiv preprint arXiv:2405.08694},
  year={2024}
}

@article{sanavio2025explicit,
  title={Explicit quantum circuit for simulating the advection-diffusion-reaction dynamics},
  author={Sanavio, Claudio and Mauri, Enea and Succi, Sauro},
  journal={IEEE Transactions on Quantum Engineering},
  year={2025},
  publisher={IEEE}
}

@article{jennings2024cost,
  title={The cost of solving linear differential equations on a quantum computer: fast-forwarding to explicit resource counts},
  author={Jennings, David and Lostaglio, Matteo and Lowrie, Robert B and Pallister, Sam and Sornborger, Andrew T},
  journal={Quantum},
  volume={8},
  pages={1553},
  year={2024},
  publisher={Verein zur F{\"o}rderung des Open Access Publizierens in den Quantenwissenschaften}
}

@article{li2025potential,
  title={Potential quantum advantage for simulation of fluid dynamics},
  author={Li, Xiangyu and Yin, Xiaolong and Wiebe, Nathan and Chun, Jaehun and Schenter, Gregory K and Cheung, Margaret S and M{\"u}lmenst{\"a}dt, Johannes},
  journal={Physical Review Research},
  volume={7},
  number={1},
  pages={013036},
  year={2025},
  publisher={APS}
}

@article{forets2017explicit,
  title={Explicit error bounds for Carleman linearization},
  author={Forets, Marcelo and Pouly, Amaury},
  journal={arXiv preprint arXiv:1711.02552},
  year={2017}
}

@article{lloyd2020quantum,
  title={Quantum algorithm for nonlinear differential equations},
  author={Lloyd, Seth and De Palma, Giacomo and Gokler, Can and Kiani, Bobak and Liu, Zi-Wen and Marvian, Milad and Tennie, Felix and Palmer, Tim},
  journal={arXiv preprint arXiv:2011.06571},
  year={2020}
}

@article{brustle2024quantum,
  title={Quantum and classical algorithms for nonlinear unitary dynamics},
  author={Br{\"u}stle, Noah and Wiebe, Nathan},
  journal={arXiv preprint arXiv:2407.07685},
  year={2024}
}

@article{carleman1932application,
  title={Application of the theory of linear integral equations to systems of nonlinear differential equations},
  author={Carleman, Torsten},
  year={1932}
}

@article{lloyd1996universal,
  title={Universal quantum simulators},
  author={Lloyd, Seth},
  journal={Science},
  volume={273},
  number={5278},
  pages={1073--1078},
  year={1996},
  publisher={American Association for the Advancement of Science}
}

@article{feynman1985quantum,
  title={Quantum mechanical computers},
  author={Feynman, Richard},
  journal={Optics news},
  volume={11},
  number={2},
  pages={11--20},
  year={1985}
}

@book{landau1976mechanics,
  author = {Landau, L. D. and Lifshitz, E. M.},
  title = {Mechanics},
  volume = {1},
  publisher = {Butterworth-Heinemann},
  edition = {3rd},
  year = {1976},
  isbn = {0750628960}
}

@book{jackson1998classical,
  author = {Jackson, John David},
  title = {Classical Electrodynamics},
  publisher = {Wiley},
  address = {New York},
  edition = {3rd},
  year = {1998},
  isbn = {047130932X}
}

@article{neven1992rate,
  author = {Neven, Hartmut and Aertsen, Ad},
  title = {Rate coherence and event coherence in the visual cortex: a neuronal model of object recognition},
  journal = {Biological Cybernetics},
  volume = {67},
  pages = {309--322},
  year = {1992},
  doi = {10.1007/BF02414887}
}

@book{wilson1980molecular,
  author = {Wilson, E. Bright and Decius, J. C. and Cross, Paul C.},
  title = {Molecular Vibrations: The Theory of Infrared and Raman Vibrational Spectra},
  publisher = {Dover Publications},
  address = {New York},
  year = {1980},
  isbn = {048663941X}
}
\end{document}